\definecolor{hl}{rgb}{1,1,.71}
\newtheorem{theorem}{Theorem}[section]
\newtheorem{lemma}[theorem]{Lemma}
\newtheorem{corollary}[theorem]{Corollary}
\newtheorem{proposition}[theorem]{Proposition}
\theoremstyle{definition}
\newtheorem{definition}{Definition}
\let\paragraph=\subparagraph
\def\player{a}
\def\Player{A}
\def\opponent{b}
\def\Opponent{B}
\def\escaper{h}
\def\Escaper{H}
\def\pursuer{z}
\def\Pursuer{Z}
\def\exit{x}
\def\Exit{X}
\def\domain{\operatorname{dom}}
 \gdef\xxxmark{%
   \expandafter\ifx\csname @mpargs\endcsname\relax %
     \expandafter\ifx\csname @captype\endcsname\relax %
       \marginpar{xxx}%
     \else
       xxx %
     \fi
   \else
     xxx %
   \fi}
 \gdef\xxx{\@ifnextchar[\xxx@lab\xxx@nolab}
 \long\gdef\xxx@lab[#1]#2{\textbf{[\xxxmark #2 ---{\sc #1}]}}
 \long\gdef\xxx@nolab#1{\textbf{[\xxxmark #1]}}
 \long\gdef\xxx@lab[#1]#2{}\long\gdef\xxx@nolab#1{}%
\let\realbfseries=\bfseries
\def\bfseries{\realbfseries\boldmath}
\newif\ifabstract
\newif\iffull
\def\Section{Appendix}
\def\Section{Section}
\newcounter{section-preserve}
\newcounter{theorem-preserve}
\newcommand{\blank}[1]{}
\newtoks\magicAppendix
\newtoks\magictoks
\newif\iflater
\long\def\later#1{\iflater#1\else\magictoks={#1}%
  \edef\magictodo{\noexpand\magicAppendix={\the\magicAppendix \par
    \the\magictoks%
  }}
  \magictodo\fi}
\long\def\both#1{\iflater#1\else\magictoks={#1}%
  \edef\magictodo{\noexpand\magicAppendix={\the\magicAppendix \par
    \noexpand\setcounter{theorem-preserve}{\noexpand\arabic{theorem}}%
    \noexpand\setcounter{theorem}{\arabic{theorem}}%
    \noexpand\setcounter{section-preserve}{\noexpand\arabic{section}}%
    \noexpand\setcounter{section}{\arabic{section}}%
    \noexpand\let\noexpand\oldsection=\noexpand\thesection
    \noexpand\def\noexpand\thesection{\thesection}
    \noexpand\let\noexpand\oldlabel=\noexpand\label
    \noexpand\let\noexpand\label=\noexpand\blank
    \the\magictoks%
    \noexpand\setcounter{theorem}{\noexpand\arabic{theorem-preserve}}%
    \noexpand\setcounter{section}{\noexpand\arabic{section-preserve}}%
    \noexpand\let\noexpand\thesection=\noexpand\oldsection
    \noexpand\let\noexpand\label=\noexpand\oldlabel
  }}
  \magictodo
  \the\magictoks\fi}
\def\magicappendix{\latertrue \the\magicAppendix}
\let\epsilon\varepsilon
\let\eps\varepsilon
\DeclareMathOperator\interior{int}
\DeclareMathOperator\round{round}
\def\NCL{{\mathrm{NCL}}}
  \long\def\both#1{#1}
  \let\later=\both
  \def\magicappendix{}
\def\defn#1{\textit{\textbf{#1}}}
\let\nonsouldefn=\defn
\def\defn#1{\textit{\textbf{\mbox{#1\/}}}}
\soulregister{\defn}{7}
\title{Escaping a Polygon}
\author{%
  Zachary Abel%
    \thanks{Department of Electrical Engineering and Computer Science,
      Massachusetts Institute of Technology, Cambridge, MA, USA,
      \protect\url{zabel@mit.edu}}
\and
  Hugo Akitaya%
    \thanks{Miner School of Computer and Information Sciences,
      University of Massachusetts, Lowell, MA, USA,
      \protect\url{hugoakitaya@gmail.com}}
\and
  Erik D. Demaine%
    \thanks{Computer Science and Artificial Intelligence Laboratory,
      Massachusetts Institute of Technology, Cambridge, MA, USA,
      \protect\url{{edemaine,mdemaine,jaysonl}@mit.edu}}
\and
  Martin L. Demaine\footnotemark[3]
\and
  Adam Hesterberg%
    \thanks{John A. Paulson School of Engineering and Applied Sciences,
      Harvard University, Cambridge, MA, USA,
      \protect\url{achesterberg@gmail.com}}
\and
  Jason S. Ku%
    \thanks{Department of Mechanical Engineering,
      National University of Singapore, Singapore,
      \protect\url{jasonku@mit.edu}}
\and
  Jayson Lynch%
    \footnotemark[3]
}
\date{}
\begin{document}
\maketitle

\begin{abstract}
 
Suppose an \emph{escaping} player (``human'') moves continuously
at maximum speed $1$ in the interior of a region,
while a \emph{pursuing} player (``zombie'') moves continuously
at maximum speed $r$ outside the region.
For what $r$ can the first player escape the region, that is, reach the
boundary a positive distance away from the pursuing player, assuming optimal
play by both players?
We formalize a model for this infinitesimally alternating 2-player game and
prove that it has a unique winner in any locally rectifiable region.
Our model thus avoids pathological behaviors (where both players can have
``winning strategies'') previously identified for pursuit--evasion
games such as the Lion and Man problem in certain metric spaces.
For some specific regions, including both equilateral triangle and square,
we give exact results for the \emph{critical speed ratio}, above which
the pursuing player can win and below which the escaping player can win
(and at which the pursuing player can win).
For simple polygons, we give a simple formula and polynomial-time algorithm
that is guaranteed to give a $10.89898$-approximation to the critical speed
ratio, and we give a pseudopolynomial-time approximation scheme for
approximating the critical speed ratio arbitrarily closely.
On the negative side, we prove NP-hardness of the problem for polyhedral
domains in 3D, and prove stronger results (PSPACE-hardness and NP-hardness
even to approximate) for generalizations to multiple escaping and pursuing
players.

\end{abstract}

\setcounter{page}0
\thispagestyle{empty}
\newpage

\section{Introduction}
\label{IntroductionSection}

What would \textit{you} do in a zombie apocalypse?
Humans are fascinated by this question:
zombies are the subject of
over 1,300 films,%
\footnote{\url{https://www.imdb.com/search/keyword?keywords=zombie&title_type=movie}}
over 150 TV shows,%
\footnote{\url{https://www.imdb.com/search/keyword?keywords=zombie&title_type=tvSeries}}
over 1,000 books,%
\footnote{\url{https://www.goodreads.com/shelf/show/zombie-apocalypse}}
and over 900 video games.%
\footnote{\url{https://store.steampowered.com/tag/browse/\#global_1659}}
A 2009 epidemiology study \cite{Smith2009} launched an entire academic
discipline of zombie mathematics, culminating in a collected works of
fifteen papers on the topic~\cite{SmithBook}.
In this paper, we provide a computational geometric study of how and when humans
can successfully escape zombies
in a new type of game called ``pursuit--escape''.

\paragraph{Related work: Pursuit--evasion.}
One well-studied family of geometric problems relevant to the zombie apocalypse
are \defn{pursuit--evasion games} \cite{Nahin-2007},
which arise in many military applications \cite{Isaacs-1965}.
In the most famous ``Lion and Man'' problem \cite{Littlewood-1986},
one evader (human/man) aims to eternally flee one
pursuer (zombie/lion) while moving at unit speed in a shared domain.
\label{sec:capture}
If the pursuer and evader are ever at the same point, then
the pursuer \defn{captures} the evader and the pursuer thereby wins the game.
For example, in a Euclidean disk domain, an evader can evade capture from an
equal-speed pursuer, but the pursuer can get arbitrarily close to the evader
\cite{Littlewood-1986,Croft-1964}.
If the evader is a factor $r > 1$ faster, then there is a closed form for the
minimum distance they can maintain from the pursuer \cite{Lewin-1986}.
Two pursuers can capture one equal-speed evader in the disk, and similarly $d$
pursuers can win in a $d$-dimensional ball \cite{Croft-1964};
but there is a (rectifiable) 2D polygonal region
with holes where the evader can evade two equal-speed pursuers
\cite{Abrahamsen-Holm-Rotenberg-Wulff-Nilsen-2017}.
In the infinite plane, an evader can evade equal-speed pursuers if and only if
the evader is outside the convex hull of pursuers
\cite{Rado-Rado-1974,Jankovic-1978},
but a $(1+\epsilon)$-faster evader can always evade countably many 
pursuers \cite{Abrahamsen-Holm-Rotenberg-Wulff-Nilsen-2018-arXiv}.
In 3D with polyhedral evader, pursuer, and obstacles, it is (weakly)
EXPTIME-hard to decide whether the evader can reach a goal point
without being captured \cite{Reif-Tate-1993}.

A discrete-time analog of the game, where the players take discrete steps of
up to unit distance, has been analyzed in many domains, including
polygons with holes \cite{pursuit-games-polygonal},
genus-$g$ polyhedral surfaces \cite{pursuit-games-polyhedral},
unbounded convex Euclidean domains \cite{pursuit-games-unbounded}, and
compact \textsc{cat}$(0)$ (nonpositive-curvature) spaces \cite{pursuit-games-2D}.
A discrete-space discrete-time analog of the game is the \emph{cops and robber
game} \cite{Bonato-Nowakowski-2011}, where $k$ cops/pursuers and one
robber/evader alternate turns moving along edges on a graph;
the smallest $k$ for which some cop can land on the robber
is EXPTIME-complete \cite{Kinnersley-2015} and W[2]-hard
\cite{Fomin-Golovach-Kratochvil-2008} to compute,
but e.g.\ at most $3$ in planar graphs \cite{Aigner-Fromme-1984}.
Other discrete pursuit-evasion games include treewidth
\cite{Seymour-Thomas-1993} and fire fighting \cite{Finbow-MacGillivray-2009}
on graphs, and Conway's Angel Problem \cite{Kloster-2007,Mathe-2007}
on grids.

\paragraph{Our problem: Pursuit--\emph{escape}.}
In this paper, we introduce and explore a variation called the
\defn{pursuit--escape game},
where the two players are the \defn{escaper} (human/man) and \defn{pursuer}
(zombie/lion), and they move in \emph{complementary domains} --- for example,
the interior and exterior of a simple polygon --- and the escaper's goal is to
reach a common point on the boundaries of these domains
where the pursuer is not.
As ``practical'' motivation, the escaper/human/man may be inside a building or on
its roof, while the pursuer/zombie/lion is restricted to remain outside;
the escaper would like to reach an exit
when the pursuer is a positive distance away.
(Assume, for example, that the building is surrounded by a parking lot full of
cars, enabling escape if the escaper has a brief head start.)
The escaper and pursuer move continuously, at speeds bounded by respective
maximum speeds, and each move optimally.
When can the escaper escape, and when can the pursuer always prevent escape?
Unlike pursuit--evasion, the escaper can easily evade \emph{capture},
because of the complementary domains: just stand still.
The challenge in pursuit--escape is to \emph{escape} at a point
where the escaper could not be captured.

One specific instance of this problem, where the pursuer and escaper regions
are the interior and exterior of a unit disk,
has been studied many times before in different guises.
In 1961, Richard Guy \cite{FirstRecord} posed this problem in the form of
the following puzzle, reproduced in \cite{FirstCircleSolution}:

\begin{quotation} 
\noindent
Some robbers have stolen the green eye of a little yellow god
from a temple on a small island in the middle of a circular lake. As they embark
in their boat, they are observed by a solitary guard on the shore, who can run
four times as fast as they can row the boat. Can they be sure of reaching the
shore and escaping with their loot? If so, how? And what if the guard could move
four \emph{and a half} times as fast as the robbers? \looseness=-1
\end{quotation}

\noindent
The same problem was rethemed by Martin Gardner \cite{Gardner1965}
\nocite{Gardner1990}
to be about a maiden on a rowboat, and more recently, featured
on Numberphile \cite{Numberphile}.
The first explicit positive solution we know of is \cite{FirstCircleSolution};
see also e.g.\ \cite[Section~4.1]{Nahin-2007}.
We prove (for the first time) that this strategy is in fact optimal.

In this paper, we study this problem for more general domains than the
unit disk.
Specifically, suppose an escaper $\escaper$ and a pursuer $\pursuer$ move simultaneously and
continuously within respective geometric domains $D_\escaper$ and $D_\pursuer$,
while each player has full knowledge of the movements of the other player.%
\footnote{Notationally, we use $\escaper$ to denote the escaper and $\pursuer$
  to denote the pursuer, as $e$ and $p$ are used for other concepts (notably,
  edge and point); for a mnemonic, think ``human'' and ``zombie''.}
The pursuer moves at a maximum speed that is $r$ times faster than the escaper,
who we can assume has maximum speed~$1$.
To get started, the escaper chooses a starting position in~$D_\escaper$, and
then the pursuer chooses a starting position in~$D_\pursuer$.
The escaper wins if they can reach an exit point
among a specified set $\Exit$ of exits, say $D_\escaper \cap D_\pursuer$,
that is a positive distance away from the pursuer; and the pursuer wins if
they can prevent the escaper from winning for arbitrarily long.
The goal of the \defn{pursuit--escape game} is to determine who wins for
given domains $D_\escaper,D_\pursuer$ for the escaper and the pursuer,
an exit set $\Exit$, and a speed ratio~$r$.

\paragraph{Capture vs.\ no capture.}
There are two possible models for what happens when the escaper and pursuer
meet at the same geometric point.
The Lion-and-Man game follows the \defn{capture} model where the pursuer
wins if they are ever at the same location as the escaper.
For simplicity in both model and strategy descriptions, we assume the
\defn{no-capture model}: if the escaper and the pursuer are at a common point,
then (instead of the pursuer immediately winning)
the escaper is merely unable to escape at such an exit point,
because they are not a positive distance from the pursuer.
Intuitively, the pursuer \emph{blocks} the escaper from exiting
instead of \emph{capturing}.
Equivalently, we can think of there being two copies of the exit set~$\Exit$ ---
one for the escaper and one for the pursuer, where the distance between
corresponding points is zero --- and the escaper wins if they can reach
a pursuer's exit point without capture,
while the pursuer must remain in their domain;
by this perspective,
the no-capture model is a special case of the capture model.

Our no-capture model makes it easier to specify strategies.
For example, an escaper strategy can start at an exit point,
which forces the pursuer to start at the same point;
this exact forced placement then makes it easier to specify the rest of the
escaper strategy.
Figure~\ref{no capture} gives some simple examples of such strategies.
For convex escaper domains, such behavior can be simulated in a
capture model: the pursuer can instead start extremely close to an exit,
forcing the escaper to be very close to that exit.
\label{sec:capture-nonconvex}
For nonconvex domains like Figure~\ref{no capture nonconvex},
we need to modify strategies to avoid prematurely touching
the boundary where the escaper might accidentally be captured by the pursuer,
instead moving arbitrarily close to such reflex vertices.
This is easy to do for the interior of a polygon or polyhedron, or
more generally any escaper domain that has an $\epsilon > 0$ offset
that metrically approximates the original:
apply the strategy to the offset domain (which avoids touching the boundary)
until it is time to exit, then walk $\epsilon$ to the boundary.

In most cases, we extend our results to the capture model.
(In fact, it makes some of our hardness proofs easier.)
But we focus on the no-capture model in particular because it
makes it easier to relate a discrete game (as defined below)
to the continuous game, which enables us to derive
pseudopolynomial-time approximation schemes;
we leave it open whether these can extend to the capture model.

\begin{figure}
  \centering
  \subcaptionbox{Disk}{\includegraphics{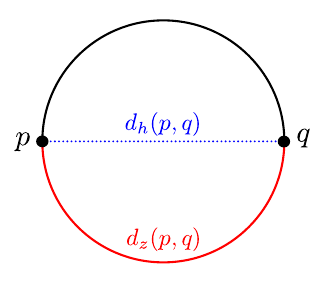}}\hfil
  \subcaptionbox{\label{no capture nonconvex} Nonconvex polygon}
    {\includegraphics{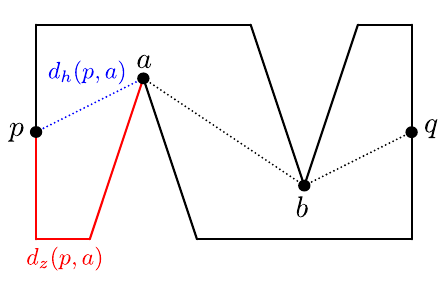}}

  \caption{Simple (suboptimal) strategies for the escaper in two domains:
    start at $p$, and run at full speed along the dotted shortest path to~$q$.
    The speed ratio $r$ must be at least $d_z(p,q) \over d_h(p,q)$
    for the pursuer to thwart this strategy,
    and thus the critical speed ratio is at least this large;
    see Theorem~\ref{lower bound}.}
  \label{no capture}
\end{figure}

\paragraph{Our results: Well-behaved model.}
It is not obvious that this game is well defined: how can two players decide
their motion continuously and instantaneously on the past motion of each other?
In contrast to most two-player games where the players take discrete turns,
so each move can easily depend on all past moves, this game involves
effectively infinitesimal alternation between the players' moves.
This difficulty was partially addressed by
Bollob\'as et al.~\cite{Bollobas-Leader-Walters-2012}
in the context of the Lion and Man problem, by giving a natural definition
of ``winning strategy'' which can fully depend on the past
(and in some sense the present)
behavior of the opponent.
Unfortunately, they also showed that this definition (without further
restrictions, at least in some scenarios) actually allows
\emph{both} players to have a winning strategy,
essentially because two strategies do not have a well-defined outcome of
playing against each other.

We prove an analogous result for the no-capture pursuit--escape game:
under definitions of winning strategy
analogous to \cite{Bollobas-Leader-Walters-2012}, the escaper always wins
(assuming the exit set is at least one-dimensional).
But notably, under this strategy, when the escaper exits,
their distance to the pursuer can be arbitrarily small,
depending on how quickly the pursuer responds.

Thus we turn to an alternate definition of ``winning''
the pursuit--escape game: the escaper must exit at a distance of at least
$\epsilon > 0$ from the pursuer, for a uniform constant $\epsilon$ that
does not depend on the pursuer's strategy
(in particular, how quickly they respond).
We prove that this definition guarantees that exactly one player wins,
in very general scenarios.
Indeed, we show that there is a \defn{critical speed ratio} $r^* \ge 0$
(possibly $\infty$) such that the escaper wins if and only if $r < r^*$
and the pursuer wins if and only if $r \geq r^*$ (for finite~$r^*$).
The \defn{pursuit--escape problem} thus asks to determine $r^*$,
given domains $D_\escaper,D_\pursuer$ and exit set $\Exit$.

In Section~\ref{sec:model}, we give a precise and general model for the
pursuit--escape problem, presented concisely to enable reading of
the algorithms in Sections~\ref{sec:O(1)}--\ref{sec:exact}.
In \Section~\ref{appendix:model}, we further detail the model and prove that it
satisfies the natural property that exactly one player wins the game,
for arbitrary domains (in any dimension)
that are finitely rectifiable in any bounded ball.
(Because the full model details are complicated, we delay them until we need
the techniques for developing additional algorithms in
\Section~\ref{appendix:pseudoPTAS}.)
\iftrue In particular, our model captures several natural settings for pursuit--escape:

\begin{itemize}
\item Escaper domains:
\begin{itemize} 
\item \defn{Polygon model}:
      the escaper domain $D_\escaper$ consists of the interior and boundary of
      a simple polygon.
\item \defn{Jordan model}:
      the escaper domain $D_\escaper$ consists of the interior and boundary of a
      Jordan curve of finite length, such as a circle in the original problem.
\item \defn{Polyhedron model}:
      the escaper domain $D_\escaper$ consists of the interior and boundary of
      a polyhedron homeomorphic to a sphere.
\end{itemize}
\item Pursuer domains:
\begin{itemize}
\item \defn{Exterior model}:
      the pursuer domain $D_\pursuer$ consists of the exterior and boundary of~$D_\escaper$.
\item \defn{Moat model}:
      the pursuer domain $D_\pursuer$ consists of the boundary of~$D_\escaper$
      (as with a shark trapped in a moat surrounding a building).
\item \defn{Graph model}:
      instead of Euclidean space, we have a graph with edge lengths
      (defining distance along the edges), and
      $D_\escaper$ and $D_\pursuer$ consist of some vertices and/or edges
      (including their endpoints).
\end{itemize}
\end{itemize}
\fi

For the Lion-and-Man game,
Bollob\'as et al.~\cite{Bollobas-Leader-Walters-2012}
gave an alternate approach for guaranteeing a unique winner to the game,
by restricting strategies to be ``locally finite''.
Our approach differs in that it redefines ``winning'' instead of directly
restricting strategies, though we also show that our definition implies
the existence of strategies satisfying a stronger (uniform) property
than local finiteness which we call ``obliviousness''
(see Section~\ref{sec:oblivious strategies}).
This stronger notion of obliviousness allows us to discretize the game
in a new way that enables efficient approximation algorithms.
Our results also apply more generally:
we allow strategies to run for unbounded time
(which is useful when the domains are unbounded),
and we guarantee unique winners without needing the Axiom of Choice.
(With the Axiom of Choice, we do obtain a simpler definition of
the pursuer winning phrased in terms of a single pursuer strategy,
but the rest of our results do not depend on this simpler definition.)

\paragraph{Our results: Algorithms.}
We develop several algorithms and prove several complexity results for
computing both exact and approximately optimal strategies for pursuit--escape.
For the benefit of the reader, we present the most algorithmically interesting
result first.

\definecolor{header}{rgb}{0.29,0,0.51}
\definecolor{rule}{rgb}{0.50,0.23,0.71}
\arrayrulecolor{rule}
\doublerulesepcolor{rule}
\setlength\arrayrulewidth{1pt}
\definecolor{gray}{rgb}{0.85,0.85,0.85}
\def\header#1{\textcolor{white}{\textbf{#1}}}

\begin{itemize}

\item In Section~\ref{sec:O(1)}, we give a
  polynomial-time $2(3 + \sqrt 6) < 10.89898$-approximation algorithm
  for the critical speed ratio $r^*$
  when the escaper domain is a simple polygon $P$
  and the pursuer domain is defined by either the exterior or moat model.
  The algorithm is based on a simple and natural formula
  $\max_{p,q \in \partial P} \frac{d_\pursuer(p,q)}{d_\escaper(p,q)}$,
  which we show is within a constant factor of~$r^*$
  (in particular, a lower bound on~$r^*$)
  in both the polygon and polyhedron model.
  These results extend to the capture model.

\item
  In \Section~\ref{appendix:exact},
  we solve the pursuit--escape problem
  exactly for several specific Jordan shapes
  in both the exterior and moat models:
  when $D_\escaper$ is 
  an unbounded wedge, a halfplane, 
  a disk (Guy's problem), 
  an equilateral triangle, and a square.
  We use the simple cases of wedge and halfplane to motivate
  a generalized escaper strategy called ``APLO''
  (axially progressing laterally opposing), which moves the
  escaper forward in an axial direction, with a lateral component that
  linearly opposes the pursuer's movement. We use APLO to 
  define optimal escaper strategies for the disk, equilateral
  triangle, and square.
  The last two results are especially complicated, requiring intricate
  strategies for both escaper and pursuer. 
  Table~\ref{exact table} summarizes the critical speed ratios we prove.
  These results extend to the capture model, as our optimal escaper strategies
  do not visit the boundary until the moment of escape.

  \begin{table}[b]
    \centering
    \def\arraystretch{1.75}
    \begin{tabular}{|c|c|c|c|c|} \cline{2-5}
    \rowcolor{header}
    \multicolumn{1}{c|}{\cellcolor{white}}
    & \header{$\theta$-Wedge} & \header{Disk} & \header{Equilateral Triangle} & \header{Square}
    \\ \hline
    \cellcolor{header} \header{$r^*$} & $1 / \sin \theta$ & $1/\cos \theta^* \approx 4.603$ & $(3+\sqrt 5) \sqrt 2 \approx 7.405$ & $\sqrt{\frac{5}{2} (7+\sqrt{41})} \approx 5.789$
    \\ \hline
    \end{tabular}
    \caption{Exact critical speed ratios for specific Jordan shapes,
      as proved in \Section~\ref{appendix:exact}.}
    \label{exact table}
  \end{table}

\item
  In \Section~\ref{appendix:pseudoPTAS},
  we give a
  pseudopolynomial-time approximation scheme for the critical speed ratio
  $r^*$ when the escaper domain is a simple polygon $P$
  and the pursuer domain is defined by either the exterior or moat model.
  This algorithm builds on the discrete model introduced in
  \Section~\ref{appendix:model} as an approximation to the continuous game
  to prove the game has a unique winner.
  The main extra step for an algorithm is proving a \emph{margin-of-victory}
  lemma (Lemma~\ref{MarginOfVictoryLemma}):
  if the escaper can win the continuous game at all, and the pursuer
  becomes slightly slower, then the escaper can win with a bit of time to spare.
  This seemingly innocuous claim is surprisingly involved to prove.
  It enables us to quantitatively decouple the interdependency of the escaper
  and pursuer strategies, and thereby bound the incurred discretization error.

\item
  In \Section~\ref{appendix:NPhard},
  we prove that the pursuit--escape problem
  in 3D is weakly NP-hard, even for polyhedral domains.
  This result motivates our focus on approximation algorithms.
  Our proof builds on the famous result by Canny and Reif
  \cite{Canny-Reif-1987} that it is weakly NP-hard to find shortest paths in 3D
  amidst polyhedral obstacles.

\item In \Section~\ref{MultipleZombiesSection}, we generalize the problem to
  multiple escapers and multiple pursuers, where the escapers win if at least one of them can escape.  
  On the positive side,
  our polynomial-time $O(1)$-approximation and pseudopolynomial-time
  approximation scheme generalize to this scenario.
  We also give a partial analysis of the case where the escapers and pursuers
  move at the same speed.
  On the negative side, we prove computational complexity --- both NP-hardness
  and PSPACE-hardness --- of even approximating the critical speed ratio
  in several scenarios, as summarized in Table~\ref{ComplexityTable}.
  Our reductions are from Nondeterministic Constraint Logic \cite{PSPACE-book},
  Planar Vertex Cover \cite{PlanarVertexCoverHard}, and Vertex Cover \cite{Karp72}.

\end{itemize}

\begin{table}[h]
\centering
  \tabcolsep=4pt
  \def\arraystretch{1.25}

\begin{tabular}{|c|c|c||c|}
\rowcolor{header}
\hline
\header{Escapers} & 
\header{Pursuers} & 
\header{Domain} & 
\header{Result} \\
\hline
Multiple & Multiple & 
Planar & $\vcenter{\vspace{4pt}%
  \hbox{PSPACE-hard [Theorem~\ref{thm:PSPACE}]; and}%
  \vspace{2pt}%
  \hbox{NP-hard, even to approximate at all [Theorem~\ref{PlanarHardnessTheorem}]}
  \vspace{2pt}%
}$
\\
\hline
1 & Multiple & 
Connected & \multicolumn{1}{l|}{NP-hard, even to 2-approximate [Theorem~\ref{OneHumanHardnessTheorem}]}\\
\hline
\end{tabular}
\caption{Multi-pursuer hardness results, as proved in \Section~\ref{MultipleZombiesSection}.}
\label{ComplexityTable}
\end{table}

\section{Brief Model (Abbreviated Version of \Section~\ref{appendix:model})}
\label{sec:model}

As mentioned above, it takes some care to define a precise model of
simultaneous play of two (or more) continuously moving players that can
continuously adapt to each other's motion.  We generally follow the definitions
from pursuit--evasion games in \cite{Bollobas-Leader-Walters-2012}, generalized
to where the players have different speeds and different domains they traverse.
Crucially, however, our game's definition of ``winning'' is different,
and we show that under it
exactly one player wins in any game.

In this abbreviated version of \Section~\ref{appendix:model},
we define the key notions of our model and summarize the
main results that are necessary for understanding
the algorithms in Sections~\ref{sec:O(1)}--\ref{sec:exact}.
For a more detailed description of why we use these
particular definitions, how they differ from past work, and proofs of why
exactly one player has a winning strategy, read instead the long form
of the model in \Section~\ref{appendix:model}.

\paragraph{Domains.}
A \defn{player domain} is a closed subset $D$ of Euclidean space
$\mathbb R^k$
that is \nonsouldefn{locally finitely rectifiable},
meaning that its intersection $D \cap B$
with any bounded closed Euclidean ball $B$
is ``finitely rectifiable''.  Formally,
$R \subseteq \mathbb R^k$ is \defn{finitely rectifiable} if it is
the union of the images of finitely many functions of the form
$S : [0,1]^k \to R$ satisfying the Lipschitz condition
$d(S(u),S(v)) \leq d(u,v)$ for all $u, v \in [0,1]^k$.%

The input to the pursuit--escape problem consists of both an \defn{escaper
  domain} $D_\escaper$ and a \defn{pursuer domain} $D_\pursuer$, and an \defn{exit
  set} $\Exit$.  The escaper and pursuer domains must be \emph{player domains} as
described above.
The exit set $X$ must also be a player domain,
and a subset of the player domains: $\Exit \subseteq D_\escaper \cap D_\pursuer$.
The goal of the escaper will be to reach an \defn{exit} --- any point of the
exit set $\Exit$ --- while being sufficiently away from the pursuer.

\paragraph{Motion paths.}
A \defn{motion path} with maximum speed $s \geq 0$ in metric domain $D$
is a function $\player : [0,\infty) \to D$
satisfying the \defn{speed-limit constraint} (Lipschitz condition)
$$ d_D(\player(t_1), \player(t_2)) \leq s \cdot |t_1 - t_2| \text{ for all }t_1,t_2 \geq 0. $$

We consider a model where the pursuer maximum speed is a factor of $r$
larger than the escaper maximum speed, which we assume is $1$ for simplicity.
Thus an \defn{escaper motion path} is a motion path of maximum speed $1$ in
the escaper domain $D_\escaper$,
while a \defn{pursuer motion path} is a motion path of maximum speed $r$ in
the pursuer domain $D_\pursuer$.

\paragraph{Strategies.}

For symmetry, the following definitions refer to a \defn{player}
(either escaper and pursuer) and their \defn{opponent} (pursuer or escaper,
respectively).
A \defn{player strategy} is a function $\Player$ mapping an opponent motion path
$\opponent$ to a player motion path $\Player(\opponent)$ satisfying the following \defn{nonbranching-lookahead
constraint}:
\begin{quote}
for any two opponent motion paths $\opponent_1,\opponent_2$ agreeing on $[0,t]$,
the strategy's player motion paths $\Player(\opponent_1),\Player(\opponent_2)$ also agree on $[0,t]$.
\end{quote}
An \defn{escaper strategy}~$\Escaper$ must satisfy one additional constraint,
the \defn{escaper-start constraint}:
\begin{quote}
all paths $\Escaper(\pursuer)$ (over all pursuer motion paths~$\pursuer$)
must start at a common point $\Escaper(\pursuer)(0)$.
\end{quote}

\paragraph{Win condition.}

First we define an infinite family of games $G_\epsilon$
for all $\epsilon > 0$.
An escaper strategy $\Escaper$ \defn{wins $G_\epsilon$} or
\defn{wins $G$ by $\epsilon$} if,
for every pursuer motion path $\pursuer$,
there is a time $t$ at which $\Escaper(\pursuer)(t)$ is on an exit and at distance
$\geq \epsilon$ from $\pursuer(t)$ in the pursuer metric.
A pursuer strategy $\Pursuer$ \defn{wins $G_\epsilon$} if,
for every escaper motion path $\escaper$,
and every time $t$ at which $\escaper(t)$ is on an exit,
$\escaper(t)$ is at distance $< \epsilon$ from $\Pursuer(\escaper)(t)$ in the pursuer metric:
$d_\pursuer(\escaper(t), \Pursuer(\escaper)(t)) < \epsilon$.

Now we can define the win condition for the pursuit--escape game~$G$.
The \defn{escaper wins $G$} if, for some $\epsilon > 0$,
there is an escaper strategy that wins $G$ by~$\epsilon$, i.e.,
wins~$G_\epsilon$.
The \defn{pursuer wins $G$} if, for all $\epsilon > 0$,
there is a pursuer strategy that wins $G_\epsilon$.

The main result of \Section~\ref{appendix:model} is the following:

\global\let\hlsecond=\relax
\begin{restatable*}{corollary}{finalmodelresult} \label{cor:main model}
  \hlsecond{%
  Any (continuous) pursuit--escape instance $(D_\escaper, D_\pursuer, \Exit)$
  has a critical speed ratio $r^* \geq 0$ (possibility $\infty$)
  such that the escaper wins $G(r)$
  for all speed ratios $r < r^*$ and the pursuer wins $G(r)$
  for all speed ratios $r \geq r^*$.
  }
  \global\let\hlsecond=\hl
\end{restatable*}

\section{$O(1)$-Approximation Algorithm}
\label{sec:O(1)}

In this section, we show that the critical speed ratio
for any simple polygon $P$
is lower bounded by 
$\max_{p,q \in \partial P}\frac{d_\pursuer(p,q)}{d_\escaper(p,q)}$ and
upper bounded by $10.89898 \max_{p,q \in \partial P}\frac{d_\pursuer(p,q)}{d_\escaper(p,q)}$, where
the escaper domain $D_\escaper$ is the interior and boundary of~$P$,
the pursuer domain $D_\pursuer$ is the boundary and optional exterior of~$P$
(thus allowing either the exterior or moat models),
and $d_\escaper$ and $d_\pursuer$ are the intrinsic (shortest-path) metrics
in the escaper and pursuer domains respectively
(as defined in Section~\ref{sec:model}).
Our results are constructive: in Section~\ref{O(1) strategies}
we give a winning escaper strategy for speed ratio
$\max_{p,q \in \partial P}\frac{d_\pursuer(p,q)}{d_\escaper(p,q)}$
and a winning pursuer strategy for speed ratio
$10.89898 \max_{p,q \in \partial P}\frac{d_\pursuer(p,q)}{d_\escaper(p,q)}$.
Furthermore, we give a polynomial-time algorithm in Section~\ref{O(1) algorithm}
to compute a maximizing point pair $(p,q)$,
resulting in a polynomial-time constant-factor approximation algorithm.
As described in Section~\ref{sec:capture-nonconvex},
the strategies can also be modified to work in the capture model
by a small inset.

\subsection{Strategies}
\label{O(1) strategies}

The escaper strategy is simple: run from $p$ to $q$
for the pair $p,q$ achieving the maximum ratio.
The main idea for our pursuer strategy is to decompose the polygon into its
medial axis, and within each region corresponding to a polygon edge, try to
follow a natural strategy for a halfplane, namely, following the projection
of the escaper onto the edge (proved optimal for a halfplane
in Section~\ref{sec:halfplane}).
The challenge is when the escaper crosses the medial axis from one
region to the other, and possibly jumps back and forth between two regions.
We only view the escaper as having changed regions once they have left a
larger region called the ``fringe'', meaning they are deeply in another region;
see Figure~\ref{MedialAxisFigure}.
Then we argue that the pursuer has enough time to transition to the new region's
strategy before the escaper can escape or transition again.

\begin{theorem}[lower bound] \label{lower bound}
  For any escaper domain $D_\escaper$, pursuer domain $D_\pursuer$, and exit set $\Exit$,
  the critical speed ratio $r^*$ is at least
  $$\max_{p,q \in \Exit}\frac{d_\pursuer(p,q)}{d_\escaper(p,q)}.$$
\end{theorem}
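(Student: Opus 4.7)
The plan is to prove the contrapositive-like statement: for any $p,q \in \Exit$ with $p \neq q$ and any speed ratio $r < R := d_\pursuer(p,q)/d_\escaper(p,q)$, I will exhibit an escaper strategy that wins $G(r)$, which implies $r^* \geq R$; taking the supremum over pairs $(p,q)$ then yields the bound. Intuitively, the escaper pre-commits to start at $p$, forcing the pursuer to begin near $p$ on pain of immediate loss, and then runs a shortest path in $D_\escaper$ to $q$, which the pursuer cannot catch because their detour along $D_\pursuer$ is too long for their speed advantage.

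Concretely, I fix $T = d_\escaper(p,q)$, set the margin
\[
\epsilon \;=\; \tfrac{1}{2}\bigl(d_\pursuer(p,q) - r T\bigr),
\]
which is strictly positive since $r < R$, and describe the escaper strategy $\Escaper$ as follows. The strategy always starts at $p$ (which lies in $D_\escaper$ because $\Exit \subseteq D_\escaper \cap D_\pursuer$), so the escaper-start constraint holds. Given a pursuer motion path $\pursuer$, the strategy inspects only $\pursuer(0)$: if $d_\pursuer(\pursuer(0),p) \geq \epsilon$, then $\Escaper(\pursuer)$ stays at $p$ for all time; otherwise $\Escaper(\pursuer)$ traverses a fixed $D_\escaper$-geodesic from $p$ to $q$ at unit speed (such a geodesic realizing $d_\escaper(p,q)$ exists because the domain is locally finitely rectifiable), and stays at $q$ thereafter. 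Since the branching depends only on the value $\pursuer(0)$, the no-lookahead constraint is satisfied: any two pursuer paths agreeing on $[0,t]$ in particular agree at $0$, so the strategy produces identical escaper paths.

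To see that $\Escaper$ wins $G_\epsilon$, I case-split on the pursuer's start. In the first case $d_\pursuer(\pursuer(0),p) \geq \epsilon$, the escaper is at the exit $p$ at time $t=0$ with $d_\pursuer(\Escaper(\pursuer)(0),\pursuer(0)) \geq \epsilon$, which already witnesses escape. In the second case, the escaper arrives at $q \in \Exit$ at time $T$; meanwhile the pursuer has travelled $d_\pursuer$-distance at most $rT$ from a starting point within $\epsilon$ of $p$, so the triangle inequality in the pursuer metric gives
\[
d_\pursuer(\pursuer(T),q) \;\geq\; d_\pursuer(p,q) - d_\pursuer(p,\pursuer(0)) - d_\pursuer(\pursuer(0),\pursuer(T))
\;\geq\; d_\pursuer(p,q) - \epsilon - rT \;=\; \epsilon,
\]
again witnessing escape by exactly the required margin. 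This proves that the escaper wins $G(r)$, so $r^* > r$; letting $r \uparrow R$ gives $r^* \geq R$, and taking the max (or supremum) over admissible pairs $(p,q) \in \Exit \times \Exit$ completes the proof.

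There is no real obstacle here beyond verifying the constraints of the formal model from Section~\ref{sec:model}; the only point deserving care is the choice of $\epsilon$ so that the same margin governs both the ``pursuer-starts-far'' and ``pursuer-starts-close'' branches, and the use of the pursuer's intrinsic metric $d_\pursuer$ (rather than the ambient Euclidean one) when applying the triangle inequality, since the speed-limit constraint for the pursuer is stated with respect to $d_\pursuer$.
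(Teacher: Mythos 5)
Your proof is correct and follows essentially the same approach as the paper's: start at $p$, forcing the pursuer to start (near) $p$, then run a shortest path to $q$ and win by the leftover margin. Your version is in fact slightly more careful than the paper's, which simply ``assumes the pursuer is also at $p$''; your explicit two-case split on whether $d_\pursuer(\pursuer(0),p)\geq\epsilon$ and the halved margin make that reduction rigorous within the $G_\epsilon$ framework.
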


\begin{proof}
Let $p$ and $q$ be points maximizing the expression above,
and let $r_\eps=\frac{d_\pursuer(p,q)-\eps}{d_\escaper(p,q)}$.
The escaper can start at $p$ (escaper-start constraint); we can assume that the pursuer is also at $p$, or else the escaper escapes at $p$. 
Then, the escaper can run toward~$q$ at full speed (speed-limit constraint). 
This strategy does not depend on the pursuer's position at all (nonbranching-lookahead constraint).
The escaper's distance to $q$ is $d_\escaper(p,q)$ and the pursuer's is $d_\pursuer(p,q)$,
so when the escaper reaches $q$, the pursuer is at least $\eps$ away in pursuer metric, and the escaper escapes.
Therefore $r^* \geq r_\eps$ for all $\eps > 0$, and thus $r^* \geq r_0$.
\end{proof}

For a polygonal escaper domain $D_h$,
this escaper strategy can be extended to the capture model as described in
Section~\ref{sec:capture-nonconvex} and Figure~\ref{no capture nonconvex}.
For $\delta > 0$, consider the modified strategy where we inset $P$ by a disk
of radius $\delta$ to produce a region $P'$,
which for sufficiently small $\delta$ is connected
and has approximately the same shortest-path metric;
round the start point $p$ and end point $q$ to nearest points $p'$ and $q'$
respectively on $\partial P'$;
start at~$p'$;
run along a shortest path from $p'$ to $q'$ within~$P'$;
and then run along a shortest path from $q'$ to~$q$.
This strategy only touches the boundary of $P$ at the final time
when it reaches~$q$, but it starts at approximately the same point $p'$
and runs approximately the same distance.
Now take the limit as $\delta \to 0$.

\begin{theorem}[upper bound] \label{ConstantApproximationTheorem}
  For any simple polygon $P$ or polyhedron $P$ homeomorphic to a sphere,
  define escaper domain $D_\escaper$ as $P$'s interior and boundary,
  pursuer domain $D_\pursuer$ as $P$'s boundary and any subset of
  $P$'s exterior,
  and exit set $\Exit = \partial P$ as $P$'s boundary.
  Then the critical speed ratio $r^*$ is at most
  $$2 (3 + \sqrt 6) \max_{p,q \in \Exit}\frac{d_\pursuer(p,q)}{d_\escaper(p,q)}
  < 10.89898 \max_{p,q \in \Exit}\frac{d_\pursuer(p,q)}{d_\escaper(p,q)}.$$
\end{theorem}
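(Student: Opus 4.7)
The plan is to exhibit, for every $\eta>0$, an explicit pursuer strategy that wins whenever $r \geq 2(3+\sqrt{6})\,M + \eta$, where $M := \max_{p,q\in\partial P}\frac{d_\pursuer(p,q)}{d_\escaper(p,q)}$. I would start from the medial-axis decomposition: partition $P$ into cells $R_e$, one per edge $e$ of $\partial P$ (per face in 3D), where $R_e$ is the set of interior points whose nearest boundary point lies on $e$. For $\escaper\in R_e$, let $\pi_e(\escaper)\in e$ denote the nearest-boundary-point projection. The natural ``in-cell'' pursuer strategy is to chase the current $\pi_e(\escaper(t))$ along $\partial P$ at full speed; the halfplane analysis promised in Section~\ref{sec:halfplane} shows that this strategy matches any escaper at speed ratio $1$ on a halfplane, so our pursuer, with effective boundary-speed-to-projection-speed ratio $\rho := r/M$, has a large surplus within any single cell.

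To globalize the strategy, pick a fringe width $\delta>0$ and let $F_e\supseteq R_e$ be $R_e$ enlarged across its medial-axis boundary by distance $\delta$, as in Figure~\ref{MedialAxisFigure}. The pursuer tracks an ``active edge'' $e_t$, uses the in-cell strategy for $e_t$, and only switches $e_t\to e'$ once the escaper has fully left $F_{e_t}$ and lies strictly inside $R_{e'}$. This hysteresis blocks chattering. At a switch the pursuer reroutes along $\partial P$ from near $\pi_{e_t}(\escaper)$ to near $\pi_{e'}(\escaper)$; by the definition of $M$, this detour has $\partial P$-length at most $M\cdot d_\escaper(\pi_{e_t}(\escaper),\pi_{e'}(\escaper))$, while the escaper needed at least $\delta$ units of path-length to cross the fringe, giving the pursuer $\Omega(\delta)$ real time at speed $r$ in which to execute the reroute. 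Finally, if the escaper reaches an exit $q\in\partial P$ at time $T$ with active cell $R_{e_T}$, I would bound the pursuer's $d_\pursuer$-lag behind $q$ by three contributions: the in-cell lag accumulated since the last switch, the $\partial P$-distance from $\pi_{e_T}(\escaper(T))$ to $q$ (at most $M\cdot O(\delta)$), and the residue of earlier transitions. Picking $\delta$ small enough relative to the prescribed $\epsilon$ wins $G_\epsilon$ for all $\epsilon>0$.

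The main obstacle is the amortized bookkeeping across long sequences of cell transitions, because an escaper weaving near the medial axis can force many reroutes and the cumulative pursuer lag must not blow up. I would set up a potential $\Phi(t)$ equal to the pursuer's current $d_\pursuer$-lag behind $\pi_{e_t}(\escaper(t))$ plus a depth-into-$R_{e_t}$ charge that prepays the next transition. Between switches $\Phi$ grows by at most a bounded rate per unit of escaper travel; at each switch the prepaid charge covers the rerouting cost. Requiring that $\Phi$ stays bounded reduces to a quadratic inequality in $\delta$ and $\rho$, and the critical root $\rho = 2(3+\sqrt{6})$, i.e.\ the positive root of $\rho^2 - 12\rho + 12 = 0$, is exactly the constant in the theorem. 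The polyhedral case proceeds identically: replace edges by faces, $\pi_e$ by Euclidean projection to a facet, and the $1$-dimensional medial axis by the $2$-dimensional one, leaving the potential argument structurally unchanged.
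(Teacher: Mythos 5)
Your overall architecture (medial-axis cells, per-cell projection tracking, a fringe with hysteresis to prevent chattering, and a time budget earned while the escaper crosses the fringe) matches the paper's proof. But there is a genuine quantitative gap in how you define the fringe, and it breaks the key step. You take $F_e$ to be $R_e$ thickened by a \emph{fixed} width $\delta$, so the escaper pays only $\delta$ of travel to trigger a transition. The pursuer's reroute cost at a transition, however, is not $O(\delta)$: if the escaper crosses the medial axis at a point $\escaper$ with $d(\escaper,\partial P)=D$, the old and new projections $p,q$ each lie at distance about $D$ from $\escaper$, so $d_\escaper(p,q)$ can be as large as $2D$ and the pursuer must run roughly $2MD$ along $\partial P$. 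Since $D/\delta$ is unbounded (the crossing can happen arbitrarily deep inside $P$), no constant speed ratio lets the pursuer finish a reroute of length $\Theta(MD)$ in the $\Theta(\delta/1)$ time the escaper grants; your potential function cannot prepay a charge that is unbounded relative to the income rate. The paper avoids this by making the fringe width \emph{scale with distance to the boundary}: the fringe of a region is the union over its points $p$ of balls of radius $x\cdot d(p,\partial P)$ with $x=\sqrt6-2$. Then leaving the new fringe forces the escaper to travel at least $x\cdot d(\escaper,q)\geq x(1-x)\,d(o,p)$, which is proportional to the same $D$ that governs the reroute cost, and the whole analysis becomes scale-invariant.

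With that fringe, the required ratio is $1+\frac{1}{1-x}+\frac{d_\pursuer(p,q)}{x\,d(\escaper,q)}$, which after the bound $d_\escaper(p,q)\le \frac{2}{1-x}d(\escaper,q)$ becomes $\bigl(1+\frac{1}{1-x}+\frac{2}{x(1-x)}\bigr)\frac{d_\pursuer(p,q)}{d_\escaper(p,q)}$, minimized at $x=\sqrt6-2$ to give $2(3+\sqrt6)$. Your assertion that the constant arises as the positive root of $\rho^2-12\rho+12=0$ is numerically consistent with $2(3+\sqrt6)$ but is not derived from your setup, which is a further sign the bookkeeping was not actually closed. A secondary point: no amortization over long transition sequences is needed once the fringe is scale-proportional, because the paper shows the pursuer is fully repositioned (back at the projection onto the new facet) before the escaper can either reach $\partial P$ or trigger the next transition, so each transition is handled in isolation rather than carrying a residue forward.
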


\begin{proof}
Divide $P$ into (open) \defn{medial-axis regions}, as shown in
Figure~\ref{MedialAxisSimpleFigure}: each region is associated
with a facet (edge or face) $f$ of $P$ and is the set of points inside $P$
closer to $f$ than to any other facet of~$P$.
For each medial-axis region, also define its \defn{fringe} to be the union,
over points $p$ inside the region, of the ball of points within distance
$x \cdot d(p,\partial P)$ of~$p$, where
$d(p,\partial P )$ is the distance from $p$ to the nearest point on the boundary
of $P$ and $x = \sqrt 6 - 2 \approx 0.45$ is a fringe size parameter.
In particular, each fringe contains its medial-axis region.
Because there is a bijection between medial-axis regions and facets of $P$,
we also refer to the fringe of a facet of~$P$.

\begin{figure}[h]
\centering
\includegraphics[scale=0.6]{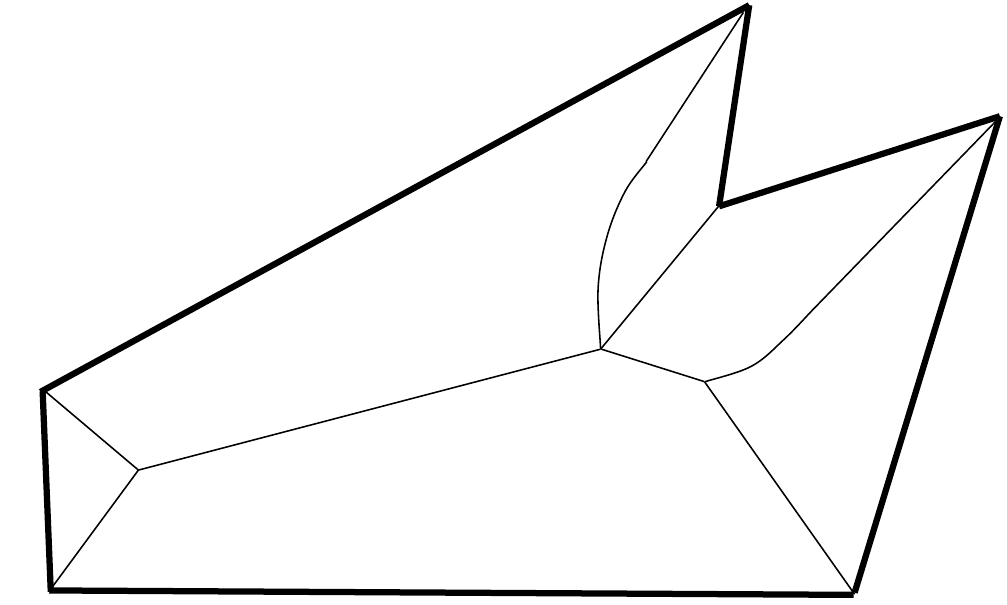}
\caption{A polygon and its medial axis.}
\label{MedialAxisSimpleFigure}
\end{figure}

Define the following pursuer strategy:
\begin{enumerate}
\item At all times, the pursuer has a \defn{target facet} $f$ of $P$ such that it attempts to be at the closest point on $f$ to the escaper. Initially, $f$ is a facet of $P$ that is closest to the escaper.
\item When the escaper exits the fringe of $f$, the pursuer runs to the closest point on the boundary $\partial P$ to the escaper. If that point is on facet $f'$ of $P$, then the pursuer switches its target facet to~$f'$.
\end{enumerate}

This strategy depends only on the current escaper position (nonbranching-lookahead constraint). 
We have to show that the strategy also satisfy the speed-limit constraint and that the pursuer is at the escaper's position whenever the escaper is in $\partial P$.
We show that, when the escaper leaves the fringe of facet $f$ in the medial-axis region of a facet $f'$, the pursuer can run into position (reaching the closest point in $\partial P$ to the escaper) before the escaper either reaches the boundary $\partial P$ (and escapes) or leaves the fringe of $f'$ (which would trigger another strategy change).

Next we define some points, as in Figure~\ref{MedialAxisFigure}.
Let $\escaper$ be the point at which the escaper leaves the fringe (drawn in blue) of a medial-axis region $R$ (drawn in red) with corresponding facet $f_p$. 
Because $\escaper$ is on the boundary of the fringe of $R$, it is also on a sphere centered at a point $o$ on the boundary of $R$ (i.e., on the medial axis) of radius $d(o,\escaper) = x \cdot d(o,\partial P) = x\cdot d(o,p)$ where $p$ is the closest point to $o$ on~$f_p$.
Let $\pursuer$ be the closest point to $\escaper$ on $f_p$, which is where the pursuer stands when the escaper is at $\escaper$.
Note that $\pursuer$ is an endpoint of $f_p$ if such endpoint is a reflex vertex of $P$, i.e., it is not necessarily the projection of $\escaper$ on the supporting line of $f_p$.
Let $q$ be the closest point to $\escaper$ on $\partial P$, and let $f_q$ be a facet containing $q$.

\begin{figure}[h]
\centering
\includegraphics[scale=0.5]{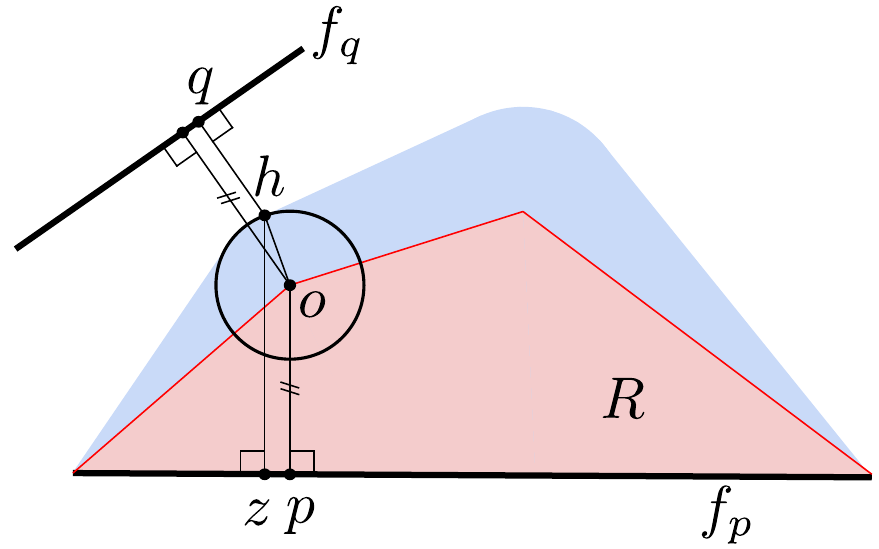}
\caption{The scenario when the escaper leaves the fridge (blue) of a medial-axis region $R$ (red), at a point $h$ now closest to facet $f_q$.}
\label{MedialAxisFigure}
\end{figure}

At $\escaper$, the escaper's distance to the boundary is
\begin{gather}
\begin{align}
d(\escaper,q) &\geq d(o,q) - d(o,\escaper) \quad \text{by triangle inequality}
\nonumber\\
&= d(o,q) - x \cdot d(o,p)
\nonumber\\
&\geq (1-x) \, d(o,p) \quad \text{because $d(o,q) \geq d(o,p)$}.
\end{align}
\label{d(h,q)}
\end{gather}
To leave the fringe of their new medial-axis region for facet $f_q$,
the escaper must run a distance of at least $x \cdot d(\escaper,q)$. 
We arrange for the pursuer to be in position for the new region's strategy
before either event (reaching the boundary or leaving the new fringe),
by bounding the motion of the pursuer during the next motion
of the escaper by at most $x \, d(\escaper,q) \leq d(\escaper,q)$
(assuming $x \leq 1$).
To reach the new strategy, the pursuer must move at most the sum of
three distances:
\begin{enumerate}
\item $d(\pursuer,p)$ to return to $p$. Because $\pursuer$ is the closest point on $f_p$ to $\escaper$, it is at least as close to $p$ as the projection of $\escaper$ onto the supporting line of $f_p$ (possibly closer, if $f_p$ is incident to a reflex vertex). 
The length of that projection is at most
$d(o,\escaper) = x \cdot d(o,p) \leq \frac{x}{1-x} \, d(\escaper,q)$
by (\ref{d(h,q)}),
so that is an upper bound on the pursuer's distance to return to $p$.
\item $d_\pursuer(p,q)$ to reach $q$. 
\item $\leq x \cdot d(\escaper,q)$ to match the escaper's move (projected onto $f_q$).
\end{enumerate}

So, if the pursuer's speed is enough to travel these three distances in the
time the escaper travels a distance of $x \cdot d(\escaper,q)$, then the
pursuer can be in position in time for the escaper's next region change
or escape.  That is, the critical speed ratio $r^*$ is at most
$$
\frac{\frac{x}{1-x} \, d(\escaper,q) + d_\pursuer(p,q) + x \cdot d(\escaper,q)}{x \cdot d(\escaper,q)}
= 1 + \frac{1}{1-x} + \frac{d_\pursuer(p,q)}{x \cdot d(\escaper,q)}. %
$$
Because a closest point to $o$ on $\partial P$ is $p$, the circle centered at $o$ with radius $d(o,p)$ is contained in $P$, so the line segment from $p$ to $q$ is also contained in~$P$.
Thus $d_\escaper(p,q) = d(p,q)$, which by triangle inequality is
at most $2 \, d(o,p) \leq \frac{2}{1-x} \, d(\escaper,q)$.
Thus our upper bound on $r^*$ is at most
$$
1 + \frac{1}{1-x} + \frac{d_\pursuer(p,q)}{x \, \frac{1-x}{2} \, d_\escaper(p,q)}.
$$
Because $d_\escaper(p,q)$ follows the straight line segment between $p$ and $q$,
$\frac{d_\pursuer(p,q)}{d_\escaper(p,q)} \geq 1$.
Therefore we can upper bound $r^*$ by
$$
\left(1 + \frac{1}{1-x} + \frac{2}{x(1-x)} \right) \frac{d_\pursuer(p,q)}{d_\escaper(p,q)}.
$$
This upper bound is minimized when $x = \sqrt 6 - 2$,
so picking $x = \sqrt 6 - 2$,
we obtain an upper bound of $r^* \leq 2 \, (3 + \sqrt 6) \frac{d_\pursuer(p,q)}{d_\escaper(p,q)}$.
\end{proof}

\subsection{Algorithm}
\label{O(1) algorithm}

The upper bound of Theorem~\ref{ConstantApproximationTheorem} combined with
the lower bound of Theorem~\ref{lower bound} suggest a polynomial-time
constant-factor approximation algorithm for simple polygons and
polyhedra homeomorphic to a sphere.
However, it requires some work to actually find a pair of points $p,q \in \Exit$
maximizing $\frac{d_\pursuer(p,q)}{d_\escaper(p,q)}$.  
Here we show how to solve the polygon case, and leave the polyhedron case
as an open problem.

\begin{theorem} \label{thm:approx}
  Given a simple polygon $P$ with $\leq n$ vertices,
  and given exit set $\Exit \subseteq \partial P$ as a set of $\leq n$ segments,
  we can compute the pair of points
  $(p^*,q^*) = \arg\max_{p,q \in \Exit}\frac{d_\pursuer(p,q)}{d_\escaper(p,q)}$,
  up to a $1+\epsilon$ factor error,
  in $O(n^4 \log {1 \over \epsilon})$ time.
\end{theorem}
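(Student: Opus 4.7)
The plan is to enumerate the $O(n^2)$ pairs of exit segments $(e_1,e_2)$ and, for each pair, decompose the 2D parameter space into rectangular cells on which the ratio has a simple algebraic form, then maximize the ratio in each cell via binary search on its value. Parametrize $p=p(t)\in e_1$ and $q=q(s)\in e_2$ by $t,s\in[0,1]$. Using the geodesic funnel/hourglass structure of shortest paths in a simple polygon, the interior distance has the separable form
\[ d_\escaper(p(t),q(s))=|p(t)-a_\escaper|+c_\escaper+|b_\escaper-q(s)|, \]
where $a_\escaper$ (the first reflex vertex on the shortest path from $p$) depends only on $t$, $b_\escaper$ (the last reflex vertex before $q$) depends only on $s$, and $c_\escaper$ is the fixed length of the chain connecting them. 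The cells on which $(a_\escaper,b_\escaper,c_\escaper)$ is constant therefore form an $O(n)\times O(n)$ grid of axis-aligned rectangles in $(t,s)$-space. The pursuer distance $d_\pursuer$ (piecewise-linear arclength along $\partial P$ in the moat model, or exterior geodesic in the exterior model) admits an analogous rectangular decomposition, so the common refinement still has $O(n^2)$ cells per edge pair and $O(n^4)$ cells overall, which can be constructed in $O(n^4)$ time using standard hourglass / shortest-path-map machinery.

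In one rectangular cell $[t_1,t_2]\times[s_1,s_2]$, the objective reads
\[ \frac{d_\pursuer(p(t),q(s))}{d_\escaper(p(t),q(s))}=\frac{f_\pursuer(t)+g_\pursuer(s)+c_\pursuer}{f_\escaper(t)+g_\escaper(s)+c_\escaper}, \]
where each $f,g$ has the form $|p(\tau)-v|$ for a fixed point $v$. I would binary search on a candidate ratio $\rho$: feasibility of $d_\pursuer-\rho\,d_\escaper\geq 0$ is \emph{separable} in $t$ and $s$ and reduces to
\[ \max_{t\in[t_1,t_2]}\bigl[f_\pursuer(t)-\rho f_\escaper(t)\bigr]+\max_{s\in[s_1,s_2]}\bigl[g_\pursuer(s)-\rho g_\escaper(s)\bigr]+(c_\pursuer-\rho c_\escaper)\geq 0. \]
Each one-dimensional maximum is of a function $|p(\tau)-a|-\rho|p(\tau)-a'|$ whose critical points satisfy, after squaring the stationarity condition, a polynomial equation of degree at most four; hence there are $O(1)$ candidate maximizers and the feasibility test for one cell runs in $O(1)$ time. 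Binary-searching $\rho$ over a polynomially bounded initial range (taking the lower bound of Theorem~\ref{lower bound} at polygon-vertex pairs and a trivial perimeter-based upper bound) takes $O(\log\tfrac{1}{\epsilon})$ iterations to reach multiplicative accuracy $1+\epsilon$, giving $O(n^4\log\tfrac{1}{\epsilon})$ in total.

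The main obstacle is rigorously establishing the rectangular cell decomposition and its $O(n^2)$-per-edge-pair complexity. The ``open'' hourglass case --- where the funnels from $e_1$ and $e_2$ are joined by a genuinely fixed string of reflex vertices --- is clean, but the ``closed'' case, in which shortest paths may route through different reflex vertices depending on both $t$ and $s$, requires care because the chain length $c_\escaper$ may a priori depend on both parameters. A case analysis of the hourglass (or a direct invocation of standard geodesic hourglass data structures) is needed to show that the dependence still factors into two axis-parallel partitions, keeping the count at $O(n^2)$. The exterior-model analysis for $d_\pursuer$ proceeds by the same argument applied to the complementary domain and introduces no new ideas; the moat model is strictly simpler because $d_\pursuer$ is piecewise linear in $t$ and $s$.
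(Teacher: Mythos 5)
Your overall architecture parallels the paper's: subdivide the parameter space into $O(n^4)$ cells on which the combinatorial type of both shortest paths is fixed, then solve a constant-size algebraic problem per cell, with a $\log\frac1\epsilon$ factor for numerical accuracy. The per-cell solver is genuinely different (binary search on $\rho$ with a separable feasibility test, versus the paper's gradient-vanishing conditions solved via the existential theory of reals), and that part is fine where it applies. However, there is a concrete gap: your separability claim $d_\escaper(p(t),q(s))=|p(t)-a_\escaper|+c_\escaper+|b_\escaper-q(s)|$ fails precisely in the case that matters most. When $p$ and $q$ are mutually visible inside $P$, the interior shortest path has \emph{no} reflex vertices and $d_\escaper(p,q)=|p(t)-q(s)|=\sqrt{Q(t,s)}$ for a jointly quadratic $Q$; this does not split as $f_\escaper(t)+g_\escaper(s)$, so the feasibility test $d_\pursuer-\rho\,d_\escaper\ge 0$ does not reduce to two independent one-dimensional maximizations. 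This is not a degenerate corner: Lemma~\ref{lem:single-hour-glass} shows that when $\Exit=\partial P$ the maximizing pair is attained exactly in this mutual-visibility regime, so your algorithm as written cannot certify the optimum. (Paths with exactly one intermediate vertex are separable, so the problem is isolated to the zero-vertex case, but that case must be handled by a non-separable $O(1)$-size solver, e.g.\ the paper's stationarity-plus-root-isolation route.)

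A second, related gap is the claimed axis-aligned rectangular cell structure. For a closed hourglass (funnels joined by a fixed string) the breakpoints in $t$ and $s$ are indeed independent and you get a grid. But in the open-hourglass regime the boundary in $(t,s)$-space between ``direct visibility'' and ``path through reflex vertex $v$'' is the collinearity locus $\det\bigl(p(t)-v,\;q(s)-v\bigr)=0$, which is bilinear in $(t,s)$ and depends jointly on both parameters; it is not a union of axis-parallel lines. The paper spends real effort here: it first refines $\partial P$ by the arrangement of all $O(n^2)$ vertex-pair lines into $O(n^3)$ subsegments so that these transition curves become manageable (straight, with interior-disjoint $y$-projections) inside each slab, and only then bounds the number of regions per slab by $O(n)$. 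Your assertion that ``the dependence still factors into two axis-parallel partitions'' is exactly the statement that needs proof and is false as stated for $k\in\{2,3\}$-vertex paths. To repair the proposal you would need (i) an explicit treatment of the open-hourglass/visibility cells, including a bound on their number and a non-separable constant-size optimizer for them, and (ii) a correct accounting that the refinement needed for (i) does not blow up the $O(n^4)$ cell count.
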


\begin{proof}
  Two shortest paths $(p_1,p_2,\ldots,p_k)$ and $(p'_1,p'_2,\ldots,p'_l)$
  between point pairs $(p_1,p_k)$ and $(p'_1,p'_l)$ in $\partial P$ are
  \defn{combinatorially equivalent} if $p_1$ and $p_1'$ are on the same edge,
  $p_k$ and $p'_l$ are on the same edge, $k=l$, and $p_i=p'_i$ for
  $i\in\{2,\ldots,k-1\}$.

  Consider a point $p \in \partial P$ and its (geodesic) shortest path
  within $P$ to every other point in $\partial P$. 
  Let $\mathcal{S}(p)$ be the set of combinatorial equivalence classes
  of these shortest paths from $p$.
  By the shortest path map \cite{Mitchell-2017}, $|\mathcal{S}(p)|=O(n)$
  and $\mathcal{S}(p)$ can be computed in $O(n)$ time.

  We will partition the boundary of $\partial P$ into segments $S$ with the
  property that, for every $p,p'\in S$, $\mathcal{S}(p)=\mathcal{S}(p')$.
  Compute the arrangement of the lines going through every pair of vertices
  of~$P$.  There are $O(n^2)$ such lines, so we can compute the arrangement
  in $O((n^2)^2) = O(n^4)$ time \cite{Halperin-Sharir-2017}.
  Partition each edge of $P$ into $O(n^2)$ segments according to this
  arrangement, for a total of $O(n^3)$ segments.
  We can then clip and/or remove the segments to lie within~$\Exit$.

  Let $S$ be such a segment of~$\partial P$.
  For $k\ge 4$, every shortest path $(p_1,p_2,\ldots,p_k)$ where $p_1=p$
  satisfies that $S$ is on the same side of the line through $p_2$ and~$p_3$.
  Hence, every shortest path from a point $p'_1 \in S$ to $p_k$ is
  $(p_1',p_2,\ldots,p_k)$, and thus combinatorially equivalent to
  $(p_1,p_2,\ldots,p_k)$.
  For $k=3$, let $p_1$ be the leftmost point of $S$ and $p_3$ be the point that
  minimizes the convex angle at $p_2$ in the equivalence class of
  $(p_1, p_2, p_3)$.
  Then consider moving a point $p'_1$ starting at $p_1$ toward the other
  endpoint of~$S$.
  If $(p_1, p_2, p_3)$ ever becomes straight before reaching the endpoint,
  then $S$ would have been subdivided further, contradicting its definition.
  Thus $(p_1', p_2, p_3)$ remains a shortest path.
  We can use a similar argument to show that, for $k=2$,
  given two visible points $(p_1,p_2)$ where $p_1\in S$,
  every point in $S$ sees a point on the same edge as $p_2$
  (not necessarily $p_2$ itself).

  For each segment $S$, we can compute a member in each equivalence class of
  shortest paths from $S$ in $O(n)$ time.
  We map $S \times (\partial P \setminus S)$ to the square subset of the plane
  $[0,1]\times[0,1]$.
  It is easy to partition the boundary $\partial P$ into shortest-path
  equivalence classes when $k \ge 4$ based on the last endpoint of the
  shortest path; for example, the set of points $p_4$ on the same edge
  for which $(p_1, p_2,p_3,p_4)$ is a shortest path for all $p_1\in S$ and
  fixed $p_2, p_3$ can be computed from the line arrangement.
  Each equivalence class corresponds to a horizontal slab in the square.
  The intervals $I$ of the boundary $\partial P$ for which there are
  one-or-two-edge ($k \in \{2,3\}$) shortest paths from $S$ to $I$,
  the distance function is more complicated.
  The set $S \times I$ corresponds to a horizontal slab
  of the square $[0,1] \times [0,1]$. 
  The boundary between points on this square corresponding to one-edge
  shortest paths and points corresponding to two-edge shortest paths
  are straight lines connecting the left and right edges of the square,
  because such points correspond to shortest paths $(p_1,p_2,p_3)$
  where the points are collinear for fixed $p_2$, and $p_1\in S$.
  Moreover, the projection of such boundary line segments to the $y$ axis
  are interior-disjoint.
  Using these boundary lines, we can compute a partition of the square into
  regions and, for each region, compute $d_\escaper(p,q)$ efficiently
  because either we know $p$ and $q$ are visible from each other or we know
  the points $p_2,\ldots,p_{k-1}$ through which the shortest path passes.

  \begin{figure}[h]
  	\centering
  	\includegraphics[width=0.7\linewidth]{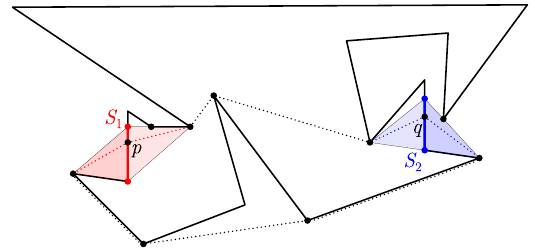}
  	\caption{Two ``hourglasses'', one inside and the other outside $P$, representing a region $S_1 \times S_2$ where shortest paths (inside or outside $P$) between $S_1$ and $S_2$ are in the same equivalence class.}
  	\label{fig:double-funnel}
  \end{figure}

  The computation of $d_\pursuer$ can be done in a similar manner,
  but using (geodesic) shortest paths on the exterior of~$P$.
  The partition of $\partial P$ into regions $S$ with combinatorially
  equivalent shortest paths is exactly the same.
  For each $S$, we obtain a new partition of $O(n)$ regions in the square
  $[0,1]\times[0,1]$ corresponding to $S\times (\partial P\setminus S)$.
  Overlaying both escaper and pursuer partitions of the square,
  we obtain $O(n)$ regions because of the horizontal separation
  between nonhorizontal boundaries.
  Figure~\ref{fig:double-funnel} illustrates one such a region $S_1 \times S_2$.
  For each region, computing
  $\max_{p \in S_1, q \in S_2}\frac{d_\pursuer(p,q)}{d_\escaper(p,q)}$
  becomes a constant-size optimization problem of the form
  $\max_{x,y \in [0,1]} \frac{f(x,y)}{g(x,y)}$
  where
  $f,g$ are functions on the segment parameters $x,y$
  of the form $\sqrt{x^2+a}+\sqrt{x^2+b}+\sqrt{y^2+c}+\sqrt{y^2+d}+e$.
  (The constant distances $a,b,c,d,e$ in each function $f,g$ can be computed
  exactly on a real RAM, or approximated using standard methods for computing
  square roots, such as Newton's Method.)
  This optimization can be solved by computing the gradient of
  $\frac{f(x,y)}{g(x,y)}$ and setting it to zero.
  We obtain two equations with two variables ($x$ and~$y$).
  We argue that each equation is a polynomial of degree at most~$48$.
  The numerator of a partial derivative of $\frac{f(x,y)}{g(x,y)}$ will contain $8$ types of square roots $\sqrt{w(x,y)}$ and we can eliminate each by multiplying by $1-\sqrt{w(x,y)}$.
  Each such multiplication blows up the degree of our polynomial
  by a factor of $2$, for a total of degree~$48$.
  The system has a constant number of variables and polynomials,
  and the polynomials have constant degree,
  so it can be solved using the Existential Theory of Reals
  in time linear in the bit complexity of the input and output
  \cite{Grigorev-Vorobjov-1988}, i.e., $O(\log {1 \over \epsilon})$ time.
  Then we take the maximum over all $O(n)$ regions for the segment,
  and take the maximum over all $O(n^3)$ segments $S$ of the boundary,
  for a total of $O(n^4 \log {1 \over \epsilon})$ time.
\end{proof}

In the case where the exit set $X$ is the entire boundary $\partial P$, the following lemma allows us to simplify the analysis in Theorem~\ref{thm:approx} by limiting our attention to regions where the escaper shortest path (inside $P$) has a single edge.

\begin{lemma}
	\label{lem:single-hour-glass}
	If $D_\escaper$ is a polygon, then there is a pair $(p,q)$ of points on its boundary maximizing 
	$\frac{d_z(p,q)}{d_h(p,q)}$ for which the shortest path inside $D_\escaper$ between $p$ and $q$ intersects $D_\escaper$ only at $p$ and $q$.
\end{lemma}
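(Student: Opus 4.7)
The plan is to take any pair $(p^*,q^*)$ attaining the maximum of $\frac{d_\pursuer(p,q)}{d_\escaper(p,q)}$ over $\partial D_\escaper \times \partial D_\escaper$, and iteratively replace it with another maximizing pair whose interior geodesic has strictly fewer intersections with $\partial D_\escaper$. The reduction step will apply the mediant inequality at one such intersection point.

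For the reduction step, let $\gamma$ be the shortest path from $p^*$ to $q^*$ inside $D_\escaper$, and suppose some point $v$ in the relative interior of $\gamma$ lies on $\partial D_\escaper$. Then $v$ must be a reflex vertex of $D_\escaper$ (if $v$ lay in the relative interior of an edge, then either $\gamma$ would locally exit $D_\escaper$ or it would run along that edge, the latter being a degenerate case handled at the end). Because $v$ lies on the geodesic $\gamma$, the interior metric is additive, $d_\escaper(p^*, q^*) = d_\escaper(p^*, v) + d_\escaper(v, q^*)$, while the pursuer metric satisfies the triangle inequality $d_\pursuer(p^*, q^*) \leq d_\pursuer(p^*, v) + d_\pursuer(v, q^*)$. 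Combining these with the mediant inequality (the mediant of two positive fractions lies between them) yields
\[
\frac{d_\pursuer(p^*, q^*)}{d_\escaper(p^*, q^*)} \;\leq\; \frac{d_\pursuer(p^*, v) + d_\pursuer(v, q^*)}{d_\escaper(p^*, v) + d_\escaper(v, q^*)} \;\leq\; \max\!\left\{\frac{d_\pursuer(p^*, v)}{d_\escaper(p^*, v)},\;\frac{d_\pursuer(v, q^*)}{d_\escaper(v, q^*)}\right\}.
\]
Since $v \in \partial D_\escaper$ and the left side already equals the maximum ratio over boundary pairs, at least one of the pairs $(p^*, v)$, $(v, q^*)$ achieves the same maximum; I select that pair as the new $(p^*, q^*)$.

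The new geodesic is a subpath of $\gamma$ in which $v$ has become an endpoint, so the number of interior intersections of the geodesic with $\partial D_\escaper$ (all of which are reflex vertices, since geodesics in simple polygons bend only at reflex vertices) strictly decreases. Because $D_\escaper$ has only finitely many reflex vertices, the process terminates in finitely many steps at a maximizing pair $(p,q)$ whose geodesic has no interior intersection with $\partial D_\escaper$. Such a geodesic bends nowhere and must therefore be a single line segment $\overline{pq}$ that meets $\partial D_\escaper$ only at $p$ and $q$, as required.

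I expect the main subtleties to be the degenerate configurations — a geodesic passing straight through a reflex vertex without bending, or a geodesic running along an edge of $D_\escaper$ — since each of these technically gives interior intersections with $\partial D_\escaper$ that the reflex-vertex induction is meant to eliminate. In both configurations the straight-line portion of $\gamma$ coincides with the corresponding portion of $\partial D_\escaper$, which forces $d_\pursuer=d_\escaper$ on that stretch; either the global maximum ratio is $1$, in which case the lemma is witnessed trivially by any mutually visible chord not lying on $\partial D_\escaper$, or such degeneracies simply cannot occur at a maximizer and the induction proceeds as written.
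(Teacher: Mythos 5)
Your proposal is essentially the paper's own proof: both apply the mediant inequality together with additivity of $d_\escaper$ along the geodesic and the triangle inequality for $d_\pursuer$ at an interior boundary point, and both terminate via a finite descent (you iterate explicitly; the paper picks a maximizer minimizing the number of boundary segments met and derives a contradiction). The only difference is your handling of the geodesic-along-an-edge degeneracy, which the paper treats more cleanly by counting \emph{segments} of intersection and applying the same mediant step at an endpoint of such a segment—no case split on whether the maximum ratio is $1$ is needed.
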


\begin{proof}
  Suppose that $(p,q)$ is a pair of points for which $\frac{d_z(p,q)}{d_h(p,q)}$ is maximized and, of such pairs, $(p,q)$ minimizes the number of segments (possibly single vertices) of $D_\escaper$'s boundary that intersects with the shortest path inside $D_\escaper$ between $p$ and $q$;
  see Figure~\ref{no capture nonconvex} for an example.
  (Because $D_\escaper$ is a polygon, that number of segments is always finite---in particular, at most the number of sides of the polygon---so we can choose to minimize it. This is the only place we use the assumption that $D_\escaper$ is a polygon.) Suppose for contradiction that there is a segment on the boundary of $D_\escaper$, that does not contain $p$ or $q$, through which the shortest path from $p$ to $q$ passes, and let $a$ be an endpoint of it.  Then $\frac{d_z(p,a)}{d_h(p,a)} \le \frac{d_z(p,q)}{d_h(p,q)}$ and $\frac{d_z(a,q)}{d_h(a,q)} \le \frac{d_z(p,q)}{d_h(p,q)}$. Note that by algebra, 
	$$\frac{d_z(p,a) + d_z(a,q)}{d_h(p,a) + d_h(a,q)} \le \max\left(\frac{d_z(p,a)}{d_h(p,a)}, \frac{d_z(a,q)}{d_h(a,q)}\right),$$ with equality only if one of the distances is 0 (impossible by assumption) or $\frac{d_z(p,a)}{d_h(p,a)} = \frac{d_z(a,q)}{d_h(a,q)}$. 
	Also, by the triangle inequality, $d_z(p,q) \le d_z(p,a) + d_z(a,q)$, and by the assumption that $a$ is on the shortest interior path between $p$ and $q$, $d_h(p,q) \ge d_h(p,a) + d_h(a,q)$, so
	$$\frac{d_z(p,q)}{d_h(p,q)} \le \frac{d_z(p,a) + d_z(a,q)}{d_h(p,a) + d_h(a,q)} \le \max\left(\frac{d_z(p,a)}{d_h(p,a)}, \frac{d_z(a,q)}{d_h(a,q)}\right) \le \frac{d_z(p,q)}{d_h(p,q)},$$
	so we must have equality at every step. In particular, $\frac{d_z(p,a)}{d_h(p,a)} = \frac{d_z(p,q)}{d_h(p,q)}$, so $(p,a)$ 
	is a pair of points for which $\frac{d_z(p,a)}{d_h(p,a)}$ is maximized and the number of segments of $D_\escaper$'s boundary that the shortest path inside $D_\escaper$ between $p$ and $a$ intersects is less than the corresponding number for $p$ and $q$, contradicting the choice of $p$ and $q$. Hence the shortest path inside $D_\escaper$ between $p$ and $q$ intersects the polygon only at $p$ and $q$, as claimed.
\end{proof}

\section{Exact Solutions}
\label{sec:exact}
\label{appendix:exact}

\iftrue
In this section, we compute the precise critical speed ratio for a few specific
escaper domains: a wedge (Section~\ref{sec:wedge}), a halfplane with specified
starting positions (Section~\ref{sec:halfplane}), the unit disk
(Section~\ref{sec:disk}), and two challenging cases --- the equilateral triangle
(Section~\ref{sec:triangle}) and the square (Section~\ref{sec:square}).
Motivated by the winning escaper strategy for the wedge and halfplane,
we also develop a generalized escaper strategy called APLO
(Section~\ref{sec:aplo}),
which we use to compute critical speed ratios in the later sections.
\else
In this section, we give the precise critical speed ratio for two challenging
escaper domains: the equilateral triangle
(Section~\ref{sec:triangle}) and the square (Section~\ref{sec:square}).
Appendix~\ref{appendix:exact} proves these results
by building a powerful escaper tool called APLO;
it also solves the additional cases of
the wedge, the halfplane, and the disk (proving optimality of the classic
escaper strategy for the first time).
\fi
Because the optimal pursuer strategies we prove never
leave the convex boundary of the escaper domain,
our results apply in both the moat and exterior models. 
The optimal escaper strategies we prove do not touch the boundary
of the escaper domain until the moment of escape,
so they trivially extend to the capture model
described in Section~\ref{sec:capture}.

\subsection{Wedge}
\label{sec:wedge}

While the case of an infinite wedge is not particularly interesting by itself,
a wedge models the local behavior around a vertex of a polygon, which will
be useful later.

\begin{theorem}
\label{WedgeTheorem}
If the escaper domain is a wedge, i.e., an unbounded intersection of two
halfplanes, having positive angle $2\theta \le \pi$, the
critical speed ratio is $r^* = 1/\sin\theta$.
\end{theorem}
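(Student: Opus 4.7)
The plan is to prove the two inequalities $r^*\ge 1/\sin\theta$ and $r^*\le 1/\sin\theta$ separately, each via an explicit strategy for one of the two players.

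For the lower bound I would apply Theorem~\ref{lower bound} to a symmetric pair of exit points. Place the apex at the origin with the axis of symmetry along the positive $x$-axis, and for any $L>0$ choose $p=(L\cos\theta,L\sin\theta)$ on the upper ray and $q=(L\cos\theta,-L\sin\theta)$ on the lower ray. The straight segment $pq$ lies in the wedge and has length $2L\sin\theta$, while every boundary path in the pursuer domain from $p$ to $q$ must pass through the apex and therefore has length at least $2L$. Theorem~\ref{lower bound} then yields $r^*\ge d_\pursuer(p,q)/d_\escaper(p,q)=1/\sin\theta$.

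For the matching upper bound I would construct a pursuer strategy that wins whenever $r=1/\sin\theta$. Introduce polar coordinates $(\rho,\phi)$ so that the wedge is $\{\rho\ge 0,\ |\phi|\le\theta\}$, and parameterize the two-ray boundary by a signed arclength coordinate $\tau\in\mathbb{R}$, with $\tau>0$ on the upper ray and $\tau<0$ on the lower ray. Given any escaper motion path $\escaper(t)=(\rho(t),\phi(t))$, the pursuer is instructed to stand at the boundary point with parameter $\tau(t)=\rho(t)\sin\phi(t)/\sin\theta$, starting at $\tau(0)$. A one-line Cauchy--Schwarz computation,
$$|\dot\tau|^2=\frac{(\dot\rho\sin\phi+\rho\cos\phi\,\dot\phi)^2}{\sin^2\theta}\le\frac{\sin^2\phi+\cos^2\phi}{\sin^2\theta}\,(\dot\rho^2+\rho^2\dot\phi^2)\le\frac{1}{\sin^2\theta},$$
shows that the target moves along the boundary at speed at most $1/\sin\theta=r$, so the pursuer can realise it as a valid motion path. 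Moreover whenever $\phi(t)=\pm\theta$ one has $\tau(t)=\pm\rho(t)$, which is exactly the escaper's own boundary parameter, so the pursuer meets the escaper with zero separation at every attempted exit---winning $G_\epsilon$ for every $\epsilon>0$, and therefore winning $G$.

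The item requiring real care is the admissibility of the prescribed pursuer motion. Because $\tau$ is signed arclength along the boundary (passing continuously through the apex at $\tau=0$), the pursuer's Euclidean speed in $d_\pursuer$ equals $|\dot\tau|$, so the Cauchy--Schwarz bound directly certifies the speed-limit constraint; and since $\tau(t)$ depends only on the escaper path up to time $t$, with the pursuer's initial placement chosen in response to the escaper's, both the no-lookahead constraint and the escaper-start constraint are respected. Modulo these bookkeeping checks, the proof reduces to the Cauchy--Schwarz inequality above, which is the one step where the specific choice $\tau=\rho\sin\phi/\sin\theta$ really matters and where it is easiest to go astray on constants.
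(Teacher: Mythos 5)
Your proof is correct and matches the paper's argument in essence: your pursuer rule $\tau=\rho\sin\phi/\sin\theta$ is exactly the paper's strategy of standing at the boundary point $(|y|/\tan\theta,\,y)$, with the same $1/\sin\theta$ speed bound verified in polar rather than Cartesian coordinates, and the collocation-at-the-boundary observation is identical. The only divergence is on the escaper side, where you invoke Theorem~\ref{lower bound} with a symmetric boundary pair $p,q$ (whose chord has length $2L\sin\theta$ against a boundary distance of $2L$) instead of the paper's explicit strategy of starting at the interior point $(\cos\theta,0)$ and running perpendicularly to one ray; both yield $r^*\ge 1/\sin\theta$, and your route is a legitimate, slightly cleaner shortcut.
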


\begin{proof} 

Let $o = (0, 0)$ be the apex of the wedge (or any point on the boundary if
$2\theta = \pi$); refer to Figure~\ref{WedgeFigure}.
Define right-handed coordinate frame $(\hat{x}, \hat{y})$ such
that $\hat{x}$ is the unit vector parallel to the angle bisector of the wedge,
where every point $p = (x, y)$ in the wedge satisfies $x \ge 0$, and $\hat{y}$
is the counterclockwise rotation of $\hat{x}$ by $90^\circ$.

We first provide a winning pursuer strategy when $r = r^*$: if the escaper is at point
$\escaper = (x,y)$, the pursuer will be at boundary point $\pursuer = (|y|/\tan\theta, y)$. This
pursuer strategy satisfies the escaper-start constraint and the nonbranching-lookahead
constraint (it only depends on the current position of the escaper) with paths
that satisfy the speed-limit constraint: given points $\Escaper(t) = (x_1, y_1), \Escaper(t +
\tau) = (x_2, y_2)$ on the
escaper path, noting that $(|y_2| - |y_1|)^2 \le (y_2 - y_1)^2$,
$$
\frac{\|\pursuer(t + \tau) - \pursuer(t)\|}{\|\Escaper(t + \tau) - \Escaper(t)\|} \leq
\frac{\sqrt{(|y_2| - |y_1|)^2/\tan^2\theta + (y_2 - y_1)^2}}{|y_2-y_1|} \le
r^*\sqrt{(1/\tan^2\theta + 1)\sin^2\theta} = r^*,$$
as desired.
This strategy is winning for the pursuer, as whenever the escaper is at a boundary
point $p$ the pursuer is also at $p$.

Next, we provide a winning escaper strategy when $r = r^* - \epsilon$ for any
$\epsilon > 0$. The escaper begins at point $s_\escaper = (\cos\theta, 0)$ on the angle
bisector, and the pursuer chooses a starting point $s_\pursuer = (|d|\cos\theta,
d\sin\theta)$ on the boundary. Without loss of generality, assume the pursuer
starts below the angle bisector with $d \le 0$. If $2\theta < \pi$ the escaper
runs at full speed to point $p = (\cos\theta, \sin\theta)$; otherwise if
$2\theta = \pi$, the escaper runs to point $(0, 1)$. This escaper strategy satisfies
the escaper-start constraint and the nonbranching-lookahead constraint (it only depends on
the starting position of the pursuer) with paths that satisfy the speed-limit
condition (escaper speed is always one). We claim this escaper strategy wins
$G_\delta$ for $0 < \delta < \epsilon\sin\theta$ when $2\theta < \pi$,
and wins for $0 < \delta < \epsilon$ when $2\theta = \pi$. In both cases, the
distance between $s_\pursuer$ and $p$ in the pursuer metric is $1 + |d|$. When $2\theta
< \pi$, the escaper reaches $p$ in time $t_\escaper = \sin\theta$, whereas the pursuer
travels at most distance $r t_\escaper = (r^* - \epsilon)\sin\theta$; so when the escaper
reaches $p$, the pursuer is at least distance $(1 + |d|) - (r^* -
\epsilon)\sin\theta \geq \epsilon\sin\theta$ from $p$. Alternatively, when
$2\theta = \pi$, the escaper reaches $p$ in time $1$, whereas the pursuer travels
at most distance $r = 1-\epsilon$; so when the escaper reaches $p$, the pursuer is
at least distance $(1 + |d|) - (1 - \epsilon) = \epsilon$ from $p$.
\end{proof}

\begin{figure}%
\centering
\includegraphics[scale=0.3]{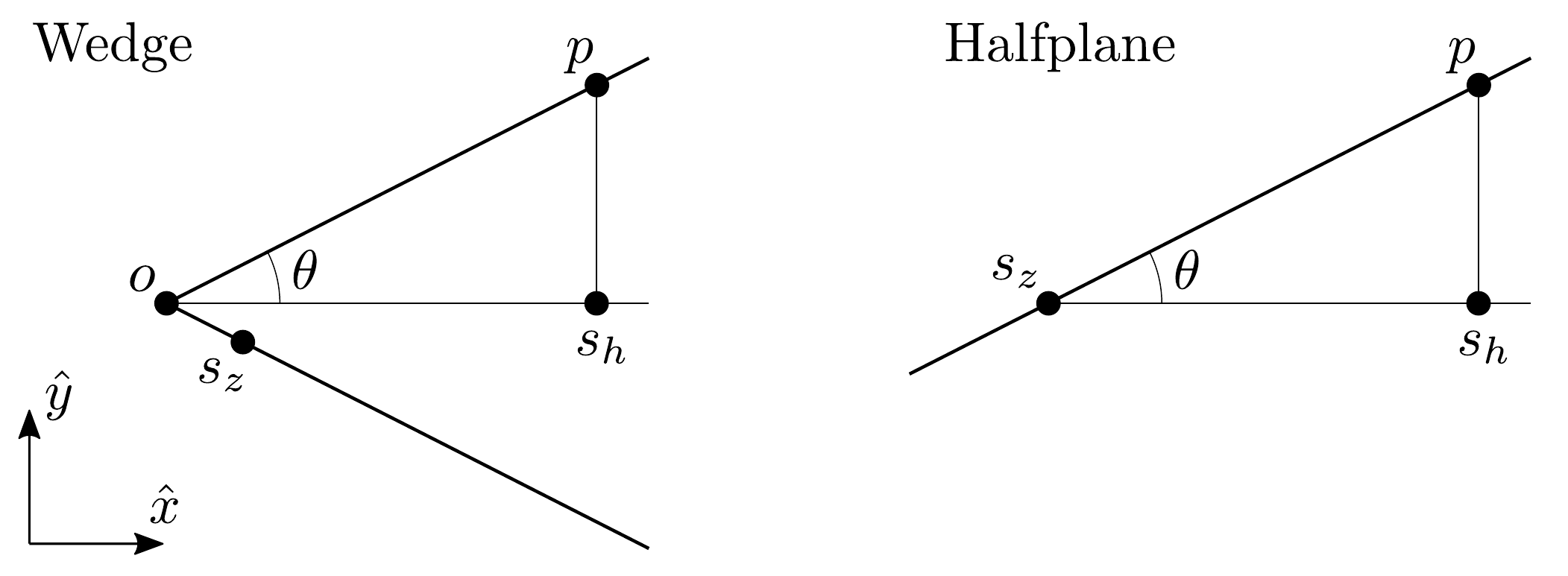}
\caption{Geometry of winning strategies in a wedge [Left] and halfplane [Right].}
\label{WedgeFigure}
\end{figure}

\subsection{$G(s_\escaper, s_\pursuer)$ in a Halfplane}
\label{sec:halfplane}

A halfplane is a special case of a wedge, so Theorem~\ref{WedgeTheorem} implies
that the critical speed ratio of a halfplane is $1$. We generalize this
strategy to find the critical speed ratio for the game $G(s_\escaper,
s_\pursuer)$ with prescribed escaper and pursuer
starting positions, $s_\escaper$ and $s_\pursuer$ respectively
(like the Lion and Man problem).
The halfplane case models the local behavior around an edge of a polygon after 
running another partial strategy, which again will be useful later.

\begin{theorem}
\label{HalfplaneTheorem}
If the escaper domain is the halfplane, the critical speed ratio for the game
$G(s_\escaper, s_\pursuer)$ is $r^* = 1/\sin\theta$ where angle $\theta = \angle s_\escaper s_\pursuer \escaper' \le
\pi/2$ and $\escaper'$ the closest boundary point to $s_\escaper$ (or any other boundary point
if the closest boundary point is $s_\pursuer$).
\end{theorem}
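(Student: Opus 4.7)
The plan is to set up coordinates so the halfplane is $\{y \ge 0\}$, the pursuer's start $s_\pursuer$ sits at the origin, and $s_\escaper = (a, b)$ with $b > 0$. Then $\escaper' = (a, 0)$ is the perpendicular foot, the triangle $s_\escaper s_\pursuer \escaper'$ has its right angle at $\escaper'$, and $\sin\theta = b/\sqrt{a^2+b^2}$, so the target value is $r^* = 1/\sin\theta = \sqrt{a^2+b^2}/b$. The degenerate case $a = 0$ (where $\escaper' = s_\pursuer$, the theorem's ``any other boundary point'' convention forces $\theta = \pi/2$, and $r^* = 1$) will be folded in as a limit.

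For the pursuer half I will give an explicit ``linear shadow'' strategy winning $G_\epsilon$ for every $\epsilon > 0$ at $r = 1/\sin\theta$: given escaper path $\escaper(t) = (x(t), y(t))$, place the pursuer at $\pursuer(t) = (x(t) - (a/b)\,y(t),\, 0)$. At $t = 0$ this is $(0, 0) = s_\pursuer$, and whenever $y(t) = 0$ we get $\pursuer(t) = \escaper(t)$, so the escaper cannot achieve any positive escape margin. The strategy depends only on the current escaper position, so no-lookahead is immediate, and combining the escaper's Lipschitz bound $\|\escaper(t_1) - \escaper(t_2)\| \le |t_1 - t_2|$ with the Cauchy--Schwarz estimate
\[ |\pursuer(t_1) - \pursuer(t_2)| \le \sqrt{1 + (a/b)^2}\,\|\escaper(t_1) - \escaper(t_2)\| \le r\,|t_1 - t_2| \]
verifies the speed-limit constraint globally (not merely pointwise almost everywhere).

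For the escaper half I will show that for every $r < 1/\sin\theta$ the escaper wins $G_\delta$ for a suitably small $\delta > 0$ by running in a straight line at unit speed from $s_\escaper$ to the boundary point $c^* = \bigl((a^2+b^2)/a,\,0\bigr)$ when $a \neq 0$. This $c^*$ is the maximizer of $|c|/\|(c,0) - s_\escaper\|$; equivalently, the segment $s_\escaper c^*$ is perpendicular to $s_\pursuer s_\escaper$. A short calculation gives escaper arrival time $T = (b/a)\sqrt{a^2+b^2}$, during which the pursuer can travel at most $rT$, so its distance from $c^*$ is at least $|c^*| - rT = (\sqrt{a^2+b^2}/a)\bigl(\sqrt{a^2+b^2} - rb\bigr) > 0$ whenever $r < r^*$. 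Taking $\delta$ smaller than this positive margin completes the lower bound, and the straight-line path trivially respects the escaper-start, no-lookahead, and speed-limit constraints.

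The main obstacle, I expect, is purely notational rather than conceptual: handling the sign of $a$ (the case $a < 0$ follows by reflecting across the $y$-axis) and the $a = 0$ limit, where the formula for $c^*$ degenerates and the escaper must instead aim at any sufficiently distant boundary point $(c, 0)$ with $|c| > rb/\sqrt{1-r^2}$ to produce a positive escape margin when $r < 1$. In the pursuer argument, the same degenerate case just sets $a/b = 0$, so the shadow strategy reduces to ``track the escaper's $x$-coordinate,'' which still works; the formulas glue together cleanly across $a = 0$.
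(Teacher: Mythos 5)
Your proposal is correct and follows essentially the same route as the paper: the same shear/``shadow'' pursuer map (the paper writes it as matching the escaper's coordinate transverse to $s_\pursuer s_\escaper$ along the boundary; yours is the identical map expressed in axis-aligned coordinates), and the same escaper strategy of running perpendicular to $s_\pursuer s_\escaper$ at full speed, with the same margin computation. The only cosmetic differences are your slightly cleaner handling of the degenerate $\theta = \pi/2$ case (aiming directly at a sufficiently far boundary point, versus the paper's detour through $s_\pursuer$) and the omission of the trivial case where $s_\escaper$ starts on the boundary.
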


\begin{proof}
If $s_\escaper$ is on the boundary, either $s_\escaper = s_\pursuer$ and $r^* = 1$, or $s_\escaper \neq s_\pursuer$
and $r^* = \infty$. Otherwise, without loss of generality, let $s_\pursuer = (0,0)$ and
$s_\escaper = (1, 0)$.

We first provide a winning pursuer strategy when $r = r^*$: if the escaper is at
point $\escaper = (x, y)$, the pursuer will be at boundary point $\pursuer = (y/\tan\theta,y)$.
This pursuer strategy satisfies the nonbranching-lookahead constraint (it only depends on
the current position of the escaper) with paths that satisfy the speed-limit
constraint: given points $\Escaper(t) = (x_1, y_1), \Escaper(t + \tau) = (x_2, y_2)$ on the
escaper path, 
$$\frac{\|\pursuer(t+\tau) - \pursuer(t)\|}{\|\Escaper(t+\tau) - \Escaper(t)\|} \le \frac{\sqrt{(y_2 -
y_1)^2/\tan^2\theta + (y_2 - y_1)^2}}{|y_2 - y_1|}\le
r^*\sqrt{(1/\tan^2\theta + 1)\sin^2\theta} = r^*,$$
as desired.
This strategy is winning for the pursuer, as whenever the escaper is at a boundary
point $p$ the pursuer is also at $p$.

Next, we provide a winning escaper strategy when $r = r^* - \epsilon$ for any
$\epsilon > 0$: if $\theta < \pi/2$, the escaper runs straight to $p = (1,
\tan\theta)$ at full speed; otherwise if $\theta = \pi/2$, the escaper runs to
$s_\pursuer$ at full speed, and then to $p = (0, 1)$. This escaper strategy satisfies
the nonbranching-lookahead constraint (it only depends on the starting pursuer position)
with paths that satisfy the speed-limit constraint (the escaper speed is always
$1$). We claim that this strategy wins $G_\delta$, for $0 < \delta <
\epsilon\tan\theta$ when $\theta < \pi/2$, and for $0 < \delta < \epsilon/2$ when
$\theta =\pi/2$. When $\theta < \pi/2$, the escaper reaches $p$ in time $t_\escaper =
\tan\theta$, and the distance between $s_\pursuer$ and $p$ is $\sqrt{1 + \tan^2\theta}
= 1/\cos\theta$. However, the pursuer can travel at most distance $r t_\escaper = (r^* -
\epsilon)t_\escaper = 1/\cos\theta - \epsilon\tan\theta$ in that time, at least
distance $\epsilon\tan\theta$ from $p$. Alternatively, $2\theta = \pi$; when the
escaper first reaches $s_\pursuer$ the pursuer is within $\delta$ of $s_\pursuer$ or else the
escaper has already won. Then escaper reaches $p$ in time $2$, whereas the pursuer
travels at most distance $r = 1 - \epsilon < 1 - 2\delta$; so when the escaper
reaches $p$, the pursuer is at least distance $(1 - \delta) - (1 - \epsilon) >
\delta$ from $p$ as desired.
\end{proof}

\subsection{APLO Strategy}
\label{sec:aplo}

The strategy employed by the escaper in the previous section is quite simple: pick
a point on the boundary and run to it at full speed. Motivated by this escaper
strategy, we define a useful generalization which interpolates between two
extreme straight-line strategies depending on the position of pursuer, which we
will use to prove the critical speed ratio for the disk, equilateral triangle,
and square.

\begin{definition} For games where the pursuer domain is a topological circle,
let $D(\pursuer, t)$ denote the net signed counterclockwise distance\footnote{For
example, if the pursuer domain has length $\ell$ and the pursuer starts at $\pursuer(0)$
and in time $t$ circles the boundary clockwise exactly three times back to
$\pursuer(0)$ and then runs counterclockwise a distance $\ell/3$, then $D(\pursuer,t) =
-8\ell/3$. Note that the net signed distance $D(\pursuer, t)$ only depends on $\pursuer(t)$
and the homotopy type of the pursuer's path up to time $t$.} from $\pursuer(0)$ to
$\pursuer(t)$ counterclockwise along the pursuer domain, for any pursuer path $\pursuer(t)$.
Given:
\begin{itemize}
\item an escaper starting position $\escaper_0$, 
\item a preferred forward ``axial'' unit vector $\hat{u}$ (referencing also the ``lateral'' unit
vector $\hat{v}$ which is $\hat{u}$ rotated by a quarter-turn counterclockwise in the
plane),
\item speed ratio $r'$ (which must be an upper bound on pursuer speed), and
\item positive axial and lateral speeds $d_u$ and $d_v$ (which must satisfy $\sqrt{d_u^2
+ d_v^2} \leq 1$), 
\end{itemize}
we define the
\defn{axially progressing laterally opposing (APLO) escaper strategy} as follows
(see Figure~\ref{APLOFigure}): for a pursuer at position $\pursuer(t)$ at time $t$,
the escaper is at position:
$$H_{APLO}(\pursuer, t; \escaper_0, \hat{u}, r', d_u, d_v) = \escaper_0 + (td_u)\cdot\hat{u} +
\left(\frac{D(\pursuer, t)}{r'}d_v\right)\cdot\hat{v}.$$
\end{definition}

\begin{figure}[t]
\centering
\includegraphics[scale=0.35]{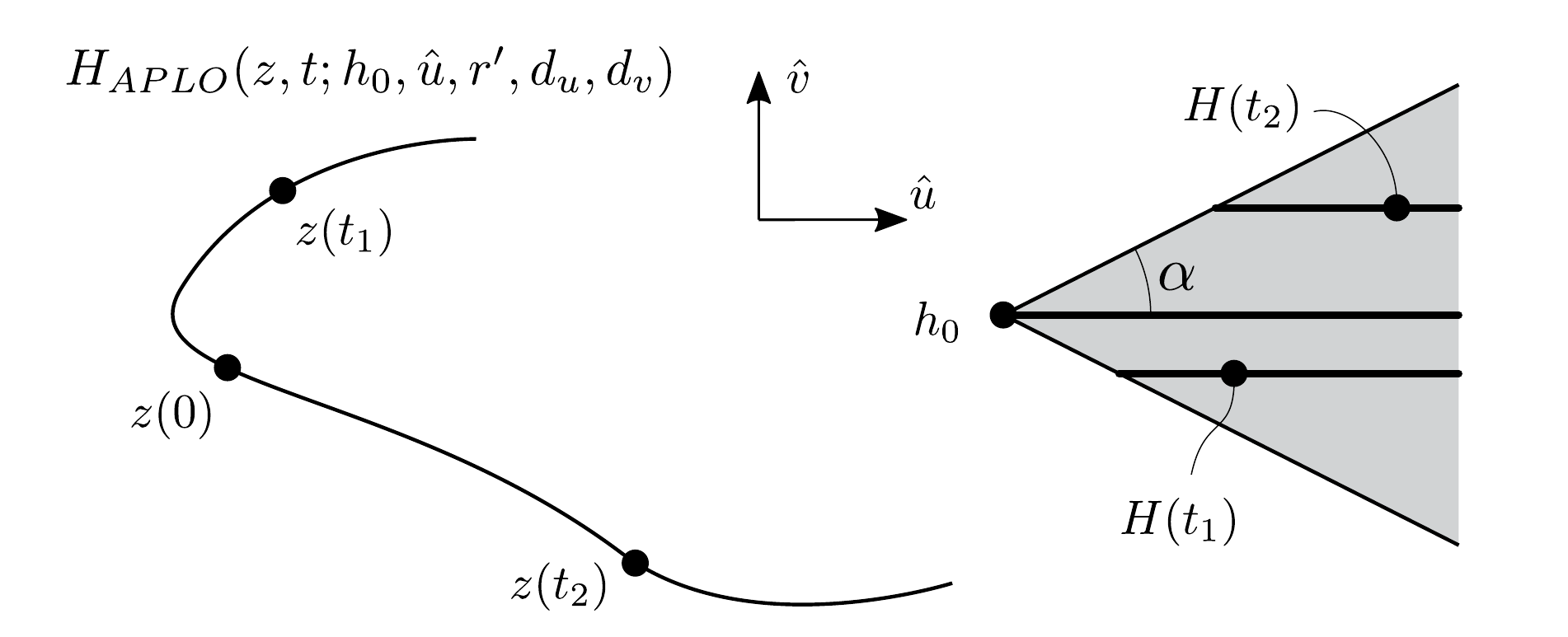}
\caption{Geometry of APLO strategy $H_{APLO}(\pursuer, t; \escaper_0, \hat{u}, r', d_u, d_v)$, where $d_v=\sin(\alpha)$ (and hence $d_u=\cos(\alpha)$). The shaded wedge represent the possible escaper positions.}
\label{APLOFigure}
\end{figure}

For example, if the pursuer runs clockwise at full speed $r$ then the escaper's
APLO response is to run in a straight line with velocity $d_u\hat{u} +
\frac{r}{r'}d_v\hat{v}$, which by the assumptions placed on our inputs has
magnitude at most $1$. If the pursuer stays at $\pursuer(0)$ then the escaper runs forward
along $\hat{u}$ at speed $d_u$. In general, the escaper always progresses forward
(in the $\hat{u}$ direction) with constant speed $d_u$, while the pursuer's
position at time~$t$ dictates the escaper's lateral offset (in the $\hat{v}$
direction) at time~$t$. Observe that this is done in a ``memory-less'' way: the
escaper's position at time~$t$ depends only on~$t$ and the pursuer's position at
time~$t$, not on the pursuer's position at any earlier (or later!) time.

\begin{lemma}
Any APLO escaper strategy
$H_{APLO}(\pursuer,t;\escaper_0,\hat{u},r', d_u, d_v)$ satisfies the escaper-start and
nonbranching-lookahead conditions with paths that satisfy the speed-limit condition. In
other words, $H_{APLO}$ is a valid strategy.
\end{lemma}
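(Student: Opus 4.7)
The plan is to verify each of the three conditions in turn, with the speed-limit constraint being the only one that requires more than a direct inspection. Throughout, fix an APLO strategy $H_{APLO}(\pursuer,t;\escaper_0,\hat u, r', d_u, d_v)$ and write $H(\pursuer,t)$ for brevity.

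First I would dispatch the escaper-start constraint: at $t=0$ the formula gives $H(\pursuer,0) = \escaper_0 + 0\cdot\hat u + 0\cdot\hat v = \escaper_0$ for every pursuer path $\pursuer$, so all resulting escaper motion paths emanate from the common point~$\escaper_0$. For the no-lookahead constraint, I would observe that the only way $H(\pursuer,t)$ depends on the pursuer is through the quantity $D(\pursuer,t)$, and $D(\pursuer,t)$ depends only on the restriction of~$\pursuer$ to $[0,t]$ (it is a function of the traversed path, which in turn determines both its homotopy class and its endpoint $\pursuer(t)$). Hence if two pursuer paths $\pursuer_1,\pursuer_2$ agree on $[0,t]$, then $D(\pursuer_1,t') = D(\pursuer_2,t')$ for every $t' \le t$, and so $H(\pursuer_1,t') = H(\pursuer_2,t')$ on $[0,t]$.

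The main step is the speed-limit constraint, which I expect to be the only substantive computation but still short. For any times $t_1 \le t_2$,
\[
H(\pursuer,t_2) - H(\pursuer,t_1) \;=\; (t_2-t_1)\,d_u\,\hat u \;+\; \frac{D(\pursuer,t_2) - D(\pursuer,t_1)}{r'}\,d_v\,\hat v.
\]
The key inequality is $|D(\pursuer,t_2) - D(\pursuer,t_1)| \le r'\,(t_2-t_1)$. This follows because $|D(\pursuer,t_2) - D(\pursuer,t_1)|$ is at most the arclength of~$\pursuer$ restricted to $[t_1,t_2]$ (signed arclengths are bounded by unsigned arclength), which by the speed-limit constraint on the pursuer motion path is at most $r\,(t_2-t_1)$, and by hypothesis $r \le r'$. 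Substituting and using that $\hat u,\hat v$ are orthonormal,
\[
\|H(\pursuer,t_2)-H(\pursuer,t_1)\| \;\le\; \sqrt{(t_2-t_1)^2 d_u^2 + (t_2-t_1)^2 d_v^2} \;=\; (t_2-t_1)\sqrt{d_u^2 + d_v^2} \;\le\; t_2-t_1,
\]
the last step using the assumed bound $\sqrt{d_u^2+d_v^2} \le 1$.

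The only subtle point, and what I would expect to be the main obstacle, is making sure that the passage from ``net signed displacement around the boundary'' to ``bounded by arclength'' is watertight when the pursuer wraps around the loop multiple times. I would handle this by appealing to the standard lift of~$\pursuer$ to the universal cover of the pursuer domain (a line), where $D(\pursuer,t)$ becomes the ordinary signed coordinate of the lift and the bound $|D(\pursuer,t_2)-D(\pursuer,t_1)| \le \mathrm{length}(\pursuer|_{[t_1,t_2]})$ is immediate. With that in hand, all three constraints hold and $H_{APLO}$ is a valid escaper strategy.
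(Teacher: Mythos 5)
Your proof is correct and follows essentially the same route as the paper's: direct verification of the escaper-start and no-lookahead conditions, plus the bound $|D(\pursuer,t_2)-D(\pursuer,t_1)|\le r(t_2-t_1)\le r'(t_2-t_1)$ combined with $\sqrt{d_u^2+d_v^2}\le 1$ for the speed limit. Your version even places $d_u$ on the axial term and $d_v$ on the lateral term, consistent with the definition of $H_{APLO}$ (the paper's displayed formula swaps them, an apparent typo), and the universal-cover remark is a harmless extra justification of a step the paper takes for granted.
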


\begin{proof}
$H_{APLO}$ satisfies the escaper-start condition as at time $t=0$, $D(\pursuer, t)
= 0$, so $H_{APLO}$ places the escaper at position $\escaper_0 + 0\cdot
\hat{u} + 0\cdot \hat{v} = \escaper_0$, as required.

$H_{APLO}$  satisfies the nonbranching-lookahead condition as it does not depend on the
pursuer's position at any time except at time~$t$. 

To show that $H_{APLO}$ paths satisfy the speed-limit condition, we must show
that after any positive time $\tau$ from any time $t \geq 0$, the escaper travels
at most distance $\tau$. The distance traveled by escaper between times $t$ and
$t+\tau$ is:
$$|H_{APLO}(\pursuer, t) - H_{APLO}(\pursuer, t + \tau)| = \sqrt{\tau^2d_v^2 +
    \left(\frac{D(\pursuer, t) - D(\pursuer, t+\tau)}{r'}\right)^2d_u^2}.$$
This distance is maximized when $D(\pursuer, t) - D(\pursuer, t+\tau)$ is maximized. Since the
pursuer moves at rate at most $r$, this distance is at most $r\tau$. And since
$r' \geq r$ and $\sqrt{d_u^2+d_v^2} \leq 1$ by assumption on the inputs,
the distance the escaper travels is at most $\tau$, proving the claim.
\end{proof}

\subsection{Disk}
\label{sec:disk}

In this section, we solve for the first time the well-studied case of the disk.
While an escaper strategy with this speed ratio was known before,
we give an alternative escaper strategy based on our APLO technique.
Furthermore, we are not aware of any previous presentation of a matching
pursuer strategy.

\begin{theorem} \label{disktheorem}
Let $\varphi^*$ be the angle such that $\tan \varphi^* = \pi + \varphi^*$,
i.e.,~$\varphi^* \approx 0.430\pi$. If the escaper domain is a unit disk, the
critical speed ratio is $r^* = 1/\cos \varphi^* \approx 4.603$.
\end{theorem}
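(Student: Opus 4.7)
I plan to establish matching lower and upper bounds by linking the transcendental equation $\tan\varphi^{*}=\pi+\varphi^{*}$ to the disk's critical geometry via $\varphi^{*}=\arccos(1/r^{*})$. Without loss of generality I fix the pursuer's starting position at $(1,0)$ and use rotational symmetry throughout.

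For the lower bound $r^{*}\geq 1/\cos\varphi^{*}$, I would build a two-phase escaper strategy. In Phase~1 the classical diametric-mirror spiral drives the escaper from the origin to $(-a,0)$ with $a$ arbitrarily close to but less than $1/r$, diametrically opposite the pursuer; this is feasible because at radius $\rho<1/r$ the escaper's maximum angular velocity $1/\rho$ exceeds $r$, leaving slack for outward radial motion. In Phase~2 the escaper invokes APLO (Section~\ref{sec:aplo}) with axis $\hat u=(-1,0)$, lateral unit $\hat v=(0,-1)$, scale $r'=r$, and angle $\pi/2-\delta$ for a small parameter $\delta>0$, so $d_u=\sin\delta$ and $d_v=\cos\delta$. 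The key analytical step is that in APLO the escaper's exit position depends only on the cumulative pursuer arc displacement $D(t_e)$, so the pursuer's optimal reply reduces to choosing a target arc $d$ and racing straight to it. Parametrising the exit by $\sigma\in(0,\pi/2)$ so the exit lies at angle $\pi+\sigma$ from the pursuer's start, the catch conditions become $r\sin\sigma=(\pi+\sigma)\cos\delta$ and $r\cos\sigma=1+(\pi+\sigma)\sin\delta$; eliminating $r$ yields $\sin\sigma=(\pi+\sigma)\cos(\sigma+\delta)$, whose $\delta\to 0$ limit is exactly $\tan\sigma=\pi+\sigma$ with solution $\sigma=\varphi^{*}$ and $r=1/\cos\varphi^{*}$. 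A perturbation computation shows the critical $r$ falls off like $O(\delta^2)$ as $\delta>0$, so for every $r<r^{*}$ a sufficiently small $\delta$ (of order $\sqrt{r^{*}-r}$) turns Phase~2 into a winning $\epsilon$-game strategy.

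For the matching upper bound I would design a pursuer strategy, and this is where I expect the main difficulty. The plan is a hybrid rule: when the escaper lies at radius $\rho\geq 1/r$ the pursuer stays at the closest boundary point (radial projection), whose required angular velocity is $1/\rho\leq r$ and so respects the speed limit; when $\rho<1/r$ the pursuer switches to an angular-tracking rule calibrated so that its boundary position coincides with the APLO-critical exit point the escaper is heading toward, with the specific formula determined by inverting the Phase~2 analysis above. The main technical steps are verifying the speed constraint in both regimes (with equality at $r=r^{*}$) and Lipschitz continuity at the interface $\rho=1/r$. The main obstacle is showing this strategy defeats \emph{every} escaper motion path rather than only APLO-shaped attacks; I anticipate using a monotone potential function on escaper-pursuer configurations, argued via the compactness framework of \Section~\ref{appendix:model}, and I will handle the edge case $r=r^{*}$ in accordance with the tie-breaking convention of Section~\ref{sec:model} that awards ties to the pursuer.
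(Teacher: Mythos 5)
Your escaper (lower-bound) half is essentially the paper's argument: reach a configuration antipodal to the pursuer at radius just inside the critical circle, then run a single APLO strategy with a small axial component and a lateral component that nearly saturates the speed budget, and verify that the APLO coupling between the pursuer's net arc displacement and the escaper's exit point forces the transcendental condition $\tan\varphi^*=\pi+\varphi^*$. Your two race equations are correct and reduce to the paper's inequality $\pi+\varphi > r^*\sin\varphi$ for $\varphi<\varphi^*$; the paper gets there by monotonicity of $\sin x/(\pi+x)$ rather than a perturbation in $\delta$, but that is a cosmetic difference. The one point worth making explicit is why the exit angle cannot exceed $\varphi^*$ (in the paper this is automatic because the axial direction only increases the coordinate along the pursuer-to-escaper axis, so the exit satisfies $\cos\varphi\ge 1/r^*$).

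The genuine gap is in your upper bound. First, the strategy you describe is not implementable as stated: "when the escaper lies at radius $\rho\ge 1/r$ the pursuer stays at the closest boundary point" presumes the pursuer is already there, but at the moment the escaper crosses the critical circle from inside to outside, the pursuer may be anywhere on the boundary, and no speed-$r$ motion path can place it at the radial projection instantaneously. Second, your inner-region rule ("an angular-tracking rule calibrated so that its boundary position coincides with the APLO-critical exit point the escaper is heading toward") is not defined --- an arbitrary escaper path inside the critical disk is not "heading toward" any APLO exit point, so there is nothing to invert --- and the proposed monotone potential function is never constructed. The paper's pursuer is much simpler: it stands still whenever the escaper is inside the critical disk or the pursuer is already at the projection, and otherwise runs at full speed along the shorter arc toward the projection. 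The correctness proof then avoids your universality problem entirely: for any purported winning escaper path ending at a boundary point $p$, one takes the \emph{last} time the escaper is on the circle of radius $1/r^*$ (such a time must exist, else the pursuer tracks the projection forever), observes that beyond this time the escaper's angular speed about the center is strictly less than the pursuer's so the pursuer closes the arc gap monotonically, and reduces to a single time comparison against the worst case in which the pursuer starts antipodal and must cover arc $(\pi+\varphi)/r^*$. Both branches ($\varphi\le\varphi^*$ via a derivative computation, $\varphi>\varphi^*$ via the circle-then-tangent lower bound on the escaper's travel time) show the pursuer arrives no later. Finally, the claim that $r=r^*$ is handled by a "tie-breaking convention" is not quite right: the model's definition makes the pursuer win at $r^*$ only because an explicit winning pursuer strategy exists at $r=r^*$ (the paper's strategy works there with equality in the speed bound); it is a fact to be proved, not a convention to be invoked.
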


\begin{figure}
\centering
\includegraphics[scale=0.35]{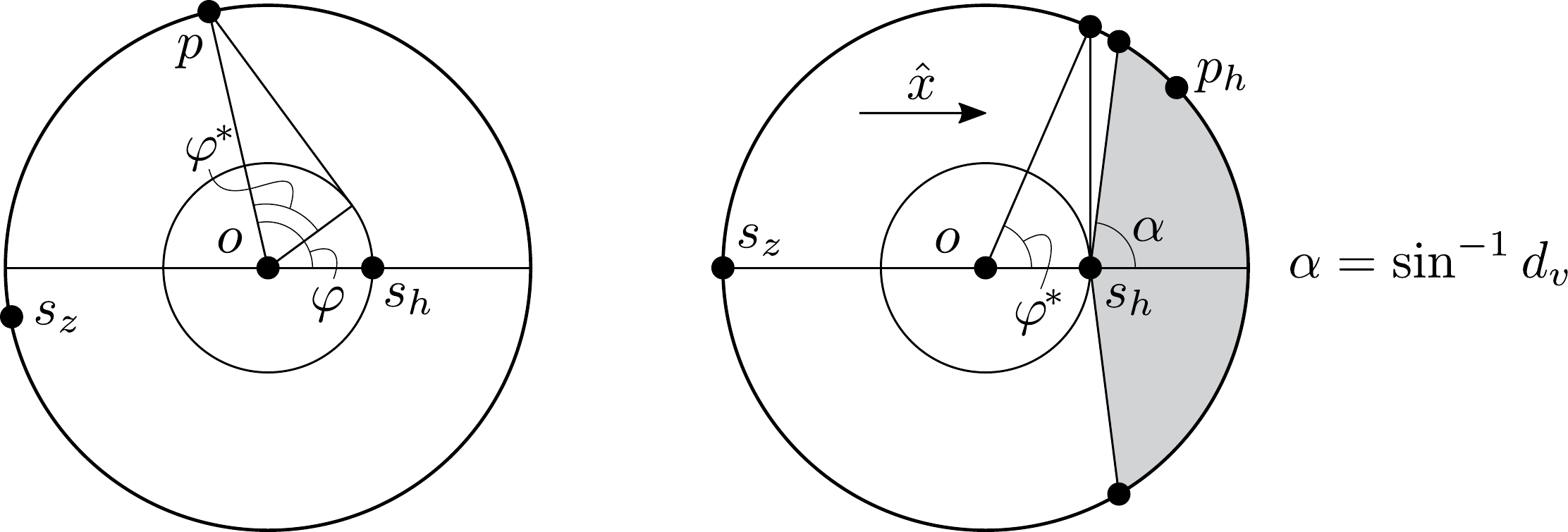}
\caption{Winning strategy geometries on a unit disk for both pursuer [Left] and
escaper [Right].}
\label{CircleFigure}
\end{figure}

\begin{proof}

Let $o$ be the center of the unit disk. We first provide a winning pursuer
strategy when $r \geq r^*$. The pursuer starts at the boundary point closest to the
escaper start point. When the escaper is greater than distance $1/r^*$ from $o$ and
the pursuer is not at the boundary point $\escaper'$ closest to the escaper, the pursuer
moves at full speed along the shorter arc toward $\escaper'$, breaking ties
arbitrarily, and otherwise stands still. This pursuer strategy satisfies the
escaper-start constraint and the nonbranching-lookahead constraint (it only depends on the
current position of the escaper) with paths that satisfy the speed-limit
constraint (pursuer runs at speed at most $r^*$). We claim this pursuer strategy
is winning. 

Suppose for contradiction there exists a winning escaper path $\Escaper$ ending
at some boundary point $p$. $\Escaper$ must contain at least one point at
distance $1/r^* = \cos\varphi^*$ from $o$; otherwise, if $\Escaper$ is always
outside the circle of radius $1/r^*$, the pursuer can at all times match the
escaper's angular velocity without exceeding speed $r^*$, so will always exist
at the closest boundary point to the escaper (in particular at $p$ at the end of
$\Escaper$). Then let $s_\escaper$ be the last point of $\Escaper$ at distance
$1/r^*$ from $o$, and without loss of generality, assume $s_\escaper =
(1/r^*,0)$ and $p= (\cos\varphi, \sin\varphi)$ for some $0\le\varphi<\pi$ (see
Figure~\ref{CircleFigure} [Left]). Then the escaper cannot reach $p$ faster than time
$t_\escaper$, where:
\begin{itemize}
\item  $t_\escaper = \sqrt{(\cos \varphi - \cos\varphi^*)^2 + \sin^2\varphi}$
when $0 \le \varphi \le \varphi^*$ (by straight line from $s_\escaper$ to $p$),
and 
\item $t_\escaper > \sin\varphi^* + (\varphi - \varphi^*)/r^*$ when $\varphi^* <
\varphi < \pi$ (by first running around the circle of radius $1/r^*$, then in a
straight line to $p$).
\end{itemize}
Since the subset of $\Escaper$ after $s_\escaper$ to $p$ lies strictly outside
the circle of radius $1/r^*$, the pursuer's angular velocity around $o$ is
always greater than the escaper's, meaning the arclength between the pursuer and
the closest boundary point to the escaper only decreases, so the pursuer runs in
a consistent direction. If this arclength reaches zero, the pursuer can track
the closest boundary point to the escaper and the escaper will not win, so if
$\Escaper$ wins, the pursuer always runs at full speed toward $p$. Let
$s_\pursuer = (\cos\theta, \sin\theta)$ be the pursuer position when the escaper
is at $s_\escaper$, and let $t_\pursuer$ be the time the pursuer takes to reach
$p$. If $0 \le \theta < \pi$, then $t_\pursuer = |\theta - \varphi|/r^*$;
otherwise if $\pi\le \theta < 2\pi$, the pursuer reaches $p$ in time $t_\pursuer
= (2\pi + \varphi - \theta)/r^*$. $t_\pursuer$ is maximized when $\theta = \pi$,
so without loss of generality we can assume that $s_\pursuer = (-1, 0)$ and
$t_\pursuer = (\pi + \varphi)/r^*$. The pursuer is at $p$ when the escaper
reaches $p$ if $t_\escaper - t_\pursuer\ge 0$. When $\varphi > \varphi^*$,
observe that
$$t_\escaper - t_\pursuer > (\sin\varphi^* + (\varphi - \varphi^*)/r^*) - (\pi +
\varphi^* + (\varphi - \varphi^*))/r^* = \sin\varphi^* - \tan\varphi^*/r^* = 0.
$$ 
Alternatively, when $\varphi \le \varphi^*$, observe that $t_\escaper-t_\pursuer
\ge (t_\escaper-t_\pursuer)|_{\varphi = \varphi^*} =  0$, as the derivative of
$t_\escaper - t_\pursuer$ is never positive over the domain:
$$\frac{d}{d\varphi}(t_\escaper - t_\pursuer) = -\cos\varphi^*\left(1 -
\frac{\sin\varphi}{\sqrt{\sin^2\varphi + (\cos\varphi -
\cos\varphi^*)^2}}\right) \le 0.$$
Thus the pursuer is at $p$ when the escaper reaches $p$, a contradiction. 

Next, we provide a winning escaper strategy when $r = r^* - \epsilon$ for any
positive $\epsilon$. The escaper begins on the circle $C$ of radius $1/r^*$
concentric with the unit disk, and then runs at full speed around $C$ (with
angular speed $r^*$ about $o$) until the escaper and pursuer reach respective
positions $s_\escaper$ and $s_\pursuer$ where $\angle s_\escaper o s_\pursuer =
\pi$. Without loss of generality, $s_\escaper = (\cos\varphi^*, 0)$ and
$s_\pursuer = (-1, 0)$. The escaper reaches such a state in finite time because
the pursuer can run around the unit disk with angular speed at most $r < r^*$.
Then, the escaper executes APLO strategy $H_{APLO}(\pursuer, t; s_\escaper,
\hat{x}, r, d_u, d_v)$ where $\pursuer(0) = s_\pursuer$, $\hat{x}$ is the unit
direction from $s_\pursuer$ to $s_\escaper$, and $d_v = r/r^* < 1$ and $d_u
=\sqrt{1 - d_u^2}$ (see Figure~\ref{CircleFigure} [Right]). At some finite time $t_f$ while executing this strategy, the
escaper reaches some boundary point $p_\escaper = (\cos\varphi, \sin\varphi)$;
without loss of generality assume $0 < \varphi$. Then at the same time, the
pursuer is at point $p_\pursuer = (\cos(\theta-\pi), \sin(\theta-\pi))$ where
$\theta = D(\pursuer, t_f) = r\sin\varphi/d_v = r^*\sin\varphi$ by definition of
APLO. 

We claim this strategy wins $G_\delta$ for some $\delta > 0$, i.e., $p_\pursuer
\neq p_\escaper$. It suffices to show that $\varphi > \theta - \pi$. Since
$\varphi < \varphi^*$ and function
$f(x) = (\sin x)/(\pi+ x)$ strictly increases over the
domain $0 \leq \varphi < \varphi^*$,
$$\varphi - (\theta - \pi) = (\pi + \varphi) - r^*\sin\varphi 
= (\pi + \varphi)\left(1 - 
\frac{\pi + \varphi^*}{\sin\varphi^*}
\frac{\sin\varphi}{\pi + \varphi}
\right) > 0,$$
proving the claim.
\end{proof}

\subsection{Equilateral Triangle}
\label{sec:triangle}

The equilateral triangle is perhaps the simplest polygon,
so serves as a natural starting point for exact bounds:

\begin{theorem} 
If the escaper domain is an equilateral triangle, the critical
speed ratio is $r^* = (3 + \sqrt{5})\sqrt{2} \approx 7.405$. 
\label{thm:triangle}
\end{theorem}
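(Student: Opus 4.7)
The plan is to adapt the approach used for the disk in Theorem~\ref{disktheorem} by constructing matching escaper (lower bound) and pursuer (upper bound) strategies whose analyses coincide at $r^* = (3+\sqrt 5)\sqrt 2$. Exploiting the triangle's three-fold symmetry, I would fix a target edge $BC$ and analyze the ``canonical'' configuration where the pursuer starts at the point on the perimeter whose signed perimeter distance to the foot of the altitude from $A$ to $BC$ is maximized (i.e.\ half the perimeter away). Coordinates may be chosen so $BC$ is horizontal and the axis of symmetry is vertical; the centroid, incircle, and apothem then provide natural ``critical'' lengths playing the role the disk's radius $1/r^*$ played in the disk proof.

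For the escaper, I would use an APLO strategy $H_{APLO}(\pursuer,t;s_\escaper,\hat u, r, d_u, d_v)$ with $\hat u$ the outward normal to $BC$, lateral scale $d_v = r/r^*$, and $d_u = \sqrt{1-d_v^2}$, preceded by a short setup phase analogous to the disk case: the escaper first moves along an inner curve (a scaled inner triangle or its incircle) until the pursuer is forced into the canonical worst-case starting position, which must happen in finite time since $r<r^*$ forces the pursuer's effective angular speed around the perimeter to lag. The analysis then computes the escape point on $BC$ as a linear function of the pursuer's signed perimeter displacement $D(\pursuer,t)$ and verifies that the pursuer's perimeter travel at speed $r$ cannot reach this escape point in time.

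For the pursuer, I would generalize the disk's ``shadow projection'' strategy: the pursuer holds the boundary point nearest to the escaper, moving along $\partial P$ as the escaper moves, and racing around reflex-free vertices when the escaper crosses the medial axis (three segments from the centroid to edge midpoints). When the escaper is sufficiently far from $\partial P$, the pursuer can afford to stand still; when the escaper approaches $\partial P$, the pursuer keeps up because inside the critical inner region the escaper's projection onto $\partial P$ moves slower than $r^*$. Optimality follows by assuming a winning escaper path, letting $s_\escaper$ be the last moment the escaper crosses the critical inner boundary, and showing that from that moment the pursuer's full-speed perimeter run to the exit $p$ arrives no later than the escaper, yielding the required inequality $t_\escaper \geq t_\pursuer$.

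The main obstacle is showing that both bounds meet exactly at $(3+\sqrt 5)\sqrt 2$. In the disk case this came from the transcendental equation $\tan\varphi^* = \pi+\varphi^*$; here the ``arc'' of length $\pi$ is replaced by a piecewise-linear perimeter path that traverses part of an edge, rounds a vertex, and continues along the next edge, so the analogous equation becomes algebraic. I expect to set the critical escape point on $BC$ at an interior location (not a vertex) and derive a quadratic by equating the escaper's straight-line time to the exit with the pursuer's perimeter distance to the exit divided by $r^*$; its positive root should simplify to $(3+\sqrt 5)\sqrt 2$. The trickiest subcase is verifying that the APLO trajectory remains inside the triangle for all admissible pursuer responses (not only the extremal one) and that the escape indeed occurs on the interior of $BC$ rather than crossing an adjacent edge near a vertex, since the proximity of two adjacent edges at a $60^\circ$ corner complicates the geometry relative to the disk.
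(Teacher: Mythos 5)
There is a genuine gap on the pursuer (upper-bound) side. Your proposed pursuer strategy --- hold the boundary point nearest to the escaper and ``race around'' when the escaper crosses the medial axis --- is essentially the strategy of Theorem~\ref{ConstantApproximationTheorem}, which only yields a constant-factor bound, not the tight one. The paper's winning pursuer strategy for $r = r^*$ is \emph{not} a nearest-point projection: it is a family of six piecewise-linear maps $z(h;i,j)$, glued along transition edges, in which interior points such as $s_a$ are sent to the \emph{vertex} $a$ rather than to the nearest edge, and whose defining property is that the gradient of the boundary-parameter value is exactly $r^*$ on every linear patch. That construction is what makes the pursuer's speed bound tight while keeping it collocated with the escaper on the exit edges. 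Your optimality argument (``let $s_\escaper$ be the last crossing of the critical inner boundary, then compare the pursuer's full-speed perimeter run'') is the disk argument, and it leans on rotational symmetry: in the disk the pursuer's angular speed dominates the escaper's outside one critical circle, so the pursuer runs in a single consistent direction. In the triangle there is no such monotone angular comparison --- the worst case is the pursuer at a corner facing \emph{four} simultaneous exit threats ($p$, $p'$, $q$, $q'$ in the paper's notation), two clockwise and two counterclockwise, and no single-direction chase resolves it. Relatedly, your derivation of $r^*$ from ``a quadratic equating the escaper's straight-line time with the pursuer's perimeter distance divided by $r^*$'' underdetermines the answer: one threat gives one equation in the two unknowns $r^*$ and the position of $s_a$; the paper pins both down by requiring the near threat ($r^*\|s_a-p\| = 1 + \|b-p\|$) and the far threat ($r^*\|s_a-q\| = 2 + \|c-q\|$) to be tight simultaneously. (A small additional error: the medial axis of an equilateral triangle consists of the three segments from the \emph{vertices} to the centroid, not from the centroid to the edge midpoints.)

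The escaper side of your proposal is closer to the paper's, but still incomplete. The paper uses a three-phase strategy: circle the midpoint triangle $T'$ of $T = s_as_bs_c$ until antipodal to the pursuer, then a first APLO phase to expand from $T'$ out to $T$ while preserving antipodality, then a final APLO phase from $T$ to the boundary with axial direction toward the opposite vertex and $d_u = \cos(\pi/3+\theta^*)$, $d_v = \sin(\pi/3+\theta^*)$ (not $d_v = r/r^*$ as in the disk). The final analysis is a two-case computation (pursuer exits edge $bc$ through $b$ or through $c$) bounding the separation of the projections onto $bc$ by a quantity proportional to $1 - r/r^*$; your concern about the APLO trajectory crossing an adjacent edge near a $60^\circ$ corner is legitimate and is exactly what these case bounds on $x_{h3}$ control. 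As written, your proposal would establish a lower bound on $r^*$ only after supplying these missing phase-2 and case analyses, and would not establish the matching upper bound at all.
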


\begin{figure}[htbp]
\centering 
\includegraphics[scale=0.5]{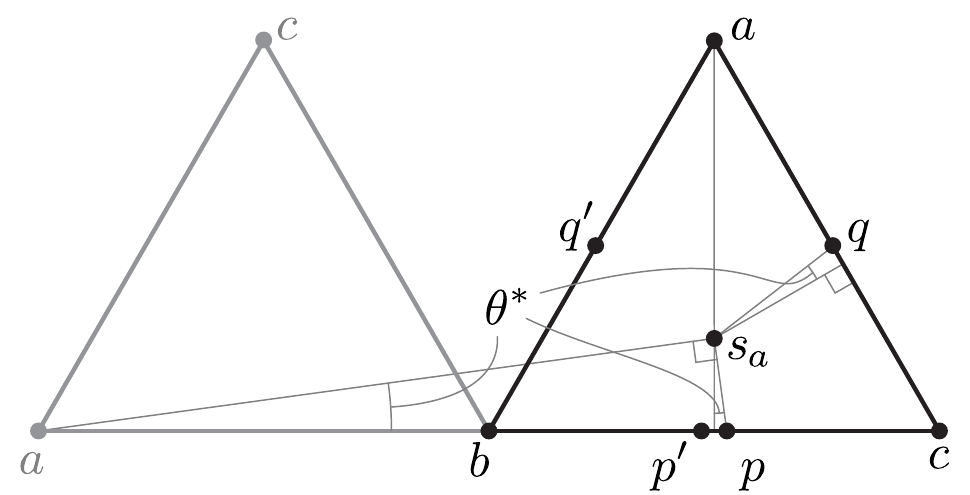}
\caption{Geometry for computing the critical speed ratio $r^* = 1/\sin\theta^*$
for a triangle.}
\label{TriangleRatioFigure} 
\end{figure}

Let $\theta^* <\pi/2$ be the positive angle such that $r^* = 1/\sin\theta^*$;
see Figure~\ref{TriangleRatioFigure}. The speed ratio $r^*$ is chosen such that
if the pursuer is at corner $a$ and the escaper is at point $s_a$ at distance
$(\sqrt{3} - 3\tan\theta^*)/2 = \sqrt{3(7 - 3 \sqrt{5})/2}\approx 0.6616$ along
the angle bisector of $a$, then the escaper has four simultaneous threats to
exit at $p$, $p'$, $q$, and $q'$. Specifically, the escaper distance from $s_a$
to $p$ is exactly factor $r^*$ smaller than the pursuer distance
counterclockwise from $a$ to $b$ to $p$, i.e., $r^*\|s_a-p\| = 1 + \|b-p\|$, and
the escaper distance from $s_a$ to $q$ is exactly a factor $r^*$ smaller than
the pursuer distance from $a$ to $b$ to $c$ to $q$, i.e., $r^*\|s_a-q\| = 2 +
\|c-q\|$; and similarly for $p'$ and $q'$ in the clockwise direction.

\begin{proof} We first provide a winning pursuer strategy when $r \geq r^*$. Our
pursuer strategy transitions between six different strategies as the escaper
move within the triangle. These six strategies $z(h; i, j)$ are shown in
Figure~\ref{TriangleTransitionsFigure}, where each strategy is associated with a
corner $i\in\{a,b,c\}$ and a sign $j\in\{-1, 1\}$. Each of these strategies is
identical up to rotations and reflections, so let us first focus on one of the
strategies, $z(h; a, 1)$. 

\begin{figure}
\centering 
\includegraphics[scale=0.2]{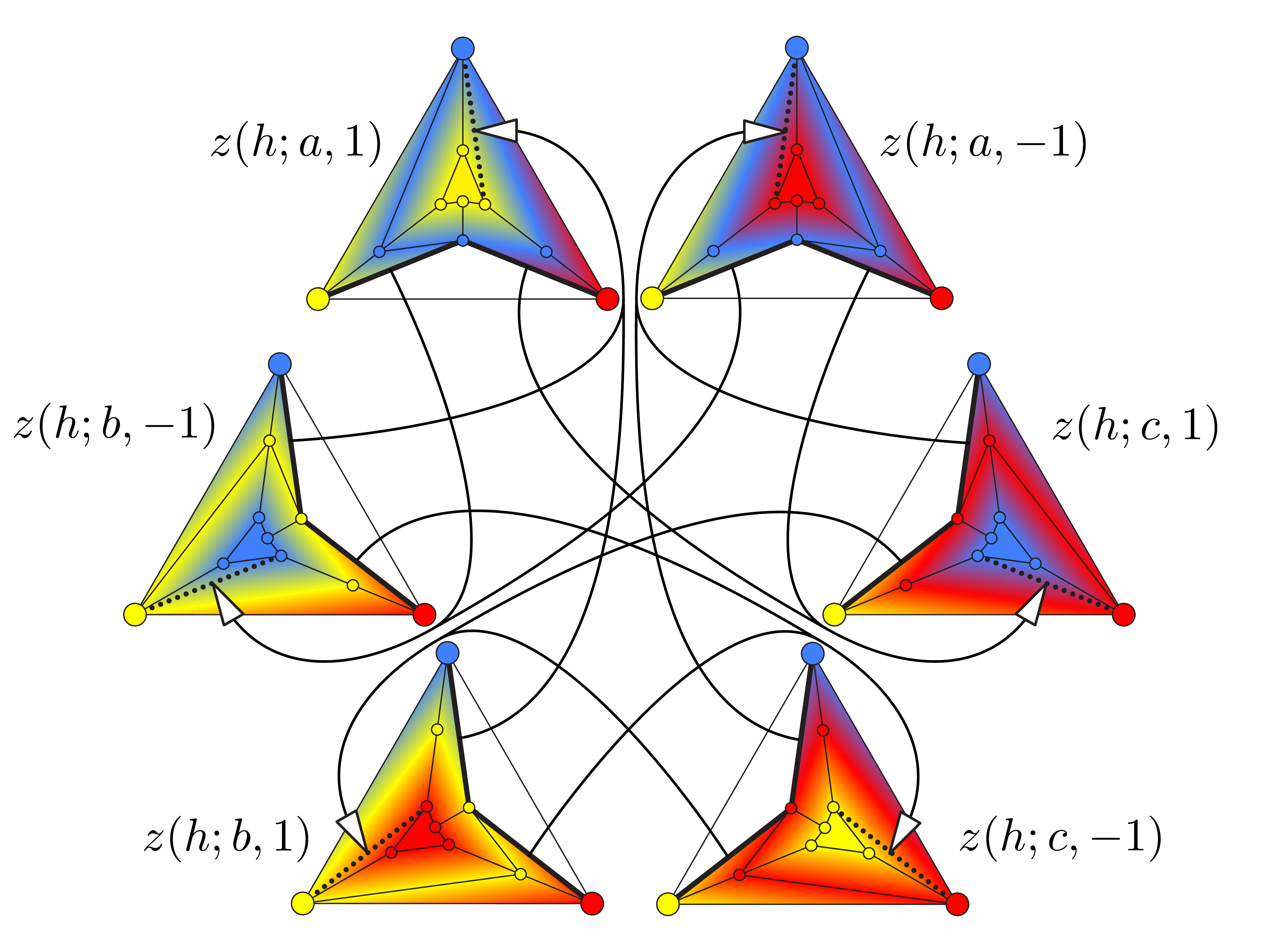}
\caption{Transitions between pursuer strategies.} 
\label{TriangleTransitionsFigure} 
\end{figure}

The $z(h; a, 1)$ strategy, depicted in
Figure~\ref{TriangleZombieFigure}, maps each point of the colored subset of the
triangle to a point on the boundary via a piecewise-linear map. Wherever the
escaper is in the colored region of a strategy, the strategy will place the
pursuer at the boundary point designated by the map. To make it easier to
reference points on the boundary, we map each boundary point on edge $ab$ and
edge $ac$ to a number, varying linearly from $-1$ at vertex $b$ (yellow), to $0$
at vertex $a$ (blue), to $1$ at vertex $c$ (red). The left drawing of
Figure~\ref{TriangleZombieFigure} depicts the geometry of the linear patches of
this map: 

\begin{figure}
\centering 
\includegraphics[scale=0.5]{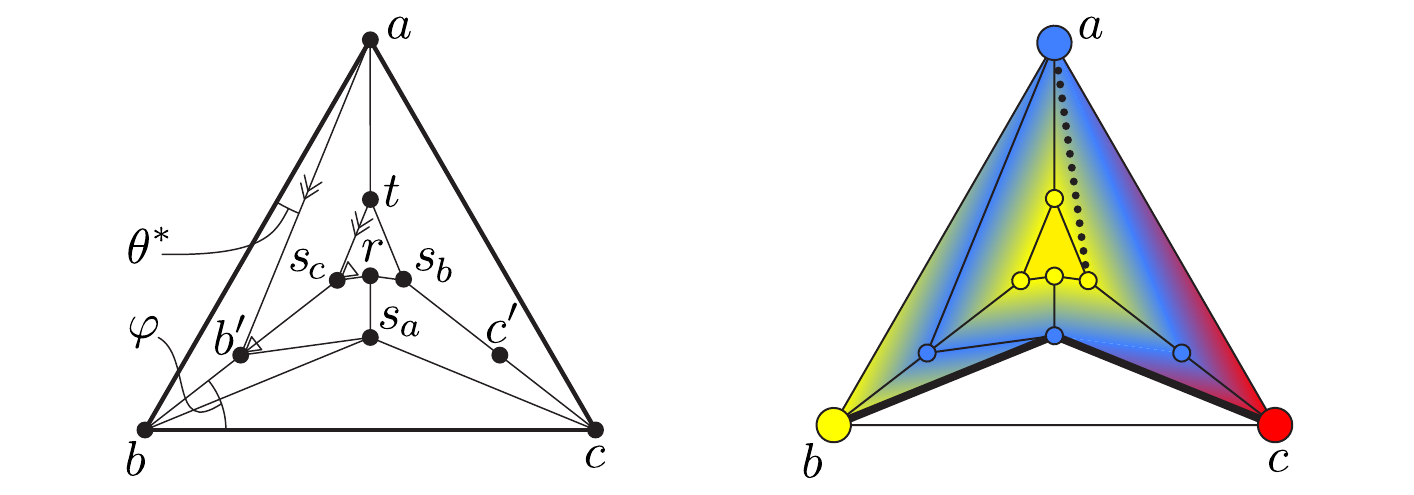}
\caption{Geometry of $\pursuer(\escaper; a, 1)$.
This function is linear in each region $a b b'$, $a b' s_c t$,
$b' s_a r s_c$, $r s_b t s_c$, $a t s_b c$, and $c s_b r s_a$,
where points $\{a, b', s_a, c'\}$ have value $0$ (blue), points $\{b,
s_c, r, s_b\}$ have value $-1$ (yellow), and point $c$ has value $1$ (red).}
\label{TriangleZombieFigure} 
\end{figure}

\begin{itemize}
\item point $s_i$ for $i\in\{a, b, c\}$ is distance
$\sqrt{3(7-3\sqrt{5})/2} \approx 0.6616$
along the angle bisector of corner $i$;
\item point $b'$ is the midpoint of segment $bs_c$;
\item point $c'$ is the midpoint of segment $cs_b$;
\item point $t$ is the intersection of the angle bisector of $a$ and the line
though $s_c$ parallel to segment $ab'$; and
\item point $r$ is the intersection of the angle bisector of $a$ and the line
through $b'$ parallel to segment $b's_a$.
\end{itemize}

We specify each linear patch by specifying the value at each vertex:

\begin{itemize}
\item points $\{a, b', s_a, c'\}$ have value $0$ (blue),
\item points $\{b, s_c, r, s_b\}$ have value $-1$ (yellow), and
\item point $c$ has value $1$ (red).
\end{itemize}

This map has the property that the gradient at every point within each linear
patch has the same value, namely $r^*$. Thus, as the escaper moves within the
colored region, the pursuer's speed will always stay below $r^* \leq r$, so the
strategy will be valid. This map also has the property that the pursuer and the
escaper will be collocated whenever the escaper is on edges $ab$ or $ac$, so the
escaper cannot win along those edges. If the escaper reaches edge $bs_a$ or edge
$cs_a$, the pursuer will switch strategies, respectively to either $z(h; b, -1)$
or $z(h; c, 1)$. These strategies exactly match strategy $z(h; a, 1)$ along
their respective transition edges. By transitioning between these strategies via
the transition graph shown in Figure~\ref{TriangleTransitionsFigure}, the
pursuer will always be collocated with the escaper whenever the escaper is at
the boundary, as desired.

Next, we provide a winning escaper strategy when $r = r^* - \epsilon$ for any
positive $\epsilon$. Our escaper strategy follows a similar strategy as the
circle escaper strategy: reach a state where the escaper can win via a single
APLO strategy. In particular, when the escaper is on the boundary of triangle $T
= s_as_bs_c$ (e.g., at some point $p_h$ on $s_bs_c$), and the pursuer is
antipodal along the opposite edge boundary with the same ratio (e.g., at point
$p_z$ along segment $bc$ where $\|b-p_z\|/1 = \|s_b-p_h\|/\|s_b-s_c\|$), then the
escaper will be able to win via an APLO strategy to the boundary. We will reach
such a configuration in two phases.

In the first phase, the escaper starts anywhere on $T' = t_at_bt_c$, the
triangle formed by connecting the midpoints of triangle $T$. Let $m_a$, $m_b$,
and $m_c$ be the midpoints of $bc$, $ca$, and $ab$ respectively; see
Figure~\ref{fig:TriangleHumanFigure}. The perimeter of $T'$ has length
$3(7-3\sqrt{5})/4 \approx 0.2188$ which is less than $3/r^* \approx 0.4051$, so
the escaper can run around $T'$ faster than the pursuer can run around the
boundary. The escaper runs around $T'$ until the escaper reaches a position
$p_{h1}$ on $T'$ such that the pursuer's position $p_{z1}$ is antipodal. Without
loss of generality, assume $p_{h1}$ is on segment $t_at_b$ and $p_{z1}$ is
antipodal on segment $cm_a$ such that $\|m_a-p_{z1}\|/1 = \|t_a - p_{h1}\|/\|t_a -
t_b\|$.

\begin{figure}
\centering 
\includegraphics[scale=0.45]{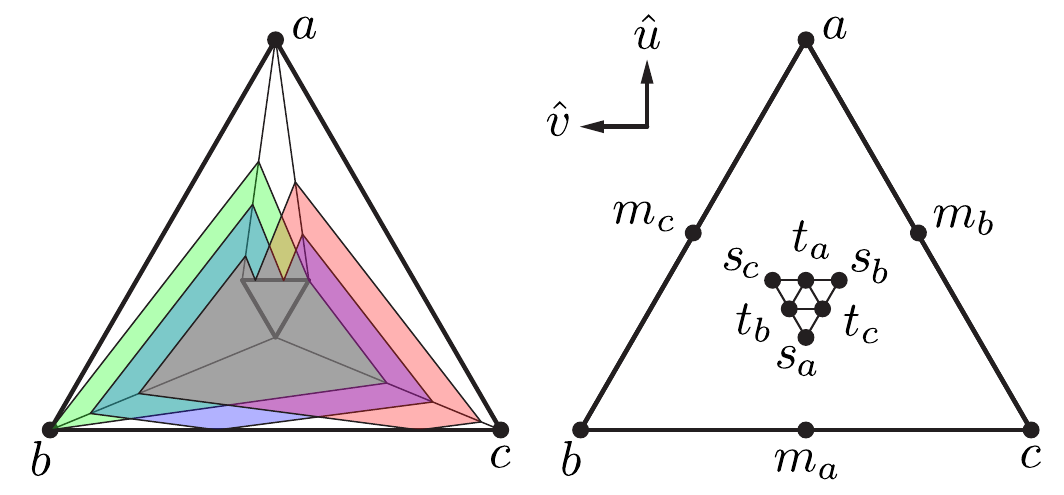}
\caption{Geometry of the escaper strategy for a triangle.} 
\label{fig:TriangleHumanFigure} 
\end{figure}

Now that the escaper is antipodal to the pursuer on $T'$, the escaper enters the
second phase, executing an APLO strategy $H_{APLO}(z, t; p_{h1}, \hat{c}, r,
d_u, d_v)$ where $\hat{c}$ is the unit direction from $c$ to $m_c$, $d_v = \|t_a
- t_b\|r < 1$ , and $d_u = \sqrt{1 - d_v^2} < 1$, until the escaper reaches
triangle $T$ at some point $p_{h2}$. By definition of APLO, during this process
the escaper's projection onto segment $t_at_b$ remains antipodal to the pursuer,
so when the escaper reaches $p_{h2}$, the pursuer is at a point $p_{z2}$
antipodal to $p_{h2}$ on $T$. 
Without loss of generality, assume $p_{h2}$ is on
segment $t_as_c$ and $p_{z2}$ is antipodal on segment $cm_a$ such that $\|m_a -
p_{z1}\|/1 = \|t_a - p_{h2}\|/\|s_b-s_c\|$. Let $x_{z2} = \|m_a - p_{z1}\|$, let
$d_s = \|s_b-s_c\| = (7-3\sqrt{5})/2$, and let $x_{h2} = \|t_a - p_{h2}\| =
x_{z2}d_s$.

Now that the escaper is antipodal to the pursuer on $T$, the escaper enters the
third and final phase, executing an APLO strategy $H_{APLO}(z, t; p_{h2},
\hat{a}, r, d_u, d_v)$, where $\hat{a}$ is the unit direction from $m_a$ to $a$,
$d_u= \cos(\pi/3 + \theta^*)$, and $d_v = \sin(\pi/3 + \theta^*)$, until the
escaper reaches the boundary at some point $p_{h3}$, with the pursuer at some
point $p_{z3}$. It remains to show that $\|p_{h3}-p_{z3}\|$ is bounded away from
zero.

If the pursuer remains on $bc$, the escaper wins easily as $p_{h3}$ is above the
line $s_cs_b$. Otherwise, there are two cases: the pursuer leaves the segment
$bc$ last through either $b$ or $c$. It suffices to show that the separation of
their projections onto segment $bc$ is bounded away from zero, specifically
quantity $|(p_{h3} - t_a)\cdot\hat{v} - (p_{z3} - t_a)\cdot\hat{v}|$ where
$\hat{v}$ is the unit vector $(b - c)$. Let $x_{h3} = (p_{h3} -
t_a)\cdot\hat{v}$. Note that $x_{h3}$ is positive to the left of $t_a$.

\begin{enumerate}

\item (Pursuer leaves $bc$ through $c$): pursuer leaves counter-clockwise from
$p_{z2}$, so $(p_{h3} - p_{h2})\cdot\hat{v} = x_{h3} - x_{h2} \ge 0$. Then by
APLO, the pursuer travels counter-clockwise from $p_{z2}$ by distance $r(x_{h3}
- x_{h2})/\sin(\pi/3+\theta^*)$, for distance $1/2 - x_{z2}$ along edge $bc$,
and the remainder along edges $ab$ and $ac$. The largest value of $x_{h3}$
possible via this APLO strategy varies linearly with $x_{z2}$. When $x_{z2} =
1/2$, then $(x_{h3} - x_{h2})$ is bounded above by 

$$
\left(\|m_c - s_c\|\right)
\frac{\cos(\pi/6-\theta^*)}{\cos(\theta^*)} =
\frac{3 - \sqrt{5}}{4};
$$

and when $x_{z2} = 0$, then $(x_{h3} - x_{h2})$ is bounded above by

$$
\left(\|m_c - s_c\| + \frac{\sqrt{3}}{2}\|s_c-t_a\|\right)
\frac{\cos(\pi/6-\theta^*)}{\cos(\theta^*)} = 1/4; 
$$

so 
$x_{h3} \le \frac{1}{4} + x_{z2}\left(\frac{9}{2} -
2\sqrt{5}\right)$.
Using this relation and the fact that $x_{h2} = x_{z2}d_s$, yields:

\begin{gather}
\begin{align*}
(p_{h3} - t_a)\cdot\hat{v} -
(p_{z3} - t_a)\cdot\hat{v}
&= x_{h3} -
\frac{1}{2}\left(
\frac{(x_{h3}-x_{h2})r}{\sin(\pi/3 + \theta^*)} - \left(\frac{1}{2} - x_{z2}\right)
- 1\right)
\\
&\ge
\left(1-\frac{r}{r^*}\right)\left(1 - 2x_{z2}\left(\sqrt{5}-2\right)\right),
\end{align*}
\end{gather}
which is always strictly positive for $r = r^* - \varepsilon < r^*$ and $0 \le
x_{z2} \le \frac{1}{2}$, as desired.

\item (Pursuer leaves $bc$ through $b$): pursuer leaves clockwise from
$p_{z2}$, so $(p_{h3} - p_{h2})\cdot\hat{v} = x_{h3} - x_{h2} \le 0$. Then by
APLO, the pursuer travels clockwise from $p_{z2}$ by distance $r(x_{h2} -
x_{h3})/\sin(\pi/3+\theta^*)$, for distance $1/2 + x_{z2}$ along edge $bc$, and
the remainder along edges $ab$ and $ac$. The smallest value of $x_{h3}$
possible via this APLO strategy varies linearly with $x_{z2}$. When $x_{z2} =
0$, then $(x_{h3} - x_{h2})$ is bounded below by
$$
-\left(\|m_c - s_c\| + \frac{\sqrt{3}}{2}\|s_c-t_a\|\right)
\frac{\cos(\pi/6-\theta^*)}{\cos(\theta^*)} = -1/4; 
$$
and when $x_{z2} = 1/2$, then $(x_{h3} - x_{h2})$ is bounded below by
$$
-\left(\|m_c - s_c\| + \sqrt{3}\|s_c-t_a\|\right)
\frac{\cos(\pi/6-\theta^*)}{\cos(\theta^*)} = -\frac{\sqrt{5} - 1}{4}; 
$$
so 
$x_{h3} \ge -\frac{1}{4} + x_{z2}\left(\frac{9}{2} -
2\sqrt{5}\right)$.
Using this relation and the fact that $x_{h2} = x_{z2}d_s$, yields
\begin{gather}
\begin{align*}
(p_{h3} - t_a)\cdot\hat{v} -
(p_{z3} - t_a)\cdot\hat{v}
&= x_{h3} -
\frac{1}{2}\left(
-\frac{(x_{h2}-x_{h3})r}{\sin(\pi/3 + \theta^*)} + \left(\frac{1}{2} + x_{z2}\right)
+ 1\right)
\\
&\le
-\left(1-\frac{r}{r^*}\right)\left(1 - 2x_{z2}\left(\sqrt{5}-2\right)\right),
\end{align*}
\end{gather}
which is always strictly negative for $r = r^* - \varepsilon < r^*$ and $0 \le
x_{z2} \le \frac{1}{2}$, as desired. \qedhere
\end{enumerate}
\end{proof}

\subsection{Square}
\label{sec:square}

The square is perhaps the next simplest polygon after the equilateral triangle.
We show how to extend our exact techniques for this polygon as well:

\begin{theorem} 
If the escaper domain is a square, the critical
speed ratio is \\
$r^* = \sqrt{\frac{5}{2}(7 + \sqrt{41})} \approx 5.789$. 
\label{thm:square}
\end{theorem}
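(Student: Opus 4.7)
The argument will closely mirror the structure of Theorem~\ref{thm:triangle}. I first define $\theta^* \in (0,\pi/2)$ by $\sin\theta^* = 1/r^*$ and set up the critical configuration: the pursuer at a corner $a$ of the unit square and the escaper at a point $s_a$ on the diagonal from $a$, whose distance from $a$ is chosen to make several exit threats simultaneously tight. For each threatened exit $q$ on the boundary, the tightness equation $r^* \|s_a - q\|$ equals the perimeter distance from $a$ to $q$. Writing out these equations for exits on each adjacent edge and each opposite edge of the square yields a small polynomial system in $\|a - s_a\|$ and $r^*$ whose closed-form solution is exactly $r^* = \sqrt{5(7+\sqrt{41})/2}$.

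For the upper bound ($r \geq r^*$), I would define eight pursuer strategies $z(h;i,j)$ indexed by a corner $i \in \{a,b,c,d\}$ and a sign $j \in \{-1,+1\}$, each a continuous piecewise-linear map from a subregion of the square to the boundary, in direct analogy with the triangle construction. The linear-patch vertices are built from the inner-square points $\{s_i\}$, the edge midpoints, and auxiliary points playing the roles of $b', c', r, t$ from Figure~\ref{TriangleZombieFigure}. Three facts must then be verified: (i) in every linear patch the Jacobian has operator norm exactly $r^*$, so the pursuer respects the speed-limit constraint; (ii) on the boundary segments covered by each strategy the map restricts to the identity, so the pursuer and escaper coincide whenever the escaper is on those segments; and (iii) neighbouring strategies agree along their transition edges, giving a finite transition graph under which the pursuer continuously tracks the escaper for every escaper motion path.

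For the lower bound ($r < r^*$), the escaper executes a three-phase strategy generalising the triangle argument. In phase one, the escaper circulates the inner \emph{midpoint square} whose vertices are the midpoints of the segments $s_i s_{i+1}$; a perimeter computation shows its length is strictly less than $4/r^*$, so the escaper can wait until it and the pursuer occupy antipodal positions on opposite edges of that inner square. In phase two, the escaper invokes an APLO strategy with axial direction pointing toward the nearest outer-edge midpoint, choosing the lateral speed $d_v$ so that antipodality is preserved until the escaper reaches the inner square $s_a s_b s_c s_d$. In phase three, the escaper invokes a second APLO strategy with axial direction normal to the nearest outer edge and with $(d_u, d_v) = (\cos(\pi/4 + \theta^*), \sin(\pi/4 + \theta^*))$, which is the parameter choice matching the critical configuration at $s_a$.

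The hardest step will be the phase-three analysis: for each antipodal starting position and each direction in which the pursuer may leave the nearest outer edge---through either adjacent corner, and potentially wrapping around to one of the two far corners---I must show that the projections of escaper and pursuer onto the target outer edge remain separated by at least a positive multiple of $(1 - r/r^*)$ at the moment the escaper reaches the boundary. As in the triangle proof, each sub-case reduces to writing $x_{h3}$ as a linear function of $x_{z2}$ determined by the two extreme values $x_{z2} \in \{0, 1/2\}$, substituting into the APLO time identity, and verifying the resulting inequality. Compared to the triangle there are roughly twice as many sub-cases, but the derivations are structurally identical, and the tight-threat system guarantees that the critical value of $r^*$ falls out cleanly as the unique rate at which every sub-case inequality becomes an equality.
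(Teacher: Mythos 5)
Your overall architecture --- a critical configuration with simultaneously tight exit threats determining $r^*$, piecewise-linear pursuer maps of gradient norm $r^*$ glued along a transition graph, and a three-phase escaper strategy ending in an APLO run --- is exactly the paper's plan for both the triangle and the square. But the configuration you anchor everything to is wrong, and the error is not cosmetic: it would change the value of $r^*$ you derive. The paper's critical configuration (Figure~\ref{SquareRatioFigure}) places the pursuer at the \emph{midpoint} $a$ of a side, not at a corner, and the escaper at $s_a$ on the \emph{perpendicular bisector} of that side at distance $(9-\sqrt{41})/4 \approx 0.649$ from $a$ (past the center, hence nearer the opposite side); the tight threats sit at perimeter distances $2+\|c-p\|$ and $3-\|d-q\|$ from $a$, i.e., on the far half of the boundary. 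That system closes: writing $h$ for the escaper's height, tightness plus stationarity of the threat on the opposite edge gives $\sqrt{(r^*)^2-1}\,(1-h)=2$, the threat on an adjacent edge gives $\tfrac{1}{2}\sqrt{(r^*)^2-1} = \tfrac{7}{2}-h$, and eliminating $\sqrt{(r^*)^2-1}$ yields $2h^2-9h+5=0$, hence $h=(9-\sqrt{41})/4$ and $(r^*)^2=\tfrac{5}{2}(7+\sqrt{41})$. Your corner-plus-diagonal configuration does not close: with the pursuer at corner $(0,0)$ and the escaper at $(u,u)$, a stationary tight threat just past the first corner forces $1+u=\sqrt{r^2-1}\,(1-u)$ while one just past the second corner forces $3-u=\sqrt{r^2-1}\,(1-u)$, so $u=1$ and the configuration degenerates. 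The ``small polynomial system'' you describe therefore cannot produce the stated $r^*$.

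The misidentified anchor propagates into your escaper strategy. Because the points $s_i$ lie on the perpendicular bisectors, the inner square $S=s_as_bs_cs_d$ is rotated $45^\circ$ relative to the outer square, so its edges are parallel to the outer square's \emph{diagonals}; accordingly the paper's phase-three APLO axial direction $\hat u$ runs from corner $p_{bc}$ to the opposite corner $p_{da}$ --- a diagonal direction --- and that is precisely what makes $(d_u,d_v)=(\cos(\pi/4+\theta^*),\sin(\pi/4+\theta^*))$ the right parameters, the $\pi/4$ being the angle between that diagonal and the exit edges. Your axial direction ``normal to the nearest outer edge'' is a perpendicular-bisector direction and is inconsistent both with those parameters and with where the tight threats actually lie. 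The remaining ingredients you list (the midpoint square $S'$ with perimeter $7-\sqrt{41}<4/r^*$ for phase one, the case analysis over which corner the pursuer wraps around in phase three, checking that each linear patch of the pursuer map has gradient norm $r^*$ and that adjacent strategies agree along transition edges) do match the paper's proof, but none of them can be carried out correctly until the critical configuration is fixed.
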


Similar to the triangle case, the speed ratio $r^*$ is chosen such that, if the
pursuer is a particular position $a$ (in this case at the midpoint of a side)
and the escaper is at point $s_a$ at distance $(9 - \sqrt{41})/4\approx 0.6492$
along the perpendicular bisector at $a$, then the escaper has four simultaneous
threats to exit at $p$, $p'$, $q$, and $q'$; see Figure~\ref{SquareRatioFigure}.
Specifically, the escaper distance from $s_a$ to $p$ is exactly factor $r^*$
smaller than the pursuer distance counterclockwise from $a$ to $b$ to $c$ to
$p$, i.e., $r^*\|s_a-p\| = 2 + \|c-p\|$; and the escaper distance from $s_a$ to
$q$ is exactly a factor $r^*$ smaller than the pursuer distance from $a$ to $b$
to $c$ to $q$, i.e., $r^*\|s_a-q\| = 3 - \|d-q\|$; and similarly for $p'$ and
$q'$ in the clockwise direction. 

\begin{figure}
\centering 
\includegraphics[scale=0.5]{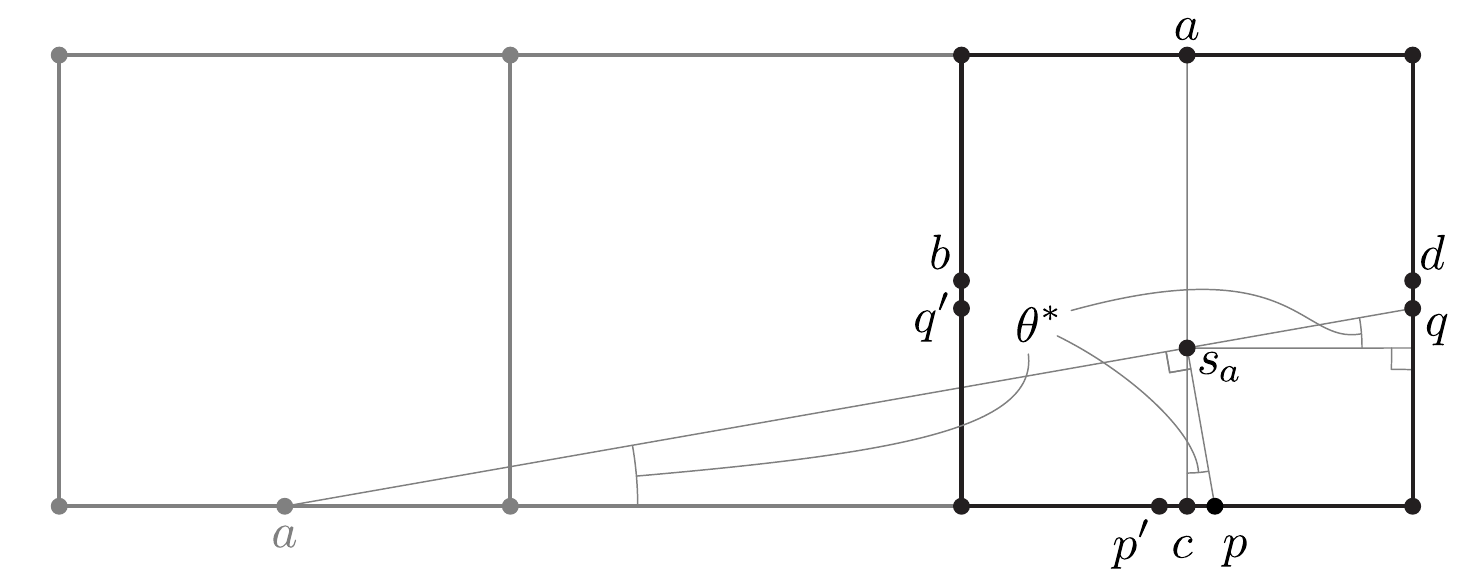}
\caption{Geometry for computing the critical speed ratio $r^* = 1/\sin\theta^*$
for a square.}
\label{SquareRatioFigure} 
\end{figure}

\begin{proof} We first provide a winning pursuer strategy when $r \geq r^*$. Our
pursuer strategy transitions between eight different strategies as the escaper
move within the triangle. These six strategies $z(h; i, j)$ are shown in
Figure~\ref{SquareTransitionsFigure}, where each strategy is associated with a
corner $i\in\{a,b,c,d\}$ and a sign $j\in\{-1, 1\}$. Each of these strategies is
identical up to rotations and reflections, so let us first focus on one of the
strategies, $z(h; a, 1)$. 

\begin{figure}
\centering 
\includegraphics[scale=0.15]{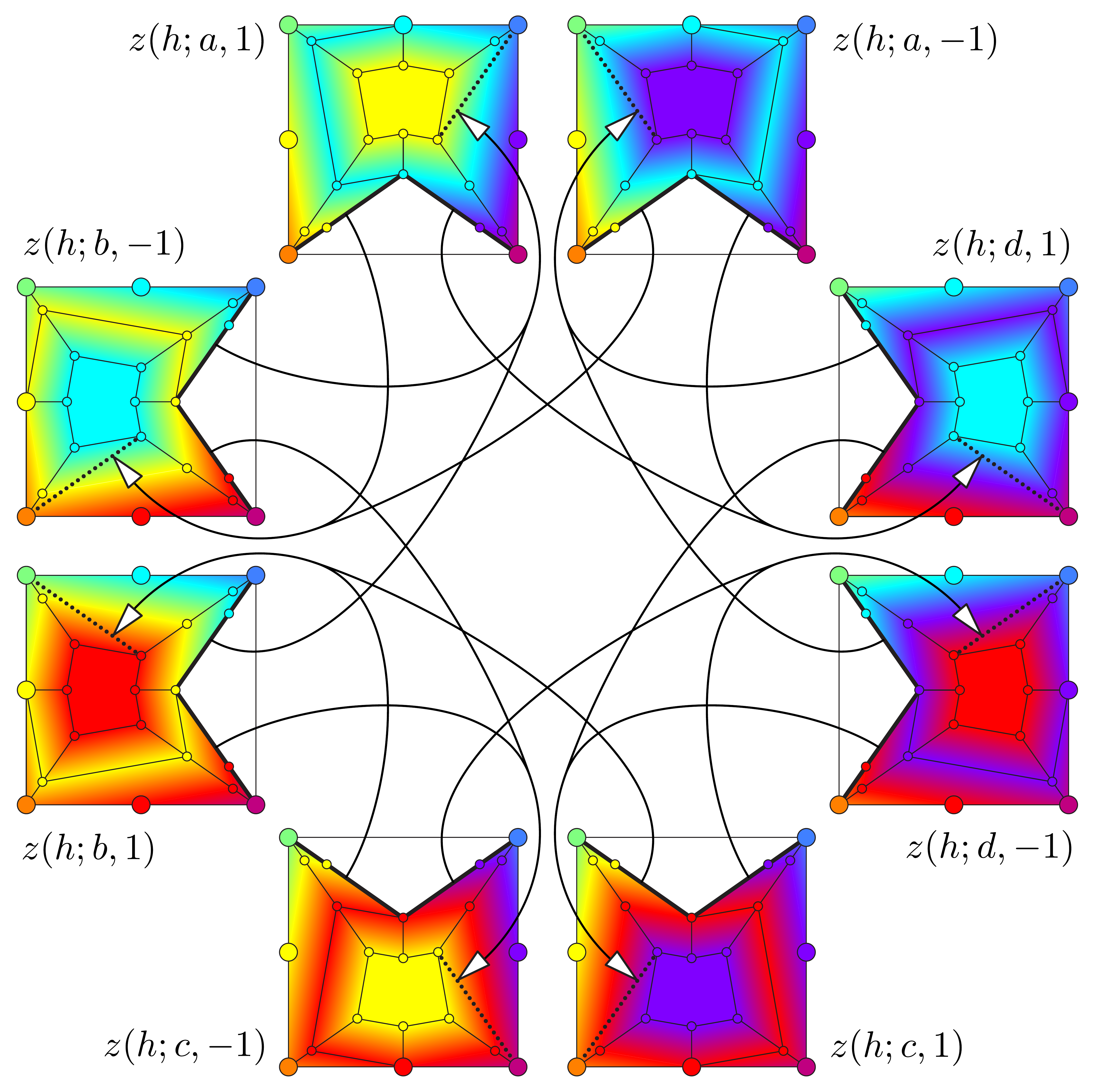}
\caption{Transitions between pursuer strategies.} 
\label{SquareTransitionsFigure} 
\end{figure}

The strategy $z(h; a, 1)$, depicted in
Figure~\ref{SquareZombieFigure}, maps each point of the colored subset of the
square to a point on the boundary via a piecewise-linear map. Wherever the
escaper is in the colored region of a strategy, the strategy will place the
pursuer at the boundary point designated by the map. To make it easier to
reference points on the boundary, we map each boundary point on edges
$p_{ab}p_{bc}$,
$p_{cd}p_{da}$, and
$p_{da}p_{ab}$
to a number, varying linearly from 
$-1.5$ at vertex $p_{bc}$ (orange), to 
$-1$ at vertex $b$ (yellow), to 
$-0.5$ at vertex $p_{ab}$ (green), to 
$0$ at vertex $b$ (cyan), to 
$0.5$ at vertex $p_{da}$ (blue), to 
$1$ at vertex $c$ (purple), to 
$1.5$ at vertex $p_{cd}$ (magenta).
The left drawing of
Figure~\ref{SquareZombieFigure} depicts the geometry of the linear patches of
this map: 

\begin{figure}
\centering 
\includegraphics[scale=0.5]{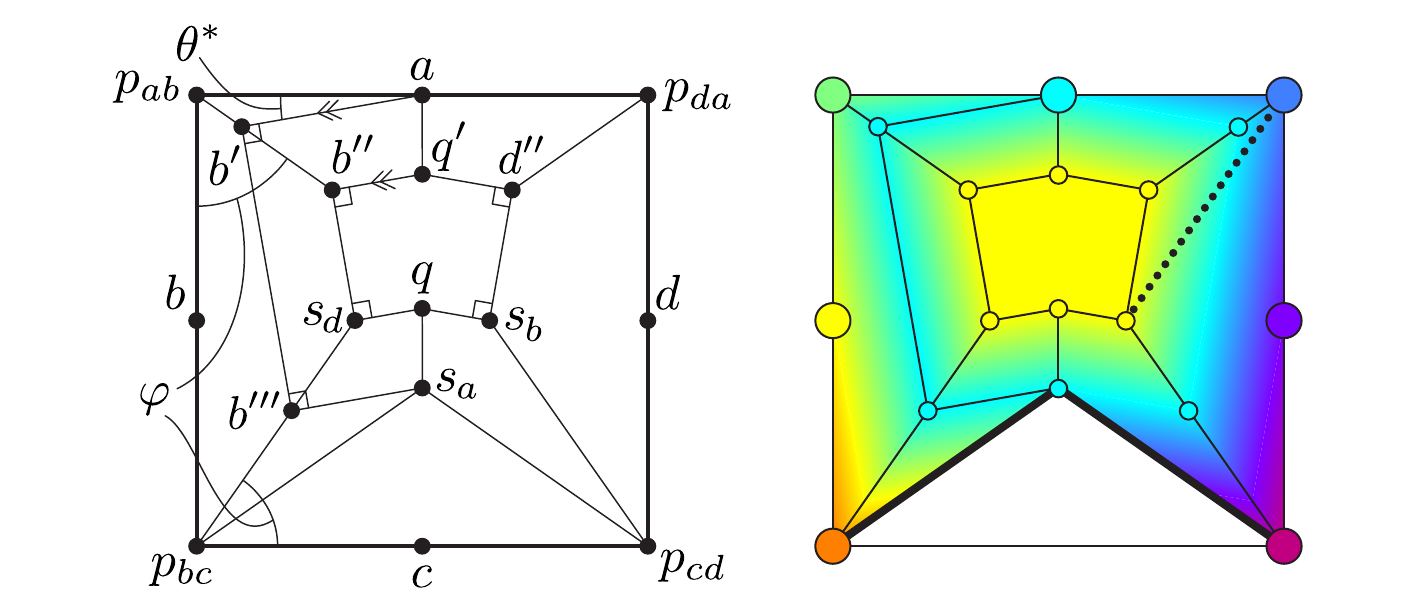}
\caption{Geometry of $\pursuer(\escaper; a, 1)$.
This function is linear in each region $ab'p_{ab}$, $aq'b''b'$,
$p_{ab}p_{bc}b'''b'$, $b'b'''s_db''$, $p_{bc}s_ab'''$, $b'''s_aqs_d$,
$q'b''s_dqs_bd''$, $aq'd''p_{da}$, $d''s_bp_{cd}p_{da}$, and $qs_ap_{cd}s_b$, where points $\{a, b', b''', s_a\}$ have value $0$ (cyan),
points $\{b,b'',s_d,q,q',s_b,d''\}$ have value $-1$ (yellow), and points 
$p_{ab}$, $p_{bc}$, $p_{cd}$, $d$, $p_{da}$ have values
$-0.5$ (green),
$-1.5$ (orange),
$1.5$ (magenta),
$1$ (purple), and
$0.5$ (blue) respectively.}
\label{SquareZombieFigure} 
\end{figure}

\begin{itemize}
\item point $s_i$ for $i\in\{a, b, c, d\}$ is distance
$(9 - \sqrt{41})/4 \approx 0.06492$
from midpoint $i$ toward the center;
\item point $b'''$ is the point on segment $p_{bc}s_d$ where 
$5\|b'''-s_d\| = 2\|s_d - p_{bc}\|$;
\item point $q$ is the intersection of the segment $as_a$ and the line
through $s_d$ parallel to segment $b'''s_a$.
\item point $q'$ is on segment $as_a$ such that $\|a-q'\|=\|q-s_a\|$;
\item point $b''$ is the intersection of two lines: the line through $t$
parallel to segment $b'''s_a$ and the line through $s_d$ perpendicular to
segment $b'''s_a$;
\item point $d''$ is the reflection of $b''$ about $as_a$; and 
\item point $b'$ is the point on segment $p_{ab}b''$ where 
$3\|p_{ab}-b'\| = \|p_{ab}-b''\|$.
\end{itemize}

We specify each linear patch by specifying the value at each vertex:

\begin{itemize}
\item points $\{a, b', b''', s_a\}$ have value $0$ (cyan),
\item points $\{b, q, q', b'', d'', s_d, s_d\}$ have value $-1$ (yellow),
\item point $d$ has value $1$ (purple), and
\item points $p_{ab}$, $p_{bc}$, $p_{cd}$, and $p_{da}$ have values 
$-0.5$, $-1.5$, $1.5$, and $0.5$ respectively.
\end{itemize}

This map has the property that the gradient at every point within each linear
patch has the same value, namely $r^*$. Thus, as the escaper moves within the
colored region, the pursuer's speed will always stay below $r^* \leq r$, so the
strategy will be valid. This map also has the property that the pursuer and the
escaper will be collocated whenever the escaper is on edges $p_{ab}p_{bc}$,
$p_{cd}p_{da}$, or $p_{da}p_{ab}$, so the escaper cannot win along those edges.
If the escaper reaches edge $p_{bc}s_a$ or edge $p_{cd}s_a$, the pursuer will
switch strategies, respectively to either $z(h; b, -1)$ or $z(h; d, 1)$. These
strategies exactly match strategy $z(h; a, 1)$ along their respective transition
edges. By transitioning between these strategies via the transition graph shown
in Figure~\ref{SquareTransitionsFigure}, the pursuer will always be collocated
with the escaper whenever the escaper is at the boundary, as desired.

Next, we provide a winning escaper strategy when $r = r^* - \epsilon$ for any
positive $\epsilon$; refer to Figure~\ref{fig:SquareHumanFigure}.
Our escaper strategy follows a similar strategy as the
triangle escaper strategy: reach a state where the escaper can win via a single
APLO strategy. In particular, when the escaper is on the boundary of square $S =
s_as_bs_cs_d$, e.g., at some point $p_h$ on $s_bs_c$, and the pursuer is
antipodal, e.g., at point $p_z$ on the boundary between $cd$ where $d_z(c,
p_z)/d_z(d, p_z) = \|s_c-p_h\|/\|s_d-p_h\|$ (recall, $d_z(u, v)$ corresponds to
the distance between $u$ and $v$ in the pursuer metric), then the escaper will
be able to win via an APLO strategy to the boundary. We will reach such a
configuration in two phases.

\begin{figure}
\centering 
\includegraphics[scale=0.45]{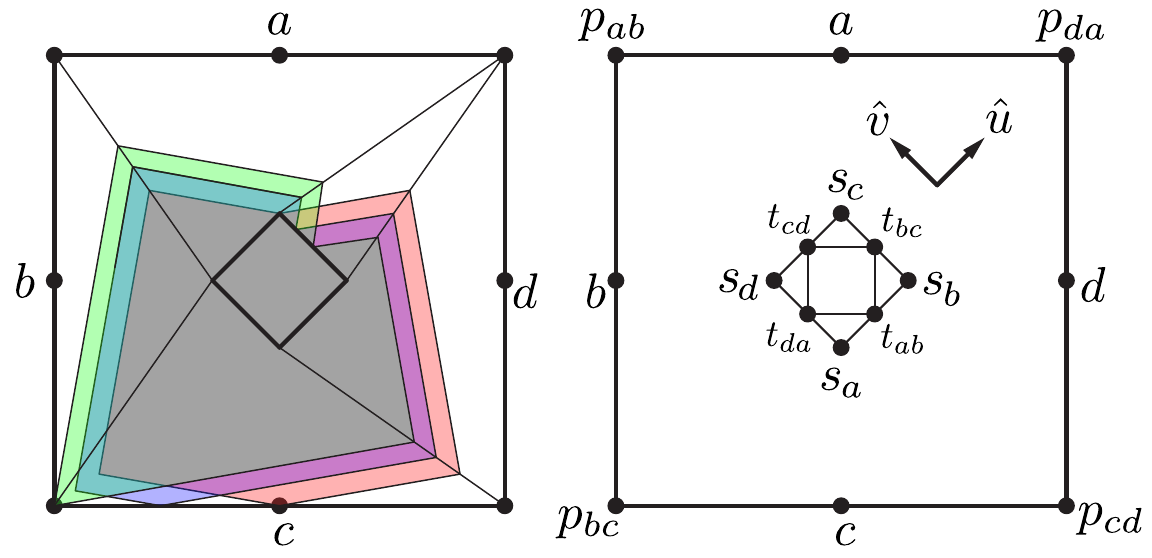}
\caption{Geometry of the escaper strategy for a square.} 
\label{fig:SquareHumanFigure} 
\end{figure}

In the first phase, the escaper starts anywhere on $S' =
t_{ab}t_{bc}t_{cd}t_{da}$, the square formed by connecting the midpoints of
square $S$. The perimeter of $S'$ has length $7 - \sqrt{41} \approx 0.5969$
which is less than $4/r^* \approx 0.6910$, so the escaper can run around $S'$
faster than the pursuer can run around the boundary. The escaper runs around
$S'$ until the escaper reaches a position $p_{h1}$ on $S'$ such that the
pursuer's position $p_{z1}$ is antipodal. Without loss of generality, assume
$p_{h1}$ is on edge $t_{cd}t_{bc}$ and $p_{z1}$ is on edge $p_{bc}p_{cd}$ such
that $\|p_{bc}-p_{z1}\| = \|t_{bc} - p_{h1}\|/\|t_{bc}-t_{cd}\|$.

Now that the escaper is antipodal to the pursuer on $S'$, the escaper enters the
second phase, executing an APLO strategy $H_{APLO}(z, t; p_{h1}, \hat{a}, r,
d_u, d_v)$ where $\hat{a}$ is the unit direction from $c$ to $a$, $d_v =
\|t_{bc} - t_{cd}\|r < 1$, and $d_u = \sqrt{1 - d_v^2} < 1$, until the escaper
reaches square $S$ at some point $p_{h2}$ (without loss of generality, assume
$p_{h2}$ is on edge $t_{bc}s_c$). By definition of APLO, during this process the
escaper's projection onto edge $t_{bc}t_{cd}$ remains antipodal to the pursuer,
so when the escaper reaches $p_{h2}$, the pursuer is at the point $p_{z2}$ on
edge $cp_{cd}$ that is also antipodal to $p_{h2}$ on $S$. Without loss of
generality, assume $p_{h2}$ is on segment $t_{bc}s_c$ and $p_{z2}$ is antipodal
on segment $cp_{bc}$ such that $\|p_{bc} - p_{z1}\|/1 = \|t_{bc} -
p_{h2}\|/\|s_b-s_c\|$. Let $x_{z2} = \|t_{bc} - p_{z1}\|$, let $d_s =
\|s_b-s_c\| = \sqrt{2}(7-\sqrt{41})/4$, and let $x_{h2} = \|t_{bc} - p_{h2}\| =
x_{z2}d_s$.

Now that the escaper is antipodal to the pursuer on $S$, the escaper enters the
third and final phase, executing an APLO strategy $H_{APLO}(z, t; p_{h2},
\hat{u}, r^*, d_u, d_v)$, where $\hat{u}$ is the unit direction from $p_{bc}$ to
$p_{da}$, $d_u= \cos(\pi/4 + \theta^*)$, and $d_v = \sin(\pi/4 + \theta^*)$,
until the escaper reaches the boundary at some point $p_{h3}$. 

If the pursuer remains in the halfplane $H$ bounded by $p_{ab}p_{cd}$ containing
$p_{bc}$, the escaper wins easily as $p_{h3}$ is in the other halfplane.
Otherwise, there are two cases: the pursuer leaves $H$ last through either
$p_{ab}$ or $p_{cd}$. It suffices to show that the separation of their
projections onto segment $p_{ab}p_{cd}$ is bounded away from zero, specifically
quantity $|(p_{h3} - t_{bc})\cdot\hat{v} - (p_{z3} - t_{bc})\cdot\hat{v}|$ where
$\hat{v}$ is the unit vector $(p_{cd} - p_{ab})$. Let $x_{h3} = (p_{h3} -
t_{bc})\cdot\hat{v}$. Note that $x_{h3}$ is positive to the upper-left of
$t_{bc}$.

\begin{enumerate}

\item (Pursuer leaves $H$ through $p_{cd}$): pursuer leaves counter-clockwise from
$p_{z2}$, so $(p_{h3} - p_{h2})\cdot\hat{v} = x_{h3} - x_{h2} \ge 0$. Then by
APLO, the pursuer travels counter-clockwise from $p_{z2}$ by distance $r(x_{h3}
- x_{h2})/\sin(\pi/4+\theta^*)$, for distance $1 - x_{z2}$ along edge
$p_{bc}p_{cd}$, and the remainder along edges $p_{cd}p_{da}$ and $p_{ab}p_{da}$.
The largest value of $x_{h3}$ possible via this APLO strategy varies linearly
with $x_{z2}$. When $x_{z2} = 1/2$, then $(x_{h3} - x_{h2})$ is bounded above by
$$
(\|s_c - a\|)
\frac{
\cos(\pi/4 - \theta^*)
}{\cos(\theta^*)} =
\frac{
\sqrt{2}(13-\sqrt{41})}
{32}
$$
and when $x_{z2} = 0$, then $(x_{h3} - x_{h2})$ is bounded above by
$$
\left(\|s_c - a\| + \frac{1}{\sqrt{2}}\|s_c-t_{bc}\|\right)
\frac{
\cos(\pi/4 - \theta^*)
}{\cos(\theta^*)} =
\frac{\sqrt{2}}{4}
$$
so 
$x_{h3} \le \frac{\sqrt{2}}{16}\left(4
+ x_{z2}\left(33 -
5\sqrt{41}\right)\right)$.
Using this relation and the fact that $x_{h2} = x_{z2}d_s$, yields
\begin{gather}
\begin{align*}
(p_{h3} - t_{bc})\cdot\hat{v} -
(p_{z3} - t_{bc})\cdot\hat{v}
&= x_{h3} -
\frac{\sqrt{2}}{2}\left(
\frac{(x_{h3}-x_{h2})r}{\sin(\pi/4 + \theta^*)} - \left(1 - x_{z2}\right)
- 1\right)
\\
&\ge
\frac{5\sqrt{2}}{16}\left(1-\frac{r}{r^*}\right)(4-x_{z2}(\sqrt{41}-5))
\end{align*}
\end{gather}
which is always strictly positive for $r = r^* - \varepsilon < r^*$ and $0 \le
x_{z2} \le \frac{1}{2}$, as desired.

\item (Pursuer leaves $H$ through $p_{ab}$): pursuer leaves clockwise from
$p_{z2}$, so $(p_{h3} - p_{h2})\cdot\hat{v} = x_{h3} - x_{h2} \le 0$. Then by
APLO, the pursuer travels clockwise from $p_{z2}$ by distance $r(x_{h2} -
x_{h3})/\sin(\pi/4+\theta^*)$, for distance $1 + x_{z2}$ along edges
$p_{bc}p_{cd}$ and $p_{ab}p_{bc}$, and the remainder along edges $p_{ab}p_{da}$
and $p_{da}p_{cd}$. The smallest value of $x_{h3}$ possible via this APLO
strategy varies linearly with $x_{z2}$. When $x_{z2} = 0$, then $(x_{h3} -
x_{h2})$ is bounded below by
$$
-\left(\|s_c - a\| + \frac{1}{\sqrt{2}}\|s_c-t_{bc}\|\right)
\frac{
\cos(\pi/4 - \theta^*)
}{\cos(\theta^*)} =
-\frac{\sqrt{2}}{4}
$$
and when $x_{z2} = 1/2$, then $(x_{h3} - x_{h2})$ is bounded below by
$$
-\left(\|s_c - a\| + \frac{2}{\sqrt{2}}\|s_c-t_{bc}\|\right)
\frac{
\cos(\pi/4 - \theta^*)
}{\cos(\theta^*)} =
-\frac{\sqrt{2}(3+\sqrt{41})}{32}
$$
so $x_{h3} \le -\frac{\sqrt{2}}{16}\left(4
- x_{z2}\left(33 -
5\sqrt{41}\right)\right)$.
Using this relation and the fact that $x_{h2} = x_{z2}d_s$, yields
\begin{gather}
\begin{align*}
(p_{h3} - t_{bc})\cdot\hat{v} -
(p_{z3} - t_{bc})\cdot\hat{v}
&= x_{h3} -
\frac{\sqrt{2}}{2}\left(
-\frac{(x_{h2}-x_{h3})r}{\sin(\pi/4 + \theta^*)} + \left(1 + x_{z2}\right)
+ 1\right)
\\
&\le
-\frac{5\sqrt{2}}{16}\left(1-\frac{r}{r^*}\right)(4-x_{z2}(\sqrt{41}-5))
\end{align*}
\end{gather}
which is always strictly negative for $r = r^* - \varepsilon < r^*$ and $0 \le
x_{z2} \le \frac{1}{2}$, as desired. \qedhere
\end{enumerate}
\end{proof}

\section{Full Model (Full Version of Section~\ref{sec:model})}
\label{appendix:model}

\iftrue
In this section, we define our model (as in Section~\ref{sec:model}),
as well as detail the motivation for the particular definitions and
the differences from past work, in Section~\ref{Continuous Game}.
Then
\else
After defining the game precisely in Section~\ref{Continuous Game},
\fi
we prove that at most one player can win in
Section~\ref{sec:BothPlayersCannotWin},
and prove that at least one player can win in
Section~\ref{Some Player Wins}.
Along the way, we introduce two important tools for analyzing these games:
$\delta$-oblivious strategies (Section~\ref{sec:oblivious strategies}) and
the $(\delta,\gamma)$-discretized game (Section~\ref{Discrete Game}).
The latter will be useful in particular for our pseudopolynomial-time
approximation scheme in \Section~\ref{sec:pseudoPTAS}.

\subsection{Continuous Game}
\label{Continuous Game}

To define the pursuit--escape game $G$, we need several ingredients:
what type of domains (regions) the escaper and pursuer traverse,
what type of motions are allowed within these domains,
what strategies are and how they can adapt to the other player's actions,
and when exactly a player wins the game.
We address each of these concepts in turn.
\iftrue
The core definitions (the overlap with Section~\ref{sec:model}) are
\hl{highlighted in yellow}.
\fi

\paragraph{Domains.} %

First we define the notion of ``player domain'', which is a play area
that either the escaper or pursuer is restricted to move within.
We choose to somewhat restrict the sets on which we analyze pursuer
evasion games, to avoid cases where escaper or pursuer running distances
(shortest-path metrics) are undefined or behave pathologically.
Even so, we give a very general definition, both to show our framework
applies very generally and so that it includes the many special cases
of interest, including a disk (with smooth boundaries), a halfplane (with
unbounded area), and the graph model (with one-dimensional features),
in addition to our primary case of a polygon with the exterior or moat model.
(Previous work on the Lion and Man game
did not deal with the issue of defining allowable domains, focusing on specific
cases, although the importance of rectifiability is mentioned in the
context of two-lion games in \cite[p.~46]{Bollobas-2006} and
\cite{Abrahamsen-Holm-Rotenberg-Wulff-Nilsen-2017}.)

Specifically, \hl{a \defn{player domain} is a closed subset $D$ of Euclidean space
$\mathbb R^k$
that is \defn{locally finitely} \defn{rectifiable},
meaning that its intersection $D \cap B$
with any bounded closed Euclidean ball $B$
is ``finitely rectifiable''} (which intuitively
means ``finite total surface area'').  \hl{Formally,
$R \subseteq \mathbb R^k$ is \defn{finitely rectifiable} if it is
the union of the images of finitely many functions of the form
$S : [0,1]^k \to R$ satisfying the Lipschitz condition
$d(S(u),S(v)) \leq d(u,v)$ for all $u, v \in [0,1]^k$.}%
\footnote{Throughout this paper, we use Euclidean as
  the default metric unless otherwise specified, so $d(u,v)$ denotes
  the Euclidean distance $\|u-v\|_2$. We use a subscript (such as
  $d_\escaper$ and $d_\pursuer$ introduced soon) to denote a different metric.}
We call the functions $S$ constituting $R$ the \defn{patches} of~$R$.

This definition forbids player domains with fractal boundary of
nontrivial fractal dimension, and forbids the ``Harmonic comb'' ---
the union of line segments from the origin to $(1/i,1)$ for all
$i > 0$, together with the segment from the origin to $(0,1)$.%
\footnote{The Harmonic comb would have been allowed if we required the
  weaker property that $D$ is the union of the images of countably
  many Lipschitz functions (the countable analog of ``finitely rectifiable'').
  Notably, this compact set has an infinite sequence of points $(1/i,1)$ that
  converge in the Euclidean metric but not when measured according to
  shortest paths within $D$ (contrary to Lemma~\ref{lem:intrinsic
    metric compact}), so we choose to forbid it from being a valid domain.}
But the definition still allows a boundary of infinite total
length/surface area so long as the infinity comes from being unbounded
in $\mathbb R^k$.  For example, the following are valid domains:
\begin{itemize}
\item Polygons (interior plus boundary), possibly with holes,
  of finite total perimeter (but having possibly infinitely many edges).
\item Unbounded polygons, where finitely many edges extend to infinite rays,
  while the finite-length edges have bounded total length.
  For example, 2D linear programs define convex unbounded polygons,
  including half-planes and wedges
  (which are studied in \Section~\ref{sec:exact}).
\item The exterior (including the boundary) of one or more polygons,
  each of finite perimeter.
\item Generalizations of the above to higher dimensions (polyhedra).
\item Any closed semi-algebraic set, or more generally,
  closed semi-analytic or closed subanalytic set
  \cite[Theorem~6.10]{Bierstone-Milman-1988}.
\item Any embedding (not necessarily straight-line) of a graph
  into $\mathbb R^k$ of finite total edge length.
  In particular, any graph can be embedded into $\mathbb R^3$,
  even while matching specified edge lengths,
  so this lets us represent the pursuit--escape game on weighted graphs
  (the \defn{graph model}).
  In this case, the entire domain is its own boundary.
\end{itemize}

\hl{The input to the pursuit--escape problem consists of an \defn{escaper
  domain} $D_\escaper$ and a \defn{pursuer domain} $D_\pursuer$, and an \defn{exit
  set} $\Exit$.  The escaper and pursuer domains must be \emph{player domains} as
described above.
The exit set $X$ must also be a player domain,
and a subset of the player domains: $\Exit \subseteq D_\escaper \cap D_\pursuer$.
The goal of the escaper will be to reach an \defn{exit} --- any point of the
exit set $\Exit$ --- while being sufficiently away from the pursuer.}
Typically, we imagine the entire escaper--pursuer domain intersection as the exit
set ($\Exit = D_\escaper \cap D_\pursuer$), but we allow the more general form to represent
e.g.\ that the escaper must reach an escape vehicle which are only at certain
points where the escaper and pursuer could meet.

Two natural cases captured by this framework are as follows:
\begin{itemize}
\item \defn{Exterior model}: when $D_\pursuer = \overline{\mathbb R^k - D_\escaper}$
  (the closure of the complement of~$D_\escaper$), i.e., the pursuer can be
  anywhere the escaper cannot, plus the shared boundary
  $\partial D_\pursuer = \partial D_\escaper$.%
  \footnote{Here $\partial D = D \setminus \interior D$ is the \defn{boundary}
    of~$D$, where $\interior D$ is the \defn{interior} of $D$, i.e., the
    set of all points of $D$ having an open neighborhood within~$D$.}
\item \defn{Moat model}: when $D_\pursuer = \partial D_\escaper$, i.e.,
  the pursuer can only walk around the boundary of the escaper domain.
\end{itemize}

For any domain $D$, let $d_D$ denote the
\defn{intrinsic (shortest-path) metric} of~$D$.
This metric measures how someone restricted to the domain would travel.
In particular, define the \defn{escaper metric} $d_\escaper = d_{D_\escaper}$ and
\defn{pursuer metric} $d_\pursuer = d_{D_\pursuer}$.

\paragraph{Motion paths.}

\hl{A \defn{motion path} with maximum speed $s \geq 0$ in metric domain $D$
is a function $\player : [0,\infty) \to D$
satisfying the \defn{speed-limit constraint} (Lipschitz condition)}
$$ \colorbox{hl}{$\displaystyle
d_D(\player(t_1), \player(t_2)) \leq s \cdot |t_1 - t_2| \text{ for all }t_1,t_2 \geq 0.
$}
$$
(This definition matches the definitions of ``lion path'' and ``man path''
in \cite{Bollobas-Leader-Walters-2012}, generalized to arbitrary maximum speed
and arbitrary domain.)
The speed constraint implies that all motion paths are continuous.
This definition can also represent finite motion paths
by letting $\player(t)$ be constant for $t \geq T$ for some~$T$.

\hl{We consider a model where the pursuer maximum speed is a factor of $r$
larger than the escaper maximum speed, which we assume is $1$ for simplicity.
Thus an \defn{escaper motion path} is a motion path of maximum speed $1$ in
the escaper domain $D_\escaper$,
while a \defn{pursuer motion path} is a motion path of maximum speed $r$ in
the pursuer domain $D_\pursuer$.}

\paragraph{Symmetric terminology for player vs.\ opponent.}

\hl{For symmetry, the following definitions refer to a \defn{player}
(either escaper and pursuer) and their \defn{opponent} (pursuer or escaper,
respectively).}  For example, we use ``player motion path'' $\player$ and
``opponent motion path'' $\opponent$ to refer to two cases symmetrically:
(1)~an escaper motion path $\player$ and a pursuer motion path~$\opponent$; and
(2)~a pursuer motion path $\player$ and an escaper motion path~$\opponent$.

\paragraph{Strategies.}

\hl{A \defn{player strategy} is a function $\Player$ mapping an opponent motion path
$\opponent$ to a player motion path $\Player(\opponent)$ satisfying the following \defn{nonbranching-lookahead
constraint}:}
\begin{quote}
\hl{for any two opponent motion paths $\opponent_1,\opponent_2$ agreeing on $[0,t]$,
the strategy's player motion paths $\Player(\opponent_1),\Player(\opponent_2)$ also agree on $[0,t]$.}
\end{quote}
Effectively, this definition constrains $\Player(\opponent)(t)$ to depend only on
$\opponent(t')$ for earlier times $t' \leq t$, or equivalently by continuity of
motion plans, for strictly earlier times $t' < t$.

This definition matches the clever definition of ``lion/man strategy''
and ``no lookahead'' in \cite{Bollobas-Leader-Walters-2012}.
We use the term ``nonbranching-lookahead'' to more accurately
reflect that the strategy can depend on the opponent motion path, including
the future, so long as it does so in a nonbranching way.
This is useful for defining strategies such as ``move along a straight line
to where the opponent will go'', but it can allow for certain kinds of
``cheating''; see Lemma~\ref{lem:terrible} below.

This definition correctly defines a \defn{pursuer strategy}~$\Pursuer$.
\hl{An \defn{escaper strategy}~$\Escaper$ must satisfy one additional constraint,
the \defn{escaper-start constraint}:}
\begin{quote}
\hl{all paths $\Escaper(\pursuer)$ (over all pursuer motion paths~$\pursuer$)
must start at a common point $\Escaper(\pursuer)(0)$.}
\end{quote}
This constraint is necessary in our case because, if the escaper can choose their
starting position depending on the pursuer's start position, then the escaper
can trivially win (by starting at a far-away exit).
(In the man-and-lion problem, the man and lion's starting positions
are given, so \cite{Bollobas-Leader-Walters-2012}
did not have to deal with this asymmetry.)

Notationally, we use lower-case letters $\player,\escaper,\pursuer$ for motion paths
and upper-case letters $\Player,\Escaper,\Pursuer$ for strategies of the player, escaper, and
pursuer, respectively.

\paragraph{Win condition.}

It remains to define a win condition for the pursuit--escape game~$G$.
We do so in terms of \hl{an infinity family of games $G_\epsilon$
for all $\epsilon > 0$.}

\hl{An escaper strategy $\Escaper$ \defn{wins $G_\epsilon$} or
\defn{wins $G$ by $\epsilon$} if,
for every pursuer motion path $\pursuer$,
there is a time $t$ at which $\Escaper(\pursuer)(t)$ is on an exit and at distance
$\geq \epsilon$ from $\pursuer(t)$ in the pursuer metric.}
Intuitively, the escaper needs a small amount of time to exit
(e.g., to break into the getaway car),
during which the pursuer can run $\epsilon$ distance and catch the escaper.

This \defn{no-capture} definition allows the escaper and pursuer
to collocate at time $< t$
without the escaper being captured; in other words, the escaper has the
ability to choose to exit, and only then must be away from the pursuer.
As mentioned in Section~\ref{IntroductionSection}, our no-capture model differs
from the Lion and Man problem, where collocation implies immediate capture.
Indeed, our no-capture model is a significant deviation because,
if we used the Lion-and-Man notion of ``escaper win''
\cite{Bollobas-Leader-Walters-2012},
then the escaper would always win in many natural instances
(e.g., polygon, Jordan, and polyhedron models):

\begin{lemma} \label{lem:terrible}
  Assuming the exit set $\Exit$ contains a one-dimensional curve,
  there is an escaper strategy $\Escaper$ such that,
  for any pursuer motion path $\pursuer$,
  $\Escaper(\pursuer)$ wins $G_{\epsilon(\pursuer)}$
  for some function $\epsilon(\pursuer)$.
\end{lemma}

\begin{proof}
  Parameterize the curve as $C(t)$ for $0 \leq t \leq T$
  with unit speed in the escaper metric $d_\escaper$.
  The escaper starts at $C(0)$, i.e., $\Escaper(\pursuer)(0) = C(0)$.
  Thus $\Escaper$ satisfies the escaper-start constraint.

  If $\pursuer(0) \neq C(0)$, then the escaper wins immediately
  by $d_\pursuer(C(0), \pursuer(0)) > 0$.
  So assume $\pursuer(0) = C(0)$.
  (The escaper can still continue from this position
  because of the no-capture aspect of our model.)
  Either the pursuer stays at $C(0)$ for positive time, or they move away.
  We define the rest of the escaper strategy according to these two cases:
  $$
  \Escaper(\pursuer)(t) =
    \begin{cases}
      C(t) & \text{if } \pursuer(t') = C(0) \text{ for all } t' \in [0,T']
                                            \text{ for some } T' > 0, \\
      C(0) & \text{if } \pursuer(t') \neq C(0) \text{ for some } t' \in [0,1].
    \end{cases}
  $$
  By the unit-speed parameterization of $C$,
  $\Escaper(\pursuer)$ is a valid escaper motion path.
  In the first case, the escaper wins by $d_\pursuer(C(0),C(T')) > 0$.
  In the second case, the escaper wins by $d_\pursuer(C(0),\pursuer(t')) > 0$.

  Finally, we prove that $\Escaper$ satisfies the
  nonbranching-lookahead constraint.
  Consider two pursuer motion paths $\pursuer_1,\pursuer_2$
  that agree on $[0,t]$ for some $t \geq 0$.
  If $t = 0$, then $\Escaper(\pursuer_1)$ and $\Escaper(\pursuer_2)$
  also agree on $[0,t]$ (by the escaper-start constraint).
  If $t > 0$ and $\pursuer_i(t') = C(0)$ for all $t' \in [0, T']$
  for some $T' > 0$, then $\pursuer_{3-i}(t') = C(0)$
  for all $t' \in [0, \min\{t,T'\}]$.
  Thus, if $t > 0$, then
  $\pursuer_1$ and $\pursuer_2$ are in the same case among the two cases,
  so $\Escaper(\pursuer_1)$ and $\Escaper(\pursuer_2)$ also agree on $[0,t]$.
\end{proof}

To avoid this problem, we use the following notion of an escaper win
for a pursuit--escape game~$G$.
\hl{The \defn{escaper wins $G$} if, for some $\epsilon > 0$,
there is an escaper strategy that wins $G$ by~$\epsilon$, i.e.,
wins~$G_\epsilon$.}
Notably, unlike Lemma~\ref{lem:terrible}, this condition requires a
\emph{uniform} $\epsilon$ for all pursuer motion paths.
Equivalently, we are taking a uniform limit of winning strategies
in the games~$G_\epsilon$ as $\epsilon \to 0$.
This is a key difference from the definitions for Lion and Man
in \cite{Bollobas-Leader-Walters-2012}; as we will show,
it implies the existence of ``oblivious'' strategies, which are a stronger form
of ``locally finite'' strategies from \cite{Bollobas-Leader-Walters-2009-arXiv},
and perhaps a more natural notion of ``no lookahead''.
Note that, for the Lion-and-Man game, the locally finite property is already
known to imply a unique winner \cite{Bollobas-Leader-Walters-2009-arXiv}.

\medskip

Next we define pursuer wins.
\hl{A pursuer strategy $\Pursuer$ \defn{wins $G_\epsilon$} if,
for every escaper motion path~$\escaper$,
and every time $t$ at which $\escaper(t)$ is on an exit,
$\escaper(t)$ is at distance $< \epsilon$ from $\Pursuer(\escaper)(t)$ in the pursuer metric:
$d_\pursuer(\escaper(t), \Pursuer(\escaper)(t)) < \epsilon$.}
Intuitively, such a pursuer strategy prevents the escaper
from winning by~$\epsilon$.
\hl{The \defn{pursuer wins $G$} if, for all $\epsilon > 0$,
there is a pursuer strategy that wins $G_\epsilon$.}
The latter definition allows the pursuer strategy to depend on~$\epsilon$,
and our proofs will rely on this.
Under the Axiom of Choice, however, it is equivalent to a simpler definition:

\begin{lemma} \label{lem:choice}
  Assuming the Axiom of Choice,
  the pursuer wins $G$ if and only if there is a pursuer strategy
  that, for all $\epsilon > 0$, wins $G_\epsilon$.
\end{lemma}

To prove this lemma,
we need a version of the Arzel\`a--Ascoli Theorem \cite{AA-wikipedia}.
This theorem is sometimes stated for bounded functions over bounded intervals
and guaranteeing uniform convergence;
we need a version over unbounded intervals and only local boundedness,
at the cost of guaranteeing only pointwise instead of uniform convergence.
Known generalizations
\cite[p.~231]{Kelley-1955},
\cite[Theorems~3.4.20 and~8.2.10]{Engelking-1989}
imply this version, but for clarity and completeness,
we translate the theorem and proof from topological language.
Our proof is roughly a subset of the proof described in \cite{AA-wikipedia}
(skipping the finite-cover step needed for uniform convergence
but which does not work for unbounded domains).

\begin{lemma}[Arzel\`a--Ascoli Theorem] \label{lem:pointwise A-A}
  For any metric domain~$D$,
  and for any sequence of functions $f_1, f_2, \ldots : [0,\infty) \to D$
  that satisfies the following two properties,
  there is a subsequence $f_{i_1}, f_{i_2}, \dots$ that converges pointwise.
  \begin{enumerate}
  \item \textbf{Uniformly locally bounded}: there is an origin $O$ and
    a function $M : [0,\infty) \to [0,\infty)$
    such that, for all $i$ and $t$,
    we have $d_D(O, f_i(t)) \leq M(t)$.
  \item \textbf{Uniformly equicontinuous}:
    for every $\epsilon > 0$, there is a $\delta > 0$ such that,
    for all~$i$,
    we have $|s - t| \leq \delta$ implies $d_D(f_i(s), f_i(t)) \leq \epsilon$.
  \end{enumerate}
\end{lemma}

\begin{proof}
  Fix an enumeration $t_1, t_2, \dots$ of the nonnegative rational numbers.
  Start by applying every function to $t_1$,
  forming the sequence $f_1(t_1), f_2(t_1), \dots$.
  By uniform local boundedness, this sequence is bounded,
  so by the Bolzano--Weierstrass Theorem,
  it has a convergent subsequence $f_{i_{1,1}}(t_1), f_{i_{1,2}}(t_1), \dots$.
  Now change the parameter from $t_1$ to $t_2$, forming the sequence
  $f_{i_{1,1}}(t_2), f_{i_{1,2}}(t_2), \dots$.
  This sequence is also bounded,
  so by the Bolzano--Weierstrass Theorem,
  it has a convergent subsequence $f_{i_{2,1}}(t_2), f_{i_{2,2}}(t_2), \dots$.
  By induction, we obtain a sequence of progressively nested subsequences
  $\{i_{1,1}, i_{1,2}, \dots\} \supseteq \{i_{2,1}, i_{2,2}, \dots\} \supseteq \cdots$
  such that $f_{i_{k,1}}(t_k), f_{i_{k,2}}(t_k), \dots$ converges
  for each~$k$.

  Now diagonalize to form the subsequence
  $f_{i_{1,1}}, f_{i_{2,2}}, f_{i_{3,3}}, \dots$
  of the given functions $f_1, f_2, \dots$.
  We claim that this subsequence converges pointwise.
  For any nonnegative rational $t_k$, the sequence
  $f_{i_{1,1}}(t_k), f_{i_{2,2}}(t_k), \dots$ converges
  because the suffix
  $f_{i_{k,k}}(t_k), f_{i_{k+1,k+1}}(t_k), \dots$ is a subsequence of the
  convergent sequence $f_{i_{k,1}}(t_k), f_{i_{k,2}}(t_k), \dots$.
  Thus, for any $k$, any $\epsilon > 0$,
  and any sufficiently large $p,q$, we have
  $d_D(f_{i_{p,p}}(t_k), \allowbreak f_{i_{q,q}}(t_k)) \leq \epsilon/3$.
  By uniform equicontinuity, there is a $\delta = \delta(\epsilon) > 0$
  such that, for all~$i$,
  we have $|s-t| \leq \delta$ implies $d_D(f_i(s), f_i(t)) \leq \epsilon/3$.
  For any $t$, we can find a rational $t_k$ such that $|t - t_k| \leq \delta$.
  By the triangle inequality, for any $\epsilon > 0$ and
  sufficiently large $p,q$, we have
  \begin{align*}
  d_D\big(f_{i_{p,p}}(t), f_{i_{q,q}}(t)\big)
  &\leq
  d_D\big(f_{i_{p,p}}(t), f_{i_{p,p}}(t_k)\big) +
  d_D\big(f_{i_{p,p}}(t_k), f_{i_{q,q}}(t_k)\big) +
  d_D\big(f_{i_{q,q}}(t_k), f_{i_{q,q}}(t)\big)
  \\
  &\leq \epsilon/3 + \epsilon/3 + \epsilon/3 = \epsilon.
  \end{align*}
  Therefore, for any $t$,
  the sequence $f_{i_{1,1}}(t), f_{i_{2,2}}(t), \dots$
  is a Cauchy sequence, so it converges, as desired.
\end{proof}

\begin{proof}[Proof of Lemma~\ref{lem:choice}]
  One direction is obvious: if a single pursuer strategy wins $G_\epsilon$
  for all $\epsilon > 0$, then we satisfy the definition of winning~$G$.
  To prove the other direction, assume the pursuer wins~$G$, i.e.,
  for every $\epsilon > 0$, there is a pursuer strategy $\Pursuer_\epsilon$
  that wins~$G_\epsilon$.
  To construct a single pursuer strategy $\Pursuer_0$
  that wins all $G_\epsilon$,
  we roughly follow the proof of \cite[Lemma~3]{Bollobas-Leader-Walters-2012}
  which shows how to take limits of strategies in the Lion and Man game.
  (Our proof differs in a few ways:
  we need to check a different notion of winning;
  our result works for infinite time and unbounded domains;
  as in Lemma~\ref{lem:pointwise A-A}, we use pointwise instead of uniform
  convergence; and our proof is more detailed.)
  Specifically,
  we use Zorn's Lemma (which is equivalent to the Axiom of Choice):
  for any partially ordered set, if every chain has a maximal element,
  then there is a global maximum element.

  We define a partially ordered set of ``good partial pursuer strategies''.
  A \defn{partial pursuer strategy} is a \emph{partial} function $\Pursuer$
  from escaper motion paths to pursuer motion paths
  satisfying the nonbranching-lookahead constraint where it is defined, i.e.,
  for any two escaper motion paths $\escaper_1,\escaper_2 \in \domain(\Pursuer)$
  agreeing on $[0,t]$, the pursuer motion paths $\Pursuer(\escaper_1),
  \Pursuer(\escaper_2)$ also agree on $[0,t]$.
  A partial pursuer strategy $\Pursuer$ is \defn{good} if,
  for every escaper motion path $\escaper \in \domain(\Pursuer)$,
  there is an infinite sequence $\epsilon_1, \epsilon_2, \dots$
  converging to $0$ such that
  $\Pursuer_{\epsilon_1}(\escaper), \Pursuer_{\epsilon_2}(\escaper), \dots$
  converges pointwise to $\Pursuer(\escaper)$ in the pursuer metric.
  As we show below,
  $\Pursuer$ being good implies that $\Pursuer$ wins $G_\epsilon$
  for all $\epsilon > 0$,
  if the escaper is restricted to motion paths in $\domain(\Pursuer)$.
  The partial order $\leq$ is defined as follows:
  for two good partial pursuer strategies $\Pursuer^1,\Pursuer^2$,
  \defn{$\Pursuer^1 \leq \Pursuer^2$} if
  $\domain(\Pursuer^1) \subseteq \domain(\Pursuer^2)$ and
  $\Pursuer^1$ and $\Pursuer^2$ agree on their common $\domain(\Pursuer^1)$.

  Zorn's Lemma applies to this partial order because
  any chain $\Pursuer^1, \Pursuer^2, \dots$
  of good partial strategies has a maximal element, namely,
  $\Pursuer^1 \cup \Pursuer^2 \cup \cdots$.
  Thus we obtain a maximum good partial pursuer strategy~$\Pursuer_0$.
  We will show that $\Pursuer_0$ is in fact a (full) pursuer strategy,
  and by goodness, wins $G_\epsilon$ for all $\epsilon > 0$ as desired.

  Suppose for contradiction that $\Pursuer_0$ is not defined on
  some escaper motion path~$\escaper'$.
  We will show how to extend $\Pursuer_0$ to a good partial pursuer strategy
  $\Pursuer_0'$
  where $\domain(\Pursuer_0') = \domain(\Pursuer_0) \cup \{\escaper'\}$,
  contradicting maximality of~$\Pursuer_0$.
  To ensure preservation of the nonbranching-lookahead constraint,
  we look for an escaper motion path
  $\escaper \in \domain(\Pursuer_0)$ that agrees with $\escaper'$
  for the longest interval $[0,t^*]$.
  To this end, define
  $$
  t^* = \sup \{ t \geq 0 \mid
    \text{there exists } \escaper \in \domain(\Pursuer_0)
    \text{ such that } \escaper,\escaper'
    \text{ agree on } [0,t] \}.
  $$
  Beyond time $t^*$, we can define $\Pursuer_0'(\escaper')$ arbitrarily,
  while preserving the nonbranching-lookahead property.
  There are three cases according to whether the supremum $t^*$ is realized
  or undefined.
  \begin{description}
  \item[Case 0: $t^*$ is undefined.]
    This case happens when there is no $h \in \domain(\Pursuer_0)$
    for which $h(0) = h'(0)$,
    so no matter how we define $\Pursuer_0'(\escaper')$,
    we will satisfy nonbranching lookahead.

    Define $\epsilon_i = 1/i$, and take the sequence
    $\Pursuer_{\epsilon_1}(\escaper'), \Pursuer_{\epsilon_2}(\escaper'), \dots$.
    Now we apply Lemma~\ref{lem:pointwise A-A} to this sequence of functions.
    Our functions $\Pursuer_{\epsilon_i}(\escaper')$ are
    uniformly equicontinuous because they are Lipschitz
    with uniform constant~$r$.
    Our functions $\Pursuer_{\epsilon_i}(\escaper')$ are
    uniformly locally bounded because they are uniformly Lipschitz and
    start at points
    $\Pursuer_{\epsilon_1}(\escaper')(0), \Pursuer_{\epsilon_2}(\escaper')(0),
    \allowbreak \dots$
    which we know converge to a point,
    and thus are all within a bounded distance from that point.
    Thus
    $\Pursuer_{\epsilon_1}(\escaper'), \Pursuer_{\epsilon_2}(\escaper'), \dots$
    has an infinite subsequence
    $\Pursuer_{\epsilon_{i_1}}(\escaper'),
    \Pursuer_{\epsilon_{i_2}}(\escaper'), \dots$ 
    that converges pointwise to some function,
    which we define to be $\Pursuer_0'(\escaper')$.

    It remains to check that $\Pursuer_0'$ is a (larger)
    good partial pursuer strategy.
    By construction, $\Pursuer_0'$ is good and
    satisfies the nonbranching-lookahead constraint.
    $\Pursuer_0'(\escaper')$ satisfies the speed constraint because the
    pointwise limit of $r$-Lipschitz functions is $r$-Lipschitz.

  \item[Case 1: $t^*$ is realized.]
    Then we have an escaper path $\escaper \in \domain(\Pursuer_0)$
    such that $\escaper,\escaper'$ agree on $[0,t^*]$.
    Because $\Pursuer_0$ is good, we have a sequence
    $\Pursuer_{\epsilon_1}(\escaper), \Pursuer_{\epsilon_2}(\escaper), \dots$
    that converges pointwise to $\Pursuer_0(\escaper)$.
    The given strategies $\Pursuer_\epsilon$ are defined on all escaper paths,
    so we can form the corresponding sequence
    $\Pursuer_{\epsilon_1}(\escaper'), \Pursuer_{\epsilon_2}(\escaper'), \dots$.

    As in Case~0,
    we can apply Lemma~\ref{lem:pointwise A-A} to this sequence of functions
    to get an infinite subsequence
    $\Pursuer_{\epsilon_{i_1}}(\escaper'),
    \Pursuer_{\epsilon_{i_2}}(\escaper'), \dots$ 
    that converges pointwise to some function,
    which we define to be $\Pursuer_0'(\escaper')$.
    As in Case~0,
    $\Pursuer_0'$ is good and
    $\Pursuer_0'(\escaper')$ is a pursuer motion path.

    To prove that $\Pursuer_0'$ satisfies the nonbranching-lookahead constraint,
    it suffices to check that
    $\Pursuer_0'(\escaper'),\Pursuer_0'(\escaper)=\Pursuer_0(\escaper)$
    agree on $[0,t^*]$ (because $t^*$ is maximum).
    The subsequence $\Pursuer_{\epsilon_{i_1}}(\escaper), \allowbreak
    \Pursuer_{\epsilon_{i_1}}(\escaper), \allowbreak \dots$ converges pointwise
    to $\Pursuer_0(\escaper)$ because it is a subsequence of the sequence
    $\Pursuer_{\epsilon_1}(\escaper), \allowbreak \Pursuer_{\epsilon_2}(\escaper), \allowbreak \dots$
    which we assumed converged to $\Pursuer_0(\escaper)$,
    and the corresponding subsequence $\Pursuer_{\epsilon_{i_1}}(\escaper'),
    \allowbreak \Pursuer_{\epsilon_{i_1}}(\escaper'), \allowbreak \dots$
    converges pointwise to $\Pursuer_0'(\escaper')$ by definition.
    For each $j$,
    the given strategy $\Pursuer_{\epsilon_{i_j}}$
    satisfies the nonbranching-lookahead constraint,
    so $\Pursuer_{\epsilon_{i_j}}(\escaper),
    \Pursuer_{\epsilon_{i_j}}(\escaper')$ agree on $[0,t^*]$.
    Taking the two limits over the identical sequence
    $\epsilon_{i_1}, \epsilon_{i_2}, \dots$,
    we obtain that $\Pursuer_0(\escaper), \Pursuer_0'(\escaper')$
    also agree on $[0,t^*]$.

  \item[Case 2: $t^*$ is not realized.]
    By definition of $\sup$, we have an infinite sequence of escaper paths
    $\escaper_1, \escaper_2, \ldots \in \domain(\Pursuer_0)$
    such that $\escaper',\escaper_i$ agree on $[0,t^*_i]$
    where $t^*_i \to t^*$ and $t^* > 0$.
    We can apply Lemma~\ref{lem:pointwise A-A} to this sequence:
    uniform equicontinuity follows from escaper paths being
    Lipschitz with constant~$1$, and uniform local boundedness follows
    because all escaper motion paths $\escaper_i$ agree at time $0$,
    so at time $t$ they remain within distance $t$ of that starting point.
    By Lemma~\ref{lem:pointwise A-A},
    we obtain a subsequence $\escaper_{i_1}, \escaper_{i_2}, \ldots$
    that converges pointwise to some~$\escaper^*$.
    This~$\escaper^*$ is an escaper motion path because
    the pointwise limit of $1$-Lipschitz functions is $1$-Lipschitz.
    We claim that $\escaper^*, \escaper'$ agree on $[0,t^*]$:
    for any $t < t^*$, for sufficiently large~$i$, $h_i(t)$ agrees with $h'(t)$,
    and thus so does $h^*(t)$;
    and for $t^*$, for any $\epsilon > 0$,
    $h'(t^*)$ is within $\epsilon$ of $h'(t^*-\epsilon)$
    (by 1-Lipschitz of~$h'$),
    which is $h^*(t^* - \epsilon)$ for sufficiently large $i$,
    which is within $\epsilon$ of $h^*(t^*)$
    (by 1-Lipschitz of~$h^*$),
    so $h'(t^*)$ is within $2 \epsilon$ of $h^*(t^*)$.

    Because we are in Case~2, $\escaper^* \notin \domain(\Pursuer_0)$.
    Because each $h_i \in \domain(\Pursuer_0)$, we can construct the sequence
    $\Pursuer_0(\escaper_{i_1}), \Pursuer_0(\escaper_{i_2}), \dots$.
    We can apply Lemma~\ref{lem:pointwise A-A} to this sequence:
    uniform equicontinuity follows from pursuer paths being
    Lipschitz with constant~$r$, and uniform local boundedness follows
    because all escaper motion paths $\escaper_i$ agree at time~$0$,
    and $\Pursuer_0$ satisfies the nonbranching-lookahead constraint,
    so all pursuer motion paths $\Pursuer_0(\escaper_i)$ agree at time~$0$,
    so at time $t$ they remain within distance $t$ of that starting point.
    By Lemma~\ref{lem:pointwise A-A},
    we obtain a subsequence
    $\Pursuer_0(\escaper_{i'_1}), \Pursuer_0(\escaper_{i'_2}), \dots$
    that converges pointwise to some function,
    which we define to be $\Pursuer_0'(\escaper^*)$.
    As in Cases~0 and~1,
    $\Pursuer_0'$ is good and
    $\Pursuer_0'(\escaper^*)$ is a pursuer motion path.

    To prove that $Z_0'$ satisfies the nonbranching-lookahead constraint,
    consider an escaper motion path $\escaper \in \domain(\Pursuer_0)$,
    and suppose that $\escaper,\escaper^*$ agree on $[0,t]$,
    where $t$ is necessarily less than the supremum $t^*$
    (because we are in Case~2 and $\escaper^*, \escaper'$ agree on $[0,t^*]$).
    Take the infinite subsequence $i''_1, i''_2, \dots$ of
    $i'_1, i'_2, \dots$ where $t^*_{i''_j} \geq t$.
    Thus $\escaper,\escaper^*,\escaper_{i''_1},\escaper_{i''_2},\dots$
    agree on $[0,t]$.
    Because $\Pursuer_0$ satisfies the nonbranching-lookahead constraint,
    $\Pursuer_0(\escaper),
    \Pursuer_0(\escaper_{i''_1}), \Pursuer_0(\escaper_{i''_2}), \dots$
    agree on $[0,t]$.
    Because $\Pursuer_0(\escaper_{i''_1}),\Pursuer_0(\escaper_{i''_2}),\dots$
    converges pointwise to $Z_0'(\escaper^*)$,
    we obtain that $\Pursuer_0(\escaper), \Pursuer_0'(\escaper^*)$
    agree on $[0,t]$.

    If $h' = h^*$, we have achieved our goal.
    Otherwise, we are now in Case 1:
    the supremum $t^*$ is realized by $h^*$.
    By Case 1, we can add $h'$ to $\domain(Z_0')$ as well.

  \end{description}

  Finally, we show that $\Pursuer_0$ wins $G_\epsilon$ for all $\epsilon > 0$,
  or more generally, any good partial strategy~$\Pursuer$ wins all $G_\epsilon$
  if the escaper is restricted to motion paths in $\domain(\Pursuer)$.
  Take any $\epsilon > 0$ and any escaper motion path
  $\escaper \in \domain(\Pursuer)$.
  Because $\Pursuer$ is good, $\Pursuer(\escaper)$ is the limit of
  $\Pursuer_{\epsilon_1}(\escaper), \Pursuer_{\epsilon_2}(\escaper), \dots$
  for some sequence $\epsilon_1, \epsilon_2, \dots$ converging to~$0$.
  For all $\epsilon_i < \epsilon/2$,
  $\Pursuer_{\epsilon_i}(\escaper)$ prevents the escaper
  (following path~$\escaper$)
  from exiting $\epsilon_i < \epsilon/2$ away from the pursuer
  (in the pursuer metric), i.e., for any time $t \geq 0$,
  $\escaper(t) \in \Exit$ implies
  $d_\pursuer(\escaper(t), \Pursuer_{\epsilon_i}(\escaper)(t)) < \epsilon_i < \epsilon/2$.
  By (pointwise) convergence, for any time $t \geq 0$,
  for sufficiently large~$i$,
  $\Pursuer_{\epsilon_i}(\escaper)(t)$ is within $\epsilon/2$ of
  $\Pursuer(\escaper)(t)$ (in the pursuer metric).
  By triangle inequality, for any time $t \geq 0$,
  $\escaper(t) \in \Exit$ implies
  $d_\pursuer(\escaper(t), \Pursuer(\escaper)(t))
  < \epsilon/2 + \epsilon/2 = \epsilon$,
  i.e.,
  $\Pursuer(\escaper)(t)$ prevents the escaper from exiting
  $\epsilon/2 + \epsilon/2 = \epsilon$ away from the pursuer
  (in the pursuer metric).
  Therefore, $\Pursuer$ wins $G_\epsilon$
  when restricted to motion paths $\escaper \in \domain(\Pursuer)$,
  for all $\epsilon > 0$.
  In particular, $\Pursuer_0$ wins $G_\epsilon$ for all $\epsilon > 0$.
\end{proof}

Thus, under the Axiom of Choice, our definition of winning $G$
is equivalent to the existence of a single winning strategy for that player.
An escaper strategy \defn{wins $G$} if it wins by $\epsilon$
for some $\epsilon > 0$.
A pursuer strategy \defn{wins $G$} if it prevents the escaper from winning
by $\epsilon$ for all $\epsilon > 0$.
Henceforth, we will use the notions of winning $G_\epsilon$ instead of $G$,
so as to not rely on the Axiom of Choice.

\subsection{Both Players Cannot Win: Oblivious Strategies and Unique Playthroughs}
\label{sec:BothPlayersCannotWin}
\label{sec:oblivious strategies}

In this section, we prove that our definitions prevent
both players from having ``winning strategies'',
similar to stronger result about locally finite strategies for Lion and Man
\cite{Bollobas-Leader-Walters-2009-arXiv}.
Our main approach is to construct a valid \defn{playthrough} that can result
from a given pursuer strategy $\Pursuer$ and escaper strategy $\Escaper$, that is, an actual
pursuer path $\pursuer$ and escaper path $\escaper$ consistent with the strategies:
$\pursuer = \Pursuer(\escaper)$ and $\escaper = \Escaper(\pursuer)$.  Any playthrough has a clear winner.
We show that any winning player strategy can be modified to induce
unique playthroughs, no matter what path/strategy the opponent chooses,
while preserving the winning property.

\paragraph{Oblivious strategies.}
Our main tool is the idea of ``$\delta$-oblivious'' player strategies,
where the player can only see and react to where the opponent was at times
at least $\delta$ ago.
Formally, a player strategy $\Player$ is \defn{$\delta$-oblivious} if it satisfies
the following strengthening of the nonbranching-lookahead constraint:
\begin{quote}
for any two opponent motion paths $\opponent_1,\opponent_2$ agreeing on $[0,t]$,
the strategy's player motion paths $\Player(\opponent_1),\Player(\opponent_2)$ agree on $[0,t+\delta]$.
\end{quote}
This definition is a stronger form of the nonbranching-lookahead constraint
that guarantees a positive ($\delta$) amount of no lookahead.

Oblivious strategies are a stronger notion than ``locally finite
strategies'' introduced in \cite[Section~6]{Bollobas-Leader-Walters-2009-arXiv},
which effectively allow $\delta$ to adapt (in particular, get smaller) as time
advances.  (For example, the classic Lion and Man solution is locally finite
but not $\delta$-oblivious for any $\delta > 0$, because the lion gets
arbitrarily close to the man, so the man must react faster and faster.)
If either player uses a locally finite strategy, then the game has a unique
playthrough \cite[Proposition~14]{Bollobas-Leader-Walters-2012}.
For completeness, we prove
\ifabstract in Appendix~\ref{appendix:model} \fi
the weaker (and simpler) version we need:
one oblivious strategy implies unique playthrough.

\begin{lemma} \label{UniquePlaythroughLemma}
  If one player uses a $\delta$-oblivious strategy $\Player$ for any $\delta > 0$,
  then for any opponent strategy $\Opponent$, the game has a unique playthrough.
\end{lemma}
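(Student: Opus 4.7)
A \emph{playthrough} is a pair of motion paths $(\player, \opponent)$ satisfying the two fixed-point equations $\player = \Player(\opponent)$ and $\opponent = \Opponent(\player)$. My plan is to establish both existence and uniqueness by time-induction on windows of length~$\delta$. The engine of the induction is that $\delta$-obliviousness of $\Player$ pushes knowledge of $\opponent|_{[0,n\delta]}$ forward to pin down $\player|_{[0,(n+1)\delta]}$, and then the no-lookahead property of $\Opponent$ pushes this back to pin down $\opponent|_{[0,(n+1)\delta]}$.

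I would first pin down the starting positions. The escaper-start constraint fixes $\escaper(0)$ as a common point shared by all paths $\Escaper(\pursuer)$, and applying no-lookahead to $\Pursuer$ at $t = 0$ then fixes $\pursuer(0)$ as a function of $\escaper(0)$ alone; hence $(\player(0),\opponent(0))$ is determined, regardless of which role is the $\delta$-oblivious one. For uniqueness, given two playthroughs $(\player_1, \opponent_1)$ and $(\player_2, \opponent_2)$, I would show by induction on $n \geq 0$ that $\opponent_1 = \opponent_2$ on $[0, n\delta]$: the base case is the starting-point argument just made; the inductive step applies $\delta$-obliviousness of $\Player$ to conclude that $\player_1 = \Player(\opponent_1)$ and $\player_2 = \Player(\opponent_2)$ agree on $[0, (n+1)\delta]$, after which no-lookahead of $\Opponent$ propagates this to agreement of the opponents on $[0, (n+1)\delta]$. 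Taking $n \to \infty$ yields equality of the two playthroughs everywhere.

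For existence I would build $\opponent$ window by window. Having defined $\opponent$ on $[0, n\delta]$, I would pick any extension $\widetilde\opponent$ of that restriction to a full motion path on $[0, \infty)$ (for instance, holding it constant after time $n\delta$, which automatically satisfies the speed limit), set $\widetilde\player = \Player(\widetilde\opponent)$, and declare $\opponent|_{[0,(n+1)\delta]} := \Opponent(\widetilde\player)|_{[0,(n+1)\delta]}$. The main obstacle I expect is verifying the self-consistency of this construction: any two candidate extensions $\widetilde\opponent$ agree on $[0, n\delta]$, so $\delta$-obliviousness makes their $\Player$-images agree on $[0, (n+1)\delta]$ and then no-lookahead of $\Opponent$ makes the assigned opponent values agree on the same window, giving independence of the choice of extension; and a one-step-earlier application of the same pair of properties gives compatibility with the previous stage of the construction. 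Once $\opponent$ is assembled on $[0, \infty)$ and I set $\player := \Player(\opponent)$, a final application of $\delta$-obliviousness on each window $[0, (n+1)\delta]$ identifies $\player$ there with the relevant $\widetilde\player$, at which point both fixed-point equations $\player = \Player(\opponent)$ and $\opponent = \Opponent(\player)$ hold by construction, completing the proof.
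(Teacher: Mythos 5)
Your proof is correct and follows essentially the same route as the paper's: an induction on windows of length $\delta$, where $\delta$-obliviousness of $\Player$ advances the player's path to $[0,(n+1)\delta]$ from knowledge of the opponent's path on $[0,n\delta]$, and no-lookahead of $\Opponent$ then propagates this to the opponent's path on $[0,(n+1)\delta]$. Your treatment of existence (choosing an arbitrary extension and verifying independence of that choice) is spelled out more carefully than in the paper, which folds existence and uniqueness into a single "uniquely determined" induction, but the underlying argument is identical.
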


\begin{proof}
We will prove that strategies $(\Player,\Opponent)$ have a unique playthrough $(\player,\opponent)$
defined up until time $k\delta$, by induction on~$k$.

In the base case $k = 0$, the unique playthrough consists of trivial paths
where neither player moves, but we need to define the starting point for
both players.  The escaper strategy defines a unique starting point
for the escaper path (by the escaper start constraint), and thus the
pursuer strategy defines a unique starting point for the pursuer path
(by the nonbranching-lookahead constraint).

Now suppose we have determined a unique playthrough $(\player,\opponent)$ up until time
$k \delta$, i.e., we have determined $\player([0, k \delta])$ and $\opponent([0, k \delta])$.
By the $\delta$-obliviousness of~$\Player$, $\Player(\opponent)([0, (k+1)\delta])$ is a function
just of the opponent path $\opponent([0, k \delta])$, and is therefore uniquely
determined by the partial playthrough determined so far.
Thus we can set $\player([0, (k+1)\delta])$ accordingly.
Then the opponent's strategy $\Opponent(\player)([0, (k+1)\delta])$ is determined,
being a function of $\player([0, (k+1)\delta])$ (by the nonbranching-lookahead constraint).
Therefore we determine $\player$ and $\opponent$ uniquely and consistently by induction.
\end{proof}

Crucially, we do not require that strategies be $\delta$-oblivious.
(Such a restriction is rightly rejected in
\cite{Bollobas-Leader-Walters-2009-arXiv}
because it forbids natural strategies such as ``run in the direction of the
escaper''.)
But we can exploit the $\epsilon$ distance tolerance that we incorporated into
the definition of the pursuer winning to show that any winning player strategy
can be made oblivious, with some tweaking of the parameters:

\begin{lemma}[Obliviate Lemma] \label{InformationDelayLemma}
If a player has a winning strategy in $G_\epsilon$ with speed ratio $r$,
then that player has a $\delta$-oblivious winning strategy in $G_{\epsilon'}$,
where $\delta = {\epsilon \over 2r}$, and
where $\epsilon' = {1 \over 2} \epsilon$ if the player is the escaper
and $\epsilon' = {3 \over 2} \epsilon$ if the player is the pursuer.
\end{lemma}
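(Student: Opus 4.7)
The plan is to construct $\Player'$ by feeding $\Player$ a \emph{delayed} version of the opponent's path. Concretely, for any opponent motion path $\opponent$, define the delayed path $\opponent^{\mathrm{d}}(s) := \opponent(\max(0,s-\delta))$, and set $\Player'(\opponent) := \Player(\opponent^{\mathrm{d}})$. I would first verify that $\opponent^{\mathrm{d}}$ is itself a valid opponent motion path of the same speed (a direct Lipschitz check), so $\Player(\opponent^{\mathrm{d}})$ is automatically a valid player motion path. The $\delta$-oblivious property then follows from the no-lookahead property of $\Player$: if $\opponent_1,\opponent_2$ agree on $[0,t]$, then $\opponent_1^{\mathrm{d}},\opponent_2^{\mathrm{d}}$ agree on $[0,t+\delta]$, hence so do $\Player(\opponent_i^{\mathrm{d}})$. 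In the escaper case, $\Escaper'(\pursuer)(0) = \Escaper(\pursuer^{\mathrm{d}})(0)$ does not depend on $\pursuer$, so the escaper-start constraint is preserved.

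For the escaper case, I would apply the winning property of $\Escaper$ to the (valid) pursuer path $\pursuer^{\mathrm{d}}$: there exists $t^*$ with $\Escaper(\pursuer^{\mathrm{d}})(t^*) \in \Exit$ and $d_\pursuer(\Escaper(\pursuer^{\mathrm{d}})(t^*),\pursuer^{\mathrm{d}}(t^*)) \ge \epsilon$. Since $d_\pursuer(\pursuer^{\mathrm{d}}(t^*),\pursuer(t^*)) \le r\delta = \epsilon/2$ by the pursuer speed limit, the triangle inequality gives $d_\pursuer(\Escaper'(\pursuer)(t^*),\pursuer(t^*)) \ge \epsilon/2$, so $\Escaper'$ wins $G_{\epsilon/2}$.

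The pursuer case is the main subtlety, and the step I expect to be hardest: the winning property of $\Pursuer$ only gives a distance bound at times when the \emph{input} escaper path is at an exit, but $\Pursuer'$ is fed $\escaper^{\mathrm{d}}$, whose position at time $t^*$ is $\escaper(t^*-\delta)$, not $\escaper(t^*)$. To bridge this gap, given any $t^*$ with $\escaper(t^*) \in \Exit$, I would build an auxiliary escaper path $\tilde\escaper$ that coincides with $\escaper^{\mathrm{d}}$ on $[0,t^*]$ and then traverses a shortest path in $D_\escaper$ from $\escaper^{\mathrm{d}}(t^*) = \escaper(t^*-\delta)$ to $\escaper(t^*)$; this is feasible at unit speed because $d_\escaper(\escaper(t^*-\delta),\escaper(t^*)) \le \delta$. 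Then $\tilde\escaper$ reaches an exit at some time $\tau \in [t^*,t^*+\delta]$, and the winning property of $\Pursuer$ yields $d_\pursuer(\escaper(t^*),\Pursuer(\tilde\escaper)(\tau)) < \epsilon$. By no-lookahead applied to $\tilde\escaper$ and $\escaper^{\mathrm{d}}$ (which agree on $[0,t^*]$), $\Pursuer(\tilde\escaper)(t^*) = \Pursuer(\escaper^{\mathrm{d}})(t^*) = \Pursuer'(\escaper)(t^*)$, and by the pursuer speed limit, $d_\pursuer(\Pursuer(\tilde\escaper)(\tau),\Pursuer(\tilde\escaper)(t^*)) \le r\delta = \epsilon/2$. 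A final triangle inequality produces $d_\pursuer(\escaper(t^*),\Pursuer'(\escaper)(t^*)) < \epsilon + \epsilon/2 = 3\epsilon/2$, so $\Pursuer'$ wins $G_{3\epsilon/2}$.

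The only delicate point is constructing the auxiliary path $\tilde\escaper$ and justifying that it is a genuine escaper motion path inside $D_\escaper$ (using that $d_\escaper$ is realized by paths in $D_\escaper$, which is where local rectifiability of the domain is used). Everything else is bookkeeping: the choice $\delta = \epsilon/(2r)$ is exactly what makes the pursuer's slack over a $\delta$-interval equal $\epsilon/2$, giving the advertised degradation of the winning margin on both sides.
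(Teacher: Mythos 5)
Your proof is correct, and it implements the same underlying idea as the paper's proof (introduce a delay of $\delta = \epsilon/2r$ and absorb the resulting $r\delta = \epsilon/2$ of pursuer motion into the winning margin), but via a genuinely different construction: you delay the strategy's \emph{input}, setting $\Player'(\opponent) = \Player(\opponent^{\mathrm{d}})$ with $\opponent^{\mathrm{d}}(s) = \opponent(\max(0,s-\delta))$, whereas the paper delays the strategy's \emph{output}, defining $\Player_\delta(\opponent)(t+\delta) = \Player(\opponent)(t)$ (stand still for time $\delta$, then replay the original trajectory against the true, unshifted opponent path). These give different motion paths in general, and the difference shows up exactly where you predicted: in the paper's version the original strategy is run against the real escaper path, so the time at which the escaper is at an exit is unchanged and the pursuer bound $d_\pursuer(\opponent(t), \Player_\delta(\opponent)(t)) < \epsilon + r\delta$ follows from one application of the speed limit; in your version the original pursuer strategy only ever sees $\escaper^{\mathrm{d}}$, which is not at an exit at time $t^*$, forcing your auxiliary-path argument (extend $\escaper^{\mathrm{d}}$ past $t^*$ by a unit-speed geodesic to $\escaper(t^*)$, invoke the win condition at the later time $\tau \le t^*+\delta$, and use no-lookahead to identify $\Pursuer(\tilde\escaper)(t^*)$ with $\Pursuer'(\escaper)(t^*)$). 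That argument is valid and yields the same constants $\epsilon/2$ and $3\epsilon/2$; its only extra cost is the (standard, and implicitly assumed throughout the paper) fact that shortest paths in $d_\escaper$ are realized by unit-speed paths in $D_\escaper$, which the output-delay construction never needs. Your escaper case and the verification of $\delta$-obliviousness, the speed limit, and the escaper-start constraint all match the paper's in substance.
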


\begin{proof}
Given a player winning strategy $\Player$ for~$G_\epsilon$,
we construct a $\delta$-oblivious player winning strategy~$\Player_\delta$.
Given an opponent motion path~$\opponent$, we construct a player motion path
$\Player_\delta(\opponent)$ that stands still for $\delta$ time, then mimics strategy $\Player$
but with a shifted version of~$\opponent$:
\begin{align*}
  \Player_\delta(\opponent)([0,\delta]) &= \Player(\opponent)(0), \\
  \Player_\delta(\opponent)(t+\delta) &= \Player(\opponent([0,t]))(t).
\end{align*}
This player strategy $\Player_\delta$ is clearly $\delta$-oblivious.
We show that it wins $G_{\epsilon'}$ in two cases.

First, if the player is the escaper, then for any pursuer motion path~$\opponent$,
the given winning strategy $\Player$ for $G_\epsilon$
has a time $t$ such that $\Player(\opponent)(t)$ is at an
exit while $\opponent(t)$ is at least $\epsilon$ away in the pursuer metric.
We obtain a similar time $t+\delta$ for the constructed $\delta$-oblivious
strategy $\Player_\delta$: $\Player_\delta(\opponent)(t+\delta) = \Player(\opponent)(t)$ is at an exit,
and by the speed-limit constraint, $\opponent(t+\delta)$ is at most
$\delta r = \epsilon/2$ closer than $\opponent(t)$ was.

Second, if the player is the pursuer, then for any escaper motion path $\opponent$,
and for any time $t$ where $\opponent(t)$ is on an exit,
the given winning strategy $\Player$ for $G_\epsilon$
guarantees that $\Player(\opponent)(t)$ is $< \epsilon$ distance from $\opponent(t)$.
We prove the analogous result for~$\Player_\delta$:
if $\opponent(t)$ is at an exit, then $\Player_\delta(\opponent)(t+\delta) = \Player(\opponent)(t)$
is $< \epsilon$ distance from $\opponent(t)$, and by the speed-limit constraint,
$\Player_\delta(\opponent)(t)$ is at most $\delta r = \epsilon/2$ away from $\Player(\opponent)(t)$.
The farthest it can be from $\opponent(t)$ is then ${3 \over 2} \epsilon$.
\end{proof}

\begin{corollary} \label{unique playthrough epsilon'}
  If a player has a winning strategy $\Player$ for $G_\epsilon$ with speed ratio~$r$,
  then that player has a winning strategy $\hat \Player$ for $G_{\epsilon'}$
  (where $\epsilon' = {1 \over 2} \epsilon$ if the player is the escaper
  and $\epsilon' = {3 \over 2} \epsilon$ if the player is the pursuer)
  such that, for every opponent strategy~$\Opponent$, the game of $\hat \Player$
  against $\Opponent$ has a unique playthrough (where the player wins).
\end{corollary}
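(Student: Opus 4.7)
The plan is to combine the Obliviate Lemma (Lemma~\ref{InformationDelayLemma}) with the Unique Playthrough Lemma (Lemma~\ref{UniquePlaythroughLemma}). Starting from the winning strategy $\Player$ for $G_\epsilon$, I would first apply Lemma~\ref{InformationDelayLemma} to obtain a $\delta$-oblivious winning strategy $\hat\Player$ for $G_{\epsilon'}$, where $\delta = \epsilon/(2r)$ and $\epsilon'$ takes the value $\tfrac{1}{2}\epsilon$ or $\tfrac{3}{2}\epsilon$ depending on whether the player is the escaper or the pursuer. This is precisely the strategy $\hat\Player$ claimed by the corollary.

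Next, fix an arbitrary opponent strategy $\Opponent$. Since $\hat\Player$ is $\delta$-oblivious with $\delta > 0$, Lemma~\ref{UniquePlaythroughLemma} directly applies to the pair $(\hat\Player, \Opponent)$, yielding a unique playthrough $(\player, \opponent)$ with $\player = \hat\Player(\opponent)$ and $\opponent = \Opponent(\player)$.

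It remains to observe that the player wins this playthrough. The motion path $\opponent$ is, in particular, a valid opponent motion path, so the fact that $\hat\Player$ wins $G_{\epsilon'}$ against \emph{every} opponent motion path applies to $\opponent$ specifically: if the player is the escaper then there is a time $t$ at which $\hat\Player(\opponent)(t) = \player(t)$ is on an exit at distance $\geq \epsilon'$ from $\opponent(t)$ in the pursuer metric; if the player is the pursuer then for every time $t$ at which $\opponent(t)$ is on an exit, $\hat\Player(\opponent)(t) = \player(t)$ is at distance $< \epsilon'$ from $\opponent(t)$. In either case, the player wins the (unique) playthrough, as claimed.

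No part of this argument is really difficult: both ingredients are already in hand, and the only subtlety is the bookkeeping of noting that in the unique playthrough the opponent's path $\opponent$ is a legitimate motion path against which $\hat\Player$'s winning guarantee applies. The slight asymmetry between the escaper and pursuer cases (the two values of $\epsilon'$) is inherited directly from Lemma~\ref{InformationDelayLemma} and needs no further work here.
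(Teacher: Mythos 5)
Your proposal is correct and follows exactly the paper's own argument: apply the Obliviate Lemma (Lemma~\ref{InformationDelayLemma}) to get a $\delta$-oblivious winning strategy for $G_{\epsilon'}$, invoke Lemma~\ref{UniquePlaythroughLemma} for the unique playthrough, and observe that the winning guarantee applies to the opponent's path in that playthrough. The paper's proof is three lines long and your version just fills in the same bookkeeping slightly more explicitly.
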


\begin{proof}
By Lemma~\ref{InformationDelayLemma}, the player has a $\delta$-oblivious
winning strategy $\hat \Player$ for $G_{\epsilon'}$.
By Lemma~\ref{UniquePlaythroughLemma}, there is a unique playthrough
$(\hat \player,\opponent)$ such that $\hat \player = \hat \Player(\opponent)$ and $\Opponent(\hat \player) = \opponent$.
Because $\hat \Player$ wins against all opponent paths, it wins against~$\opponent$.
\end{proof}

Now it follows that both players cannot win in the pursuit--escape game $G$,
given that our definition of the escaper winning by a uniform $\epsilon > 0$.
(Again, a stronger result for locally finite strategies in the Lion-and-Man
game is mentioned in
\cite[after Proposition~14]{Bollobas-Leader-Walters-2009-arXiv}.)

\begin{corollary} \label{can't both win}
  For no pursuit--escape game $G$ can both the escaper and pursuer win.
\end{corollary}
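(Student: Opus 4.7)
The plan is a direct contradiction argument exploiting the unique-playthrough package built up in Corollary~\ref{unique playthrough epsilon'}. Suppose, for contradiction, that both the escaper and the pursuer win the pursuit--escape game $G$. By the definition of ``escaper wins'', there is some specific $\epsilon > 0$ and an escaper strategy $\Escaper$ that wins $G_\epsilon$. By the definition of ``pursuer wins'', there is also a pursuer strategy $\Pursuer$ winning $G_{\epsilon/3}$, since we get to pick any positive tolerance we like.

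Next, I would apply Corollary~\ref{unique playthrough epsilon'} to each side to ``obliviate'' them down to a common tolerance. Applied to $\Escaper$, the corollary yields a (necessarily $\delta$-oblivious) escaper strategy $\hat\Escaper$ that still wins $G_{\epsilon/2}$ and such that, against any pursuer strategy, the resulting game has a unique playthrough. Applied to $\Pursuer$, with starting tolerance $\epsilon/3$ and the pursuer-side multiplicative factor $3/2$, the corollary yields a pursuer strategy $\hat\Pursuer$ that wins $G_{\epsilon/2}$ as well, again with the unique-playthrough property against any escaper strategy. The key point is that by choosing the initial pursuer tolerance to be $\epsilon/3$, both strategies end up winning the \emph{same} tolerance-$\epsilon/2$ game.

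Now I would play $\hat\Escaper$ against $\hat\Pursuer$. Because either one is $\delta$-oblivious, Lemma~\ref{UniquePlaythroughLemma} produces a unique playthrough $(\escaper,\pursuer)$ consistent with both strategies: $\escaper = \hat\Escaper(\pursuer)$ and $\pursuer = \hat\Pursuer(\escaper)$. Since $\hat\Escaper$ wins $G_{\epsilon/2}$ against every pursuer motion path (in particular $\pursuer$), there exists a time $t$ at which $\escaper(t)$ lies on an exit and $d_\pursuer(\escaper(t),\pursuer(t)) \geq \epsilon/2$. Since $\hat\Pursuer$ wins $G_{\epsilon/2}$ against every escaper motion path (in particular $\escaper$), for every time $t$ at which $\escaper(t)$ lies on an exit we must have $d_\pursuer(\escaper(t),\pursuer(t)) < \epsilon/2$. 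Applied at the witness time just produced, these two conclusions contradict each other, and the corollary follows.

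There is no real obstacle beyond correctly threading the tolerances so that the two ``obliviated'' strategies are measured in the same game $G_{\epsilon/2}$; this is why I specifically start the pursuer from $\epsilon/3$ rather than $\epsilon$. The only subtlety is remembering that Corollary~\ref{unique playthrough epsilon'} gives uniqueness of the playthrough whenever \emph{either} player uses the oblivious strategy, so there is no conflict in having two oblivious strategies meet --- Lemma~\ref{UniquePlaythroughLemma} applies either way and yields a single, well-defined $(\escaper,\pursuer)$ on which both winning conditions must simultaneously hold.
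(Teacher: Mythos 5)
Your proposal is correct and follows essentially the same route as the paper: obliviate a winning strategy via Corollary~\ref{unique playthrough epsilon'} to force a unique playthrough, then derive a contradiction between the two win conditions at the witness time. The only difference is that you obliviate \emph{both} players and thread the tolerances to meet at $G_{\epsilon/2}$; the paper obliviates only the escaper and then simply invokes the definition of ``pursuer wins'' to obtain a pursuer strategy for the already-reduced tolerance $\epsilon'$, which makes your pursuer-side obliviation harmless but unnecessary.
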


\begin{proof}
  Suppose the escaper wins $G$.
  By definition, there is an escaper winning strategy $\Escaper$ for $G_\epsilon$
  for some $\epsilon > 0$.
  By Corollary~\ref{unique playthrough epsilon'}, 
  there is an escaper winning strategy $\hat \Escaper$ for $G_{\epsilon'}$,
  for some $\epsilon' > 0$, that has unique playthroughs against all
  pursuer strategies where the escaper wins.

  If the pursuer also wins $G$, then for all $\epsilon > 0$,
  there is a pursuer winning strategy $Z_\epsilon$ for~$G_\epsilon$;
  in particular, we obtain $Z_{\epsilon'}$ for $G_{\epsilon'}$.
  But $\hat \Escaper$ and $Z_{\epsilon'}$ have a unique playthrough where the escaper
  wins, contradicting that $Z_{\epsilon'}$ is a pursuer winning strategy.
\end{proof}

\paragraph{Specified starting points.}
Next we consider a variant $G(s_\escaper, s_\pursuer)$ of the game $G$ where we are
given the starting points $s_\escaper$ and $s_\pursuer$ for the escaper and pursuer,
respectively (like the Lion and Man problem).  This game naturally arises when
analyzing strategies in the middle of a game~$G$; in particular, we did so in
Section~\ref{sec:halfplane}.
A similar proof technique to the Obliviate Lemma gives us another interesting result
about robustness over starting points:

\begin{lemma}\label{lem:neighborhood}
  Suppose the escaper has a winning strategy for $G_\eps(s_\escaper,s_\pursuer)$, and
  that $s'_\pursuer$ is another point in the pursuer domain with
  $d_\pursuer(s_\pursuer,s'_\pursuer) = \delta < \eps$. Then the escaper has a winning
  strategy for $G_{\eps-\delta}(s_\escaper,s'_\pursuer)$.
\end{lemma}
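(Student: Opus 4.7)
The plan is to have the escaper simulate a ``virtual pursuer'' starting at $s_\pursuer$ that lags the actual pursuer by $\delta/r$ time, prepending a length-$\delta$ geodesic from $s_\pursuer$ to $s'_\pursuer$, and then invoke the hypothesized winning strategy $\Escaper$ for $G_\eps(s_\escaper, s_\pursuer)$ against this virtual pursuer. Since the virtual pursuer is always within $d_\pursuer$-distance $\delta$ of the real pursuer, an $\eps$-margin win against the virtual one should translate, via the triangle inequality, to an $(\eps-\delta)$-margin win against the real one.

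Concretely, I would fix a pursuer-domain path $\alpha:[0,\delta/r] \to D_\pursuer$ of length $\delta$ (traversed at full speed $r$) from $s_\pursuer$ to $s'_\pursuer$, and define a map $\phi$ sending any pursuer motion path $\pursuer'$ starting at $s'_\pursuer$ to the virtual path $\phi(\pursuer')$ starting at $s_\pursuer$ that follows $\alpha$ on $[0,\delta/r]$ and then follows the time-shift $\pursuer'(\cdot - \delta/r)$. Setting $\Escaper'(\pursuer') := \Escaper(\phi(\pursuer'))$, the strategy constraints for $\Escaper'$ reduce to routine checks: the escaper-start constraint holds because $\Escaper(\phi(\pursuer'))(0) = s_\escaper$ for all $\pursuer'$ (inherited from $\Escaper$), the speed-limit constraint is inherited from $\Escaper$, and the no-lookahead constraint holds because the value $\phi(\pursuer')(t)$ depends only on $\pursuer'$ restricted to $[0,\max(0,t-\delta/r)]$, so any agreement of two opponent paths on $[0,t]$ propagates through $\phi$ and then through $\Escaper$.

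For the winning condition, applying the hypothesis to the virtual path $\phi(\pursuer')$ yields a time $t^*$ at which $\Escaper'(\pursuer')(t^*) = \Escaper(\phi(\pursuer'))(t^*)$ lies on an exit and is at $d_\pursuer$-distance $\geq \eps$ from $\phi(\pursuer')(t^*)$. A short case analysis then bounds $d_\pursuer(\phi(\pursuer')(t^*),\pursuer'(t^*)) \leq \delta$: for $t^* \leq \delta/r$, split through $s'_\pursuer$ using the remaining length $\delta - rt^*$ of $\alpha$ together with the speed-limit bound $rt^*$ on $\pursuer'$; for $t^* \geq \delta/r$, apply the speed limit to $\pursuer'$ over an interval of length $\delta/r$. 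One application of the triangle inequality then yields the required $\eps-\delta$ separation. The only delicate point I anticipate is the existence of a length-exactly-$\delta$ geodesic $\alpha$, since the intrinsic-distance infimum need not in general be attained; however, player domains are closed and locally finitely rectifiable in $\mathbb{R}^k$, hence proper complete length spaces, so by a Hopf--Rinow-type argument a minimizing $\alpha$ exists between any two points in a common connected component, which dispatches this concern.
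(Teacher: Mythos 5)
Your proposal is correct and follows essentially the same route as the paper: prepend a full-speed geodesic from $s_\pursuer$ to $s'_\pursuer$ to form a lagged ``virtual'' pursuer path, run the given winning strategy against it, and convert the $\eps$ margin to an $\eps-\delta$ margin via the triangle inequality. Your explicit case split at $t^* \lessgtr \delta/r$ and your remark on the existence of a minimizing geodesic are slightly more careful than the paper's write-up, but they do not change the argument.
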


\begin{proof}

Let $\Escaper$ be the assumed escaper strategy that wins $G_\eps(s_\escaper, s_\pursuer)$. We define a
new escaper strategy $\Escaper'$ that wins $G_{\eps-\delta}(s_\escaper, s'_\pursuer)$: for any pursuer
path $\pursuer'(t)$ starting at $s'_\pursuer$, the escaper strategy will return an escaper path
$\Escaper'(s'_\pursuer, t)$ defined as follows. Let $\pursuer_s(t)$ be the pursuer path starting at
$s_\pursuer$ running at full speed along a shortest path in the pursuer metric to $s'_\pursuer$
(in exactly $\delta/r$ seconds), and then for $t > \delta/r$ let $\pursuer_s(t) =
\pursuer'(t-\delta/r)$. Define $\Escaper'(\pursuer'_s, t) = \Escaper(\pursuer_s, t)$. Observe that strategy $\Pursuer'$
satisfies:
\begin{itemize}
\item the nonbranching-lookahead constraint because $\Escaper'(\pursuer_b, t)$
depends only on $\pursuer'_s$ restricted to the closed interval $[0, t - \delta/r]$
(unless $t < \delta/r$, in which case $\Escaper'(\pursuer_b, t)$ is independent of $\pursuer_b$), and
\item the speed-limit constraint because $\Escaper$ does
and $\pursuer_s$ obeys speed limit $r$.
\end{itemize}

To see that strategy $\Pursuer'$ wins $G_{\eps-\delta}(s_\escaper, s'_\pursuer)$, consider a particular
pursuer path $\pursuer'(t)$, and define $\pursuer_s(t)$ as above. Because $\Pursuer$ is a winning
strategy for $G_\eps(s_\escaper, s_\pursuer)$, there exists some time $u$ at which the escaper
wins at boundary point $\escaper_s = \Pursuer(\pursuer_s, u)$ where $d_\pursuer(\escaper_s, \pursuer_s(u)) \geq \eps$.
According to strategy $\Pursuer'$, the escaper at time $u$ reaches the same boundary point
$\Pursuer'(\pursuer', u) = \Pursuer(\pursuer_s, u) = \escaper_s$, and the pursuer is at point $\pursuer'(u)$. We claim
that $\pursuer'(u)$ has distance at least $\eps - \delta$ from $\escaper_s$ in the pursuer
metric, so the escaper wins at time $u$. 

Because pursuer has speed at most $r$, $d_\pursuer(\pursuer_s(u), \pursuer'(u)) =
d_\pursuer(\pursuer'(u-\delta/r), \pursuer'(u)) \leq \delta$. And because $d_\pursuer(\escaper_s, \pursuer_s(u)) \geq
\eps$, by the triangle inequality, $d_\pursuer(\escaper_s, \pursuer'(u))\geq \eps - \delta$ as
desired.
\end{proof}

\begin{corollary}
If the escaper can win $G(s_\escaper,s_\pursuer)$, then the escaper can win $G(s_\escaper, s'_\pursuer)$
for all $s'_\pursuer$ in some open $d_\pursuer$-neighborhood of~$s_\pursuer$.
\end{corollary}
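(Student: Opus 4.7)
The plan is to deduce this corollary as a near-immediate consequence of Lemma~\ref{lem:neighborhood}. By definition, if the escaper wins $G(s_\escaper,s_\pursuer)$, then there exists some $\epsilon > 0$ for which the escaper has a winning strategy in $G_\epsilon(s_\escaper,s_\pursuer)$. I will take this $\epsilon$ and use it to define the neighborhood.

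Specifically, I will consider the open ball $B = \{s'_\pursuer \in D_\pursuer : d_\pursuer(s_\pursuer, s'_\pursuer) < \epsilon\}$ in the pursuer metric. For any $s'_\pursuer \in B$, setting $\delta = d_\pursuer(s_\pursuer, s'_\pursuer)$ gives $\delta < \epsilon$, so Lemma~\ref{lem:neighborhood} applies directly and produces an escaper strategy that wins $G_{\epsilon - \delta}(s_\escaper, s'_\pursuer)$. Since $\epsilon - \delta > 0$, this means the escaper wins $G(s_\escaper, s'_\pursuer)$, as required.

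The only thing that really needs checking is that $B$ is genuinely an open neighborhood of $s_\pursuer$ in the pursuer domain. This is immediate from the fact that $d_\pursuer$ is the intrinsic (shortest-path) metric on $D_\pursuer$: balls in this metric are open in the induced topology on $D_\pursuer$, and $s_\pursuer \in B$ trivially since $d_\pursuer(s_\pursuer,s_\pursuer) = 0 < \epsilon$. There is no real obstacle here; the entire content of the corollary has been packaged into Lemma~\ref{lem:neighborhood}, and what remains is simply to translate from the ``wins $G_\epsilon$'' formulation that the lemma uses back into the ``wins $G$'' formulation of the corollary.
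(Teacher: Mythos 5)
Your proof is correct and is precisely the argument the paper intends: the corollary is stated without proof as an immediate consequence of Lemma~\ref{lem:neighborhood}, and your derivation (take the $\epsilon$ witnessing the win of $G_\epsilon(s_\escaper,s_\pursuer)$, use the open $d_\pursuer$-ball of radius $\epsilon$, and apply the lemma to get a win of $G_{\epsilon-\delta}$ with $\epsilon-\delta>0$) is exactly the intended unpacking.
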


\subsection{Discrete Game}
\label{Discrete Game}

In this section, we show how to discretize the (continuous) pursuit--escape game
while closely approximating winning strategies.  This tool will enable us
to prove that some player always wins (in Section~\ref{Some Player Wins})
and to obtain a pseudopolynomial-time approximation scheme
(in \Section~\ref{sec:pseudoPTAS}).
Bollobas et al.~\cite{Bollobas-Leader-Walters-2009-arXiv}
define a discrete pursuit--evasion game,
which discretizes time into steps,
but players still move in the original continuous domains.
By contrast, we discretize both time and space.
Combining this discretization with the stronger oblivious property that
we obtained in Section~\ref{sec:oblivious strategies}
enables us to obtain finite approximation algorithms
in Section~\ref{sec:pseudoPTAS}.
Our discrete game is similar in spirit to a discretization of pursuit--evasion
games given by Reif and Tate \cite[Section~4]{Reif-Tate-1993},
but the difference in models means that
we need to prove our own results about approximating the continuous game.

\paragraph{Discretization.}
Given a pursuit--escape game consisting of an escaper domain $D_\escaper$,
pursuer domain $D_\pursuer$, exit set $\Exit$, and speed ratio~$r$, we define the
\defn{$(\delta,\gamma)$-discretized game} $\hat G_{\delta,\gamma}(r)$
as follows.  We write an explicit ``$(r)$'' for the intended speed ratio,
as we will need to adjust it when relating to the continuous game $G = G(r)$.

First we define a \defn{$\gamma$-sampling algorithm} which,
given a locally finitely rectifiable set~$Q$
(such as $D_\escaper$, $D_\pursuer$, or $\Exit$),
produces a countable set $S_{Q,\gamma}$ of sample points
such that every point $q \in Q$ has a $\gamma$-nearby sample point.
In the special case that $Q$ is finitely rectifiable,
the sample set $S_{Q,\gamma}$ is in fact finite.
We define the \defn{$\gamma$-sample} $S_{Q,\gamma}$ of $Q$ in two cases:

\begin{itemize}
\item For a finitely rectifiable set $R$, the $\gamma$-sample of $R$ is
  the union, over every Lipschitz patch $S : [0,1]^k \to R$ constituting~$R$,
  of the finite point set
  $\big\{S(i_1/m,i_2/m,\dots,i_k/m) \mid i_1,i_2,\dots,i_k \in \{0,1,\dots,m\}\big\}$ 
  where $m = \left\lceil 1 / \left( {\sqrt k \over 2} {\gamma} \right) \right\rceil$.
  Because $R$ is bounded, this sample set is finite.
\item For a locally finitely rectifiable set $Q$,
  the $\gamma$-sample of $Q$ is the union,
  over every positive integer~$\rho$, of the $\gamma$-sample of $Q$
  intersected with the radius-$\rho$ Euclidean ball centered at the origin.
  (Each such intersection is finitely rectifiable, so its
   $\gamma$-sample is defined above.)
  This $\gamma$-sample consists of countably many points.
\end{itemize}

\begin{lemma} \label{lem:gamma sampling}
  Every point $q \in Q$ is within distance $\gamma$ of a sample point
  in $S_{Q,\gamma}$,
  where distance is measured via the
  Euclidean shortest-path metric $d_Q$ in~$Q$.
\end{lemma}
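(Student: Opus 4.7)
The plan is to reduce to the finitely rectifiable case and then bound the distance using the Lipschitz property of the patches.

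First I would produce a patch containing $q$. Given $q \in Q$, choose an integer $\rho$ large enough that $q$ lies in the closed Euclidean ball $B_\rho$ of radius $\rho$ about the origin; then $q$ belongs to $R := Q \cap B_\rho$, which is finitely rectifiable because $Q$ is locally finitely rectifiable. By the definition of the $\gamma$-sample of a locally finitely rectifiable set, $S_{R,\gamma} \subseteq S_{Q,\gamma}$, so it suffices to exhibit a point of $S_{R,\gamma}$ within $d_Q$-distance $\gamma$ of $q$. Because $R$ is finitely rectifiable, it is the union of finitely many Lipschitz patches $S : [0,1]^k \to R$; pick any such patch with $q = S(u)$ for some $u \in [0,1]^k$.

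Next I would select a grid point. Let $v = (i_1/m,\dots,i_k/m)$ where $i_j = \round(m u_j) \in \{0,1,\dots,m\}$ is the integer closest to $m u_j$. Then $|u_j - v_j| \le 1/(2m)$ in each coordinate, so $\|u - v\|_2 \le \sqrt{k}/(2m)$, which is at most $\gamma$ for the relevant dimensions given the stated choice of $m$. By construction the point $p := S(v)$ lies in $S_{R,\gamma}$.

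Finally I would bound the intrinsic distance by exhibiting an explicit short path. Consider the straight segment $\sigma : [0,1] \to [0,1]^k$ defined by $\sigma(t) = (1-t) u + t v$, and compose with $S$ to obtain the path $S \circ \sigma : [0,1] \to R$ from $q$ to $p$. The Lipschitz condition on $S$ guarantees that this composed path has arc length at most $\|u - v\|_2 \le \gamma$. Since $R \subseteq Q$, the same path is a path in $Q$, so $d_Q(q,p) \le \gamma$, completing the proof.

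There is no serious obstacle; the one small bookkeeping step is verifying $\sqrt{k}/(2m) \le \gamma$ from the definition of $m$, and the one mildly subtle step is noting that the intrinsic distance in the larger domain $Q$ is bounded by the length of a path that stays inside a single patch of $R$, which is immediate since patches map into $R \subseteq Q$ by Lipschitz maps.
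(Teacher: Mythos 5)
Your proposal is correct and takes essentially the same route as the paper's own proof: restrict to the finitely rectifiable slice $Q\cap B_\rho$, locate $q$ on a Lipschitz patch, and use the grid construction plus the Lipschitz bound to find a nearby sample point in the intrinsic metric. Your explicit rounding of the parameter $u$ to the nearest grid point $v$ and the path $S\circ\sigma$ just spell out the step the paper phrases as ``by construction of the $\gamma$-sample and the Lipschitz property of $S$,'' so the two arguments are the same in substance (and both inherit the same small constant-factor dependence on $k$ from the paper's choice of $m$, which you rightly flag as the only bookkeeping point).
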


\begin{proof}
  First restrict to the integer-radius-$\lceil \|q\| + \gamma \rceil$ ball $A$
  centered at the origin,
  so that $Q \cap A$ is finitely rectifiable and has an associated sample set
  $S_{Q \cap A,\gamma} \subseteq S_{Q,\gamma}$.
  Let $S$ be a Lipschitz patch of $Q \cap A$ containing $q \in Q$.
  Consider the closed radius-$\gamma$ ball $B$
  centered at $q$ which is intrinsic to surface~$S$
  (the ball's distance is measured with respect to the metric on~$S$),
  which is contained in~$A$ (by the construction of~$A$).
  By construction of the $\gamma$-sample $S_{Q \cap A,\gamma}$,
  and by the Lipschitz property of~$S$,
  $B$ contains a point $b$ of $S_{Q \cap A,\gamma} \subseteq S_{Q,\gamma}$.
  By definition of the ball $B$, $d_Q(q,b) \leq \gamma$ as desired.
\end{proof}

Now we define a graph for the $(\delta,\gamma)$-discretized game
$\hat G_{\delta,\gamma}(r)$:
\begin{itemize}
\item %
  Define escaper vertex set
  $V_\escaper = S_{D_\escaper,\gamma} \cup S_{\Exit,\gamma}$
  and pursuer vertex set
  $V_\pursuer = S_{D_\pursuer,\gamma} \cup S_{\Exit,\gamma}$.
  Notably, both players share the exit sample~$S_{\Exit,\gamma}$.
\item The escaper edge set $E_\escaper$
  contains edges between all pairs $p,q \in V_\escaper$
  such that $d_\escaper(p,q) \leq \delta$.
\item The pursuer edge set $E_\pursuer$
  contains edges between all pairs $p,q \in V_\pursuer$
  such that $d_\pursuer(p,q) \leq r \delta$.
\end{itemize}

Finally we can define the game $\hat G_{\delta,\gamma}(r)$ which has discrete
alternation between the players.
To start, the escaper chooses a point $\escaper_0$ from $V_\escaper$;
and then the pursuer chooses a point $\pursuer_0$ from $V_\pursuer$.
In turn $i \in \{1,2,\dots\}$, the escaper chooses a point $\escaper_i$ from
$V_\escaper$ such that $(\escaper_{i-1}, \escaper_i) \in E_\escaper$;
and then the pursuer chooses a point $\pursuer_i$ from
$V_\pursuer$ such that $(\pursuer_{i-1}, \pursuer_i) \in E_\pursuer$.
The escaper wins if, in some turn~$j$,
there is a discrete exit point $\exit \in B_\exit$
such that $(\escaper_j, \exit) \in E_\escaper$ yet $(\pursuer_{j+1}, \exit) \notin E_\pursuer$;
and the pursuer wins if there is no such turn.
In other words, in the discrete game, the pursuer gets two turns
($\pursuer_j$ and $\pursuer_{j+1}$) to respond to an escaper threat $\escaper_j$ to exit
(analogous to the pursuer getting an extra reach of $\epsilon$ in the
continuous game). 
It may seem strange that the escaper wins without ever actually reaching the boundary. This captures a moment when it is clear the escaper has a forced win. Using this definition, rather than when the escaper actually reaches a boundary vertex, will be useful in future proofs when we want to consider a moment when the escaper is 'close enough' to just run to the boundary and win, or the pursuer always stays close enough to the escaper to prevent this. 

\paragraph{Approximation.}
Now we argue that winning strategies for the discrete game $\hat G$ can be
adapted to winning strategies for the continuous game $G$ with slightly
different parameters.

\begin{theorem} \label{discrete -> continuous}
  If the discrete game $\hat G_{\delta,\gamma}(r)$ has a player winning strategy
  where $\gamma \leq \min\{{1 \over 4}, {r \over 2}\} \delta$,
  then the continuous game $G_\epsilon(r')$ has a player winning strategy,
  where $\epsilon = {1 \over 2} r \delta$
  and $r' = r - 2 {\gamma \over \delta}$ if the player is the escaper, and
  $\epsilon = {5 \over 2} r \delta$
  and $r' = r / (1 - 2 {\gamma \over \delta})$ if the player is the pursuer.
\end{theorem}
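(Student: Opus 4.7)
My plan is to handle the escaper and pursuer cases separately, in each case simulating the winning discrete strategy inside the continuous game. In both, the continuous player's trajectory is placed at the discrete sample points at integer multiples of the appropriate time step and interpolated along shortest paths, while the opponent's continuous trajectory is discretized via $\gamma$-rounded sampling so as to produce a legal opponent sequence in $\hat G_{\delta,\gamma}(r)$.

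For the escaper case, I will place the continuous escaper at $\escaper_i \in V_\escaper$ at time $i\delta$, where $\escaper_i$ is the output of $\hat\Escaper$ on the sampled pursuer history, and interpolate along shortest $d_\escaper$-paths between consecutive $\escaper_i$; this is feasible at speed $1$ because $d_\escaper(\escaper_{i-1},\escaper_i) \le \delta$. The discrete pursuer sequence fed to $\hat\Escaper$ is obtained by $\gamma$-rounding the continuous pursuer's position at each $i\delta$, and the slowdown $r' = r - 2\gamma/\delta$ forces $d_\pursuer(\pursuer_{i-1},\pursuer_i) \le r'\delta + 2\gamma \le r\delta$, legitimizing this discrete pursuer trajectory. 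When $\hat\Escaper$'s winning turn $j$ arrives with witness exit sample $\exit$ satisfying $(\escaper_j,\exit)\in E_\escaper$ and $(\pursuer_{j+1},\exit) \notin E_\pursuer$, I modify $\hat\Escaper$ to set $\escaper_{j+1}=\exit$ (still a legal move), so the continuous escaper arrives at $\exit$ at time $(j+1)\delta$. The continuous pursuer is then within $\gamma$ of $\pursuer_{j+1}$ in the pursuer metric, so the separation at $\exit$ exceeds $r\delta - \gamma \ge r\delta/2 = \epsilon$ by the hypothesis $\gamma \le r\delta/2$.

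For the pursuer case, I will compress the time step to $\delta_e = \delta - 2\gamma$, sample the continuous escaper at each $i\delta_e$ (rounded to $S_{D_\escaper,\gamma}$) to produce $\escaper_i$, and plan the continuous pursuer to travel along the shortest $d_\pursuer$-path to each $\hat\Pursuer$-response $\pursuer_i$ during the time interval that starts once $\escaper_i$ is observed. The no-lookahead constraint forces a one-step offset between escaper sampling and the continuous pursuer reaching $\pursuer_i$; the speed budget $r\delta/\delta_e = r/(1-2\gamma/\delta) = r'$ fits exactly. The discrete edge conditions $d_\escaper(\escaper_{i-1},\escaper_i) \le 2\gamma + \delta_e = \delta$ and $r'\delta_e \ge r\delta$ both hold. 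At any time $t^*$ at which the continuous escaper sits on an exit, I identify the discrete turn $j$ near $t^*$ and the exit sample $\exit$ for which $(\escaper_j,\exit) \in E_\escaper$; $\hat\Pursuer$'s winning property then supplies $d_\pursuer(\pursuer_{j+1},\exit) \le r\delta$. A careful triangle-inequality accounting of the continuous pursuer's lag to its current target, the discrete edges linking $\pursuer_j$ and $\pursuer_{j+1}$, and the $\gamma$-slack between $\exit$ and the continuous escaper position yields $d_\pursuer(\hat\pursuer(t^*),\hat\escaper(t^*)) \le 5r\delta/2 = \epsilon$ using $\gamma \le r\delta/2$.

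The main obstacle will be reconciling the discrete winning criterion --- which looks one pursuer move past the threatened turn --- with the continuous game's instantaneous escape test. In the escaper case, because the certifying condition $(\pursuer_{j+1},\exit)\notin E_\pursuer$ involves a future pursuer move, the fix is to modify the discrete escaper strategy so that at turn $j+1$ it actually walks onto the certified exit; this ensures the continuous escaper is at $\exit$ exactly when the quantitative bound on $\pursuer_{j+1}$ takes effect, rather than needing to ``predict'' the winning turn. In the pursuer case, the one-step lag forced by no-lookahead, together with the triangle cost of chaining the continuous pursuer's position through $\pursuer_j$ and $\pursuer_{j+1}$ to the exit sample, is what produces the constant $5/2$; tightly tracking these error terms against the hypothesis $\gamma \le \min\{1/4, r/2\}\delta$ is the bookkeeping heart of the proof.
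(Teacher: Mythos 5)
Your proposal is correct and follows essentially the same route as the paper's proof: simulate the discrete winning strategy by $\gamma$-rounding the opponent's continuous trajectory at time steps $\delta$ (escaper case) or $\delta' = \delta - 2\gamma$ (pursuer case), interpolate the player's motion along shortest paths between consecutive discrete positions, and use the triangle inequality plus the hypothesis $\gamma \le \min\{\frac14,\frac r2\}\delta$ to verify the edge conditions and extract the separation bounds $\frac12 r\delta$ and $\frac52 r\delta$. The only cosmetic difference is that the paper does not modify the discrete escaper strategy at the winning turn but simply extends the continuous path from $\escaper_j$ to the certified exit over $[j\delta,(j+1)\delta]$, which amounts to the same thing.
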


\begin{proof}
  For $\player \in \{\escaper,\pursuer\}$ and for any point $p \in D_\player$,
  define
  $$
    \round_\player p = \begin{cases}
      \text{point} \in S_{\Exit,\gamma} \text{ nearest to } p
      \text{ in } d_\Exit
      & \text{if } p \in \Exit, \\
      \text{point} \in V_\player = S_{D_\player,\gamma} \cup S_{\Exit,\gamma} \text{ nearest to } p
      \text{ in } d_\player
      & \text{if } p \notin \Exit. \\
    \end{cases}
  $$
  By Lemma~\ref{lem:gamma sampling} and because $\Exit \subseteq D_\player$,
  $d_\player(p,\round_\player p) \leq \gamma$ for any point $p \in D_\player$.

  \textbf{Case 1:} The player is the escaper.
  We construct a continuous escaper winning strategy $\Escaper(\pursuer)$
  for $G_{r \delta - \gamma}(r - 2 {\gamma \over \delta})$,
  given a pursuer motion path~$\pursuer$.
  The continuous escaper starts at $\Escaper(\pursuer)(0) = \escaper_0$,
  the discrete point where the discrete escaper strategy starts.
  We give the discrete escaper strategy as input the pursuer move sequence
  $\pursuer_i = \round_\pursuer \pursuer(i \delta)$ for $i \in \{0, 1, \dots\}$.
  To confirm that this sequence satisfies $(\pursuer_i, \pursuer_{i+1}) \in E_Z$
  for all~$i$, we can use the triangle inequality, the claim above, and that
  $\pursuer$ satisfies the $r - 2 {\gamma \over \delta}$ speed-limit constraint:
  \begin{align*}
  &      d_\pursuer( \pursuer_i, \pursuer_{i+1}) \\
  &    = d_\pursuer\big(\round_\pursuer \pursuer(i \delta), \round_\pursuer \pursuer((i+1) \delta)\big) \\
  & \leq d_\pursuer\big(\round_\pursuer \pursuer(i \delta), \pursuer(i \delta)\big) +
         d_\pursuer\big(\pursuer(i \delta), \pursuer((i+1) \delta)\big) +
         d_\pursuer\big(\pursuer((i+1) \delta), \round_\pursuer \pursuer((i+1) \delta)\big) \\
  & \leq 2 \gamma + \big(r - 2 \textstyle {\gamma \over \delta}\big) \delta \\
  &    = r \delta.
  \end{align*}

  Suppose turn $i$ of the discrete escaper strategy tells us to move the
  escaper to $\escaper_i$ (dependent on only $\pursuer_0, \pursuer_1, \dots, \pursuer_{i-1}$).
  Then we extend the continuous escaper strategy by letting
  $\Escaper(\pursuer)([(i-1)\delta, i\delta])$ interpolate a shortest path in $d_\escaper$
  between $\escaper_{i-1}$ and~$\escaper_i$.
  By construction of $E_\escaper$, $d_\escaper(\escaper_{i-1},\escaper_i) \leq \delta$, so
  this interpolation satisfies the escaper speed-limit constraint.
  Because $\escaper_i$ depends on only $\pursuer_0, \pursuer_1, \dots, \pursuer_{i-1}$,
  $\Escaper(\pursuer)([(i-1)\delta, i\delta])$ depends on only $\pursuer([0,(i-1)\delta])$,
  so $\Escaper$ satisfies the nonbranching-lookahead constraint.
  (Because we are in the no-capture model,
  we do not need to worry about the escaper being captured during this motion.)

  In the final turn $j$ of the discrete game, there is an exit point
  $\exit \in \Exit$ such that
  $(\escaper_j, \exit) \in E_\escaper$ yet $(\exit, \pursuer_{j+1}) \notin E_\pursuer$.
  (Here we use that $\Exit \subseteq D_\escaper$, so that
  $d_\escaper(\escaper_j, \exit) \leq d_\Exit(\escaper_j, \exit)$.)
  Thus $d_\escaper(\escaper_j,\exit) \leq \delta$ yet $d_\pursuer(\pursuer_{j+1},\exit) > r \delta$.
  We finish the continuous escaper winning strategy by letting
  $\Escaper(\pursuer)([j \delta, (j+1)\delta])$ interpolate a shortest path in $d_\escaper$
  from $\escaper_j$ to~$\exit$.
  As above, $\Escaper$ satisfies the escaper speed-limit constraint and
  nonbranching-lookahead constraint.
  Furthermore, $\Escaper(\pursuer)$ is a continuous escaper winning strategy for
  $G_{r \delta - \gamma}$ because, at time $t = (j+1) \delta$,
  $\Escaper(\pursuer)(t)$ is at an exit $\exit$
  yet $\pursuer_{j+1} = \round_\pursuer \pursuer(t)$ is at a distance $> r \delta$ away,
  so by the claim above,
  $\pursuer(t)$ is at distance $> r \delta - \gamma$ away.
  By our assumption that $\gamma \leq {r \over 2} \delta$,
  $r \delta - \gamma \geq {r \over 2} \delta$.

  \textbf{Case 2:} The player is the pursuer.
  We construct a continuous pursuer winning strategy $\Pursuer(\escaper)$
  for $G_{2 r \delta + \gamma}(r')$, given an escaper motion path~$\escaper$.
  Let $\delta' = \delta (1 - 2 {\gamma \over \delta})$.
  We give the discrete pursuer strategy as input the escaper move sequence
  $\escaper_i = \round_\escaper \tilde \escaper(i \delta')$ for $i \in \{0, 1, \dots\}$.
  To confirm that this sequence satisfies $(\escaper_i, \escaper_{i+1}) \in E_H$
  for all~$i$, we can use the triangle inequality, the claim above, and that
  $\escaper$ satisfies the $1$ speed-limit constraint:
  \begin{align*}
  &      d_\escaper( \escaper_i, \escaper_{i+1}) \\
  &    = d_\escaper\big(\round_\escaper \escaper(i \delta'), \round_\escaper \escaper((i+1) \delta')\big) \\
  & \leq d_\escaper\big(\round_\escaper \escaper(i \delta'), \escaper(i \delta')\big) +
         d_\escaper\big(\escaper(i \delta'), \escaper((i+1) \delta')\big) +
         d_\escaper\big(\escaper((i+1) \delta'), \round_\escaper \escaper((i+1) \delta')\big) \\
  & \leq 2 \gamma + \delta' \\
  &    = 2 \gamma + \delta \big(1 - 2 \textstyle {\gamma \over \delta}\big) \\
  &    = \delta.
  \end{align*}

  The continuous pursuer starts at $\Pursuer(\escaper)(0) = \pursuer_0$, which depends on
  $\escaper_0 = \round_\escaper \escaper(0)$ (satisfying the nonbranching-lookahead constraint).
  Suppose turn $i$ of the discrete pursuer strategy tells us to move the
  pursuer to $\pursuer_i$ (dependent on only $\escaper_0, \escaper_1, \dots, \escaper_{i-1}$).
  Then we extend the continuous pursuer strategy by letting
  $\Pursuer(\escaper)([(i-1)\delta', i\delta'])$ interpolate a shortest path in $d_\pursuer$
  from $\pursuer_{i-1}$ to~$\pursuer_i$.
  By definition of $E_\pursuer$, $d_\pursuer(\pursuer_{i-1},\pursuer_i) \leq r \delta = r' \delta'$, so
  this interpolation satisfies the $r'$ pursuer speed-limit constraint.
  Because $\pursuer_i$ depends on only $\escaper_0, \escaper_1, \dots, \escaper_{i-1}$,
  $\Pursuer(\escaper)([(i-1)\delta', i\delta'])$ depends on only $\escaper([0,(i-1)\delta'])$,
  so $\Pursuer$ satisfies the nonbranching-lookahead constraint.

  To see that $\Pursuer$ is a continuous pursuer winning strategy
  for $G_{2 r \delta + \gamma}(r')$,
  consider a time $t$ where $\escaper(t) = \exit \in \Exit$.
  Let $i \delta'$ be the integer multiple of $\delta'$ nearest~$t$,
  so $|t - i \delta'| \leq {1 \over 2}$.
  By the escaper speed-limit constraint,
  $d_\escaper(\escaper(t), \escaper(i \delta')) \leq {\delta' \over 2}$.
  By the triangle inequality and the claim above,
  $d_\escaper(\escaper(t), \round_\escaper \escaper(i \delta')) \leq {\delta' \over 2} + \gamma$,
  i.e., $d_\escaper(\exit, \escaper_i) \leq {\delta' \over 2} + \gamma$.
  By the definition of $\round_\escaper$, $\hat \exit  = \round_\escaper \exit \in S_{\Exit,\gamma}$.
  By the triangle inequality and the claim above,
  $d_\escaper(\hat \exit, \escaper_i) \leq {\delta' \over 2} + 2 \gamma
                    \leq {\delta \over 2} + 2 \gamma$.
  (Here we use that $\Exit \subseteq D_\pursuer$, so that
  $d_\pursuer(\hat \exit, \escaper_i) \leq d_\Exit(\hat \exit, \escaper_i)$.)
  By our assumption that $\gamma \leq {\delta \over 4}$,
  $d_\escaper(\hat \exit, \escaper_i) \leq \delta$, so $(\hat \exit, \escaper_i) \in E_\escaper$.
  By the discrete win condition, $(\hat \exit, \pursuer_{i+1}) \in E_\pursuer$,
  so $d_\pursuer(\hat \exit, \pursuer_{i+1}) \leq r \delta$.
  Thus $d_\pursuer(\exit, \pursuer_{i+1}) \leq r \delta + \gamma$,
  i.e., $d_\pursuer(\escaper(t), \Pursuer(\escaper)((i+1) \delta)) \leq r \delta + \gamma$.
  By the pursuer speed-limit constraint,
  $d_\pursuer(\escaper(t), \Pursuer(\escaper)(t)) \leq 2 r \delta + \gamma$.
  By our assumption that $\gamma \leq {r \over 2} \delta$,
  $2 r \delta + \gamma \leq {5 \over 2} r \delta$.
\end{proof}

\subsection{Some Player Wins}
\label{Some Player Wins}

\paragraph{Discrete game.}
We start by proving that the discrete game $\hat G_{\delta,\gamma}(r)$
(defined in Section~\ref{Discrete Game}) always has a winner.
This result follows from known results, but is nontrivial because the
vertex set $V$ can have countably many vertices (when either domain is
unbounded).

\begin{lemma} \label{discrete game winner}
  The discrete game $\hat G_{\delta,\gamma}(r)$ always has a unique winner,
  i.e., either has an escaper winning strategy or a pursuer winning strategy
  but not both.
\end{lemma}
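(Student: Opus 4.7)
The plan is to prove the two halves of the statement---at most one player wins, and at least one player wins---separately.

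For the at-most-one direction, the argument is direct and simpler than in the continuous setting. Given any escaper strategy $\Escaper$ and any pursuer strategy $\Pursuer$, I would inductively construct a unique playthrough: set $\escaper_0 = \Escaper(\varnothing)$, then $\pursuer_0 = \Pursuer(\escaper_0)$, then $\escaper_1 = \Escaper(\pursuer_0)$, and so on, with each move uniquely determined by the opponent's history. This single play has a single outcome, so $\Escaper$ and $\Pursuer$ cannot both be winning. No \emph{Obliviate Lemma} analog is needed because the discrete game already alternates genuinely, so strategy interaction is unambiguous from the start.

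For the at-least-one direction, I would invoke the Gale--Stewart theorem on open determinacy. Viewing the set of infinite plays $(\escaper_0, \pursuer_0, \escaper_1, \pursuer_1, \dots)$ with the natural product topology (each coordinate taken discretely), the set of plays in which the escaper wins is open: the escaper winning at turn $j$ is witnessed by a condition on the finite prefix through $\pursuer_{j+1}$, namely the existence of an exit $\exit \in S_{\Exit,\gamma}$ with $(\escaper_j,\exit) \in E_\escaper$ and $(\pursuer_{j+1},\exit) \notin E_\pursuer$. Open games of perfect information with arbitrary move sets are determined, so one of the two players has a winning strategy in $\hat G_{\delta,\gamma}(r)$.

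The main subtlety is accommodating the countably infinite vertex sets $V_\escaper, V_\pursuer$ that arise when $D_\escaper$ or $D_\pursuer$ is unbounded, but Gale--Stewart handles this directly. For readers preferring a self-contained argument, one can instead do transfinite induction on ordinals: let $W_0$ be the positions from which the escaper wins on the current move, inductively let $W_{\alpha+1}$ consist of positions with an escaper move into $W_\alpha$ sending every pursuer response back into $W_\alpha$, take unions at limit stages, and set $W = \bigcup_\alpha W_\alpha$. If the initial configuration lies in $W$, then decreasing-rank play yields an escaper winning strategy; if not, then by contrapositive of the $W_{\alpha+1}$ definition, the pursuer has a response to every escaper move keeping the play outside $W$ forever, which is a pursuer winning strategy.
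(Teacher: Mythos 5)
Your proposal is correct and takes essentially the same route as the paper: the paper also models $\hat G_{\delta,\gamma}(r)$ as an open Gale--Stewart game (with the escaper's win witnessed by a finite prefix condition on $(\escaper_j,\exit)\in E_\escaper$ and $(\pursuer_{j+1},\exit)\notin E_\pursuer$, and with illegal moves folded into the win set) and invokes open determinacy to conclude there is a unique winner. Your explicit unique-playthrough argument for the ``not both'' half and your transfinite-induction alternative are just more spelled-out versions of what the paper implicitly relies on.
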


\begin{proof}
  We show that any $\hat G_{\delta,\gamma}(r)$ is an instance of an open
  Gale--Stewart game \cite{Gale-Stewart-1953},
  where two players alternate moves (with perfect
  information about past moves), a move is an element of a discrete set~$M$,
  the first player wins if the sequence of moves has a prefix
  in a known set $\Player$ of finite prefixes, and the second player wins
  if they can prevent ever having a prefix in~$\Player$.
  (The prefix notion of winning is what makes the game ``open'' in the
  product topology of $M^\omega$.)

  We can represent $\hat G$ by setting $M = V$,%
  \footnote{We could even make $M$ finite by mapping the finite number of
  choices available at any state to either player (by finite rectifiability)
  to bounded integers.}
  and letting a finite prefix $\escaper_0, \pursuer_0, \escaper_1, \pursuer_1, \dots, \escaper_k, \pursuer_k$ represent
  the state of the game if the escaper starts at $\escaper_0 \in V_\escaper$,
  the pursuer starts at $\pursuer_0 \in V_\pursuer$,
  then the escaper moves along $(\escaper_0,\escaper_1) \in E_\escaper$,
  then the pursuer moves along $(\pursuer_0,\pursuer_1) \in E_\pursuer$, etc.
  If any $\pursuer_i \notin V_\pursuer$, or any $(\pursuer_i,\pursuer_{i+1}) \notin E_\pursuer$,
  then we declare the prefix winning for the escaper.
  Conversely, if any $\escaper_i \notin V_\escaper$, or any $(\escaper_i,\escaper_{i+1}) \notin E_\escaper$,
  then we forbid the prefix from being winning for the escaper.
  Otherwise, we define the prefix as winning if and only if there is an
  $\exit \in B_\exit$ such that $(\escaper_{k-1}, \exit) \in E_\escaper$ yet $(\pursuer_k, \exit) \notin E_\pursuer$.

  Thus the discrete pursuit--escape game $\hat G$ is an open Gale--Stewart game.
  By open determinacy theorem \cite{Gale-Stewart-1953}
  this game is strictly determined, meaning that it has a unique winner.
\end{proof}

\paragraph{Continuous game.}
Now we can combine Theorem~\ref{discrete -> continuous} with
Lemma~\ref{discrete game winner} to derive results about the continuous
pursuit--escape game:

\begin{theorem} \label{nearby continuous wins}
  For any escaper domain~$D_\escaper$, pursuer domain~$D_\pursuer$, exit set~$\Exit$,
  and speed ratio~$r$,
  either the escaper wins $G(r')$ for all $r' < r$
  or the pursuer wins $G(r')$ for all $r' > r$
  (or both).
\end{theorem}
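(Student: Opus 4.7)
The plan is to leverage the discretization machinery developed in Section~\ref{Discrete Game}: for each sufficiently small $\delta > 0$ (with $\gamma$ chosen as a tiny function of $\delta$, e.g.~$\gamma = \delta^2$), Lemma~\ref{discrete game winner} says some player wins $\hat G_{\delta,\gamma}(r)$, and Theorem~\ref{discrete -> continuous} converts that win into a win for a nearby continuous game. By taking $\delta \to 0$, the parameters of the resulting continuous game will converge, and combined with a monotonicity argument in the speed ratio, will yield the claimed dichotomy.

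First I would record two elementary monotonicity facts. If the escaper wins $G(r_1)$ and $r_2 < r_1$, then the same escaper strategy wins $G(r_2)$, because every pursuer motion path with speed limit $r_2$ is also a valid pursuer motion path with speed limit $r_1$. Symmetrically, if the pursuer wins $G(r_1)$ and $r_2 > r_1$, the same pursuer strategy is still valid under the larger speed bound $r_2$, and so wins $G(r_2)$. I would also note the corresponding $\epsilon$-monotonicity implicit in the win definitions: winning $G_{\epsilon_0}$ implies winning $G_\epsilon$ for escapers whenever $\epsilon \leq \epsilon_0$ (an exit achieved with larger separation also achieves smaller separation), and for pursuers whenever $\epsilon \geq \epsilon_0$ (staying within a tighter tolerance automatically satisfies a looser one).

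Next I would pick a sequence $\delta_n \to 0$ together with $\gamma_n$ satisfying $\gamma_n/\delta_n \to 0$ (say $\gamma_n = \delta_n^2$), and small enough that eventually $\gamma_n \leq \min\{1/4,\, r/2\}\,\delta_n$, so that Theorem~\ref{discrete -> continuous} applies to each $\hat G_{\delta_n,\gamma_n}(r)$. By Lemma~\ref{discrete game winner}, each such discrete game has exactly one winner; by pigeonhole, one player wins for infinitely many $n$, and I pass to that subsequence. In the escaper branch, Theorem~\ref{discrete -> continuous} gives, for each such $n$, an escaper strategy winning $G_{\epsilon_n}(r_n')$ where $r_n' = r - 2\gamma_n/\delta_n \nearrow r$ and $\epsilon_n = \tfrac{1}{2} r \delta_n > 0$; given any $r' < r$, choose $n$ large with $r_n' > r'$, and then escaper $r$-monotonicity yields a winning escaper strategy for $G_{\epsilon_n}(r')$, so the escaper wins $G(r')$. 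In the pursuer branch, the conclusion is that the pursuer wins $G_{\epsilon_n}(r_n'')$ where $r_n'' = r/(1 - 2\gamma_n/\delta_n) \searrow r$ and $\epsilon_n = \tfrac{5}{2} r \delta_n \to 0$; given any $r'' > r$ and any target tolerance $\epsilon > 0$, pick $n$ large enough that both $r_n'' < r''$ and $\epsilon_n \leq \epsilon$, and then pursuer $r$-monotonicity together with $\epsilon$-monotonicity supplies a winning pursuer strategy for $G_\epsilon(r'')$. Since $\epsilon > 0$ was arbitrary, the pursuer wins $G(r'')$.

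The main obstacle I anticipate is book-keeping rather than conceptual: tracking how $\delta_n$, $\gamma_n$, $\epsilon_n$, $r_n'$, $r_n''$ flow through Theorem~\ref{discrete -> continuous}, and being careful to apply the two monotonicities in the correct directions (speeds go one way for the escaper and the opposite for the pursuer; the $\epsilon$-monotonicities flip analogously). None of these steps is individually hard, but one must verify they compose to give simultaneously small $\epsilon$ and nearly-$r$ speed ratio in the pursuer branch, which is precisely where the decoupling $\gamma_n/\delta_n \to 0$ while $\delta_n \to 0$ is used.
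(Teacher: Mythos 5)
Your proposal is correct and follows essentially the same route as the paper: the paper also takes a sequence of discretized games $\hat G_{\delta_i,\gamma_i}(r)$ with $\delta_i \to 0$ and $\gamma_i/\delta_i \to 0$, applies Lemma~\ref{discrete game winner} and a pigeonhole over which player wins infinitely often, converts via Theorem~\ref{discrete -> continuous}, and finishes with the same speed-ratio and $\epsilon$ monotonicities. The only difference is the cosmetic choice of $\gamma_n$ ($\delta_n^2$ versus $\min\{\tfrac14,\tfrac r2\}\delta_i/i$), which changes nothing.
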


\begin{proof}
  Construct an infinite sequence by, for each $i = 1, 2, \dots$,
  taking the $(\delta_i,\gamma_i)$-discretized game
  $\hat G_{\delta_i,\gamma_i}(r)$ induced by $(D_\escaper, D_\pursuer, \Exit, r)$
  and parameters $\delta_i = 1/i$ and
  $\gamma_i = \min\{{1 \over 4}, {r \over 2}\} \delta_i / i$.
  By Lemma~\ref{discrete game winner}, every discrete game
  $\hat G_{\delta_i,\gamma_i}(r)$ has a unique winner $w_i$ (escaper or pursuer).
  We split into two cases, both of which could happen
  (and indeed will happen at the critical speed ratio):

  \textbf{Case 1:}
  If infinitely many $w_i$ are escaper,
  then by Theorem~\ref{discrete -> continuous},
  we can convert each discrete escaper winning strategy for
  $\hat G_{\delta_i,\gamma_i}(r)$ into a continuous escaper winning strategy for
  $G_{{1 \over 2} r \delta_i} (r - 2 {\gamma_i \over \delta_i}) =
  G_{{1 \over 2} r/i} (r - \min\{{1 \over 2}, {r \over 2}\} / i)$.
  By definition, the escaper wins $G(r - \min\{{1 \over 2}, {r \over 2}\} / i)$
  (as well as at all smaller speed ratios).
  Because this holds for infinitely many~$i$, and
  $\min\{{1 \over 2}, {r \over 2}\} / i \to 0$ as $i \to \infty$,
  we obtain that the escaper wins $G(r-\epsilon)$ for all $\epsilon > 0$.

  \textbf{Case 2:}
  If infinitely many $w_i$ are pursuer,
  then by Theorem~\ref{discrete -> continuous},
  we can convert the discrete pursuer winning strategy for
  $\hat G_{\delta_i,\gamma_i}(r)$ into a continuous pursuer winning strategy for
  $G_{{5 \over 2} r \delta_i}(r / (1 - 2 {\gamma_i \over \delta_i}) =
  G_{{5 \over 2} r/i}(r / (1 - \min\{{1 \over 2}, {r \over 2}\} / i))$.
  Each such strategy also wins
  $G_\epsilon(r / (1 - \min\{{1 \over 2}, {r \over 2}\} / i)$ for all
  $\epsilon \geq {5 \over 2} r/i$.
  (as well as at all larger speed ratios).
  Because this holds for infinitely many~$i$, and
  ${5 \over 2} r/i \to 0$ and
  $\min\{{1 \over 2}, {r \over 2}\} / i \to 0$ as $i \to \infty$,
  we obtain that the pursuer wins $G_\epsilon(r+\epsilon)$
  for all $\epsilon > 0$.
  Thus the pursuer wins $G(r+\epsilon)$ for all $\epsilon > 0$.
\end{proof}

\begin{corollary}
  Any (continuous) pursuit--escape instance $(D_\escaper, D_\pursuer, \Exit)$
  has a critical speed ratio $r^* \geq 0$ (possibility $\infty$)
  such that the escaper wins $G(r)$
  for all speed ratios $r < r^*$ and the pursuer wins $G(r)$
  for all speed ratios $r > r^*$.
\end{corollary}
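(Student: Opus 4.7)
The plan is to define $r^*$ as the supremum of speed ratios at which the escaper can win, and then verify both halves of the claim using Theorem~\ref{nearby continuous wins} together with a short monotonicity observation. Concretely, set
$$r^* \;=\; \sup\bigl\{\,r \geq 0 \;:\; \text{escaper wins } G(r)\,\bigr\} \;\in\; [0,\infty],$$
with the convention that $\sup\emptyset = 0$, and call this set $A$.

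First I would record the obvious monotonicity of $A$: if the escaper wins $G(r_1)$ and $0 \leq r_0 < r_1$, then the escaper wins $G(r_0)$. The reason is that any pursuer motion path obeying the max-speed-$r_0$ Lipschitz bound automatically obeys the max-speed-$r_1$ bound, so max-speed-$r_0$ pursuer paths form a subset of max-speed-$r_1$ pursuer paths. A winning escaper strategy for $G_\epsilon(r_1)$, restricted to this subset, is still well-defined, still satisfies the no-lookahead and escaper-start constraints, and still escapes by $\geq \epsilon$ against every pursuer path in the restricted domain. Thus $A$ is downward-closed in $[0,\infty)$.

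Given monotonicity, the first half of the corollary is immediate: for any $r < r^*$, the definition of supremum produces some $r'' \in A$ with $r'' > r$, and then monotonicity hands us $r \in A$, i.e., the escaper wins $G(r)$. For the second half, suppose $r > r^*$ (so in particular $r^* < \infty$; otherwise the statement is vacuous). Choose an intermediate ratio $r' = (r+r^*)/2$, so that $r^* < r' < r$, and apply Theorem~\ref{nearby continuous wins} at $r'$. The dichotomy yields either (a) the escaper wins $G(r'')$ for every $r'' < r'$, or (b) the pursuer wins $G(r'')$ for every $r'' > r'$. Case (a) would place all of $[0,r')$ in $A$, forcing $r^* = \sup A \geq r' > r^*$, a contradiction; hence case (b) must hold, and since $r > r'$, the pursuer wins $G(r)$.

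The only step that requires any care is the boundary behavior at $r^*$ itself, and this is precisely what the strict-inequality form of the dichotomy in Theorem~\ref{nearby continuous wins} is engineered to handle; by applying it at the intermediate ratio $r'$ rather than at $r^*$, we never need to know whether $r^*$ itself belongs to $A$ or to the pursuer-winning set, and Corollary~\ref{can't both win} is not needed for the argument. I do not expect any serious obstacle beyond this bookkeeping.
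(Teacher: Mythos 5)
Your proposal is correct, and it follows the route the paper intends: the paper states this corollary without proof as an immediate consequence of Theorem~\ref{nearby continuous wins}, and your argument (define $r^*$ as the supremum of escaper-winning ratios, apply the dichotomy at an intermediate ratio $r'\in(r^*,r)$ for the upper half) is exactly that derivation made explicit. Your added monotonicity observation --- that max-speed-$r_0$ pursuer paths are a subset of max-speed-$r_1$ pursuer paths, so a winning escaper strategy restricts --- is sound and has the minor virtue of handling the $r<r^*$ half without invoking Corollary~\ref{can't both win}.
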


The critical speed ratio $r^*$ can be $\infty$.  For example, consider a cusp
$\prec$ where the escaper domain is (locally) on the right and the pursuer
domain is (locally) on the left.  No matter what speed $r$ the pursuer has,
a unit-speed escaper can get sufficiently close to the cusp vertex, threaten
to leave on the top side, and then run to the bottom side and escape.
Thus the escaper always wins in such examples.

\begin{theorem}[pursuer wins at critical speed ratio] For any region $R$ and speed $r^*$, if for all $r > r^*$ the pursuer wins the game at speed ratio $r$, then the pursuer wins at $r^*$.

Equivalently, the interval of speeds for which the pursuer wins is closed.

Equivalently, if the critical speed ratio $r^*$ is finite, the pursuer wins at speed $r^*$.
\end{theorem}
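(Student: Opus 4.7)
The plan is to fix any $\epsilon > 0$ and construct a pursuer strategy $\Pursuer^*$ at speed $r^*$ that wins $G_\epsilon(r^*)$, by extracting a limit of winning pursuer strategies at speeds $r_n \searrow r^*$. For each $n \geq 1$, set $r_n = r^* + 1/n$; by hypothesis the pursuer wins $G(r_n)$, so fix a pursuer strategy $\Pursuer_n$ winning $G_{\epsilon/2}(r_n)$. Each pursuer path $\Pursuer_n(\escaper)$ is Lipschitz with constant at most $r_n \leq r^* + 1$. Moreover, the winning condition forces $\Pursuer_n(\escaper)(0)$ (which by no-lookahead depends only on $\escaper(0)$) to lie in a bounded set depending only on $\escaper(0)$: if the escaper runs along a shortest path from $\escaper(0)$ to any exit $x \in \Exit$, the pursuer must be within $\epsilon/2$ of $x$ when the escaper arrives, so $\Pursuer_n(\escaper)(0)$ is pinned to a bounded region.

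The main step is a compactness argument. Let $\mathcal{P}$ denote the space of pursuer motion paths with Lipschitz constant $\leq r^* + 1$ whose starting position lies in the bounded set above; by Arzel\`a--Ascoli, $\mathcal{P}$ is compact in the topology of uniform convergence on compact time intervals. Hence the product $\prod_{\escaper} \mathcal{P}$, indexed over all escaper motion paths, is compact by Tychonoff, and each $\Pursuer_n$ sits inside it. Since this uncountable product need not be sequentially compact, we extract a convergent subnet $\Pursuer_{n_\alpha} \to \Pursuer^*$; coordinate-wise, $\Pursuer_{n_\alpha}(\escaper) \to \Pursuer^*(\escaper)$ uniformly on every compact time interval, for every escaper path~$\escaper$.

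It remains to verify that $\Pursuer^*$ is a valid pursuer strategy at speed $r^*$ winning $G_\epsilon(r^*)$. The speed-limit constraint survives the limit because $r_{n_\alpha} \to r^*$, so each $\Pursuer^*(\escaper)$ is Lipschitz with constant $\leq r^*$. The no-lookahead constraint is preserved: if $\escaper_1$ and $\escaper_2$ agree on $[0,t]$, then each $\Pursuer_n(\escaper_1)$ equals $\Pursuer_n(\escaper_2)$ on $[0,t]$, and the coordinate-wise limits inherit this equality. Finally, whenever $\escaper(t) \in \Exit$,
\[
  d_\pursuer\bigl(\escaper(t),\, \Pursuer^*(\escaper)(t)\bigr)
  \;=\; \lim_\alpha d_\pursuer\bigl(\escaper(t),\, \Pursuer_{n_\alpha}(\escaper)(t)\bigr)
  \;\leq\; \tfrac{\epsilon}{2} \;<\; \epsilon,
\]
so $\Pursuer^*$ wins $G_\epsilon(r^*)$. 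The main obstacle is the compactness setup: for unbounded $D_\pursuer$ one must first localize carefully (using the winning condition to bound pursuer starting positions uniformly in~$n$), and the use of subnets is delicate because the uncountable product space is not sequentially compact. A more hands-on alternative is a diagonal argument over a countable dense subset of escaper paths, followed by a coherent extension that respects the no-lookahead structure.
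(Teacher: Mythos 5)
Your approach is sound in outline and genuinely different from the paper's. You exploit the fact that the pursuer's win condition in $G_{\epsilon}$ is a pointwise condition quantified over escaper \emph{paths} and times (not over escaper strategies), hence closed under coordinate-wise limits once the strict bound $<\epsilon/2$ is relaxed to $\leq \epsilon/2 < \epsilon$, and you extract a limit strategy by Arzel\`a--Ascoli plus Tychonoff. The paper instead first invokes its Obliviate Lemma (Lemma~\ref{InformationDelayLemma}) to replace each winning strategy at speed $r^*+1/k$ by a $\delta$-oblivious one, and then builds a single real-time pursuer that simulates all of these games against the actual escaper, repeatedly running toward a limit point of the simulated pursuers' positions and retaining only the infinite subfamily of simulations clustering near that limit point; obliviousness is exactly what lets the real pursuer compute the simulated pursuers' near-future positions from the escaper's past. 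Your route buys a shorter, more abstract argument at the price of constructivity (a convergent subnet in an uncountable product, hence heavy use of choice); the paper's route yields an explicitly $\delta$-oblivious winning strategy.

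Two steps need repair before your argument is complete. First, and most substantively: Arzel\`a--Ascoli gives convergence that is uniform in the \emph{Euclidean} metric, but both the speed-limit constraint and the win condition live in the intrinsic pursuer metric $d_\pursuer$, and your displayed identity $d_\pursuer(\escaper(t), \Pursuer^*(\escaper)(t)) = \lim_\alpha d_\pursuer(\escaper(t), \Pursuer_{n_\alpha}(\escaper)(t))$ silently assumes that $d_\pursuer$ is continuous along Euclidean limits within $D_\pursuer$. That fails for general closed sets (the paper's ``harmonic comb'') and holding it for finitely rectifiable sets is precisely the content of Lemma~\ref{lem:intrinsic metric compact}; you should either carry out the compactness argument directly in the intrinsic metric space (compact by that lemma) or invoke it to pass both $d_\pursuer \leq \epsilon/2$ and the Lipschitz bound to the limit. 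Second, your bound on $\Pursuer_n(\escaper)(0)$ depends on $\escaper(0)$, so there is no single compact space $\mathcal{P}$ when $D_\escaper$ is unbounded; the Tychonoff product must use a coordinate-dependent compact factor $\mathcal{P}_\escaper$, whose compactness (a closed $d_\pursuer$-ball intersected with the domain) again rests on the appendix lemma. Both issues are fixable, but as written the proof assumes the metric regularity that the paper's appendix is devoted to establishing.
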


\begin{proof}
For every $\epsilon > 0$,
we will construct an $\frac{\epsilon}{4(r^*+1)}$-oblivious winning strategy for the pursuer
in the game $G_\epsilon$ with speed ratio $r^*$. 
By Lemma~\ref{InformationDelayLemma},
since the pursuer has a winning strategy in $G_{\frac{\epsilon}3}$ with speed ratio $r$,
for every $r > r^*$ and every $\epsilon > 0$,
the pursuer has an $\frac{\epsilon}{6r}$-oblivious winning strategy $Z_{r,\epsilon}$
for~$G_{\frac{\epsilon}{2}}$ with speed ratio $r$. 
So for every $r \in (r^*,r^*+1]$ and every $\epsilon > 0$,
the pursuer has an $\frac{\epsilon}{6(r^*+1)}$-oblivious winning strategy $Z_{r,\epsilon}$
for~$G_{\frac{\epsilon}{2}}$ with speed ratio $r$. 
We will simulate those games for every $r$ in the sequence
$\langle r^* + {1 \over k} \mid k = 1, 2, \dots \rangle$, and in all of them we will have the escaper move as it does in the $G_{\epsilon}$ game.\footnote{Note that here we take advantage of the asymmetry between the definitions of escaper and pursuer wins: for the escaper to win, it needs only be the case that there exists one $\epsilon > 0$ for which the escaper wins, so a similar strategy of simulating infinitely many games would not be possible for the escaper.} We will define a winning strategy $Z(\escaper)$ in $G_\epsilon$ for every escaper strategy $\escaper$.

Consider the set of starting locations $\{Z_{r,\epsilon}(h)(0)\}_{r}$ chosen by pursuers in those simulated games. 
There are infinitely many of them, and they all lie within a pursuer-metric disk of radius
$\frac{\epsilon}2 + ((r^*+1))\cdot d_{\escaper}(\escaper(0), \Exit)$
(or else the escaper could win the simulated games by running directly to $\Exit$).
So, by Lemma~\ref{lem:intrinsic metric compact},
they have a limit point $p_0$ in the pursuer metric;
the pursuer chooses to start at $p_0$.
We will continue the simulations only of those simulated games for which the pursuer starts within $\frac{\epsilon}{4}$ of $p_0$, 
of which there are infinitely many since $p_0$ was a limit point.

We prove by induction on $k$ that at time $k \frac{\epsilon}{12(r^*+1)}$, we can guarantee that the pursuer is at distance at most $\frac{\epsilon}{2}(1 - 2^{-1-k})$ from the positions of the pursuers in infinitely many of the simulated games. This is true for $k=0$, as above.

At time $k \frac{\epsilon}{12(r^*+1)}$, the pursuer decides its movement for the next $\frac{\epsilon}{12(r^*+1)}$ time as follows:
simulate each game until time $(k+1)\frac{\epsilon}{12(r^*+1)}$.
The pursuers in the simulated games follow $\frac{\epsilon}{6(r^*+1)}$-oblivious strategies, so their strategies until that time depend on the position of the escaper no later than $(k+1)\frac{\epsilon}{12(r^*+1)} - \frac{\epsilon}{6(r^*+1)} = (k-1)\frac{\epsilon}{12(r^*+1)}$. At time $k \frac{\epsilon}{12(r^*+1)}$, the pursuer (for whom we are constructing an $\frac{\epsilon}{12(r^*+1)}$-oblivious strategy) knows that much of the escaper's motion, so it can in fact simulate all those games.

Consider the set of positions at which pursuers in those simulated games are at time $(k+1)\frac{\epsilon}{12(r^*+1)}$.
There are infinitely many of them, and they all lie within a disk of radius
$\frac{\epsilon}2 (1 - 2^{-1-k}) + (r^*+1) \frac{\epsilon}{12(r^*+1)}$
centered at $p_k$,
so by Lemma~\ref{lem:intrinsic metric compact},
they have a limit point $p_{k+1}$.
All the simulated pursuers are within distance $\frac{\epsilon}{2}(1 - 2^{-1-k})$ of the actual pursuer at time $k\frac{\epsilon}{12(r^*+1)}$, and for any $\delta > \frac{\epsilon}{12(r^*+1)} r^*$, only finitely many of the simulated pursuers are fast enough to travel a distance greater than $\delta$, so $p_{k+1}$ is within $\frac{\epsilon}{12(r^*+1)}r^* + \frac{\epsilon}{2}(1 - 2^{-1-k})$ of the pursuer's position at time $k\frac{\epsilon}{12(r^*+1)}$. The pursuer chooses to move toward $p_{k+1}$, so by time $(k+1)\frac{\epsilon}{12(r^*+1)}$, the pursuer is within $\frac{\epsilon}{2}(1 - 2^{-1-k})$ of $p_{k+1}$. Continue the simulations only of those games in which the simulated pursuer is within $\frac{\epsilon}{2}(2^{-2-k})$ of the limit point,
of which there are infinitely many since $p_{k+1}$ was a limit point.
By the triangle inequality, the pursuer's distance from the pursuer in each of those games at time $(k+1)\frac{\epsilon}{12(r^*+1)}$ is at most $\frac{\epsilon}{2}(1 - 2^{-1-k}) + \frac{\epsilon}{2}(2^{-2-k}) = \frac{\epsilon}{2}(1 - 2^{-2-k})$, completing the induction.

Whenever the escaper is at an exit, the pursuers in the simulated games are within distance $\frac{\epsilon}{2}$, since they are following winning strategies for~$G_{\frac{\epsilon}{2}}$. The (unsimulated) pursuer is within $\frac{\epsilon}{2}$ of those (simulated) pursuers, so by the triangle inequality it is within $\epsilon$ of the escaper, so the pursuer wins $G_{\epsilon}$, as claimed.
\end{proof}

\finalmodelresult

\section{Pseudopolynomial-Time Approximation Scheme}
\label{sec:pseudoPTAS}
\label{appendix:pseudoPTAS}

In this section, we give a \defn{pseudopolynomial-time approximation scheme}
for approximating the critical speed ratio $r^*$
when the escaper domain is the interior and boundary of a simple polygon $P$
with integer coordinates, the pursuer domain is the boundary and
optional exterior of~$P$, and the exit set $\Exit = \partial P$.
More precisely, given $D_\escaper$, $D_\pursuer$, and $\epsilon > 0$, the scheme approximates
$r^*$ to within a factor of $1 + \epsilon$
in time polynomial in $1/\epsilon$ and the polygon coordinates.
Our main tool is the $(\delta,\gamma)$-discretized game defined and analyzed
in \Section~\ref{Discrete Game}.  (In fact, we initially developed the
discretization idea in the context of this pseudopolynomial-time
approximation scheme, and later realized it could be useful to prove that the
continuous game always has a winner.)
We showed in \Section~\ref{Discrete Game} that the discrete game approximates
the continuous game in some sense, but we need substantially more effort to
turn this into an efficient approximation algorithm.

\subsection{Restricting to Convex Hull}

One challenge with applying the discretization tool is that the vertex set $V$
has infinitely many points whenever $D_\escaper$ or $D_\pursuer$ is unbounded.
Even in very natural models (e.g., the exterior model),
$D_\pursuer$ is typically unbounded.
Luckily, we can focus our attention to the convex hull of all boundaries:

\begin{lemma} \label{ConvexHullLemma}
  If a player in domain $D$ has a winning strategy %
  that leaves the convex hull of $\partial D$,
  then they have a winning strategy %
  that does not.
\end{lemma}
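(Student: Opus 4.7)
The plan is to modify the winning strategy by composing with Euclidean projection onto the convex hull, exploiting the fact that projection onto a closed convex set is $1$-Lipschitz (in the Euclidean metric) and fixes every point already in the set. Let $C = \mathrm{conv}(\partial D)$, and let $\pi \colon \mathbb{R}^k \to C$ be the Euclidean projection onto $C$, which fixes every point of $C$ and in particular every point of $\partial D$. Given the original winning strategy $\Player$, I will define the new strategy $\Player'$ by $\Player'(\opponent)(t) = \pi(\Player(\opponent)(t))$; its paths stay in $C$ by construction, the no-lookahead constraint is inherited immediately from $\Player$ since $\pi$ is a deterministic function, and the escaper-start constraint (if the player is the escaper) is preserved because all paths of $\Player$ share a common start, hence so do their projections.

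The main work is to verify that $\Player'(\opponent)$ is a valid motion path in $D$ satisfying the same speed limit as $\Player$. First I will check that $\pi$ maps $D$ into $D$: in the polygon exterior and moat models, $\partial C$ decomposes into arcs of $\partial D$ (which lie in $D$) together with ``bridges'' across pockets of $\partial D$ (which also lie in $D$ in both models), so $\pi(D) \subseteq D$. Second, given any times $t_1 < t_2$ and any path $\sigma \colon [t_1, t_2] \to D$ approximating the $d_D$-geodesic from $\Player(\opponent)(t_1)$ to $\Player(\opponent)(t_2)$, the projected path $\pi \circ \sigma$ connects $\Player'(\opponent)(t_1)$ to $\Player'(\opponent)(t_2)$, lies entirely in $D$, and has arc length at most that of $\sigma$ by the $1$-Lipschitz property. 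Taking an infimum over $\sigma$ yields $d_D(\Player'(\opponent)(t_1), \Player'(\opponent)(t_2)) \le d_D(\Player(\opponent)(t_1), \Player(\opponent)(t_2)) \le s \, |t_1 - t_2|$, so $\Player'$ obeys the same speed limit $s$ as $\Player$.

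For the winning condition, every exit lies in $\partial D \subseteq C$ and is therefore fixed by $\pi$, and the same path-projection argument gives $d_D(\Player'(\opponent)(t), x) \le d_D(\Player(\opponent)(t), x)$ for every exit $x$. Consequently, if $\Player$ is a winning pursuer strategy that stays within $\epsilon$ of the escaper whenever the escaper reaches an exit, then $\Player'$ inherits this bound; and if $\Player$ is a winning escaper strategy that reaches some exit at distance at least $\epsilon$ from the pursuer, then $\Player'$ reaches the same exit at the same time with the pursuer at least as far away. In either case $\Player'$ is a winning strategy whose paths never leave $C$, as required.

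The main obstacle will be cleanly establishing $\pi(D) \subseteq D$. In the polygonal settings driving this paper, it follows immediately from the geometry of simple polygons and their convex hulls, but for more general locally finitely rectifiable $D$ one might need either a structural lemma identifying $\partial C \cap D$ or an alternative to radial projection --- for instance, rerouting each maximal excursion of $\Player(\opponent)$ outside $C$ along a path in $D \cap C$ between its entry and exit points on $\partial C$, using that such a reroute is no longer in $d_D$ than the original excursion because $C$ is convex and the endpoints of the excursion both lie on $\partial C$.
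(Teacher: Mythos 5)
Your proposal takes essentially the same approach as the paper: the paper's proof also defines $\hat\Player(\opponent)(t)$ as the nearest point of the convex hull $C$ to $\Player(\opponent)(t)$ and invokes the contraction property of projection onto a convex set to preserve the speed limit, with no-lookahead and escaper-start preserved because the modification is pointwise and opponent-independent. Your additional verification that the projection maps $D$ into $D$ and that the Lipschitz bound holds in the intrinsic metric $d_D$ (not just the Euclidean metric) is more careful than the paper's one-line ``contraction'' claim, but it is a refinement of the same argument rather than a different route.
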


\begin{proof}
  Let $C$ be the convex hull of $\partial D$ (i.e., its interior and boundary),
  and let $\Player$ be a player winning strategy for~$G_\epsilon$.
  For any opponent motion path $\opponent$ and time~$t$,
  define $\hat \Player(\opponent)(t)$ to be the nearest point $\in C$ to $\Player(\opponent)(t)$.
  Because this modification is a contraction\xxx{ideally prove this},
  $\hat \Player$ will still satisfy the speed-limit constraint.
  Because the modification is independent of~$\opponent$,
  $\hat \Player$ will still satisfy the nonbranching-lookahead constraint and
  (for the escaper player) the escaper-start constraint.
  Because $\Player$ won against every opponent strategy $\opponent$, so will~$\hat \Player$.
\end{proof}

\subsection{Margin of Victory}

Another challenge with applying the discretization tool is that,
while Theorem~\ref{discrete -> continuous} relates
discrete winning strategies to continuous winning strategies, it does so only
for $G_\epsilon$ for some $\epsilon > 0$.  But we want an algorithm to compute
the critical speed ratio for~$G$, not some $G_\epsilon$.  To resolve this
discrepancy, we develop a tool for trading off the pursuer winning distance
$\epsilon$ with the speed ratio.

First we need a simpler lemma:

\begin{lemma} \label{RunStraightLemma}
If the escaper has a winning strategy for $G_\epsilon$,
then
the escaper has a winning strategy for $G_{\epsilon/(2r+3)}$
satisfying that the last $\epsilon/(2r+3)$ time of their motion
(in response to any pursuer motion path) is along a shortest path.

If the escaper domain $D_\escaper$ is a polygon (interior and boundary)
and $\Exit = \partial D_\escaper$,
then the escaper can further restrict to a straight-line motion
for the last $\epsilon/(2r+3)$ time of their motion.
\end{lemma}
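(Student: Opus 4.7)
The plan is to modify the given winning strategy $\Escaper$ so that, as soon as the escaper observes conditions guaranteeing a safe direct sprint, it commits to tracing a shortest path to a fixed exit. Set $\tau = \epsilon/(2r+3)$. I would define $\Escaper'(\pursuer)$ to agree with $\Escaper(\pursuer)$ up to the earliest time $t_0(\pursuer)$ at which there exists some $\exit \in \Exit$ with both
$$d_\escaper(\Escaper(\pursuer)(t_0), \exit) \le \tau \quad\text{and}\quad d_\pursuer(\pursuer(t_0), \exit) \ge (r+1)\tau;$$
for $t \in [t_0, t_0 + \tau]$, $\Escaper'(\pursuer)$ traces the $d_\escaper$-shortest path from $\Escaper(\pursuer)(t_0)$ to $\exit$ at uniform speed $d/\tau \le 1$ (where $d$ is the path length, at most $\tau$); and for $t > t_0 + \tau$, it stays at $\exit$. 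Because the switch depends only on observed history, the no-lookahead constraint is preserved; the escaper-start constraint is inherited from $\Escaper$; and the speed-limit constraint holds on both phases.

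Next I would verify that $t_0(\pursuer)$ is always defined, using the fact that $\Escaper$ wins $G_\eps$. There exists a time $t^*$ with $\exit^* := \Escaper(\pursuer)(t^*) \in \Exit$ and $d_\pursuer(\pursuer(t^*), \exit^*) \ge \eps$. The escaper speed-limit gives $d_\escaper(\Escaper(\pursuer)(t^* - \tau), \exit^*) \le \tau$, while the pursuer speed-limit gives $d_\pursuer(\pursuer(t^* - \tau), \exit^*) \ge \eps - r\tau = (r+1)\tau$ (using $\eps = (2r+3)\tau$), so the switching condition is witnessed by $\exit^*$ no later than time $t^* - \tau$. I would then confirm that $\Escaper'$ wins $G_{\tau}$: at time $t_0 + \tau$, the escaper is at $\exit$, and by the pursuer speed-limit $d_\pursuer(\pursuer(t_0 + \tau), \exit) \ge (r+1)\tau - r\tau = \tau = \eps/(2r+3)$. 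By construction the motion during $[t_0, t_0 + \tau]$ traces a $d_\escaper$-shortest path, proving the first statement.

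For the polygon refinement with $\Exit = \partial D_\escaper$, the plan is to replace ``shortest path'' with ``straight line'' by choosing the target exit to be visible from the switch position. The key geometric fact I would invoke is: for any interior point $p$ of a polygon, the Euclidean-nearest boundary point $q$ is visible from $p$ along the segment $[p, q]$, since otherwise $[p, q]$ would cross $\partial P$ at a point closer than $q$, contradicting minimality. Because $\|p - \exit^*\| \le d_\escaper(p, \exit^*) \le \tau$, at time $t^* - \tau$ the nearest boundary point $q$ to $\Escaper(\pursuer)(t^* - \tau)$ satisfies $\|p - q\| \le \tau$ and is visible. I would then modify the switching condition to insist on visibility: switch at the first $t$ with a boundary point $q$ visible from $\Escaper(\pursuer)(t)$ at Euclidean distance $\le \tau$ and $d_\pursuer(\pursuer(t), q) \ge (r+1)\tau$, after which the escaper runs straight to $q$.

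The hard part, and the main obstacle, is the pursuer-distance condition: even if the nearest boundary point $q$ is visible and Euclidean-close to $\exit^*$, the value $d_\pursuer(q, \exit^*)$ can be large (e.g.\ in the moat model across a corner, or in the exterior model separated by the polygon), so the pursuer-far guarantee on $\exit^*$ need not transfer to $q$. My approach would be to consider the first segment of the $d_\escaper$-geodesic from $\Escaper(\pursuer)(t_0)$ to $\exit^*$: either the geodesic is the single straight segment to $\exit^*$ (in which case $q = \exit^*$ inherits the pursuer-far property directly), or the geodesic bends at a reflex vertex $v_1 \in \partial P$, whereupon $v_1$ itself is a visible boundary exit at distance $\le \tau$, and I would argue inductively that along the straight subpath followed by the escaper under $\Escaper$ on $[t^* - \tau, t^*]$ some visible target maintains the pursuer-far property --- tracking $d_\pursuer$ distances to successive bend points via triangle inequality across the (piecewise linear) geodesic. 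Absorbing the required slack into an enlarged constant (and accepting that the final straight segment may have length up to $\tau$ rather than exactly $\tau$) then yields the polygonal statement.
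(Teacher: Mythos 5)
Your argument for the first claim is sound and is essentially the paper's argument, reorganized: the paper tracks a potential of the form $d_\escaper(\cdot,\exit)+d_\pursuer(\cdot,\exit)$, observes that it changes at rate at most $1+r$ and is at least $\epsilon$ at the winning time $t^*$, and replaces the last $\tau=\epsilon/(2r+3)$ of motion by a geodesic to the winning exit; your version makes the switch depend on an observable condition (escaper within $\tau$ of some exit, pursuer at least $(r+1)\tau$ from it), which handles the no-lookahead constraint more explicitly. (Minor slip: $\epsilon-r\tau=(r+3)\tau$, not $(r+1)\tau$, but the inequality you need still holds, with slack.)

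The second claim is where your proposal has a genuine gap, and you have correctly located it but not closed it. Walking the geodesic toward $\exit^*$ and stopping at the first bend $v_1$ (a reflex vertex, hence a point of $\Exit=\partial P$) is the right move --- this is exactly what the paper does --- but ``tracking $d_\pursuer$ distances to successive bend points via triangle inequality'' cannot work: two boundary points at escaper-distance at most $\tau$ can be at arbitrarily large pursuer-distance (separated by a thin notch, or the long way around in the moat model), so no enlarged constant absorbs $d_\pursuer(v_1,\exit^*)$, and in particular the pursuer can be far from $\exit^*$ in the pursuer metric while sitting essentially at $v_1$. The device the paper uses instead is the symmetric potential $d_{\escaper\pursuer}(p_\escaper,p_\pursuer)=\min_{\exit\in\Exit}\bigl(d_\escaper(p_\escaper,\exit)+d_\pursuer(p_\pursuer,\exit)\bigr)$: it is $(1+r)$-Lipschitz along any playthrough, and whenever the escaper stands at \emph{any} exit it lower-bounds the pursuer's distance to the escaper, so a single lower bound on it shortly before $t^*$ certifies a win at $v_1$ just as well as at $\exit^*$. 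That uniform-over-all-exits certificate is the missing ingredient in your sketch. (Be aware that even this requires care to initialize: the escaper's win at $\exit^*$ bounds only the $\exit=\exit^*$ term of the minimum, not $d_\escaper(\exit^*,\exit')+d_\pursuer(\pursuer(t^*),\exit')$ for other exits $\exit'$, so the difficulty you sensed here is real.)
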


\begin{proof}
Suppose the escaper has a winning strategy $\Escaper$ for $G_\epsilon$, i.e.,
for any pursuer motion path~$\pursuer$, there is a time $t_\pursuer$ such that
$\Escaper(\pursuer)(t_\pursuer)$ is an exit $x_\pursuer$ and $d_\pursuer(\Escaper(\pursuer)(t_\pursuer), \pursuer(t_\pursuer)) \geq \epsilon$.
Define an escaper--pursuer distance by
$$d_{\escaper \pursuer}(p_\escaper, p_\pursuer) = \min_{\exit \in \Exit} d_\escaper(p_\escaper, \exit) + d_\pursuer(p_\pursuer, \exit). $$
At any time $t \geq t_\pursuer - \epsilon\frac{1}{2r+3}$,
$d_\escaper(\Escaper(\pursuer)(t), x_\pursuer) \leq \epsilon\frac{1}{2r+3}$
(by the escaper speed-limit constraint)
and
$d_{\escaper \pursuer}(\Escaper(\pursuer)(t), \pursuer(t)) \geq \epsilon\frac{r+2}{2r+3}$
(because in time $\leq \epsilon\frac{1}{2r+3}$, the pursuer and escaper travel a
total distance of $\leq \epsilon\frac{r+1}{2r+3}$, so $d_{\escaper \pursuer}(\Escaper(\pursuer)(t), \pursuer(t))$
can change by at most that much, yet it reaches at least
$\epsilon\frac{2r+3}{2r+3}$ at the end).
Thus, if we replace $\Escaper(\pursuer)([t_\pursuer - \epsilon\frac{1}{2r+3}, t_\pursuer])$ with the escaper
moving along a shortest path to~$x_\pursuer$, then $d_{\escaper \pursuer}(\Escaper(\pursuer)(t), \pursuer(t))$
can decrease by $\leq \epsilon\frac{r+1}{2r+3}$ over that time
from its initial value of $\geq \epsilon\frac{r+2}{2r+3}$,
leaving a distance of $\geq \epsilon\frac{1}{2r+3}$.
Thus we obtain an escaper winning strategy for $G_{\epsilon/(2r+3)}$.

If shortest paths in $d_\escaper$ are polygonal with vertices only at points of~$\Exit$,
as when $D_\escaper$ is a polygon and $\Exit = \partial D_\escaper$, then we can stop the
shortest-path motion whenever it hits a point of $\Exit$ and thereby guarantee a
straight-line motion (which can be spread out over the final
$\epsilon\frac{1}{2r+3}$ time interval).  By the same argument, this strategy
will still win.
\end{proof}

\begin{lemma}
\label{MarginOfVictoryLemma}
Suppose $P$ is a simple polygon and $\epsilon > 0$ satisfies
\begin{enumerate}
\item \label{point to boundary}
  there is a point in $P$ at distance more than $\epsilon$
  from the nearest boundary;
\item \label{edge to edge}
  no disk of radius $2\sqrt{\epsilon}$
  intersects two edges not sharing a vertex; and
\item \label{epsilon r^*}
  $\epsilon < 1/(2 (r^*)^2)$ where $r^*$ is the critical speed ratio for the game with escaper domain~$D_{\escaper} = P$, exit set $X = \partial P$, and pursuer domain $D_{\pursuer}$ either $\partial P$ or $\overline{\mathbb{R}^2 - P}$.
\end{enumerate}
If the escaper wins the continuous game $G$ in a polygon $P$ (with $D_{\escaper}$, $X$, and $D_{\pursuer}$ as above) at a speed ratio $r$, then the escaper wins the game $G_{\epsilon^3}$ at speed ratio $r/(1+\epsilon)$.
\end{lemma}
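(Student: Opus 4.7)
The plan is to combine a time-dilation trick with a halfplane-local sprint that exploits the slower pursuer speed together with the polygon's structure from conditions~(1)--(3). First I would apply Lemma~\ref{ConvexHullLemma} (motion restricted to the convex hull of $\partial P$) and Lemma~\ref{RunStraightLemma} (final straight-line phase) to reduce to an escaper winning strategy $H$ at ratio $r$ winning some $G_{\epsilon_H}$ with unknown (and possibly tiny) $\epsilon_H > 0$. These two reductions reshape $H$ without losing its winning property.

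Next, set up the time dilation. Against a real pursuer $z$ of speed at most $r' = r/(1+\epsilon)$, define the virtual pursuer $\tilde z(\tau) = z((1+\epsilon)\tau)$ of speed at most $r$, and have the escaper play $h(t) = H(\tilde z)(t/(1+\epsilon))$. This is a valid escaper strategy of real speed at most $1/(1+\epsilon)$ that satisfies the no-lookahead constraint, and at the real time $(1+\epsilon)\tau^*$ corresponding to $H$'s virtual winning time $\tau^*$ the real escaper reaches an exit with margin at least $\epsilon_H$. The remaining task is to boost this unknown margin to the quantitative $\epsilon^3$ using the escaper's unused speed budget (the slack $\epsilon/(1+\epsilon)$ between its actual speed $1/(1+\epsilon)$ and the allowed speed $1$).

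For the boost, at the first real time $t_1$ when the time-dilated escaper comes within Euclidean distance $\sqrt{\epsilon}$ of some edge $e$ of $P$, the escaper abandons $H$ and switches to a halfplane-optimal sprint. By condition~(2), the $2\sqrt{\epsilon}$-neighborhood of the escaper's position at time $t_1$ meets $\partial P$ only along $e$ and edges sharing a vertex with $e$, so the polygon is locally indistinguishable from the halfplane $H_e$ bounded by the line supporting $e$. The escaper then uses its full speed-$1$ budget to execute an APLO sprint (Section~\ref{sec:aplo}) toward $e$ of length $L = \Theta(r^* \sqrt{\epsilon})$ at angle $\arcsin(1/r^*)$ to $e$, adaptively picking the target exit point on $e$ in response to the actual pursuer. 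By the halfplane analysis of Theorem~\ref{HalfplaneTheorem}, this sprint against a pursuer of speed $r' < r^*$ achieves margin at least $(r^* - r')L \geq r^* L \epsilon/(1+\epsilon) = \Omega((r^*)^2 \epsilon^{3/2})$, which comfortably exceeds $\epsilon^3$ using $r^* \geq 1$ in any nontrivial instance.

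The hard part will be the third step. Justifying the sprint's validity requires ensuring that the transition from the time-dilated $H$ to the sprint preserves the no-lookahead and speed-limit constraints, which forces careful bookkeeping at the switching instant $t_1$. More subtly, the halfplane analysis only bounds distances within $H_e$, whereas the real pursuer lives in $\partial P$ or its complement and may approach the sprint's target exit by routes outside $H_e$. Condition~(2) resolves this by guaranteeing that any such route must traverse the $\sqrt{\epsilon}$-window around $e$ in which the halfplane picture is exact, so the halfplane margin bound transfers to the real polygonal pursuer metric. Conditions~(1) and~(3) then provide the quantitative slack needed to close the hybrid argument: condition~(1) ensures the escaper has the room to arrange the sprint configuration with clearance $\sqrt{\epsilon}$, and condition~(3) controls $r^*$ so that the various error terms from stitching the time-dilated $H$ to the sprint are dominated by the $\Omega((r^*)^2 \epsilon^{3/2})$ sprint margin.
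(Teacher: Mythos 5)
There is a genuine gap, and it sits exactly where you flagged ``the hard part will be the third step.'' Your sprint step assumes that, at the first time $t_1$ the escaper comes within $\sqrt{\epsilon}$ of an edge $e$, the local halfplane game is winnable for the escaper with margin roughly $(r^*-r')L$. But the relevant critical ratio there is not the polygon's global $r^*$; it is the halfplane ratio of Theorem~\ref{HalfplaneTheorem} for the \emph{specific configuration} $(s_\escaper,s_\pursuer)$ at time $t_1$, namely $1/\sin\theta$ for the angle $\theta$ determined by where the pursuer actually is. If the pursuer is sitting at (or near) the escaper's projection onto $e$ --- which a sensible pursuer will arrange whenever the escaper approaches an edge --- that local ratio is $1$, and the local sprint loses against any pursuer of speed at least $1$. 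A winning strategy $H$ may well skim within $\sqrt{\epsilon}$ of many edges long before the actual escape, precisely while the pursuer is shadowing it; abandoning $H$ at the first such moment and committing to a local dash forfeits the only guarantee you have (that $H$ eventually wins) without gaining anything. The time-dilation step is fine as far as it goes (no-lookahead and the speed bookkeeping check out), but it only reproduces the unknown margin $\epsilon_H$; the conversion of the speed slack into a quantitative margin is exactly the content of the lemma, and the sprint does not supply it.

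The paper's proof takes a different and more delicate route. It first handles the case where the escaper can win by a straight run from the (interior, by condition~(1)) start point: there the reduced pursuer speed immediately yields margin $\frac{\epsilon}{1+\epsilon} d_\pursuer(Z,T) > \epsilon^3$ because $d_\escaper(H,T) > \epsilon$. Otherwise it uses Lemma~\ref{RunStraightLemma} to isolate the \emph{critical time} $t$ at which the escaper first becomes able to win by a shortest-path dash, and studies the set $W$ of boundary points to which the race is exactly tied at that instant. If some tied point lies at escaper-distance more than $\epsilon$, the same calculation gives the $\epsilon^3$ margin. The heart of the proof is ruling out the remaining case: if all tied points lie within $\epsilon$ of the escaper (hence, by condition~(2), on at most two adjacent edges), three geometric lemmas (\ref{lem:angleinequality}, \ref{lem:marginofvictoryharderhaplo}, \ref{lem:marginofvictoryoffangle}) show the pursuer has an APLO-like response that never falls behind in the race to \emph{any} point of those edges, so the escaper could not in fact have won a moment later --- a contradiction with the choice of $t$. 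In other words, the margin is extracted not by a local sprint but by showing that a win forced through a short final dash to nearby tied points is impossible; your proposal is missing this contradiction argument, and no amount of bookkeeping at the switching instant will substitute for it.
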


While it is easy to prove that such an $\epsilon$ \emph{exists} via
Lemma~\ref{InformationDelayLemma},
the point is that we can efficiently compute a valid such $\epsilon$.
Specifically, we can compute an $\epsilon_0$ such that all
$\epsilon \in (0, \epsilon_0]$ satisfy the three conditions of
Lemma~\ref{MarginOfVictoryLemma} by taking the minimum of the following
three lower bounds:
\begin{enumerate}
\item We can compute a lower bound on Condition~\ref{point to boundary} by triangulating $P$, choosing any of that triangulation's triangles, and using the inradius of that triangle. The inradius is the area divided by half the perimeter, and both of those are polynomial functions of the input coordinates, so this bound on $\epsilon_0$ is polynomial in the coordinates of $P$.
\item We can compute a lower bound on Condition~\ref{edge to edge}: the minimum distance between two
edges not sharing a vertex is attained either by a pair of vertices (and we can
compute the minimum distance between pairs of vertices) or by the perpendicular from a vertex $v$ to an edge $(u,w)$. The length of that perpendicular is the area of the triangle with vertices $u$, $v$, and $w$ divided by the distance from $u$ to $w$, and those are both polynomial in $u$, $v$, and $w$, so this bound on $\epsilon_0$ has length (in bits) polynomial in the length (in bits) of $P$.
\item \label{easy upper bound for r^*}
We can compute a lower bound on Condition~\ref{epsilon r^*} via an
upper bound on $r^*$ depending only on $P$:
by Theorem~\ref{ConstantApproximationTheorem}, $r^* \leq 10.89898 \max_{p,q \in \partial P}\frac{d_\pursuer(p,q)}{d_\escaper(p,q)}$.
Instead of computing $\frac{d_\pursuer(p,q)}{d_\escaper(p,q)}$ directly, we can easily
upper-bound it by the maximum of two easy-to-compute quantities:
\begin{enumerate}
\item $F/f$ where $F$ is the perimeter of $\partial D_\escaper$ and
  $f$ is minimum distance between two nonincident edges (minimum feature size).
  This is an upper bound on $\frac{d_\pursuer(p,q)}{d_\escaper(p,q)}$ for $p,q$ on any two
  nonincident edges.
\item $\csc {\theta_{\min} \over 2}$ where $\theta_{\min}$ is the smallest
  interior angle of a vertex of~$D_\escaper$.  This is an upper bound on
  $\frac{d_\pursuer(p,q)}{d_\escaper(p,q)}$ for $p,q$ along two edges sharing an endpoint.
  (When $p$ and $q$ are on the same edge, we get a ratio of $1$, so we do not
  need to consider this case.)
\end{enumerate}
\end{enumerate}

\begin{proof}[Proof of Lemma~\ref{MarginOfVictoryLemma}]
The escaper should start at some point $H$ at distance more than $\epsilon$ from the nearest boundary (escaper-start constraint); $\epsilon_0$ was chosen small enough that such a place exists. The escaper can still win $G$: if they could win by some other starting position, the escaper can immediately run to that position; wherever the pursuer is after that run, the pursuer could have started there, so the escaper can simulate their winning strategy from that starting position to win.

If, from that starting position, there is a point $T$ on the boundary such that the escaper can win $G$ with speed ratio $r$ by committing to running straight to $T$ (that is, if there is a point $T \in \partial P$ such that $r \cdot d_\escaper(H,T) < d_\pursuer(Z,T)$, where $Z$ is the pursuer's starting position), then the escaper can win $G_{\epsilon^3}$ with speed ratio $r\frac{1}{1+\epsilon}$ by running straight to that point. The escaper's time to get there is $d_\escaper(H,T)$, in which time the pursuer moves at most $r\frac{1}{1+\epsilon} \cdot d_\escaper(H,T) < \frac{1}{1+\epsilon} \cdot d_\pursuer(Z,T)$, leaving a distance of at least $\frac{\epsilon}{1+\epsilon} \cdot d_\pursuer(Z,T) > \frac{\epsilon}{2r} \cdot d_\escaper(H,T) > \frac{\epsilon^2}{2r} > \epsilon^3$, as desired.

Otherwise, the escaper cannot immediately win $G$ with speed ratio $r$ by picking a point on $\partial P$ within $\epsilon$ of their location
and running straight to it. However, the escaper can eventually win $G$ by using the strategy in Lemma~\ref{RunStraightLemma}. Consider the escaper's position $H$ and pursuer's position $Z$ at 
a time $t$ such that for all later times, the escaper can win by picking a point on $\delta P$ and running along a shortest path to it, and for all earlier times, the escaper cannot so win.

Let $W = \{W_1, W_2, \ldots\}$ be the set of points on $\partial P$ that the escaper can reach in the same time as the pursuer if both of them run on a shortest path.

If there is any point in $W$ at distance more than $\epsilon$ from $h$, then by the same calculation as above, the escaper can win $G_{\epsilon^3}$ at speed ratio $r\frac{1}{1+\epsilon}$ by running straight to it. Otherwise, every such boundary point is within $\epsilon$ of~$h$. By the choice of $\epsilon$, there are at most two edges within $\epsilon$ of~$h$, and if there are two such edges, they share a vertex, so all points in $W$ are on one or two adjacent edges.

When the escaper is at $h$ and the pursuer at $z$, for every point $x$ on the boundary, the time it would take the pursuer to reach $x$ is at most the time it would take the escaper to reach $x$. Suppose not, and suppose that the escaper's shortest path to $x$ has length $\ell$ and the pursuer's shortest path to $x$ has length $\ell + r\epsilon_x$. Then at time $t - \frac{\epsilon_x}{2r+3}$, the length of the escaper's shortest path to $x$ is at most $\ell + \frac{\epsilon_x}{2r+3}$ and the length of the pursuer's shortest path to $x$ is at least $\ell + r\epsilon_x - \frac{r\epsilon_x}{2r+3} > \ell + \frac{\epsilon_x}{2r+3} + \frac{\epsilon_x}{2}$, so at time $t - \frac{\epsilon_x}{2r+3}$, the escaper can win $G_{\epsilon'}$ for all $\epsilon' < \frac{\epsilon_x}{2}$ by picking the point $x$ and running along a shortest path to it. Hence the escaper wins $G$ by the same strategy, which contradicts the choice of $t$.

If $W$ is empty, then consider, for each point on $\partial P$, the time it takes the escaper to reach that point minus the time it takes the pursuer to reach that point. That's function is always nonnegative, is nowhere 0 by assumption, and is a continuous function of a parameterization of the boundary, which is closed and bounded. Therefore, it attains a minimum $\epsilon_x$. At time $t + \frac{\epsilon_x}{2r+3}$, that function is still everywhere at least $\frac{\epsilon_x}{2}$, by the same calculation as above. So even at time $t + \frac{\epsilon_x}{2r+3}$, there's no point on the boundary such that the escaper can win $G_{\epsilon'}$ for any $\epsilon' < \frac{\epsilon_x}{2}$ by running on a shortest path to it. So at time $t + \frac{\epsilon_x}{2r+3}$, there's no point on the boundary such that the escaper can win $G$ by running on a shortest path to it, contradicting the choice of $t$.

If there are two points of $W$ on the same edge, let that edge be the $x$-axis. The escaper's shortest-path time to a point $(x,0)$ is a function of the form $f(x) = \sqrt{x^2 + ax + b}$ for some $a$ and $b$, and the pursuer's shortest-path time is $g(x) = \sqrt{x^2 + cx + d}/r$. Since $r>1$, if those two functions are equal at two points, then there's some point such that the escaper's shortest-path time to it is strictly less than the pursuer's, say by $\epsilon_x$. At time $t - \frac{\epsilon_x}{2r+3}$, that difference is still at least $\frac{\epsilon_x}{2}$, by the same calculation as above. So even at time $t - \frac{\epsilon_x}{2r+3}$, the escaper can win $G$, contradicting the choice of $t$.

Now, suppose that there's at most one point of $W$ per edge, and $W$ is nonempty. We claim that the escaper cannot win at all with at most $\epsilon$ more movement, much less by committing to moving straight to one of those boundary points. 

First, if there are points of $W$ on only one edge $e$, the pursuer can use an APLO strategy until there's a point of $W$ on another edge: the pursuer can match the escaper's speed perpendicular to $HZ$ and, conditioned on that, use as much of its speed as possible to move toward $e$ in the direction of $HZ$. If it's the case that, for every direction $\theta$ that the escaper runs, the escaper's distance to $e$ decreases at most $\frac{d(H,e)}{d(Z,e)}$ times faster than the pursuer's distance to $e$ does; then the pursuer reaches the boundary first and wins by Theorem~\ref{WedgeTheorem}. Otherwise, there's some direction $\theta$ such that the escaper's distance to $e$ decreases more than $\frac{d(H,e)}{d(Z,e)}$ times faster than the pursuer's distance to $e$ does. So, if the escaper runs straight in the direction $\theta$ toward a point $W_3$ on $e$, the escaper reaches $W_3$ before the pursuer reaches $e$ (since if the pursuer reaches $e$, it wins by Theorem~\ref{WedgeTheorem}) following this strategy. But if the pursuer runs straight toward $W_3$, it gets there first by assumption; so if the pursuer runs straight toward $W_3$ but slows down enough to keep the line between it and the escaper parallel to $HZ$, it still wins. If the pursuer does that, but uses any extra movement to move toward the boundary, that brings it strictly closer to $W_3$, so it still wins. But that's exactly the pursuer strategy for which we claimed that the pursuer would lose, contradiction. So there are points of $W$ on at least two edges within $\epsilon$ of $H$.

So, there are two edges $e$ and $f$ (there cannot be more, by the definition of $\epsilon$) with one point of $W$, within $\epsilon$ of $H$, on each. To deal with this case, we prove three lemmas about the geometry of the situation.

\begin{lemma}\label{lem:angleinequality}
The angle between the escaper's shortest paths to the points in $W$ is at least the angle between the pursuer's shortest paths to the points in $W$, with equality only if both angles are $\pi$.
\end{lemma}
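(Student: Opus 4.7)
The plan is to reduce the claim to a law-of-cosines computation, after first showing that the escaper's shortest paths are straight line segments. By Condition~\ref{edge to edge} on $\epsilon$, the two edges $e, f$ of $\partial P$ containing $W_1, W_2$ respectively must share a common vertex $V$, since both $W_i$ lie within $\epsilon$ of $H$ on distinct edges. Hence $H$ lies in the convex wedge bounded by $e, f$ with apex $V$, so the escaper's shortest paths from $H$ to $W_1, W_2$ are Euclidean straight-line segments of lengths $a_i := d_\escaper(H, W_i)$. Because $W_1, W_2 \in W$, the defining equality $d_\pursuer(Z, W_i) = r\, a_i$ holds.

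Setting $L := |W_1 - W_2|$ (Euclidean distance) and applying the law of cosines to triangle $H W_1 W_2$ yields $\cos \alpha = (a_1^2 + a_2^2 - L^2)/(2 a_1 a_2)$. In the case that the pursuer's geodesics to $W_1, W_2$ are also Euclidean straight segments (the generic case), the same law applied to triangle $Z W_1 W_2$ gives
\[
\cos \beta = \frac{r^2 a_1^2 + r^2 a_2^2 - L^2}{2 r^2 a_1 a_2} = \cos \alpha + \frac{(r^2-1) L^2}{2 r^2 a_1 a_2}.
\]
Because $r > 1$ and $L > 0$, the extra term is strictly positive, so $\cos \beta > \cos \alpha$ and hence $\alpha > \beta$ strictly. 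Equality in this regime would force $L = 0$, a contradiction, so the degenerate case $\alpha = \beta = \pi$ does not occur when the pursuer geodesics are straight. This same computation can be re-cast as an Apollonius-circle argument: the locus $\{X : r |X - H| = |X - Z|\}$ is a circle containing $W_1, W_2$, with $H$ strictly interior and $Z$ strictly exterior, and the inscribed-angle comparison at interior versus exterior viewpoints reproduces the inequality.

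For the general case where the pursuer's geodesics bend around obstacles (e.g., around the corner $V$ in the exterior model, or along the 1D boundary in the moat model), I would proceed by an unfolding argument. Replace each $W_i$ by the point $W_i^{*} = Z + r\, a_i \hat v_i$, where $\hat v_i$ is the initial direction of the pursuer's geodesic to $W_i$, so that by construction $\angle W_1^{*} Z W_2^{*} = \beta$ and $|Z - W_i^{*}| = r\, a_i$. One then relates $|W_1^{*} - W_2^{*}|$ to $L$ via the first-order optimality conditions $r \cos \theta_i^\escaper = \cos \theta_i^\pursuer$, which follow from $W_i$ being a local minimum on $\partial P$ of the non-negative function $f(X) := r\, d_\escaper(H, X) - d_\pursuer(Z, X)$. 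The equality case $\alpha = \beta = \pi$ can genuinely arise in the moat model when both pursuer geodesics leave $Z$ in opposite directions along the 1D boundary while $H$ is collinear with $W_1, W_2$; a direct case analysis of this configuration completes the argument. The main obstacle is rigorously bounding $|W_1^{*} - W_2^{*}|$ in the bent-geodesic regime, since the first-order conditions are phrased with respect to the edge tangents rather than directly comparing the initial geodesic directions at $H$ and $Z$; getting that bookkeeping right, while keeping track of the direction of the inequality through the unfolding, is the delicate part of the proof.
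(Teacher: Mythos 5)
Your law-of-cosines computation correctly disposes of the case where both of the pursuer's geodesics to $W_1$ and $W_2$ are straight Euclidean segments, and it does so more cleanly than the paper, which in that case compares $\angle W_iZH$ with $\angle W_iHZ$ pairwise via side lengths and then adds (an argument that additionally needs the ray toward the other player to lie angularly between the two rays toward $W_1$ and $W_2$); your Apollonius-circle reformulation is a nice way to see it. But this is only the paper's ``easy case.'' The case you defer --- where a pursuer geodesic bends, in particular around the shared vertex of $e$ and $f$ (unavoidable in the moat model and common in the exterior model) --- is where the substance of the lemma lies. Your unfolding $W_i^{*} = Z + r a_i \hat v_i$ does preserve the lengths and the angle at $Z$, but the entire burden is then to show $|W_1^{*}-W_2^{*}| \le |W_1-W_2|$ so that the cosine comparison still goes the right way, and you explicitly leave that estimate open. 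It does not follow from the first-order conditions $r\cos\theta_i^\escaper = \cos\theta_i^\pursuer$ alone: those are separate local statements at $W_1$ and at $W_2$, whereas the needed inequality couples the two geodesics through the positions of $H$ and $Z$.

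The paper closes exactly this gap with a global argument you do not have. It first proves (via ``Ramchundra's intercept problem'') that a tie on the far edge $f$ forces the relation $\overline{ZP} = r\,\overline{HO}$, where $P$ is the bend vertex and $O$ is the foot of the perpendicular from $P$ to the escaper's path to $W_2$; it then combines the tie condition $\overline{ZW_1} = r\,\overline{HW_1}$ with its derivative along $e$ (first-order optimality at $W_1$) to get an exact coordinate identity, from which the angle inequality follows by a dot-product comparison when $y>0$, the equality case $\alpha=\beta=\pi$ is identified with $y=0$, and the remaining configuration $y<0$ is ruled out by Cauchy--Schwarz. Some analogue of the Ramchundra relation --- tying the pursuer's distance to the bend vertex to the escaper's distance to the perpendicular foot --- is indispensable, and your proposed ``direct case analysis'' of the equality case is not yet an argument. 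So the proposal is correct and even elegant on the straight-geodesic subcase, but has a genuine gap on the bent-geodesic subcase, which is the main content of the lemma.
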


\begin{proof}
Let the points of $W$ be $W_1 = (x,y)$ on edge $e$ and $W_2$ on edge $f$ and the $x$ axis, which meet at $P = (0,t)$ with $t > 0$. The escaper and pursuer are on opposite sides of at least one of the supporting lines of $e$ and $f$. We divide into two cases: either they are on opposite sides of both, or they are on opposite sides of just one.

If the escaper and pursuer are on opposite sides of both supporting lines of $e$ and $f$, as in Figure~\ref{fig:marginofvictoryeasy}, then $ZW_1 > HW_1$ because $ZW_1$ and $HW_1$ are the shortest paths for the pursuer and escaper, respectively, to $W_1$, both players reach $W_1$ in the same time, and the pursuer is faster, so the pursuer's path is longer. Therefore, $\angle W_1ZH < \angle W_1HZ$. Similarly, $\angle W_2ZH < \angle W_2HZ$, so $\angle W_1ZW_2 < \angle W_1HW_2$, as desired.

\begin{figure}[h]
	\centering
	\includegraphics[scale=0.7]{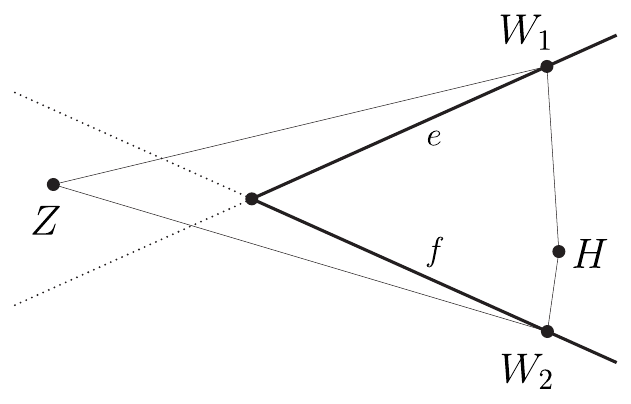}
	\caption{Coordinates and variables used in the easy case of the proof of Lemma~\ref{lem:angleinequality}.}
	\label{fig:marginofvictoryeasy}
\end{figure}

If the escaper and pursuer are on opposite sides of just one of the edges $e$ and $f$, then, without loss of generality, let them be on opposite sides of the supporting line of $f$. Let the escaper's shortest path to $W_2$ be horizontal, let the pursuer's position be $Z = (z, \zeta)$, and let the escaper's position be $H = (h,0)$, so $O = (0,0)$ is the foot of the perpendicular from $P$ to the escaper's shortest path to $W_2$, as in Figure~\ref{fig:marginofvictorymain}.

\begin{figure}[h]
	\centering
	\includegraphics[scale=0.3]{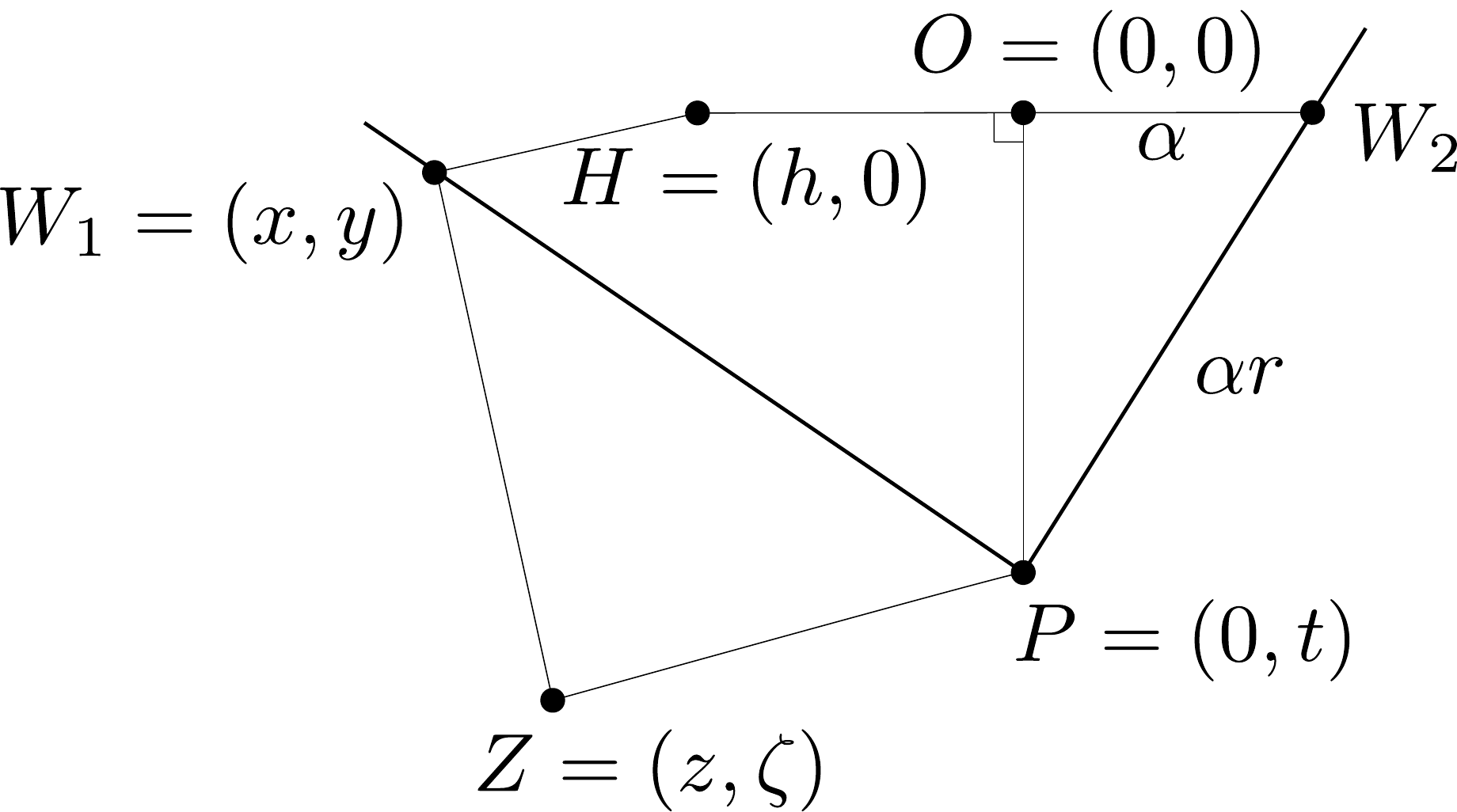}
	\caption{Coordinates and variables used in the hard case of the proof of Lemma~\ref{lem:angleinequality}.}
	\label{fig:marginofvictorymain}
\end{figure}

\paragraph{Ramchundra's intercept problem.}

We first show that, when the pursuer is at $P$, the escaper's shortest-path time to some point on $f$ equals the pursuer's shortest-path time if and only if the escaper is on $PO$. (This is the ``perpendicular to first sighting'' rule for naval pursuit, also known as ``Ramchundra's intercept problem''~\cite[Section~1.5]{Nahin-2007}.) When the escaper and pursuer both run by their shortest paths to $W_2$, and the pursuer is at $P$, let the escaper be at a point $O'$, and let $W'$ be any point on $f$. By the law of sines, $\frac{\overline{O'W'}}{\overline{PW'}} = \frac{\sin(\angle O'PW')}{\sin(\angle PO'W')}$, and $\angle O'PW'$ is fixed, so $\frac{\overline{O'W'}}{\overline{PW'}}$ is maximized over choices of $W'$ when $\angle PO'W' = \frac{\pi}{2}$. But if $O'$ is the point where the escaper is when the pursuer's at $P$ as both follow their shortest paths to $W_2$, then, by the choice of coordinate system, the escaper can tie only by running horizontally to $W_2$, so $W' = W_2$ and $O'$ is the point on the escaper's shortest path with $PO'W_2 = \frac{\pi}2$, that is, $O'=O$.

By the perpendicular to first sighting rule, if the escaper can escape, they can escape by running perpendicular to the direction to the pursuer. So, when the pursuer is at $P$, the escaper must be at the foot of the perpendicular from $P$ to the escaper path, that is, at $O$, and that's after the escaper has traveled a distance of $HO$ and the pursuer has traveled a distance of $ZP$, so $\frac{\overline{ZP}}{\overline{HO}} = r$.

\paragraph{Tied time to $W_1$, in coordinates.}

Writing $(\overline{ZW_1})^2 = (r \overline{HW_1})^2$ out in coordinates, $$(z-x)^2 + (\zeta - y)^2 = r^2[(x-h)^2+y^2].$$
Consider the function from a point $p$ on $e$ to the difference between the escaper's shortest-path time to $p$ and the pursuer's shortest-path time to $p$. At $W_1$, that difference is $0$, and near $W$, it's nonnegative, so the derivative is $0$ at $W_1$. In coordinates, the difference at a point near $W_1$ is $$(z-x- x d\ell)^2 + (\zeta - y - (y-t)d\ell)^2 = r^2[(x+xd\ell -h)^2+(y + (y-t)d\ell)^2],$$ so the derivative gives us $$(z-x)x + (\zeta - y)(y-t) = r^2[(h-x)x+(t-y)(y)].$$ Adding the equation $(\overline{ZW_1})^2 = (r \overline{HW_1})^2$ gives $$(z-x)z + (\zeta - y)(\zeta-t) = r^2[(h-x)h+ty],$$
an equation that will be useful in two cases:

If $y = 0$, then the angle between the escaper's shortest paths is $\pi$, so the conclusion of Lemma~\ref{lem:angleinequality} is satisfied.

If $y > 0$, then $yt > 0$, so $h(h-x) < h(h-x)+yt$, so $$\frac{h(h-x)}{\overline{HW_1}\overline{HO}} < \frac{h(h-x)+yt}{\overline{HW_1}\overline{HO}}.$$ By Ramchundra's intercept problem, $\frac{\overline{ZP}}{\overline{HO}} = r$, and by the definition of $W$, $\frac{\overline{ZW_1}}{\overline{HW_1}} = r$, so $$\frac{h(h-x)}{\overline{HW}\overline{HO}} < \frac{r^2(h(h-x)+yt)}{\overline{ZW_1}\overline{ZP}}.$$ By the tied time to $W_1$ in coordinates, that's $$\frac{h(h-x)}{\overline{HW_1}\overline{HO}} < \frac{(z-x)z + (\zeta - y)(\zeta-t)}{\overline{ZW_1}\overline{ZP}}.$$ Each of those numerators is a dot product:
$$\frac{HW_1 \cdot HO}{\overline{HW_1}\overline{HO}} < \frac{ZW_1 \cdot ZP}{\overline{ZW_1}\overline{ZP}}.$$
That is, $\cos(\angle OHW_1) < \cos(\angle W_1ZP)$, so $\angle OHW_1 > \angle W_1ZP$, as claimed.

If $y < 0$, we again have, by the tied time to $W_1$ in coordinates, that $$(z-x)z + (\zeta - y)(\zeta-t) = r^2[(h-x)h+ty].$$ By the Cauchy-Schwarz inequality, $$((z-x)z + (\zeta - y)(\zeta-t))^2 \le [(z-x)^2 + (\zeta - y)^2][z^2 + (\zeta - t)^2] = \overline{ZW_1}^2\overline{ZP}^2.$$ By Ramchundra's intercept problem and the fact that $\frac{\overline{ZW_1}}{\overline{HW_1}} = r$, that's $$r^4[(h-x)h+ty]^2 \le r^4h^2[(x-h)^2+y^2],$$ so $$h^2y^2 \ge 2(h-x)hty + t^2y^2 \ge 2(h-x)hty.$$ Since $h > 0$ and $y < 0$, $hy \le 2(h-x)t$. But $(h,0)$ is on the escaper side of edge $e$, so $hy > (h-x)t$, contradiction.

Therefore, in every surviving case, the conclusion of Lemma~\ref{lem:angleinequality} is satisfied.
\end{proof}

We know that, if the escaper moves straight toward a point $W_1$, there exists a pursuer strategy (a direction of pursuer movement) such that the pursuer does not fall behind in the race toward $W_1$. We now prove that that strategy is stable: if the escaper moves at an angle of $\theta$ from $W_1$, and the pursuer moves at an angle less than $\theta$ from its shortest path to $W_1$, then for a positive time, the invariant that the pursuer's distance to \emph{every} point on the edge containing $W_1$ remains at most $r$ times the escaper's distance.

\begin{lemma}\label{lem:marginofvictoryharderhaplo}
Suppose the escaper and pursuer are on the same side of (the supporting line of) an edge $f$ containing a point $W_2$ such that the the pursuer's shortest path to $W_2$ is $r$ times longer than the escaper's shortest path to $W_2$. If the escaper moves a short distance $dt$ at an angle of $\theta$ from its shortest path to $W_2$, and the pursuer moves a short distance $rdt$ at an angle at most $\theta$ from its shortest path to $W_2$, then, for every point on $f$, the pursuer's shortest-path time to it remains at most the escaper's shortest-path time to it.
\end{lemma}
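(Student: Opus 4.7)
The plan is to track the quadratic invariant $G(p) := r^2 |p - \Escaper|^2 - |p - \Pursuer|^2$ along the line containing $f$, where $\Escaper,\Pursuer$ are the current positions of escaper and pursuer (which have direct line-of-sight to $f$ by the configuration). By the running hypothesis of Lemma~\ref{MarginOfVictoryLemma} the pursuer's shortest-path time to every boundary point is at most the escaper's, so $G(p) \ge 0$ on $f$; by the hypothesis on $W_2$ we have $G(W_2)=0$. Parametrizing $p = W_2 + s\hat f$ and expanding,
$$G(s) \;=\; 2s\,\hat f\cdot\bigl[r^2(W_2-\Escaper)-(W_2-\Pursuer)\bigr] + (r^2-1)s^2,$$
with vanishing constant term. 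When $W_2$ is interior to $f$, $G(s)\ge 0$ on both sides of $s=0$ together with positive leading coefficient $r^2-1$ force the linear coefficient to vanish, so $G(s) = (r^2-1)s^2$ exactly; if $W_2$ is an endpoint of $f$, only one sign of $s$ is relevant and an analogous one-sided argument applies.

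The second step is to exploit the fact that the players move in straight lines with constant speeds $1$ and $r$, so the $dt^2$ coefficient $r^2|v_\Escaper|^2 - |v_\Pursuer|^2$ vanishes identically, giving the \emph{exact} (not merely first-order) identity
$$G^{\mathrm{new}}(p) \;=\; G(p) - 2\,dt\bigl[r^2 v_\Escaper\cdot(p-\Escaper) - v_\Pursuer\cdot(p-\Pursuer)\bigr].$$
Evaluating the bracket at $p = W_2$ yields $r^2 |W_2-\Escaper|(\cos\theta - \cos\alpha_\Pursuer) \le 0$ by the angle hypothesis $|\alpha_\Pursuer|\le\theta$; write its absolute value as $|A|\ge 0$. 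The coefficient of $s$ in the bracket is $B := (r^2 v_\Escaper - v_\Pursuer)\cdot\hat f$. Substituting $G(s)=(r^2-1)s^2$ gives
$$G^{\mathrm{new}}(s) \;=\; (r^2-1)s^2 - 2B s\,dt + 2|A|\,dt,$$
a quadratic in $s$ with positive leading coefficient and minimum value $2|A|\,dt - B^2 dt^2/(r^2-1)$; when $|A|>0$ this is nonnegative for every $dt \le 2|A|(r^2-1)/B^2$, which yields the lemma with an explicit positive time window.

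The main obstacle is the degenerate boundary case $|\alpha_\Pursuer|=\theta$, which makes $|A|=0$ and collapses the first-order margin. To handle this I would use the pursuer's remaining freedom in choosing the sign of its perpendicular deviation so as to force $B=0$ as well, matching the escaper's lateral APLO motion exactly; the existence of such a matching choice is the same kind of content as Lemma~\ref{lem:angleinequality}, where equality of the two angles arose only in the degenerate collinear configuration. With $B=0$ the quadratic reduces to $(r^2-1)s^2 \ge 0$, giving the conclusion for arbitrarily long time within the straight-line motion regime.
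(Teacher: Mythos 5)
There is a genuine gap, and it is structural rather than computational. Your invariant $G(p)=r^2|p-H|^2-|p-Z|^2$ measures the race using the pursuer's \emph{Euclidean} distance to points of $f$, but in the configuration of this lemma that is not the pursuer's shortest-path distance. Since the escaper is interior to the polygon and the pursuer lives on $\partial P$ or in the exterior, ``escaper and pursuer on the same side of the supporting line of $f$'' forces the pursuer to sit behind the other edge $e$ of the local wedge; a straight segment from $Z$ to any point of $f$ would cross the polygon's interior, so the pursuer's geodesic to $p\in f$ wraps around the vertex $P$ shared by $e$ and $f$ and has length $|Z-P|+|P-p|$. Hence the tie hypothesis reads $|Z-P|+|P-W_2|=r\,|H-W_2|$, and by the triangle inequality $|Z-W_2|<r\,|H-W_2|$ strictly unless $Z$, $P$, $W_2$ are collinear: your constant term $G(W_2)$ does not vanish, the tangency argument that kills the linear coefficient collapses, and, more fundamentally, preserving the Euclidean Apollonius condition $G\ge 0$ is strictly weaker than preserving the geodesic race condition, so your conclusion would not imply the lemma even if the computation went through. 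Your quadratic-form/tangency scheme is essentially the argument the paper uses for the \emph{opposite-sides} configuration (Lemma~\ref{lem:marginofvictoryoffangle}), where the pursuer does reach $f$ by straight lines. For the same-side case the paper instead uses Ramchundra's intercept rule to collapse the race against every point of $f$ into a single scalar invariant --- the pursuer's distance to the vertex endpoint of $f$ equals $r$ times the escaper's distance to the fixed perpendicular $\ell$ dropped from that vertex onto the escaper's shortest path --- and a one-line dot-product computation shows the angular hypothesis preserves that invariant differentially; being a single inequality rather than a one-parameter family, it suffers no loss in the boundary case where the two angles coincide.

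Two further issues would persist even after correcting the metric. Your positive time window $dt\le 2|A|(r^2-1)/B^2$ degenerates exactly when the pursuer's angle equals $\theta$, which the hypothesis allows; the proposed repair --- letting the pursuer tune its lateral deviation so that $B=0$ --- changes the lemma's quantifier from ``for every pursuer direction in the cone'' to ``for some direction.'' That weaker statement is not what Lemma~\ref{MarginOfVictoryLemma} consumes: there the pursuer must pick a single direction lying simultaneously in the cones associated with \emph{both} edges $e$ and $f$ (this is the content of Lemma~\ref{lem:angleinequality}), so the direction cannot be tuned separately to zero out $B$ for $f$ without possibly violating the constraint for $e$. A correct proof must establish the conclusion for every direction within the angular bound, which the paper's single-invariant formulation does automatically.
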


\begin{proof}
Let the end of $f$ to which the pursuer runs be $(0,0)$; let the perpendicular $\ell$ from $(0,0)$ to the escaper's shortest path be at an angle $\theta_\ell$ (so points $(x,y)$ on it have $x \sin \theta_\ell - y \cos \theta_\ell = 0$), let the pursuer's position be $(a,b)$, and let the escaper's position be $(c,d)$, all as in Figure~\ref{fig:marginofvictorywrapup}. Then the escaper's distance to $\ell$ is $c \sin \theta_\ell - d \cos \theta_\ell$ and the pursuer's distance to $(0,0)$ is $\sqrt{a^2+b^2}$. By Ramchundra's intercept problem, if the escaper is on $\ell$ at the same time as the pursuer reaches $(0,0)$, the escaper cannot win a race to anywhere on $f$. So, the pursuer's distance to $(0,0)$ is currently $r$ times the escaper's distance to $\ell$, and it suffices for the pursuer to maintain that invariant.

\begin{figure}[h]
	\centering
	\includegraphics[scale=0.4]{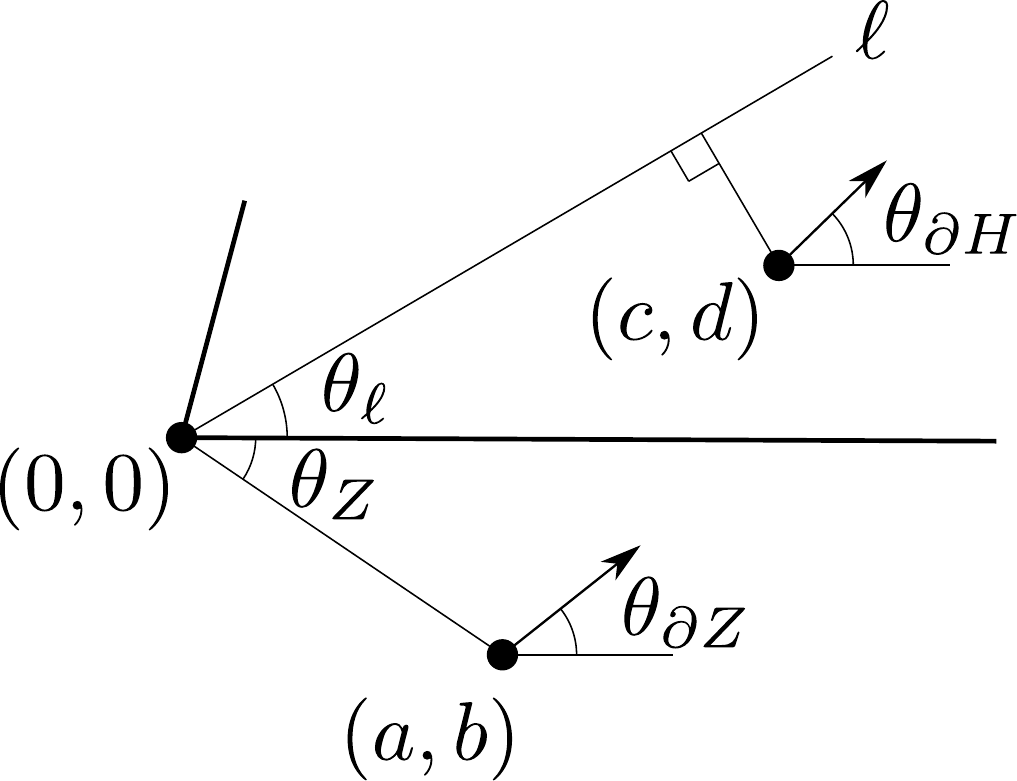}
	\caption{Coordinates and variables used in the proof of Lemma~\ref{lem:marginofvictoryharderhaplo}.}
	\label{fig:marginofvictorywrapup}
\end{figure}

Suppose the escaper moves in a direction $\theta_{\partial H}$; that is, $(\partial c, \partial d) = (\cos \theta_{\partial H}, \sin \theta_{\partial H})$. (If the escaper moves at less than full speed, the pursuer can reduce its speed proportionally.) The direction directly toward $\ell$ is $\ell + \frac{\pi}2$, so the escaper's angle from that direction is $|\frac{\pi}2 + \theta_\ell - \theta_{\partial H}|$. If the pursuer's angle from $(0,0)$ is $\theta_Z$ (so $(a,b) = (\sqrt{a^2+b^2}\cos\theta_Z, \sqrt{a^2+b^2}\sin\theta_Z)$ and the pursuer's direction toward $(0,0)$ is $\pi + \theta_Z$), we will have the pursuer move in any direction $\theta_{\partial Z}$ (that is, $(\partial a, \partial b) = (r\cos \theta_{\partial Z}, r\sin \theta_{\partial Z})$) such that $|\theta_{\partial Z} - (\pi + \theta_Z)| \le |\frac{\pi}2 + \theta_\ell - \theta_{\partial H}|$. Then $\cos(\pi + \theta_Z - \theta_{\partial Z}) \ge \cos(\frac{\pi}2 + \theta_\ell - \theta_{\partial H})$, so $\cos\theta_Z \cos\theta_{\partial Z} + \sin \theta_Z\sin\theta_{\partial Z} \le \cos\theta_{\partial H}\sin\theta_\ell - \sin\theta_{\partial H}\cos\theta_\ell$, or $a \partial a + b\partial b \le r\sqrt{a^2+b^2} (\sin\theta_\ell \partial c - \cos\theta_\ell\partial d)$. The escaper's distance to $\ell$ is $c \sin \theta_\ell - d \cos \theta_\ell$ and the pursuer's squared distance to $(0,0)$ is $\sqrt{a^2+b^2}$, so $(c \sin \theta_\ell - d \cos \theta_\ell)r = \sqrt{a^2+b^2}$, and $a \partial a + b\partial b \le r^2(c \sin \theta_\ell - d \cos \theta_\ell) (\sin\theta_\ell \partial c - \cos\theta_\ell\partial d)$. The left side is the derivative of the pursuer's squared distance to $(0,0)$ and the right side is $r^2$ times the derivative of the escaper's squared distance to $\ell$, so the pursuer's shortest-path time to $(0,0)$ decreases at least as fast as the escaper's shortest-path time to $\ell$, as desired.
\end{proof}

\begin{lemma}\label{lem:marginofvictoryoffangle}
Suppose the escaper and pursuer are on opposite sides of (the extensions of) an edge $e$ containing a point $W_1$ such that the the pursuer's shortest path to $W_1$ is $r$ times longer than the escaper's shortest path to $W_1$. If the escaper moves a short distance $dt$ at an angle of $\theta$ from its shortest path to $W_1$, and the pursuer moves a short distance $rdt$ at an angle at most $\theta$ from its shortest path to $W_1$, then, for every point on $e$, the pursuer's shortest-path time to it remains at most the escaper's shortest-path time to it.
\end{lemma}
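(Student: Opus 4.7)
The plan is to reduce Lemma~\ref{lem:marginofvictoryoffangle} to the already-established Lemma~\ref{lem:marginofvictoryharderhaplo} by a reflection argument. The key observation is that when the escaper $H$ and the pursuer $Z$ lie on opposite sides of the supporting line of the edge $e$, reflecting $Z$ across that line produces a fictitious point $Z'$ on the \emph{same} side as $H$ while preserving every Euclidean distance to points of $e$.

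Concretely, I would set up coordinates with the supporting line of $e$ on the $x$-axis and $W_1$ at the origin, so that $H=(c,d)$ with $d>0$ and $Z=(a,b)$ with $b<0$, and define $Z'=(a,-b)$. Then three elementary facts need to be verified. First, for every $W=(x,0)$ on the supporting line of $e$, $|Z'W|=|ZW|=\sqrt{(x-a)^2+b^2}$; in particular $|Z'W_1|=|ZW_1|=r\,|HW_1|$, so $Z'$ satisfies the tied-race hypothesis with $H$. Second, reflection across the $x$-axis is an orthogonal transformation fixing $W_1$, so if the actual pursuer's velocity $v$ makes angle at most $\theta$ with $\overrightarrow{ZW_1}$, then the reflected velocity $Rv$ makes angle at most $\theta$ with $\overrightarrow{Z'W_1}$, at the same speed. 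Third, $Z'$ lies on the same side of $e$ as $H$.

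These three facts are exactly the hypotheses of Lemma~\ref{lem:marginofvictoryharderhaplo} applied to $H$ and the fictitious reflected pursuer $Z'$ (with $e$ in place of $f$ and $W_1$ in place of $W_2$). Its conclusion says that throughout the short motion, the reflected pursuer's time to every point on $e$ stays at most the escaper's time. Because reflection commutes with constant-velocity translation, the reflected pursuer's position at every later moment is precisely the reflection of the actual pursuer's position, so their distances to every point of $e$ continue to coincide; the conclusion therefore transfers verbatim to the actual pursuer, giving Lemma~\ref{lem:marginofvictoryoffangle}.

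The main obstacle I anticipate is the distinction between Euclidean distance and shortest-path distance in $d_\pursuer$: the reflection argument manipulates straight-line distances, and moreover $Z'$ is fictitious and may lie inside $P$ (where the real pursuer cannot go). One must argue that in the local neighborhood where the argument takes place, no other piece of $\partial P$ obstructs the straight-line segments from $Z$ (or equivalently $Z'$) to points of $e$, so that $d_\pursuer$-distances and Euclidean distances agree for the quantities being compared. This is precisely the purpose of condition~(\ref{edge to edge}) of Lemma~\ref{MarginOfVictoryLemma}: because $W_1$ lies within $\epsilon$ of $H$ and no disk of radius $2\sqrt\epsilon$ meets two non-adjacent edges, a neighborhood of $e$ large enough to contain the entire short motion is free of any other portion of $\partial P$, so Euclidean pursuer distances are bona fide shortest-path distances and the reflection reduction is clean.
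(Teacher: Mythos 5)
Your reflection idea is geometrically sound as far as it goes: reflecting $Z$ across the supporting line of $e$ does preserve Euclidean distances to every point of that line and preserves the angle conditions, so it would reduce the opposite-side problem to a \emph{same-side, direct-distance} problem. The gap is that Lemma~\ref{lem:marginofvictoryharderhaplo} is not that problem. In the configuration that lemma addresses, the pursuer is outside the polygon but on the interior side of the supporting line of $f$, which forces its shortest paths to points of $f$ to wrap around an endpoint of $f$; its proof accordingly fixes ``the end of $f$ to which the pursuer runs'' at the origin and maintains the Ramchundra invariant that the pursuer's distance \emph{to that endpoint} stays $r$ times the escaper's distance to the perpendicular line $\ell$. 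Your fictitious reflected pursuer $Z'$ reaches points of $e$ by direct segments, not via an endpoint, so the quantity that must be controlled is entirely different: it is the tangency of the Apollonius circle $\{d(\cdot,Z)\geq r\,d(\cdot,H)\}$ to the supporting line, which depends on the full planar position of $Z'$ and not merely on its distance to one endpoint. Lemma~\ref{lem:marginofvictoryharderhaplo} neither states nor proves the direct-distance claim you need, so the citation does not discharge the key step; maintaining the endpoint invariant says nothing about, e.g., points of $e$ near the escaper's projection when $Z'$ hugs the edge far from that endpoint.

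The paper therefore proves Lemma~\ref{lem:marginofvictoryoffangle} directly: it characterizes ``pursuer ties or wins the race to every point of $e$'' as $r^2d-b\geq r\sqrt{(a-c)^2+(b-d)^2}$ via the Apollonius circle (normalizing $a=r^2c$ so the tangency point is $W_1$), and then a derivative computation using the angle hypothesis shows this inequality is preserved under the prescribed motions. If you want to keep the reflection framing, you would still have to carry out essentially that same Apollonius/derivative computation for the reflected same-side configuration, at which point the reflection buys nothing. Your closing remarks about Euclidean versus $d_\pursuer$ distances and condition~\ref{edge to edge} of Lemma~\ref{MarginOfVictoryLemma} are reasonable, but they address a secondary issue, not the missing core argument.
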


\begin{proof}
Let edge $e$ be the $x$-axis, let the pursuer's position be $(a,b)$, and let the escaper's position be $(c,d)$, as in Figure~\ref{fig:marginofvictoryoffangle}. Also, place the origin such that $a = r^2 c$ (which may be a translation from the coordinates used in the proof of the previous lemma); this is possible since $r^2 \neq 1$.

\begin{figure}[h]
	\centering
	\includegraphics[scale=0.9]{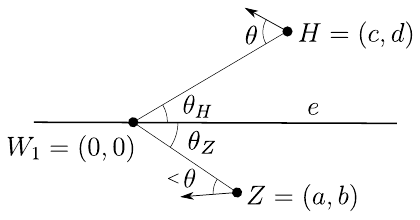}
	\caption{Coordinates and variables used in the proof of Lemma~\ref{lem:marginofvictoryoffangle}.}
	\label{fig:marginofvictoryoffangle}
\end{figure}

We first claim that the pursuer's shortest-path distance to every point on $e$ is at most $r$ times the escaper's shortest-path distance if and only if $r^2d - b \ge r\sqrt{(a-c)^2 + (b-d)^2}$, with equality if and only if there's a point on $e$ for which the pursuer's shortest-path distance equals $r$ times the escaper's shortest-path distance. Indeed, a point $(x,y)$ has the pursuer's plane distance more than $r$ times the escaper's plane distance if and only if $(x-a)^2 + (y-b)^2 \ge r^2\left[(x-c)^2 + (y-d)^2\right]$, or $\left(x - \frac{r^2c-a}{r^2-1}\right)^2 + \left(y - \frac{r^2d-b}{r^2-1}\right)^2 \le \frac{r^2}{(r^2-1)^2}\left((c-a)^2 + (d-b)^2\right)$. That describes a circle of radius $\frac{r}{r^2-1} \sqrt{(c-a)^2 + (d-b)^2}$ centered at $(\frac{r^2c-a}{r^2-1},\frac{r^2d-b}{r^2-1})$, which is strictly above the $x$-axis if $r^2 d-b > \sqrt{(c-a)^2 + (d-b)^2}$ and is tangent to it at $(\frac{r^2c-a}{r^2-1},0)$ if they are equal, as claimed.

Since we chose $r^2c=a$, the point of tangency (that is, $W_1$) is $(0,0)$. Let $\theta_H$ and $\theta_Z$ be the angles from the origin to $H$ and $Z$, respectively, so the escaper's direction to the origin is $\pi+\theta_H$ and the pursuer's is $\pi+\theta_Z$. 
Suppose the escaper moves in a direction $\theta_{\partial H}$; that is, $(\partial c, \partial d) = (\cos \theta_{\partial H}, \sin \theta_{\partial H})$. (If the escaper moves at less than full speed, the pursuer can reduce its speed proportionally.) The direction directly toward $(0,0)$ is $\pi + \theta_H$, so the escaper's angle from that direction is $|\pi + \theta_H - \theta_{\partial H}|$. We will have the pursuer move in any direction $\theta_{\partial Z}$ (that is, $(\partial a, \partial b) = (r\cos \theta_{\partial Z}, r\sin \theta_{\partial Z})$) such that $|\pi + \theta_Z - \theta_{\partial Z}| \le |\pi + \theta_H - \theta_{\partial H}|$. Then $\cos(\theta_{\partial Z} -\theta_Z) \le \cos(\theta_{\partial H} - \theta_H)$. Also, $\sqrt{a^2+b^2} = r\sqrt{c^2+d^2}$, so $\sqrt{a^2+b^2}\cos(\theta_{\partial Z} -\theta_Z) \le r\sqrt{c^2+d^2}\cos(\theta_{\partial H} - \theta_H)$. Those are the coordinate expansions of dot products: $a \cos \theta_{\partial Z} + b \sin \theta_{\partial Z} \le r^2 \left[c \cos \theta_{\partial H} + d \sin \theta_{\partial H}\right]$. Plugging in $\partial a = r \cos \theta_{\partial Z}$ and so on gives $a \partial a + b \partial b \le r^2 \left[ c \partial c + d \partial d\right]$. Plugging in $a = r^2c$, multiplying by $1-r^2$ (which is negative), and rearranging gives $(r^2d-b)(r^2 \partial d - \partial b) \ge r^2 (a-c)(\partial a - \partial c) + r^2 (b-d)(\partial b - \partial d)$. The left side is the derivative of $(r^2d-b)^2$ and the right side is the derivative of $\left(r\sqrt{(a-c)^2 + (b-d)^2}\right)$, so the chosen direction of pursuer movement maintains $r^2 - d \ge r\sqrt{(a-c)^2 + (b-d)^2}$, as desired.
\end{proof}

Finally, we can complete the proof of Lemma~\ref{MarginOfVictoryLemma},
by describing an APLO-like strategy for the pursuer to win $G_{\epsilon_G}$ (for all $\epsilon_G > 0$) as long as the escaper moves a distance at most $\eps$, contradicting the assumption that the escaper could win $G$ with at most $\eps$ more movement. To win $\epsilon_G$, the pursuer will use an $\epsilon_G$-oblivious strategy, so it can respond to at least $\epsilon_G$'s worth of escaper movement, and so define a direction of escaper movement for each time step. By the previous two lemmas, if the pursuer can, at all times, move in a direction closer to its shortest path to each of $W_1$ and $W_2$ than the escaper's direction of movement is to its shortest path to each of $W_1$ and $W_2$, the escaper cannot win a race to any point on either $e$ or $f$. By Lemma~\ref{lem:angleinequality}, the pursuer can do so, so the pursuer wins $G_{\epsilon_G}$.

\xxx{This proof could probably use some diagrams.}

So as long as the escaper and pursuer stay within that circle of radius $2\sqrt{\epsilon_0}$, the escaper cannot win, contradicting the assumption that, a moment later, the escaper could win by running straight a distance at most~$\epsilon$.

In every surviving case, the escaper can win $G_{\epsilon^3}$ with speed ratio $r\frac{1}{1+\epsilon}$, as desired.
\end{proof}

\subsection{Algorithm}

\begin{theorem}[pseudopolynomial-time approximation scheme] \label{pseudoPTAS}
  Given a polygon with integer vertex coordinates $\in [0,N]$,
  defining the escaper domain $D_\escaper$ as its interior and boundary,
  the exit set $X$ as its boundary,
  and the pursuer domain $D_\pursuer$ as its boundary and optionally its exterior,
  there is an $(N/\epsilon)^{O(1)}$-time approximation algorithm for
  $\epsilon$-approximating the critical speed ratio $r^*$ in~$G$:
  the algorithm computes a speed ratio $r$ such that
  $(1-\epsilon) r \leq r^* \leq (1+\epsilon) r$.
\end{theorem}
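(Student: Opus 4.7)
The plan is to combine three ingredients already developed in the paper: the convex-hull restriction (Lemma~\ref{ConvexHullLemma}) to keep everything bounded, the $(\delta,\gamma)$-discretization (Theorem~\ref{discrete -> continuous}) to reduce to a finite game solvable by backward induction, and the margin-of-victory lemma (Lemma~\ref{MarginOfVictoryLemma}) to translate a pursuer win in $G_\epsilon$ into a pursuer win in $G$ at a slightly higher speed.

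First I would set up the search range. Using Theorem~\ref{ConstantApproximationTheorem}, compute a constant-factor upper bound $R_{\max}$ on $r^*$ in time polynomial in the input (as noted in the analysis following Lemma~\ref{MarginOfVictoryLemma}, this bound is $\mathrm{poly}(N)$, with bit-length $\mathrm{poly}(\log N)$); similarly a polynomial lower bound $R_{\min} > 0$ from Theorem~\ref{lower bound}. I will binary-search over $r$ in $[R_{\min},R_{\max}]$, maintaining an interval $[r_-,r_+]$ with $r_+/r_- \leq 1+\epsilon$, which takes $O(\log(R_{\max}/(R_{\min}\epsilon)))$ iterations, i.e.\ $\mathrm{poly}(\log(N/\epsilon))$ iterations.

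Second, for each candidate $r$ I need a decision procedure that, up to a multiplicative $(1+\epsilon')$ slack with $\epsilon' = \Theta(\epsilon)$, tells me whether the escaper or the pursuer wins $G(r)$. Apply Lemma~\ref{ConvexHullLemma} so both domains are contained in a Euclidean box of side $O(N)$. Choose parameters $\delta = c_1\,\epsilon^3/r$ and $\gamma = c_2\,\epsilon\,\delta$ for small constants $c_1,c_2$, and construct $\hat G_{\delta,\gamma}(r)$. By the finite-rectifiability sampling of Section~\ref{Discrete Game}, $|V_\escaper|,|V_\pursuer| = O((N/\gamma)^2) = (N/\epsilon)^{O(1)}$, and the edges $E_\escaper, E_\pursuer$ are determined by geodesic distances in the polygon and in its complement, both computable in polynomial time per pair (e.g.\ via shortest-path maps). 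The discrete game is an open Gale--Stewart game on a finite state space $V_\escaper \times V_\pursuer$, so its winner is computed by the standard attractor/backward-induction fixed-point iteration in time polynomial in $|V_\escaper|\cdot|V_\pursuer|$, hence $(N/\epsilon)^{O(1)}$.

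Third, interpret the discrete outcome. By Theorem~\ref{discrete -> continuous}, if the escaper wins $\hat G_{\delta,\gamma}(r)$, then the escaper wins the continuous $G_{r\delta/2}(r - 2\gamma/\delta)$, hence wins $G(r - 2\gamma/\delta) = G(r(1-O(\epsilon^2)))$, certifying $r^* > r(1-O(\epsilon^2))$. If the pursuer wins $\hat G_{\delta,\gamma}(r)$, then the pursuer wins $G_{(5/2)r\delta}(r/(1-2\gamma/\delta))$; with the choice $\delta = c_1\epsilon^3/r$, the winning margin is $(5/2)r\delta = O(\epsilon^3)$, which is at most $\epsilon''^3$ for some $\epsilon'' = \Theta(\epsilon)$. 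Applying the contrapositive of Lemma~\ref{MarginOfVictoryLemma} (after verifying that $\epsilon''$ meets the three preconditions via the explicit polynomial lower bound $\epsilon_0$ computed following that lemma, enlarging the binary search's slack by a constant if needed), the pursuer then wins $G(r(1+\epsilon'')/(1-2\gamma/\delta)) = G(r(1+O(\epsilon)))$, certifying $r^* \leq r(1+O(\epsilon))$. Thus each discrete call gives a two-sided multiplicative certificate that drives the binary search, yielding $(1\pm\epsilon)$-approximation of $r^*$ after renaming $\epsilon$ by a constant factor.

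The main obstacle is calibrating the two approximation losses so that they both scale like $1+\epsilon$ while keeping the grid polynomial. The $2\gamma/\delta$ slack from discretization is linear in $\gamma/\delta$, but the margin-of-victory slack comes from $\epsilon'' = \Theta((r\delta)^{1/3})$, forcing $\delta = O(\epsilon^3/r)$ and therefore a grid of size $\Theta(N^2 r^2/\epsilon^8)$, which is still polynomial but only because the margin-of-victory bound $\epsilon^3$ is polynomial in $\epsilon$. A secondary subtlety is checking Lemma~\ref{MarginOfVictoryLemma}'s preconditions (an interior point far from $\partial P$, no two nonadjacent edges within $2\sqrt\epsilon$, and $\epsilon < 1/(2(r^*)^2)$); the first two are handled by the explicit polynomial lower bound $\epsilon_0$ described after Lemma~\ref{MarginOfVictoryLemma}, and the third by using $R_{\max}$ as a surrogate for $r^*$ since $\epsilon < 1/(2R_{\max}^2)$ suffices. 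Substituting this further tightening of $\epsilon$ (still $\mathrm{poly}(1/\epsilon,N)$) back into the grid size preserves the overall $(N/\epsilon)^{O(1)}$ running time.
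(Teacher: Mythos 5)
Your proposal is correct and follows essentially the same route as the paper's proof: binary search over $r$, restriction to the convex hull, solving the $(\delta,\gamma)$-discretized game by backward induction, and combining Theorem~\ref{discrete -> continuous} with the margin-of-victory lemma to turn the discrete winner into a two-sided certificate (the paper runs the second direction forward --- escaper wins $G((1+\epsilon)r)$ implies, via Lemma~\ref{MarginOfVictoryLemma} and Corollary~\ref{can't both win}, that the escaper wins the discrete game --- whereas you state its contrapositive, which is the same argument). The only point worth tightening is that ``pursuer wins $G_\mu$'' yields ``escaper does not win $G_{2\mu}$'' (not $G_\mu$) via the unique-playthrough machinery, a factor of $2$ your constants can absorb.
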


\begin{proof}
At the top level, our algorithm uses a binary search to evaluate~$r^*$.
To this end, first we give easily computable bounds on the range of~$r^*$.
\xxx{The following bounds are all exploiting the polygon case.  We should be
  able to do it for bounded regions...}
As a lower bound, $r^* \geq 1$; otherwise, the escaper can win along
a single edge, as in the halfplane analysis (Theorem~\ref{HalfplaneTheorem}).
As an upper bound, $r^* \leq 10.89898 \max_{p,q \in \Exit} \frac{d_\pursuer(p,q)}{d_\escaper(p,q)}$
by Theorem~\ref{ConstantApproximationTheorem}.  Instead of computing this
quantity directly, we can easily compute an upper bound as described in
Point~\ref{easy upper bound for r^*} after Lemma~\ref{MarginOfVictoryLemma}.
As both quantities are pseudopolynomial, we get an interval containing $r^*$
of pseudopolynomial length.  The overhead for binary search will be a factor
logarithmic in this interval length, which is even polynomial.

It thus remains to give an approximate binary decider for binary search:
given a speed ratio $r$ (from binary search), decide in pseudopolynomial time
whether $r^* < (1-\epsilon) r$ or $r^* > (1+\epsilon) r$,
with the freedom to return either answer if
$(1-\epsilon) r \leq r^* \leq (1+\epsilon) r$.

A key ingredient is that we can compute the winner for the discrete game
$\hat G_{\delta,\gamma}(r)$ for any $\delta,\gamma,r$ in pseudopolynomial time.
First, in the exterior model\xxx{generalize}, we restrict to the convex hull
of $R_\pursuer$ by Lemma~\ref{ConvexHullLemma}.  
Then we compute the graph with vertices $V$ and edges $E_\escaper \cup E_\pursuer$.
This graph has pseudopolynomial size, as the area of the convex hull of $R_\pursuer$
and the perimeter of $R_\escaper$ are both pseudopolynomial.
Thus the number of states --- consisting of the current escaper and pursuer
positions, the previous escaper and pursuer positions to check the win condition,
and whose move is next --- is also pseudopolynomial.  
We can thus compute all winning positions in the discrete game by marking all
game states for which the escaper immediately wins (being adjacent to a vertex
$\exit$ of $B_\exit$ for two moves such that the pursuer still is not adjacent to~$\exit$),
then repeatedly, mark any game state as an escaper win if either it is the
escaper's turn and they can move to any game state already marked an escaper win, or
it is the pursuer's turn and every game state they can move to is already
marked an escaper win.
After at most as many rounds as the pseudopolynomial number of game states,
every game state from which the escaper wins will be so marked because,
at each round, either at least one game state not previously marked as an escaper
win will be so marked or no new game states will be marked and every following
round will be the same.
(This is essentially the finite case of the open determinacy theorem
\cite{Gale-Stewart-1953} exploited in Lemma~\ref{discrete game winner}.)
Then the escaper wins the discrete game if and only if there is an escaper starting
position $s_\escaper$ such that, for every pursuer starting position $s_\pursuer$, the state
with the escaper at $s_\escaper$, the pursuer at $s_\pursuer$, and the pursuer to move
is marked as an escaper win.

First suppose that the discrete game $\hat G_{\delta,\gamma}(r)$
has an escaper winning strategy, where
$\gamma < \min\{{1 \over 4}, {r \over 2}, {1 \over 2} \epsilon r\} \delta$.
By Theorem~\ref{discrete -> continuous},
the continuous game $G_\epsilon(r')$ and thus $G(r')$
has an escaper winning strategy where
$r' = r - 2 {\gamma \over \delta} > r - \epsilon r = (1-\epsilon) r$,
so $r^* > (1-\epsilon) r$.

On the other hand, if $r^* > (1+\epsilon) r$,
then the escaper wins $G((1+\epsilon) r)$.
By Lemma~\ref{MarginOfVictoryLemma}, there is an escaper winning strategy for
$G_{\hat \epsilon^3}(\frac{1+\epsilon}{1+\hat \epsilon} r)$
for any $\hat \epsilon \leq \epsilon_0$,
where $\epsilon_0$ is computed according
to the algorithm after Lemma~\ref{MarginOfVictoryLemma}.
By Corollary~\ref{can't both win},
there is no pursuer winning strategy for the same game.
Let $\delta = 2 \epsilon_0^3 / r$,
so that $\hat \epsilon^3 \leq {1 \over 2} r \delta$.
By the contrapositive of Theorem~\ref{discrete -> continuous},
the discrete game $\hat G_{\delta,\gamma}(\frac{1+\epsilon}{1+\hat \epsilon} (1 - {2 \gamma \over \delta}) r)$ has no pursuer winning strategy.
By Lemma~\ref{discrete game winner}, the same game has an escaper winning strategy.
Because decreasing the speed ratio only removes pursuer moves,
$\hat G_{\delta,\gamma}(r)$ has an escaper winning strategy provided
$\frac{1+\epsilon}{1+\hat \epsilon} (1 - {2 \gamma \over \delta})
\geq 1$.
If we further constrain that
$\gamma \leq \delta({\epsilon \over 4} (1 + \hat \epsilon) - {\hat \epsilon \over 2})$
(which we can make positive by setting $\hat \epsilon$ small enough),
then
${2 \gamma \over \delta} \leq {\epsilon \over 2} (1 + \hat \epsilon) - \hat \epsilon$,
so
$1 - {2 \gamma \over \delta} \geq 1 - {\epsilon \over 2} (1 + \hat \epsilon) + \hat \epsilon
= (1 + \hat \epsilon) (1 - {\epsilon \over 2})$,
so $\frac{1+\epsilon}{1+\hat \epsilon} (1 - {2 \gamma \over \delta}) \geq
(1+\epsilon) (1 - {\epsilon \over 2}) = 1 + {\epsilon \over 2} - {\epsilon^2 \over 2} \geq 1$
provided $\epsilon \leq 1$.

Therefore, assuming $r^*$ is not in $((1-\epsilon) r, (1+\epsilon) r)$,
we have $r^* > (1+\epsilon) r$ if and only if
$\hat G_{\delta,\gamma}(r)$ has an escaper winning strategy.
So we can compute the winner of $\hat G_{\delta,\gamma}(r)$ to decide whether
$r^* > (1+\epsilon) r$ or $r^* < (1-\epsilon) r$,
enabling the binary search.
\end{proof}

For a related pursuit--evasion problem (can a polyhedral evader reach a goal
point while avoiding a polyhedral pursuer, given maximum speeds for each?),
Reif and Tate \cite{Reif-Tate-1993} give what might seem like a
pseudopolynomial-time approximation scheme.
Specifically, they give an $(n/\epsilon)^{O(1)}$-time algorithm to find an
evasion strategy if there is an ``$\epsilon$-safe'' evasion strategy
that stays $\epsilon$ away from the pursuer and all obstacles.
They also prove this result with a similar approach to discretizing the
continuous game.  However, to turn such an algorithm into an approximation
algorithm for computing the critical speed ratio requires a relation between
tweaking the speed ratio and guaranteeing a safety distance.  This relation is
precisely the point of our margin-of-victory Lemma~\ref{MarginOfVictoryLemma},
which is the bulk of our proof.

\section{NP-hardness for Two Players in 3D}
\label{sec:NPhard}
\label{appendix:NPhard}

  In this section,
we prove that the pursuit--escape problem is NP-hard
for polyhedral domains in 3D.
Our proof is an easy extension of the famous result by Canny and Reif
\cite{Canny-Reif-1987} that it is weakly NP-hard to find shortest paths in 3D
amidst polyhedral obstacles.

\begin{theorem} \label{thm:NPhard}
  It is weakly NP-hard to calculate the critical speed ratio $r^*$
  for a pursuit-escape problem with polyhedral domains in 3D,
  with or without specified starting positions,
  and even if $D_\escaper$ and $D_\pursuer$ are disjoint
  except at $\Exit$ which consists of at most two points.
\end{theorem}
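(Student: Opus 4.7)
The plan is to reduce from the weakly NP-hard problem of computing Euclidean shortest paths amidst polyhedral obstacles in 3D, due to Canny and Reif~\cite{Canny-Reif-1987}. Given a Canny--Reif instance with polyhedral obstacles, source $s$, target $t$, and threshold $L^*$, I will build a pursuit--escape instance whose critical speed ratio $r^*$ equals $L'/L$, where $L'$ is the Canny--Reif shortest-path length and $L$ is a rational we read off the construction. Comparing the computed $r^*$ against the rational threshold $L^*/L$ then decides the Canny--Reif instance.

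The construction sets $p = s$, $q = t$, and $\Exit = \{p,q\}$. The pursuer domain $D_\pursuer$ is a bounded polyhedral region containing the Canny--Reif obstacles, arranged so that $d_\pursuer(p,q) = L'$. The escaper domain $D_\escaper$ is a simple polygonal arc from $p$ to $q$ routed through a separate region of $\mathbb{R}^3$ (for example, well above the obstacle cluster) so that it meets the pursuer region only at $p$ and $q$; its total length $L$ is a rational fixed during the construction. To enforce $D_\escaper \cap D_\pursuer = \{p,q\}$ exactly, I carve a thin open tubular neighborhood of the interior of $D_\escaper$ out of $D_\pursuer$, choosing the tube radius small enough that it does not intersect the Canny--Reif shortest path (which lives near the obstacles), so $d_\pursuer(p,q)$ remains equal to $L'$. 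Everything stays polynomial in the Canny--Reif input, and the specified-starting-position variant additionally fixes $s_\escaper = s_\pursuer = p$.

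The heart of the argument is that $r^* = L'/L$. The lower bound $r^* \geq L'/L$ is immediate from Theorem~\ref{lower bound} applied to the pair $(p,q) \in \Exit$. For the matching upper bound, parameterize $D_\escaper$ by arc-length $\ell \in [0,L]$ from $p$, and fix a pursuer geodesic $\sigma : [0,L'] \to D_\pursuer$ from $p$ to $q$. The pursuer plays the memoryless ``shadow'' strategy $\Pursuer(\escaper)(t) = \sigma\bigl((L'/L)\,\ell(\escaper(t))\bigr)$, starting at $\sigma\bigl((L'/L)\,\ell(\escaper(0))\bigr)$. This strategy satisfies the no-lookahead constraint, and whenever the escaper moves at unit speed the pursuer moves along $\sigma$ at speed at most $L'/L$, so it is speed-feasible for every $r \geq L'/L$. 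Since the pursuer coincides with the escaper at every visit to $\{p,q\}$, this strategy wins $G_\epsilon$ for every $\epsilon > 0$, yielding $r^* \leq L'/L$. The same argument covers both starting-position regimes, with the pursuer's opening move dictated by the escaper's chosen starting arc-length.

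The main technical obstacle I anticipate is purely geometric: constructing $D_\pursuer$ and $D_\escaper$ to be simultaneously polyhedral, disjoint off of $\Exit$, and faithful to the Canny--Reif geometry, in particular making sure the carved tube around $D_\escaper$ neither disconnects $D_\pursuer$ nor creates a shortcut around the obstacles. Placing the escaper's detour well away from the Canny--Reif cluster should make this a routine (if fiddly) exercise. A secondary subtlety is the escaper's freedom to oscillate along $D_\escaper$, but since the shadow strategy is a pointwise function of the escaper's current arc-length coordinate, any back-and-forth motion is automatically mirrored by the pursuer and no exploit arises.
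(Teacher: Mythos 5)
Your reduction is correct in its essential logic, but it is the mirror image of the paper's: you embed the Canny--Reif environment as the \emph{pursuer's} domain and give the escaper a simple arc of known length $L$, whereas the paper makes the Canny--Reif free space the \emph{escaper's} domain and gives the pursuer a simple (Manhattan, later orthotube) path of easily computed length $\ell_\pursuer$, setting $r = \ell_\pursuer/(\ell+\epsilon)$. Both constructions exploit the same structural fact --- with exactly two exits that are the endpoints of both players' domains, $r^*$ collapses to the ratio $d_\pursuer(p,q)/d_\escaper(p,q)$, with the lower bound from Theorem~\ref{lower bound} and the upper bound from a memoryless ``shadow'' strategy --- so the hardness transfers equally well whichever player is forced to solve the shortest-path instance. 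Your version buys a particularly clean upper bound (the pursuer literally reparameterizes a single geodesic by the escaper's arc-length coordinate); the paper's version keeps the hard geometry on the escaper's side, which matches the Reif--Tate tradition and makes the specified-start case a one-line ``race to the unique exit'' argument. Two loose ends to tighten. First, you leave $D_\escaper$ one-dimensional, while the theorem (and the paper's final paragraph) insists on proper polyhedra without lower-dimensional degeneracy; once you thicken the arc into a tube, the arc-length coordinate $\ell(\escaper(t))$ is no longer well defined and shortest paths shortcut corners, so you should replace your shadow map by the paper's proportional formula $d_\pursuer(\Pursuer(\escaper),p)/\bigl(d_\pursuer(\Pursuer(\escaper),p)+d_\pursuer(\Pursuer(\escaper),q)\bigr) = d_\escaper(\escaper,p)/\bigl(d_\escaper(\escaper,p)+d_\escaper(\escaper,q)\bigr)$, whose speed feasibility follows from the triangle inequality $d_\escaper(\escaper,p)+d_\escaper(\escaper,q)\ge d_\escaper(p,q)$ rather than from a global parameterization. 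Second, your tube-carving perturbs $d_\pursuer(p,q)$ by a small additive detour; you should say explicitly, as the paper does, that this perturbation is kept below the Canny--Reif gap $2^{-2nm-3n-3}$ so that the comparison of $r^*$ against $L^*/L$ still decides the instance.
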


\begin{proof}
  We begin by showing the problem hard with specified starting positions for
  the players, and with arbitrary intersections between $D_\escaper$ and
  $D_\pursuer$.  Then we adapt the construction to work without
  specified starting positions, with minimal intersection between
  $D_\escaper$ and $D_\pursuer$, and to make both $D_\escaper$ and $D_\pursuer$
  proper polyhedra (without lower-dimensional degeneracy).

\paragraph{Specified starting positions.}

  Our reduction follows Canny and Reif's reduction from 3SAT to finding a path
  of length $\leq \ell$ from $s$ to $t$ in a 3D polyhedral environment
  under any $L_p$ metric \cite{Canny-Reif-1987}.
  The escaper domain $D_\escaper$ is exactly the polyhedral environment
  in Canny and Reif's construction.
  The escaper's start location is the start location $s$, and the exit set $X$
  consists of a single point, namely, the target location $t$.
  Next, the pursuer domain $D_\pursuer$ is a straight line between $t$
  and any point $s_\pursuer$ at distance $\ell + \epsilon$
  (slightly more than the target path length) from~$t$.
  The pursuer's start location is $s_\pursuer$.

  The pursuer's optimal strategy is to run directly from $s_\pursuer$ to the
  unique exit location $t$ and staying there.  If the pursuer arrives
  $\epsilon$ before the escaper, then the pursuer wins, and vice versa.
  Thus, if the escaper can find a path of length $\leq \ell$ between $s$ and
  $t$, then the escaper can win and the critical speed ratio is greater
  than $1$.  Conversely, if all paths have length $\geq \ell+\epsilon$,
  then any escaper strategy cannot arrive before the pursuer,
  so the pursuer wins with a speed ratio of~$1$.
  As argued in \cite[Corollary~2.3.4]{Canny-Reif-1987}, there is a gap
  of at least $2^{-2 n m - 3 n - 3}$ in path length between positive and
  negative instances (where $n$ is the number of variables in $m$ is the
  number of clauses in the 3SAT formula), so setting
  $\epsilon = 2^{-2 n m - 3 n - 3}$ completes the reduction.

\paragraph{Unspecified starting positions.}

  The construction of the escaper domain $D_\escaper$ remains exactly the
  polyhedral environment in Canny and Reif's construction.
  But now the exit set $\Exit = \{s, t\}$ consists of both the start
  and target locations.
  The pursuer domain $D_\pursuer$ is a Manhattan from $s$ to $t$,
  so that its length $\ell_\pursuer$ is easy to compute as the sum of
  coordinate differences between $s$ and~$t$.
  We set $r = \ell_\pursuer / (\ell + \epsilon)$,
  at which an escaper path of length $\ell+\epsilon$
  takes the same time as the pursuer traversing the entire path of $D_\pursuer$,
  and ask whether the critical speed ratio $r^* \leq r$.

  If there is a path in $D_\escaper$ from $s$ to $t$ of length $\leq \ell$,
  then we construct a winning escaper strategy with speed ratio~$r$.
  The strategy starts at $s$ which, because $s \in X$, forces the pursuer
  to also start at~$s$.
  Then the strategy runs to $t$ along the path of length $\leq \ell$,
  oblivious to movement by the pursuer.
  The pursuer will remain at least $\epsilon$ away from~$t$,
  so the escaper escapes.

  If all paths in $D_\escaper$ from $s$ to $t$ have length $\geq \ell+\epsilon$,
  then we construct a winning pursuer strategy with speed ratio~$r$.
  For any escaper location $\escaper \in D_\escaper$, the pursuer computes
  the shortest-path distances $d_\escaper(s,\escaper)$ and $d_\escaper(\escaper,t)$.
  (This strategy is expensive to compute, but all we need is that it exists.)
  By the triangle inequality,
  $$d_\escaper(s,\escaper) + d_\escaper(\escaper,t) \geq d_\escaper(s,t) \geq \ell+\epsilon.$$
  We define the pursuer strategy $\Pursuer(\escaper)$ to be the unique point
  along the path $D_\pursuer$ that satisfies
  $$\frac{d_\pursuer(\Pursuer(\escaper),s)}{d_\pursuer(\Pursuer(\escaper),s)+d_\pursuer(\Pursuer(\escaper),t)} = \frac{d_\escaper(\escaper,s)}{d_\escaper(\escaper,s)+d_\escaper(\escaper,t)}.$$
  If $\escaper$ varies with speed $\leq 1$, then $\Pursuer(\escaper)$ varies
  with speed $\leq r$.  The strategy is history-independent, so is a valid
  pursuer strategy.
  Because $\Pursuer(s) = s$ and $\Pursuer(t) = t$, $\Pursuer$ is in fact
  a winning pursuer strategy.
	
\paragraph{Disjoint regions.}

  Next we achieve the property that $D_\escaper \cap D_\pursuer = X$,
  whereas currently the line segment $D_\pursuer$ might intersect $D_\escaper$
  at other intermediate points.
  In Canny and Reif's construction, almost all of the polyhedral region
  $D_\escaper$ is ``thin'',
  with a maximum width of $w = 1/2^{\Theta(n m)}$.
  They show that an additive change of $O(w)$ to the path length does not
  affect the hardness reduction.
  Thus, we can safely move the exits in $\Exit$ from $s,t$ to the nearest
  boundary faces of the polyhedral region~$D_\escaper$.
  Then we can modify $D_\pursuer$ to a path between the two points of $\Exit$
  that avoids otherwise intersecting~$D_\escaper$.
  Again we set $\ell_\pursuer$ to the length of this path,
  and set $r = \ell_\pursuer / (\ell + \epsilon)$ as before.
  The rest of the argument works as above.

\paragraph{Polyhedral domains.}
	
  Finally, we show how to thicken the pursuer path so that the pursuer domains
  $D_\pursuer$ is a proper polyhedron instead of a one-dimensional path.
  When we lay out Canny and Reif's construction, we ensure that the
  first path splitter visited after the start location has no other gadget
  above it,
  and that the final clause filter visited has no other clause below it.
  These properties ensure that the start and end positions $s,t$ each has
  an orthogonal ray that does not intersect the rest of the construction.
  We set $X$ to the intersection of $\partial D_\escaper$ with these rays;
  these two points are still within $O(w)$ of $s$ and $t$ respectively.
  Now we can route the path $D_\pursuer$ orthogonality out and around the
  bounding box of $D_\escaper$, keeping it at least $O(w)$ distance away
  from any part of $D_\escaper$ and using at most six turns.

  We now construct a polyhedral pursuer domain $D'_\pursuer$ based on
  the orthogonal path $D_\pursuer$.  For all parts of $D_\pursuer$ more than
  $w$ away from the bounding box of $D_\escaper$, we make $D'_\pursuer$ an
  orthotube centered on $D_\pursuer$ with orthogonal thickness $w/24$.
  Then we connect the ends of these tubes to their respective closer
  point in $X$ via two pyramid caps that do not intersect $D_\escaper$.
  The new pursuer shortest path has gotten smaller from the ability to
  shortcut corners in the orthotube, but the change in distance remains
  less than $w$, and so still within the additive factor for which
  Canny and Reif's construction works.
\end{proof}

\section{Multiple Escapers and Pursuers}
\label{MultipleZombiesSection}

In this section, we prove stronger computational hardness of computing or
approximating critical speed ratio in broader models of pursuit--escape
problems. All of the hardness proofs require that there be multiple pursuers,
not just one, such that any one of them can block the escaper's escape.
Some will also require that there be multiple escapers,
who win if at least one escapes. 
First we generalize our model to allow multiple escapers and pursuers (Section~\ref{sec:multi-model}).
To make the hardness proofs more interesting, we discuss some positive
results as well, in Sections~\ref{sec:multi-human}--\ref{sec:slow-pursuers}.
Then Section~\ref{HardnessSection} describes the hardness results.

\subsection{Model}
\label{sec:multi-model}

First we describe the necessary extensions to the single-escaper single-pursuer
model of Section~\ref{sec:model} and \Section~\ref{appendix:model}
to handle multiple escapers and pursuers.
Suppose there are $n_\escaper$ escapers and $n_\pursuer$ pursuers.
We define a two-player game where the escaper player controls all $n_\escaper$ escapers
and the pursuer player controls all $n_\pursuer$ pursuers.
We refer to the $n_\escaper + n_\pursuer$ escapers and pursuers as \defn{individuals}.

\paragraph{Domains.}
The definition of ``domain'' remains unchanged, but now instead of a single domain for each player, the input specifies a set of domains for each player and an integer \defn{capacity} for each domain representing an upper bound on the number of individuals a player can place on the respective domain.
We assume that every escaper domain and every pursuer domain intersect in a measure-zero set (possibly empty).
We allow two domains of the same player to intersect, but still forbid
individuals from jumping across domains at such intersections;
they must remain in their originally assigned domain.
We are also given a set of (escaper) exit locations, which must be a subset of the
union of all pursuer domains.
In this setting, the polygon model restricts the escaper domain set to contain a single simple polygon with infinite capacity.
Similarly the external and moat models are defined with infinite capacity.

\paragraph{Strategies.}
The definitions of ``pursuer motion path'' and ``escaper motion path'' remain
unchanged, but now a player strategy involves multiple such paths.
Suppose the player has $n_p$ individuals (either $n_\escaper$ or $n_\pursuer$)
and the opponent has $n_o$ individuals (either $n_\pursuer$ or $n_\escaper$).
A \defn{player strategy} is a function $\Player$ mapping $n_o$ opponent motion paths
$\opponent_1, \opponent_2, \dots, \opponent_{n_o}$ to $n_p$ player motion paths
$\Player_i(\opponent_1, \opponent_2, \dots, \opponent_{n_o})$ for $i \in \{1, 2, \dots, n_p\}$
satisfying the following \defn{nonbranching-lookahead constraint}:
\begin{quote}
for any opponent motion paths $\opponent_1, \opponent_2, \dots, \opponent_{n_o},
\tilde \opponent_1, \tilde \opponent_2, \dots, \tilde \opponent_{n_o}$ such that
$\opponent_j$ and $\tilde \opponent_j$ agree on $[0,t]$ for all $j \in \{1, 2, \dots, n_o\}$,
the strategy's player motion paths $\Player_i(\opponent_1, \opponent_2, \dots, \opponent_{n_o})$
and $\Player(\tilde \opponent_1, \tilde \opponent_2, \dots, \tilde \opponent_{n_o})$
also agree on $[0,t]$ for all $i \in \{1, 2, \dots, n_p\}$.
\end{quote}
In addition, an \defn{escaper strategy} must satisfy the
\defn{escaper-start constraint}:
\begin{quote}
for each $i \in \{1, 2, \dots, n_p\}$,
all paths $H_i(\pursuer)$ (over all pursuer motion paths~$\pursuer$)
must start at a common point $H_i(\pursuer)(0)$.
\end{quote}

\paragraph{Win condition.}
We model the escaper player's natural goal of maximizing the number of escapers
that escape, i.e., reach an exit sufficiently far from any pursuer.
Thus we define winning relative to an integer goal $g \in [1, n_\escaper]$
for the number of escapers that escape.

Given escaper motion paths $\escaper_1, \escaper_2, \dots, \escaper_{n_\escaper}$ and
pursuer motion paths $\pursuer_1, \pursuer_2, \dots, \pursuer_{n_\escaper}$,
we say that escaper $i$ \defn{escapes by~$\epsilon$} if,
for some time~$t$, $\escaper_i(t)$ is on an exit and,
for all $j \in \{1, 2, \dots, n_\pursuer\}$,
$\pursuer_j(t)$ is at least $\epsilon$ away from $\escaper_i(t)$ in the pursuer metric.

A pursuer strategy $\Pursuer$ \defn{wins $G_\epsilon$} if,
for all escaper motion paths $\escaper_1, \escaper_2, \dots, \escaper_{n_\escaper}$,
the resulting pursuer motion paths
$Z_1(\cdots), Z_2(\cdots), \dots, Z_{n_\pursuer}(\cdots)$
let $< g$ escapers to escape.
An escaper strategy $\Escaper$ \defn{wins $G_\epsilon$} if,
for all pursuer motion paths $\pursuer_1, \pursuer_2, \dots, \pursuer_{n_\pursuer}$,
the resulting escaper motion paths
$H_1(\cdots), H_2(\cdots), \dots, H_{n_\escaper}(\cdots)$
let $\geq g$ escapers to escape.
As before, a pursuer strategy \defn{wins $G$}
if it wins $G_\epsilon$ for all $\epsilon > 0$, and
an escaper strategy \defn{wins $G$} if it wins $G_\epsilon$
for some $\epsilon > 0$.

By straightforward extensions of the previous proofs, we can show that
exactly one player wins any instance of game~$G$.

\subsection{Multiple Escapers}
\label{sec:multi-human}

In this section, we give simple strategies that narrow the interesting cases
for multiple escapers.
First we show that we can restrict to the goal of $g=1$ escaper escaping
(perhaps to call for help).

\begin{proposition}
Every escaper can escape in a game with multiple escapers if and only if the single escaper could escape in the same game with only one escaper.
\end{proposition}

\begin{proof} If one escaper can escape in a game with only one escaper, all the
escapers can stay together, moving as one escaper would to escape. If the
pursuers can keep a lone escaper from escaping, they can ignore all but one of the
escapers and keep that escaper from escaping. \looseness=-1
\end{proof}

Next we identify some simple scenarios where multiple escapers can always win
(with $g=1$).

\begin{proposition}
If there is only one escaper domain, the cardinality of the exit set is at least $n_\escaper$, and if escapers outnumber pursuers, then one escaper can always escape.
\end{proposition}

\begin{proof}
Each of the escapers can stand at a distinct point in the exit locations.
At at least $n_\escaper-n_\pursuer$ of those spots, there is no pursuer, so the escapers at those spots escape.
\end{proof}

\subsection{Approximation Algorithms}

In this section, we describe some simple extensions of our approximation
algorithms to the case of multiple escapers and/or pursuers.

First, Theorem~\ref{pseudoPTAS} still gives a pseudopolynomial approximation scheme if there are multiple (but $O(1)$) escapers and/or pursuers. The proof is essentially the same: we can solve a discrete game with $O(1)$ pursuers, and the critical speed ratio is bounded above by the critical speed ratio for one pursuer.

Second, the $O(1)$-approximation algorithm from Section~\ref{sec:O(1)}
seems more difficult to generalize.
One approach is to restrict to a pursuer strategy where the pursuers divide up regions to guard and then individually follow a strategy akin to the one used in Section~\ref{sec:O(1)}.
One side of Theorem~\ref{ConstantApproximationTheorem} has an analogue:

\begin{corollary}
	\label{cor:multi-bound}
Consider the game where a polygon $P$ is designated as the only escaper domain of capacity $n_\escaper=1$, and there is a single pursuer domain of capacity $n_\pursuer$ in the moat or exterior model.
Consider partitions of the boundary of $P$ into $n_z$ (not necessarily connected) regions $\mathcal R = \{R_1, R_2, \dots, R_{n_z}\}$.
The pursuers win if their speed is at least
$$
\min_{\text{partition }\mathcal R}
\left(
10.89898
~
\max_{p,q\text{ in same region }R_i \in \mathcal R}
~
\frac{d_\pursuer(p,q)}{d_\escaper(p,q)}
\right).$$
\end{corollary}

\begin{proof}
Each pursuer can ignore all of the boundary but the part assigned to it and use the strategy of Theorem~\ref{ConstantApproximationTheorem}.
\end{proof}

However, for the other side we have no analogue.
Does there exist $c > 0$ such that, for the game described in Corollary~\ref{cor:multi-bound}, the escaper wins if the pursuers' speed is less than the minimum over partitions of the boundary into (not necessarily connected) regions of
$$c \cdot \max_{p,q\text{ in same region}} \frac{d_\pursuer(p,q)}{d_\escaper(p,q)}?$$
We leave this question as an open problem.

\subsection{Slow Pursuers}
\label{sec:slow-pursuers}

In this section, we prove some simple results about pursuers running slower than or equal speed to the escapers, i.e., the speed ratio $r \leq 1$.
Assume the polygon model (exterior or moat).
First we show that the escaper always wins for $r < 1$:

\begin{proposition}
	\label{thm:slow zombies lose}
	For finitely many pursuers whose speed is strictly less than the escaper's ($r < 1$), the escaper wins in a polygonal domain $P$.
\end{proposition}
\begin{proof}
	The intuition is as follows. When close to an edge, the escaper can outrun a single pursuer and escape. Thus there must be other pursuers nearby to catch the escaper. However, how close they need to be depends on how close the escaper is to the edge, and thus the escaper can force the pursuers to guard an arbitrarily small portion of an edge. Once clustered the escaper can outrun the whole group and escape. We now formalize such a strategy and show there is always a region of the polygon in which it can be executed.
	
	First we describe the escaper strategy. Let $R$ be a $\delta\times\Delta$ rectangle (i) contained in $P$, and (ii) whose edge of length $\Delta$ is contained by the longest edge $e$ of $P$.
	We determine $\Delta$ later as a function of $\delta$.
	We choose $\delta$ to be small enough to satisfy properties (i) and (ii).
	Without loss of generality, $e$ is horizontal and the interior of $P$ is above $e$.
	We define some points $\{u_1, \ldots, u_{m+1}\}$ of interest on the upper edge of $R$ called \defn{threat points}.
	Make $u_1$ (respectively, $u_{m+1}$) the upper left (respectively, upper right) corner of $R$ and place the remaining $u_i$ so that the distance between consecutive points is the same.
	For each threat point $u_i$, we denote by $u_i'$ its vertical projection on $e$.
	The escaper starts at the upper left corner of $R$ and will move to the right at full speed.
	At each threat point $u_i$, the escaper checks whether they can win by running to $u_i'$ at full speed.
  We show that this will be the case for at least one of the threat points, thus the escaper wins.

	The main idea is that $R$ is chosen so that if a pursuer can guard the vertical projection of a threat point in time to prevent the victory of the escaper at that point, they cannot reach any of the subsequent projections of threat points in time. Then, each pursuer can only prevent the victory of the escaper at a single threat point. Because there are $m+1$ threat points, the escaper wins.
	We proceed with the details.
	At a threat point $u_i$, the escaper can win by running at $u_i'$ if there are no pursuers within $r \delta$ distance from $u_i'$. 
	We make the distance between consecutive threat point $d = \frac{2r\delta+\eps}{1-r}$ for some $\eps>0$ determined later, so that, while the escaper travels $d$, the distance traveled by pursuers is $d - 2 r \delta - \eps$.
	Then, if a pursuer is guarding $u_i'$ when the escaper is at $u_i$, it will be at least $\eps$ away from the disk centered at $u_{i+1}'$ with radius $r \delta$ when the escaper is at $u_{i+1}$.
	Since the escaper runs to the right at full speed until they can win, such pursuer can never catch up.
	By definition, $\Delta = (m+1)\frac{2r\delta+\eps}{1-r}$.
	We can choose $\eps = \frac{\|e\|}{10 m}$, so that we can choose $\delta$ small enough so that $\Delta < \|e\|/2$ and properties (i) and (ii) are satisfied. 
\end{proof}

Next we consider $r=1$ where the pursuers and escaper have equal speeds.
In the case of one pursuer, the escaper can always win by shortcutting across
a convex vertex.
But multiple pursuers can win in some cases:

\begin{proposition}
\label{ConvexRegionsProp}
If $r=1$, and the exterior of the polygon can be divided into $n_\pursuer$ convex regions that cover the boundary of the polygon, then the pursuers can win in the exterior model.
\end{proposition}

\begin{proof}
Each pursuer can stay in one region, staying at the closest point in that region to the current escaper position (satisfying the nonbranching-lookahead constraint). The closest point in a convex region to the escaper cannot move faster than the escaper can, so the pursuers can keep up with this strategy (speed-limit constraint). If the escaper reaches the boundary, there is a pursuer region containing that boundary, and therefore a pursuer at the closest point in that region to the escaper, which is the escaper's location itself. So, the escaper cannot escape.
\end{proof}

\begin{corollary}
If $r=1$, the escaper domain is a polygon $P$ with $n$ vertices, and $n_\pursuer=n$, then pursuers can win.
\end{corollary}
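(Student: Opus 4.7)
The plan is to apply Theorem~\ref{ConvexRegionsTheorem} directly, taking the $n$ edges of $P$ themselves as the $n$ convex regions assigned to the $n$ pursuers. Each edge $e_i$ is a line segment, hence a convex subset of $\mathbb{R}^2$, and $e_i \subseteq \partial P \subseteq D_\pursuer$ in the exterior model. The union of the edges is exactly $\partial P$, so the collection covers the entire boundary. Assign pursuer $i$ to edge $e_i$ with the strategy of standing at the (unique) closest point on $e_i$ to the escaper's current position; since projection onto a convex set is $1$-Lipschitz, the pursuer moves at speed at most $1 = r$, satisfying the speed-limit constraint, and this strategy depends only on the escaper's current position, satisfying no-lookahead.

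To see that this strategy wins, suppose the escaper ever reaches a point $p \in \partial P$. Then $p$ lies on some edge $e_j$, and the pursuer assigned to $e_j$ is at the closest point of $e_j$ to $p$, which is $p$ itself. Hence whenever the escaper touches the boundary, at least one pursuer is collocated with it, so the escaper cannot escape by any positive margin $\epsilon > 0$.

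The only subtlety, and the main obstacle if one is pedantic, is reading Theorem~\ref{ConvexRegionsTheorem}: it speaks of the exterior being ``divided into'' $n_\pursuer$ convex regions, which might suggest a full $2$-dimensional partition, whereas the line segments $e_i$ are $1$-dimensional. However, inspecting that proof, the only properties actually used are that each region is convex, lies inside $D_\pursuer$, and together the regions cover $\partial P$; an honest $2$D partition is never invoked. Should one prefer full-dimensional regions anyway, it suffices to thicken each edge $e_i$ into a thin rectangle of width $\delta$ extending perpendicularly into the exterior. For $\delta$ smaller than the local feature size of $P$ (e.g., half the minimum distance between non-adjacent edges, together with a small bound governed by the exterior angles at reflex vertices to ensure the rectangle stays on the exterior side of its adjacent edges), these rectangles remain convex subsets of $D_\pursuer$ and still cover $\partial P$, and the argument is unchanged.
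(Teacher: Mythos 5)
Your proof is correct and is essentially the argument the paper intends: the corollary carries no separate proof precisely because it is meant to follow from Theorem~\ref{ConvexRegionsTheorem} with one convex region per edge, and, as you observe, the proof of that theorem uses only that each region is a convex subset of $D_\pursuer$ covering its share of $\partial P$ (so nearest-point projection is $1$-Lipschitz and collocates a pursuer with the escaper at the boundary), never a genuine two-dimensional partition of the exterior. One caution about your optional fallback only: a rectangle erected perpendicularly on an edge incident to a reflex vertex with exterior angle below $\pi/2$ crosses the adjacent edge for \emph{every} $\delta>0$, so that thickening would need to be clipped differently at such vertices --- but your primary argument does not rely on it.
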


There is no lower bound analogous to Proposition~\ref{ConvexRegionsProp} because 4 pursuers suffice to guard polygons like the one in Figure~\ref{SlowlyGuardableFigure} with arbitrarily many vertices. Two pursuers can stay on the top and two on the bottom; each of those can be assigned to guard every other triangular region of the convex hull outside $P$.

\begin{figure}[h]
	\centering
	\includegraphics[scale=0.3]{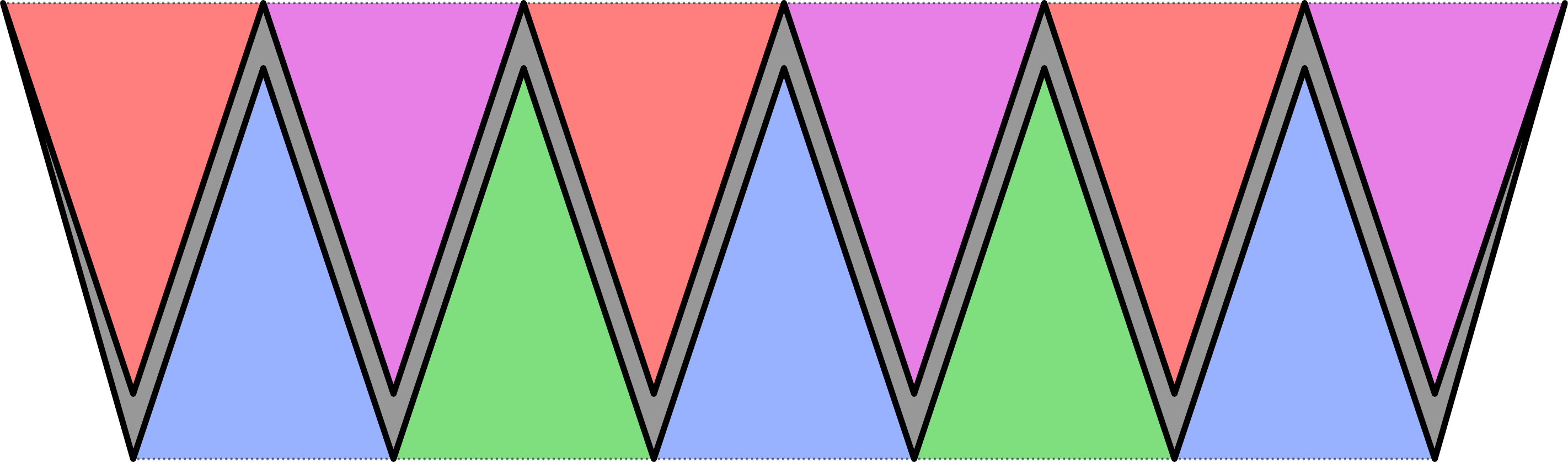}
	\caption{A polygon guardable by 4 pursuers with speed equal to the escaper's in the exterior model. Colored regions denote the (disconnected) region assigned to each pursuer to guard.}
	\label{SlowlyGuardableFigure}
\end{figure}

For convex polygons, we can win with half as many pursuers:

\begin{proposition}
\label{ConvexEscapeProp}
If $r=1$, the escaper domain is a convex $n$-gon $P$, and  $n_\pursuer=\lceil \frac{n}{2} \rceil$, then the escaper can win.
\end{proposition}

\begin{proof}
The escaper should start at any vertex $\escaper$ on the boundary (escaper-start constraint). Let $\escaper'$ be the point opposite $\escaper$ on $\partial P$, that is, the point for which the pursuer distance from $\escaper$ is maximal. The points $\escaper$ and $\escaper'$ split $\partial P$ into two sections, at least one of which must have at least $\lceil \frac{n}{2} \rceil$ vertices (counting $\escaper$ but not $\escaper'$). 
The escaper should run along that section of perimeter except at a small neighborhood of vertices. 
With this strategy, whenever the escaper is running along an edge there should always be a pursuer at the same position in order to prevent an escaper victory.
Let $\theta$ be the maximum internal angle, and $\alpha$ be the length of the shortest edge of $P$.
We first argue that there should be at least two pursuers in the $\alpha \over 4$-neighborhood of $\escaper$ at the start to prevent an escaper win.
If not, the escaper can follow the same strategy as the wedge case (Theorem~\ref{WedgeTheorem}) with a small enough $\eps$ so that the length of the escaper path is at most $\alpha \over 16$ guaranteeing a separation of at least $\alpha \over 8$ from any pursuer not initially close to $\escaper$.
We now describe the escaper strategy at an $\alpha \over 4$-neighborhood of a vertex $v$ (along the chosen section of the perimeter) incident to edges $e_1$ and $e_2$. 
Let $p_1$ and $p_2$ be the points obtained by the intersection of a circle centered at $v$ with radius $\alpha \over 4$ with $e_1$ and $e_2$ respectively.
When the escaper reaches $p_1$, go directly to $p_2$ and continue traversing $e_2$.
At the moment the escaper is at $p_1$, if the only pursuers within $\alpha \over 2$ of $p_2$ (in pursuer metrics) are at $p_1$, the escaper wins by reaching $p_2$ while being at least $2\alpha(1-\sin\frac{\theta}{2})>0$ away from any pursuer.
Otherwise, there is at least one new pursuer (one that was not at $p_1$ with the escaper) that must follow the escaper in its traversal of $e_2$. 
Then the pursuers that were following the escaper in $e_1$ will be behind the escaper and will not be able to be ahead of the escaper again because they do not have time to run around past $\escaper'$ before the escaper gets there.
For each of the $\lceil \frac{n}{2} \rceil-1$ vertices, there must be at least one new pursuer guard to prevent an escaper victory.
With the initial 2 pursuers, $\lceil \frac{n}{2} \rceil+1$ pursuers are necessary to prevent an escaper win.
At all moments the escaper speed is 1 (speed-limit constraint) and, apart from the application of Theorem~\ref{WedgeTheorem}, the escaper path does not depend on pursuer position at all (nonbranching-lookahead constraint).
\end{proof}

Although Proposition~\ref{ConvexEscapeProp} is
true for both the moat and exterior models,
we can make a slightly
stronger statement in the moat model using the same proof. 

\begin{corollary}
\label{MoatConvexEscapeCor}
In the moat model, if $P$ is a polygon with $c$ \emph{convex} vertices, then the escaper can escape from
$\lceil \frac{c}{2} \rceil $ pursuers of the same speed as theirs.

\end{corollary}

\subsection{Hardness Results}
\label{HardnessSection}

In this section, we prove PSPACE-hardness and hardness of approximation results, as specified in Table~\ref{ComplexityTable}, for problems of escaping from pursuers with various combinations of parameters.
All results are for 1-dimensional domains (graph model).
In Table~\ref{ComplexityTable}, the ``Domain'' column describes whether there is an additional constraint to the domains: 
\begin{itemize}
	\item Planar: each domain is a tree, they pairwise intersect only at leaves, and the union of all domains is the embedding of a planar graph;
	\item Connected: there is a single escaper domain and a single pursuer domain.
\end{itemize}

\begin{theorem}
	\label{thm:PSPACE}
	Consider a multi-escaper/pursuer game with $g=1$. It is PSPACE-hard to decide whether pursuers has a winning strategy even if each domain is a tree, they pairwise intersect only at leaves, all leaves are exits, and the union of all domains is the embedding of a planar graph.
\end{theorem}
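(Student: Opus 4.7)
The plan is to reduce from Planar Nondeterministic Constraint Logic (NCL), which is PSPACE-complete \cite{PSPACE-book}. Given a planar NCL instance with AND/OR vertices and a target edge $e^*$, I build a pursuit--escape instance whose escapers win (with $g=1$) if and only if the target edge $e^*$ admits a valid reversal sequence.

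\textbf{Overall structure.} The construction is laid out following the planar embedding of the NCL graph. For each NCL edge $e$, I introduce an \emph{edge-token} escaper $h_e$ restricted to a path-shaped escaper tree-domain $T_e$ whose two endpoints encode the two possible orientations of $e$. Running alongside $T_e$ is a \emph{locking} pursuer tree-domain meeting $T_e$ only at a pair of designated leaves; its pursuer guards these leaves so that $h_e$ is pinned near whichever endpoint of $T_e$ matches $e$'s current NCL orientation, unless released by an adjacent vertex gadget. For each NCL vertex $v$, I install a \emph{vertex gadget}: a small cluster of tree-domains (for both players) meeting the incident edge-token paths at leaf contacts. Its threat leaves are arranged so that the gadget's pursuers can simultaneously guard all exits through them if and only if the joint orientations of the incident edges form a valid NCL configuration at $v$. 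A single \emph{main} escaper $h^*$ lives on an additional tree-domain with a designated exit leaf that is guardable only when $h_{e^*}$ does not rest in the reversed end of $T_{e^*}$.

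\textbf{Correctness.} In the forward direction, any valid NCL reversal sequence for $e^*$ is played by the escaper player one edge at a time: to reverse $e$, walk $h_e$ across $T_e$ while all other tokens stay put; by hypothesis both the pre- and post-reversal configurations are valid at the two endpoints of $e$, so the pursuers in the two adjacent vertex gadgets can shift their guarding patterns in time. Once $h_{e^*}$ reaches its reversed end, $h^*$ walks to its exit and wins. In the converse direction, the gadget speeds and lengths are tuned so that attempting any reversal from an NCL-invalid configuration, or advancing two tokens across their paths simultaneously, creates a pair of threats at a vertex gadget that its pursuer cannot cover, freeing some $h_e$ to escape. Consequently, any winning escaper playthrough induces a legal NCL reversal sequence culminating in reversing $e^*$.

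\textbf{Structural constraints and main obstacle.} Each individual domain (edge path, locking path, vertex-gadget component, and main-escaper domain) is a tree by construction, and distinct domains touch only at the prescribed leaf glue-points, so their union forms a planar graph inherited from the planar embedding of NCL; no crossover gadgets are needed. The main technical work is the design of the AND and OR vertex gadgets: their threat topology and the speed ratio $r$ must be chosen so that a vertex pursuer \emph{just barely} covers all threats in every valid configuration yet is strictly overwhelmed in every invalid configuration, with enough slack to absorb the transient cost of shifting between two adjacent valid configurations during a legal reversal. Balancing these quantitative margins against the $\epsilon$-tolerances coming from the Obliviate Lemma, and ruling out coordinated cheating moves by multiple tokens moving in parallel, is where most of the effort will lie.
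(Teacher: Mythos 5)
Your reduction has the right source problem and the right structural constraints (trees meeting at leaves, planarity inherited from the NCL embedding), but the core mechanism is broken by a polarity error that I do not see how to repair within your setup. You encode the NCL configuration in the positions of escaper tokens $h_e$ and penalize invalid configurations by ``freeing some $h_e$ to escape.'' With $g=1$, an escaping $h_e$ is a \emph{win} for the escaper player, so this is a reward, not a penalty: the escaper player simply drives some token into an invalid position on purpose and wins, regardless of whether $e^*$ is reversible. Your claim that ``any winning escaper playthrough induces a legal NCL reversal sequence'' therefore fails. The deeper issue is that in this model the pursuers have no way to ``pin'' or ``lock'' an escaper inside its own domain --- a pursuer can only stand at exit points, and an escaper token is free to walk anywhere in its tree at any time --- so there is no mechanism at all by which the adversary can force escaper-held state to respect NCL constraints. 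If you instead remove the tokens' exits so they cannot escape, nothing constrains them and the main escaper $h^*$ wins trivially.

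The paper's proof inverts the roles, and that inversion is essential. There, the \emph{pursuer} positions on edge-domains encode the orientation of each NCL edge, while the escapers are simple oblivious probes: one oscillating at full speed along a length-$1/4$ path at each AND vertex, one sitting at a point exit at each OR vertex, and one marching along a path of length $2^n$ at the subdivided target edge. Each probe constantly threatens exits, and the pursuers can cover all simultaneous threats exactly when the encoded orientation is valid; guarding the far end of the $2^n$ path forces the $e_2$-pursuer to travel distance $2^n$, which it can only afford by executing a legal sequence of moves that flips $e$. So the pursuers win if and only if the NCL instance is a yes-instance --- the constrained, reactive player is the one holding the configuration, which is the only polarity the win condition supports. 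If you want to keep your token-based picture, you would have to rebuild it so that the configuration lives with the pursuers and the escapers merely audit it.
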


\begin{proof}
  Our reduction is from \defn{Nondeterministic Constraint Logic (NCL)} \cite{PSPACE-book}.
  An instance of NCL is given by a planar cubic weighted graph $G_\NCL$
  (called a \defn{constraint graph})
  where each edge has either weight 1 (called \defn{red})
  or weight 2 (called \defn{blue}).
  Each vertex is either incident to a single blue and two red edges
  (called an \defn{AND} vertex), or incident to three blue edges
  (called an \defn{OR} vertex).
  A \defn{configuration} of the constraint graph 
  is an orientation (specifying a direction for each edge)
  satisfying that every vertex has incoming edges of total weight
  at least~$2$ (the \defn{inflow constraint}).
  Given a configuration, a \defn{move} flips the orientation of one edge
  in such a way that results in another configuration
  (i.e., satisfying the inflow constraint).
  The reachable configurations remain the same in \defn{asynchronous NCL}
  where we allow partial orientations (some undirected edges),
  where an undirected edge does not count as incoming
  at either endpoint, and allow a move to transform an oriented edge
  into an unoriented one or vice versa
  (while still satisfying the inflow constraint)
  \cite{Viglietta-2013}.
  Given a planar constraint graph, a configuration of that graph,
  and an edge~$e_{\text{out}}$,
  it is PSPACE-complete to decide whether there is a sequence of moves
  that flips $e_{\text{out}}$ at the end \cite{PSPACE-book}.
  The number of moves is less than $2^{|E(G_\NCL)|}$
  because this upper bounds the number of states
  ($3$ possible orientations for each edge).

  The PSPACE-hardness reduction for NCL can be modified
  to have two degree-$1$ vertices
  $v_{\text{in}}$ and $v_{\text{out}}$
  with no constraint on their incoming weights,
  one blue edge $e_{\text{in}}$ initially pointing toward $v_{\text{in}}$
  and another blue edge $e_{\text{out}}$ initially pointing away from
  $v_{\text{out}}$.
  (In fact, a subset of the reduction given in \cite[Section~5.2]{PSPACE-book}
  works exactly this way,
  where $e_{\text{in}}$ is the leftmost \textsf{try in} edge
  and $e_{\text{out}}$ is the leftmost \textsf{try out} edge.
  The reduction then adds a free edge terminator gadget to each of
  these edges, and we can simply not add these gadgets.)
  Furthermore, it is PSPACE-complete to decide whether,
  for some configuration of the constraint graph with $e_{\text{in}}$
  directed toward $v_{\text{in}}$, there is a sequence of moves
  that flips $e_{\text{out}}$ to point toward $v_{\text{out}}$.
  This claim follows from the same reduction, because
  \cite[Lemma~5.8]{PSPACE-book} tells us that edge $e_{\text{in}}$
  initially pointing out from the construction (toward $v_{\text{in}}$)
  forces the entire configuration to reset.
  Furthermore, $v_{\text{in}}$ and $v_{\text{out}}$ are on the same face
  of a planar embedding of $G_\NCL$.

	\begin{figure}[h]
		\centering
		\subcaptionbox{OR gadget}{\includegraphics[page=2,scale=0.7]{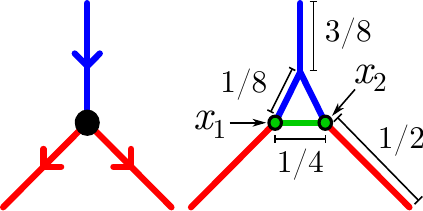}}\hfill
		\subcaptionbox{AND gadget}{\includegraphics[page=1,scale=0.7]{figures/PSPACE}}\hfill
		\subcaptionbox{WIN gadget}{\includegraphics[page=3,scale=0.6]{figures/PSPACE}}
		\caption{Gadgets that simulate a local NCL picture (left)
      with red and blue pursuer domains and green escaper domains (right).
      An edge drawn with only one endpoint represents exactly
      one half of that edge.  (The other half is represented by the
      gadget on the other end of the edge.)}
		\label{fig:pspace}
	\end{figure}

  We build a game with the goal of $g=1$ escaper escaping
  and a speed ratio of $r=1$.
  Refer to Figure~\ref{fig:pspace}.
  Given a planar constraint graph $G_\NCL$ with distinguished edges
  $e_{\text{in}}, e_{\text{out}}$ and vertices
  $v_{\text{in}}, v_{\text{out}}$ as described above,
  we build domains as follows. 
  Every vertex of the constraint graph will be represented by a tree
  escaper domain (colored green in the figures) of capacity~$1$.
  Every edge of the constraint graph will be represented by a tree
  pursuer domain (colored red or blue in the figures to match the $G_\NCL$ edge)
  of capacity~$1$.
  We will describe each edge as the joining of two ``half edges'',
  with one half defined by each endpoint.
  \begin{itemize}
  \item For each OR vertex (Figure~\ref{fig:pspace}(a)),
  the pursuer domain corresponding to each half edge
  is a curve of length $1/2$,
  all incident to a common point~$x$; and
  the corresponding escaper domain is the single point~$x$,
  which is also an exit location.
  This escaper forces some pursuer to block the exit $x$ at all times,
  implementing the OR constraint.
  \item For each AND vertex (Figure~\ref{fig:pspace}(b)),
  the pursuer domain corresponding to each red half edge
  is a curve of length~$1/2$,
  with distinct endpoints $x_1,x_2$ respectively;
  the corresponding escaper domain is a curve of length $1/4$ between
  those endpoints $x_1,x_2$, which are exit locations;
  and the pursuer domain corresponding to the blue half edge is a Y
  with leaf curves of length $1/8$ incident to $x_1,x_2$,
  and a curve of length $3/8$ connecting to the other half of the edge.
  Thus the distance between $x_1$ and $x_2$ is $1/4$ in both the
  escaper domain and the blue pursuer domain,
  so one pursuer in the blue pursuer domain can successfully prevent escape
  (matching the motion of the escaper),
  as can one pursuer in each of the red pursuer domains (staying at $x_1$
  and $x_2$), implementing the AND constraint.
  Also, the pursuer has a distance of $1/2$ from one endpoint
  to the other half edge, as with the curves implementing all other half edges.
  \end{itemize}
  Thus, the escapers can force the pursuers to satisfy the inflow constraint
  at every AND and OR vertex.
  Conversely, the pursuers can make a valid NCL move in unit time
  by moving a pursuer from one end of the edge's pursuer domain
  to the other end.
  \begin{itemize}
  \item For the special vertices $v_{\text{in}}$ and $v_{\text{out}}$
    (Figure~\ref{fig:pspace}(c)),
    the pursuer domain corresponding to each incident half edge
    $e_{\text{in}}$ and $e_{\text{out}}$
    is a curve of length $1/2$,
    with endpoints $x_{\text{in}}$ and $x_{\text{out}}$ respectively,
    both of which are exit locations;
    and we create one escaper domain for both vertices,
    a curve of length $2^{|E(G_\NCL)|}$
    connecting $x_{\text{in}}$ and $x_{\text{out}}$.
    Because $v_{\text{in}}$ and $v_{\text{out}}$ are on a common face of
    $G_\NCL$, this connection preserves planarity.
  \end{itemize}
  To realize this construction in the plane, we scale down the planar embedding
  of $G_\NCL$ to the point where all edges have length at most~$1$,
  and then we wiggle the paths to have the specified lengths.

	Set $n_\escaper=|V(G_\NCL)|-1$ (the number of escaper domains)
  and $n_\pursuer=|E(G_\NCL)|$ (the number of pursuer domains).
	By the Pigeonhole Principle, each domain contains exactly one individual.

  Now suppose that the NCL instance has a solution:
  an initial configuration where $e_{\text{in}}$ points toward $v_{\text{in}}$,
  and a sequence of less than $2^{|E(G_\NCL)|}$ moves that ends with
  flipping edge $e_{\text{out}}$ toward $v_{\text{out}}$.
  Then the pursuer has the following winning strategy,
  parameterized by the location $t$ of the escaper along the
  length-$2^{|E(G_\NCL)|}$ curve from $x_{\text{in}}$ to $x_{\text{out}}$.
  At $t=0$, the pursuers are at the ends of their pursuer domains
  corresponding to the initial configuration.
  Between each integer $t-1$ and $t$,
  one pursuer moves from one end of its pursuer domain
  to the other in unit time, corresponding to the $t$th move in the sequence.
  (Once $t$ is beyond the number of moves in the sequence,
  the pursuer does nothing.)
  Throughout, whenever an AND vertex has an inward-directed blue edge,
  the pursuer assigned to that end tracks the motion of the escaper.
  Because the sequence of configurations satisfies the inflow constraint,
  the escapers cannot win, including at $t=2^{|E(G_\NCL)|}$
  when a pursuer from the pursuer domain corresponding to $e_{\text{out}}$
  has reached $x_{\text{out}}$.

  Conversely, suppose that the NCL instance has no solution.
  Then the escaper has the following winning strategy.
  The escapers at AND and OR gadgets enforce the inflow constraints.
  The escaper along the length-$2^{|E(G_\NCL)|}$ curve
  starts at $x_{\text{in}}$ and runs at full speed to $x_{\text{out}}$.
  This forces the pursuing player to start with a pursuer at $x_{\text{in}}$.
  At all times, we can construct a corresponding configuration of $G_\NCL$,
  where an edge is directed toward a vertex if the corresponding pursuer is
  at the end of the domain corresponding to that vertex,
  and undirected if the pursuer is in the middle.
  Thus we start at a configuration where $e_{\text{in}}$ is directed toward
  $v_{\text{in}}$, and follow moves according to asynchoronous NCL.
  By supposition, we cannot reach a configuration where $e_{\text{out}}$ is
  directed toward $v_{\text{out}}$, so the corresponding pursuer cannot reach
  $x_{\text{out}}$ (being pinned at the other end).
  Thus the escaper reaches exit $x_{\text{out}}$ and wins.
\end{proof}

\begin{theorem}
\label{PlanarHardnessTheorem}
Consider a multi-escaper/pursuer game in the graph model with $g=1$. 
It is NP-hard to distinguish a critical speed ratio of $0$ from $\infty$, even if each domain is a tree, they pairwise intersect only at leaves, and the union of all domains is the embedding of a planar graph.
\end{theorem}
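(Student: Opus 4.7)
The plan is to reduce from Planar Vertex Cover, which is NP-hard by Garey and Johnson. Given a planar graph $G = (V,E)$ and a parameter $k$, I will construct a planar graph-model pursuit--escape instance with $n_\pursuer = k$ pursuers and goal $g=1$, such that the critical speed ratio is $0$ if $G$ has a vertex cover of size $\le k$ and $\infty$ otherwise.

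The gadget layout I have in mind is as follows. For each vertex $v \in V$ I introduce a pursuer-tree domain $T_v$ of capacity $1$, a tree with leaves $\ell_{e,v}$, one per edge $e$ incident to $v$. For each edge $e = (u,v) \in E$ I introduce an escaper gadget (a capacity-$1$ path) whose relevant exits lie inside $T_u \cup T_v$; lengths are chosen so that the gadget's effective semantics is: \emph{the escaper on $e$ is blockable iff at least one of $T_u,T_v$ is pursued}. Since $G$ is planar, the vertex trees can be drawn near their vertices in a planar embedding and the edge gadgets along the edges, so the union is planar and distinct domains meet only at their shared exit leaves as required.

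Once the gadgets have this semantics, the two directions are immediate. For completeness, if $C \subseteq V$ is a vertex cover with $|C| = k$, place one pursuer in each $T_v$ for $v \in C$ at the initial position already sitting on the relevant exit leaf. Every edge $e$ has an endpoint in $C$, so every escaper gadget is blocked by a stationary pursuer; the pursuers then win at $r = 0$, and hence at every $r \ge 0$, giving $r^* = 0$. For soundness, if no vertex cover of size $k$ exists, then for any distribution of $k$ pursuers into trees, some edge $e = (u,v)$ has neither $T_u$ nor $T_v$ pursued. The exits of the gadget for $e$ lie entirely inside $T_u \cup T_v$, and the pairwise-intersection-at-leaves condition forbids any other pursuer's domain from even reaching those exits. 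Therefore the escaper on $e$ runs to an unopposed exit regardless of how large $r$ is, so $r^* = \infty$.

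The main obstacle is calibrating the edge gadget so that its blockability condition corresponds \emph{exactly} to vertex cover rather than edge cover. A naive two-endpoint escaper path gives the wrong combinatorics: it is blocked only if a pursuer sits at each of $\ell_{e,u}$ and $\ell_{e,v}$, which forces both endpoints into the cover. The fix I would pursue is to collapse each edge gadget's accessible escape exit onto a single shared leaf $x_e$ that is simultaneously a leaf of $T_u$, $T_v$, and the escaper's own domain --- so a pursuer in either $T_u$ or $T_v$ alone can cover it --- while using a small number of auxiliary escapers or dummy domains per edge to force each pursuer's initial position to actually sit on the right $x_e$. Proving that this forcing both pins down the pursuer placement on YES instances (so $r^* = 0$ really is attained at $r = 0$) and leaves the coverage condition equivalent to vertex cover is the technical heart of the reduction; once established, planarity and the $\{0,\infty\}$ gap follow directly from the domain-separation structure described above.
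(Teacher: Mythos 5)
There is a genuine gap, and it sits exactly where you placed your ``technical heart'': the calibration of the edge gadget cannot be made to work in the form you propose. You dualize the reduction so that the \emph{pursuers} encode the vertex cover --- one capacity-$1$ tree $T_v$ per vertex, $k$ pursuers, and an edge gadget that should be blocked iff $T_u$ or $T_v$ is occupied. But a vertex $v$ of degree $d\ge 2$ gives $T_v$ exactly $d$ distinct exit leaves (one per incident edge), and its single pursuer can occupy only one point at a time. Since $g=1$ and each edge's escaper can simply walk to its exit and \emph{stand there forever}, the escapers can present simultaneous standing threats at two distinct exits of $T_v$; no finite speed $r$ lets one pursuer be within $\epsilon$ of both for all time, so the escapers win even on YES instances and your claimed $r^*=0$ fails --- indeed the gap collapses to $r^*=\infty$ on both sides. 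The only way to let one stationary pursuer in $T_v$ cover all incident edge-exits is to identify those exits with a single point of $T_v$; doing the same at $u$ forces the exit of edge $(u,v)$ to coincide with both pursuers' positions, which is impossible for a connected graph with more than one vertex. So the OR-semantics you need (``either endpoint's pursuer suffices'') cannot be realized by pursuer placement with one pursuer per cover vertex, and the auxiliary-escaper forcing you sketch does not touch this obstruction.

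The paper avoids the problem by flipping the roles: the \emph{escapers} encode the cover. It places a single-point pursuer domain (which is also the exit) in the middle of every edge of $G_{VC}$, which cuts the graph into $|V|$ capacity-$1$ escaper components, and uses $k$ escapers against $|E|-1$ pursuers. Because each pursuer domain is one point, $r$ is genuinely irrelevant. If the $k$ occupied components form a vertex cover, every one of the $|E|$ exits is adjacent to an occupied component, and by pigeonhole the $|E|-1$ pursuers leave one unguarded, so some escaper exits; if they do not form a cover, some exit is adjacent to no occupied component and the pursuers guard the other $|E|-1$ exits and win. You would need to restructure your reduction along these lines (or find some genuinely new mechanism for the OR) before the completeness direction can be made to go through.
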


\begin{proof}
We reduce from the Planar Vertex Cover problem of finding a set of at most $k$ vertices in a planar graph such that every edge contains at least one of them, which Lichtenstein \cite{PlanarVertexCoverHard} shows to be NP-hard. 
Given an instance of Planar Vertex Cover consisting of a planar graph $G_{VC}$ and a target number of vertices $k$,
we build a game with $n_\escaper=k$ and $n_\pursuer=|E(G_{VC})|-1$.
Subdivide each edge with a point pursuer domain of capacity 1 marked as an exit.
This splits $G_{VC}$ into $|V(G_{VC})|$ components,
each containing a vertex of $G_{VC}$ and its incident half edges.
Define each such component to be an escaper domain of capacity 1. 

If there is a vertex cover of size at most $k$, then the escapers can start at the corresponding $k$ vertices (escaper-start constraint).
Then the pursuing player places the $|E(G_{VC})|-1$ pursuers,
so there is at least one edge that no pursuer starts on, and an escaper who starts at a vertex incident to that edge can escape by that edge.
The escaper strategy depends only on the pursuer's initial positions (nonbranching-lookahead constraint).

Now consider the pursuer strategy that initially checks whether there is an exit location/pursuer domain incident to escaper domains with no escapers, and if so, places a pursuer at all other locations. 
This pursuer strategy depends only on escaper's initial positions (nonbranching-lookahead constraint).
The escaping player loses if the initial escaper placement do not correspond to a vertex cover.
Because $r$ is irrelevant to the proof, it is NP-hard to distinguish a critical speed ratio of $0$ from $\infty$.
\end{proof}

\begin{theorem}
\label{OneHumanHardnessTheorem}
Consider a multi-pursuer game in the graph model with $n_\escaper=1$. 
It is NP-hard to approximate the critical speed ratio $r$ to within a factor of $2$, even when there is a single escaper domain and a single pursuer domain. 
\end{theorem}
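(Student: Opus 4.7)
The plan is to reduce from Vertex Cover~\cite{Karp72}. Given a graph $G = (V,E)$ with integer target $k$, I will build a pursuit--escape instance in the graph model with a single connected escaper domain, a single connected pursuer domain, one escaper, and $n_\pursuer = k$ pursuers (goal $g = 1$), arranged so that $G$ having a vertex cover of size $\leq k$ forces $r^* \leq 1$ while the absence of such a cover forces $r^* \geq 2$. The resulting factor-$2$ gap rules out any $2$-approximation of $r^*$ in polynomial time unless $\mathrm{P} = \mathrm{NP}$.

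For the construction, $D_\escaper$ will be a star with center $s$ and, for each $e \in E$, a leaf $x_e$ connected to $s$ by a unit-length branch. $D_\pursuer$ will be the bipartite graph on $\{v' : v \in V\} \cup \{x_e : e \in E\}$ with a unit edge from $v'$ to $x_e$ whenever $v$ is an endpoint of $e$ in $G$; equivalently, subdivide every edge of $G$ at a midpoint $x_e$, with each half of length $1$. The exit set is $X = \{x_e\}_{e \in E} = D_\escaper \cap D_\pursuer$. For the easy direction, given a vertex cover $C$ of size $k$, I assign each $e$ to some endpoint $v(e) \in C$ and place one pursuer at each $v'$ for $v \in C$. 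The pursuer strategy is \emph{projection}: whenever the escaper is at distance $\ell \in [0,1]$ along the branch to $x_e$, the pursuer responsible for $v(e)$ sits at distance $\ell$ along the edge from $v'(e)$ to $x_e$ in $D_\pursuer$, while all other pursuers remain at their vertices. This is no-lookahead, runs at pursuer-speed at most $1$, and coincides with the escaper at every exit, so $r^* \leq 1$.

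The main obstacle will be the soundness direction. The combinatorial core is to show that, when $G$ has no vertex cover of size $\leq k$, every placement of $k$ pursuers in $D_\pursuer$ leaves some exit at pursuer-distance $\geq 2$ from \emph{every} pursuer. I plan to case-analyze one pursuer's position: at some vertex $v'$; strictly inside an edge $v'\,x_e$ at parameter $a \in (0,1)$; or exactly at an exit $x_e$. The key arithmetic is that non-$v$-incident exits lie at distance $\geq 3 - a > 2$ from any pursuer ``associated with'' $v$ (in the first two cases), while a pursuer parked at $x_e$ has distance exactly $2$ (not less) to every exit sharing a $G$-vertex with $e$. Hence each pursuer's $<\!2$-covered set is either (i) all exits incident to some single vertex $v$, or (ii) a singleton $\{x_e\}$; if $k$ such sets together covered all of $E$, then replacing each exit-type pursuer by an arbitrary endpoint of its underlying edge and unioning with the vertex-type pursuers' vertices would yield a vertex cover of $G$ of size $\leq k$, contradicting the NO assumption.

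Granted the combinatorial claim, the escaper strategy is to read off the initial pursuer configuration $\pursuer(0)$ (legal under the no-lookahead constraint), pick any exit $x_{e^*}$ with $d_\pursuer(P_j(0), x_{e^*}) \geq 2$ for every pursuer $P_j$, and run straight from $s$ to $x_{e^*}$ in time $1$; the triangle inequality plus the pursuer speed bound then keeps every pursuer at distance $\geq 2-r > 0$ from $x_{e^*}$ for any $r < 2$, so the escaper wins $G_{2-r}$ and $r^* \geq 2$. I expect the subtlest point to be justifying the strict inequality in the exit-located case: a pursuer at $x_e$ covers nothing else at strict distance $<\!2$ (neighboring exits sit at distance exactly $2$), so exit-type pursuers are ``wasteful'' and the mixed-cover bound collapses to the ordinary vertex cover number of~$G$.
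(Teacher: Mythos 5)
Your proposal is correct and follows essentially the same reduction as the paper: a star escaper domain over the edge-midpoints of the Vertex Cover instance, the subdivided graph as the pursuer domain, a projection strategy for the YES direction, and the ``pursuer within distance $2$ of an exit must be within distance $1$ of an endpoint'' packing argument for the NO direction. Your write-up is in fact slightly more careful than the paper's on two points --- you use $n_\pursuer = k$ (consistent with assigning one pursuer per cover vertex, where the paper writes $k-1$) and you explicitly handle the boundary case of a pursuer sitting exactly at an exit --- but the approach is the same.
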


\begin{proof}
We reduce from the Vertex Cover problem of finding a set of at most $k$ vertices in a graph $G$ such that every edge contains at least one of them, which is one of Karp's original 21 NP-hard problems (from \cite{Karp72}).
First we reduce to the special of Vertex Cover where the graph is
guaranteed to be connected; refer to Figure~\ref{fig:connected-vertex-cover}.
Given an instance $(G,k)$ of vertex cover,
where graph $G$ has connected components $C_1, C_2, \dots, C_k$,
we add a new ``apex'' vertex $a$ with incident edges to one arbitrarily chosen
vertex in each $C_i$ as well as a new degree-$1$ vertex $\ell$.
Any vertex cover in the new graph $G'$ includes either $a$ or $\ell$,
and if it includes $\ell$, we can replace it with~$a$,
which covers the incident added edges.
Thus $G'$ has a vertex cover of size $k+1$ if and only if
$G$ has a vertex cover of size~$k$.

\begin{figure}[h]
  \centering
  \includegraphics[scale=0.6]{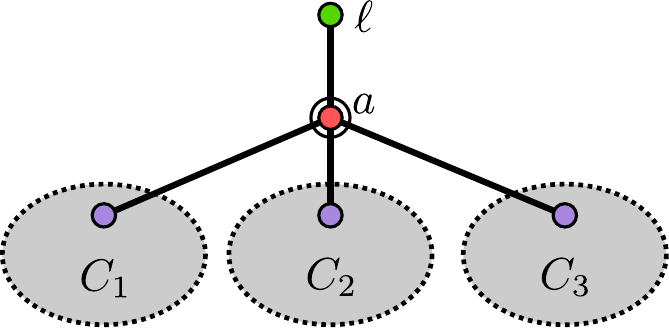}
  \caption{Reduction from Vertex Cover to Vertex Cover on connected graphs.}
  \label{fig:connected-vertex-cover}
\end{figure}

Given an instance of Vertex Cover consisting of a connected graph $G_{VC}$ and a target number of vertices $k$, we make a multi-pursuer game with $n_\escaper=1$, $n_\pursuer=k$, and domains as shown in Figure~\ref{OneHumanHardFigure}.
The pursuer domain realizes the vertex--edge incident graph of $G_{VC}$,
with a node $x_v$ for each vertex $v$ of $G_{VC}$,
a node $x_e$ for each vertex $e$ of $G_{VC}$,
and a length-$1$ curve between two nodes $x_v,x_e$
that correspond to an incident vertex $v$ and edge $e$ of $G_{VC}$.
The escaper domain is a star centered at a point $\escaper$,
with leaves at the nodes $x_e$ corresponding to edges $e$ of $G_{VC}$,
each connected by a curve of length $1$ to~$\escaper$.
The exit points are the leaves of the star, i.e.,
the nodes $x_e$ corresponding to edges $e$ of $G_{VC}$.

\begin{figure}
\centering
\includegraphics[scale=0.25]{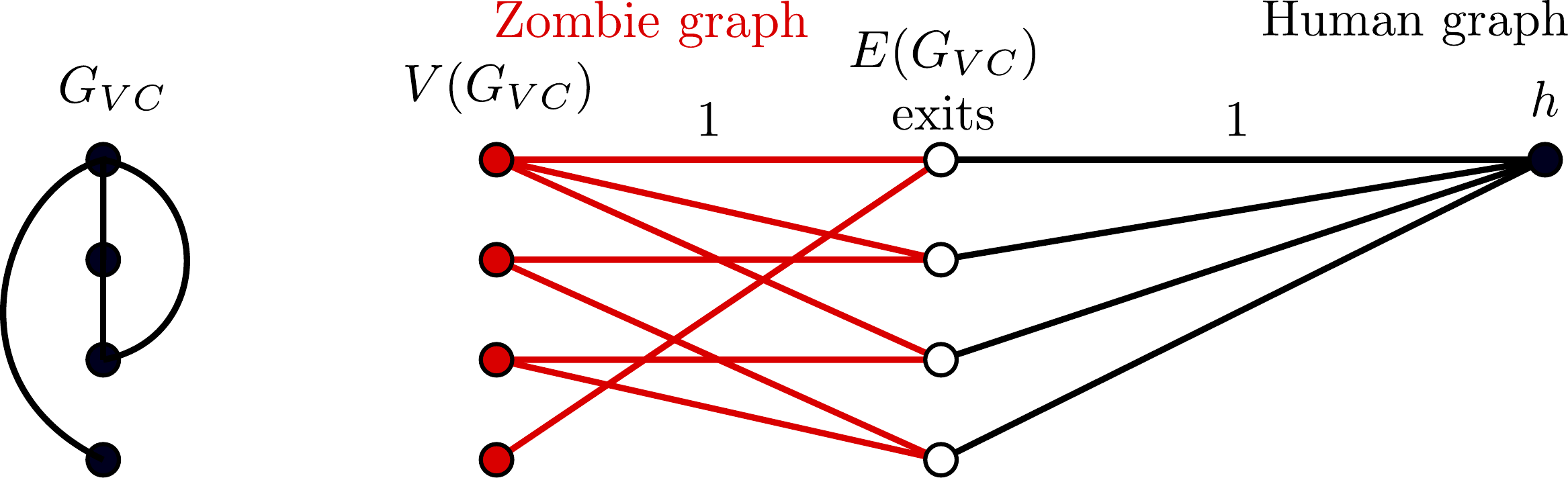}
\caption{A graph with one escaper for which it is NP-hard to determine the critical speed ratio.}
\label{OneHumanHardFigure}
\end{figure}

If there is a vertex cover, then
the following is a winning pursuer strategy for $r\ge 1$.
Assign each pursuer to a vertex in the cover set.
Suppose that the escaper is currently on an edge $x_e \escaper$
of the star escaper domain.
(If the escaper is at the center $\escaper$ of the star,
we consider it to be on the lexically first edge $e_0$.)
Let $t$ be the distance of the escaper from $x_e$.
Let $w$ be the lexically first vertex that covers $e$.
Then we place the pursuer assigned to $w$ on the edge $x_w x_e$,
at distance $t$ away from $x_e$,
while all other pursuers remain at their assigned vertices.
Thus, whenever the escaper reaches an exit~$x_e$ ($t=0$),
a pursuer will be at the same exit.
This strategy depends only on the current escaper position (nonbranching-lookahead constraint) and requires that pursuers run at most at unit speed (speed-limit constraint).

If there is no vertex cover, then
the following is a winning escaper strategy for $r< 2$.
The escaper starts at $\escaper$ (escaper-start constraint).
Wherever the pursuer player initially places the pursuers,
there is an exit that no pursuer is within distance 2 of:
to be within distance 2 of an exit $x_e$, a pursuer must be within distance 1
of a vertex node $x_v$ where $v$ is incident to $e$;
and the regions within distance 1 of each vertex node $x_v$ are disjoint;
so if there were a pursuer within distance 2 of every exit $x_e$,
that would give a vertex cover.
The escaper then runs at full speed to that exit,
and at the moment the exit is reached,
the nearest pursuer is at least $2-r$ away by the speed-limit constraint.
This strategy depends only on the initial pursuer positions
(nonbranching-lookahead constraint).

Therefore it is NP-hard to distinguish a critical speed ratio of at most 1 from one at least 2, as claimed.
\end{proof}

\section{Open Problems}
\label{sec:open}

We conclude with several interesting open problems raised by this research:

\begin{enumerate}
\parindent=1.5em

\item Is the pursuit--escape game (with one pursuer and one evader)
  NP-hard for a 2D polygon?

\item We conjecture that our approximation algorithms of Section~\ref{sec:O(1)}
  and \Section~\ref{sec:pseudoPTAS} generalize to apply in 3D as well,
  with a slightly worse constant in the case of Section~\ref{sec:O(1)}.
  This would nicely complement our 3D NP-hardness result of
  Section~\ref{sec:NPhard}.

\item \Section~\ref{sec:pseudoPTAS} gives a pseudopolynomial-time approximation for the critical speed ratio for a polygon. Is this the best one can do, or is there an approximation scheme whose time depends polynomially only on the length of the description of $P$, or also on $\log \frac1\epsilon$?
Related, we conjecture we can generalize this approximation scheme to apply to
nonpolygonal shapes, such as constant-degree splines (which would include the disk).

\item Can we determine the exact critical speed ratio for regular $n$-gons
  for $n > 4$?
  Our pursuer strategies for equilateral triangle (\Section~\ref{sec:triangle})
  and square (\Section~\ref{sec:square}) generalize naturally,
  but we have been unable to find matching escaper strategies,
  suggesting these may not be tight.

\item Is there an analogue of Theorem~\ref{ConstantApproximationTheorem} describing the critical speed ratio to within a constant factor when there are \emph{two} (or $O(1)$) pursuers?

The most obvious analogue, using a 2nd-order Voronoi diagram, does not work: if $P$ is a long, thin rectangle with one long side subdivided, one pursuer should stay on each side, but a 2nd-order Voronoi diagram might put both pursuers on one side.

The other most obvious analogue would have one pursuer attempts to guard the edge the escaper is closest to, the second pursuer greedily guards whatever point the first pursuer would have the most trouble reaching, and both pursuers delay changing their strategies by the use of fringe regions as in Theorem~\ref{ConstantApproximationTheorem}, but the escaper might exit multiple fringes simultaneously, which seems hard for the pursuers to account for without paying an extra factor equal to the number of pursuers.

\item Can we characterize the exact number of pursuers required to win
  in a polygon, under the exterior or moat model, when the speed ratio $r=1$?
  Section~\ref{MultipleZombiesSection} gives a few sufficient conditions
  and an interesting example.

\item Our PSPACE-hardness result for multiple pursuers
  (Theorem~\ref{thm:PSPACE}) requires one edge of exponential length.
  Is the problem strongly PSPACE-hard, i.e., even when all edge lengths
  are polynomial integers?
  Is the problem in PSPACE?

\item Can we adapt our model to \emph{capturing pursuers}, where an escaper
  loses if it is ever within $\epsilon$ of a pursuer (for arbitrarily
  small $\epsilon > 0$)?  This more natural model should not affect our main
  domains of polygons or Jordan regions, where an escaper can walk near the
  boundary instead of on it.  However, in the general setting considered in
  \Section~\ref{appendix:model}, it becomes more difficult to prove every game has
  a unique winner; in particular, our discrete model needs adaptation to avoid
  accidental captures.  We conjecture that this is possible.

  We believe we can prove many more hardness results in this model.  In particular,
  we believe the 3D one-pursuer one-escaper problem becomes EXPTIME-hard
  by a modification to the proof of \cite{Reif-Tate-1993}, which would
  strengthen our NP-hardness result (Theorem~\ref{thm:NPhard}).

\item What happens if we restrict pursuer and escaper strategies to be
  continuous functions of their opponent's movement?
  Does this change allow us to define escaper winning without needing
  a uniform $\epsilon$ by which they win?
  (See related results in \cite[Lemma~6 and Theorem~7]{Bollobas-Leader-Walters-2009-arXiv}.)
  Is this a reasonable model, or does it forbid natural strategies?

\end{enumerate}

\xxx{Teams that can't communicate is undecidable?
  cite Team Fortress 2 paper}

\section*{Acknowledgments}

We thank Greg Aloupis and Fae Charlton for helpful early discussions on this
topic.
We also thank anonymous referees for helpful comments,
leading us to formulate precise definitions of the model, and
for giving the proofs of Lemmas~\ref{lem:terrible} and~\ref{lem:choice}.
Supported in part by the NSERC and NSF grant CCF-2348067.

\let\realbibitem=\bibitem
\def\bibitem{\par \vspace{-1.2ex}\realbibitem}

\bibliographystyle{alpha}
\bibliography{zombies}

\begin{thebibliography}{AHRWN18}

\bibitem[ABG09]{pursuit-games-unbounded}
S.~Alexander, R.~Bishop, and Robert Ghrist.
\newblock Capture pursuit games on unbounded domains.
\newblock {\em L'Enseignement Math{\'e}matique}, 55:103--125, 2009.

\bibitem[AHRWN17]{Abrahamsen-Holm-Rotenberg-Wulff-Nilsen-2017}
Mikkel Abrahamsen, Jacob Holm, Eva Rotenberg, and Christian Wulff-Nilsen.
\newblock Best laid plans of lions and men.
\newblock In Boris Aronov and Matthew~J. Katz, editors, {\em Proceedings of the
  33rd International Symposium on Computational Geometry}, volume~77 of {\em
  Leibniz International Proceedings in Informatics (LIPIcs)}, pages 6:1--6:16,
  2017.

\bibitem[AHRWN18]{Abrahamsen-Holm-Rotenberg-Wulff-Nilsen-2018-arXiv}
Mikkel Abrahamsen, Jacob Holm, Eva Rotenberg, and Christian Wulff-Nilsen.
\newblock Best laid plans of lions and men.
\newblock arXiv:1703.03687, January 2018.

\bibitem[AM84]{Aigner-Fromme-1984}
M.~Aigner and M.Fromme.
\newblock A game of cops and robbers.
\newblock {\em Discrete Applied Mathematics}, 8(1):1--12, April 1984.

\bibitem[BC17]{pursuit-games-2D}
Andrew Beveridge and Yiqing Cai.
\newblock Pursuit-evasion in a two-dimensional domain.
\newblock {\em Ars Mathematica Contemporanea}, 13(1), 2017.

\bibitem[BKIS12]{pursuit-games-polygonal}
Deepak Bhadauria, Kyle Klein, Volkan Isler, and Subhash Suri.
\newblock Capturing an evader in polygonal environments with obstacles: The
  full visibility case.
\newblock {\em The International Journal of Robotics Research},
  31(10):1176--1189, 2012.

\bibitem[BLW09]{Bollobas-Leader-Walters-2009-arXiv}
B.~Bollob{\'a}s, I.~Leader, and M.~Walters.
\newblock Lion and man -- can both win?
\newblock arXiv:0909.2524, 2009.
\newblock \url{https://arXiv.org/abs/0909.2524}. Extended version of
  \cite{Bollobas-Leader-Walters-2012}.

\bibitem[BLW12]{Bollobas-Leader-Walters-2012}
B.~Bollob{\'a}s, I.~Leader, and M.~Walters.
\newblock Lion and man---can both win?
\newblock {\em Israel Journal of Mathematics}, 189(1):267--286, June 2012.

\bibitem[BM88]{Bierstone-Milman-1988}
Edward Bierstone and Pierre Milman.
\newblock Semianalytic and subanalytic sets.
\newblock {\em Publications Math\'ematiques de l'IH\'ES}, 67:5--42, 1988.

\bibitem[BN11]{Bonato-Nowakowski-2011}
Anthony Bonato and Richard~J. Nowakowski.
\newblock {\em The Game of Cops and Robbers on Graphs}.
\newblock American Mathematical Society, 2011.

\bibitem[Bol06]{Bollobas-2006}
B{\'{e}}la Bollob{\'{a}}s.
\newblock {\em The Art of Mathematics: Coffee Time in Memphis}.
\newblock bridge University Press, 2006.

\bibitem[CR87]{Canny-Reif-1987}
John Canny and John Reif.
\newblock New lower bound techniques for robot motion planning problems.
\newblock In {\em Proceedings of the 28th Annual Symposium on Foundations of
  Computer Science}, SFCS '87, pages 49--60, Washington, DC, USA, 1987. IEEE
  Computer Society.

\bibitem[Cro64]{Croft-1964}
Hallard~T. Croft.
\newblock ``{L}ion and man'': {A} postscript.
\newblock {\em Journal of the London Mathematical Society}, 39:385--390, 1964.

\bibitem[Eng89]{Engelking-1989}
Ryszard Engelking.
\newblock {\em General Topology}.
\newblock Heldermann, Berlin, 1989.

\bibitem[FGK08]{Fomin-Golovach-Kratochvil-2008}
Fedor~V. Fomin, Petr~A. Golovach, and Jan Kratochv{\'{i}}l.
\newblock On tractability of {C}ops and {R}obbers game.
\newblock In G.~Ausiello, J.~Karhum{\"{a}}ki, G.~Mauri, and L.~Ong, editors,
  {\em Proceedings of the 5th IFIP International Conference on Theoretical
  Computer Science}, pages 171--185, Milano, Italy, 2008.

\bibitem[FM09]{Finbow-MacGillivray-2009}
Stephen Finbow and Gary MacGillivray.
\newblock The {F}irefighter {P}roblem: A survey of results, directions and
  questions.
\newblock {\em Australasian Journal of Combinatorics}, 43:57--77, 2009.

\bibitem[Gar65]{Gardner1965}
Martin Gardner.
\newblock Letters.
\newblock {\em Scientific American}, 213(5):10--12, November 1965.
\newblock Reproduced in \cite{Gardner1990}.

\bibitem[Gar90]{Gardner1990}
Martin Gardner.
\newblock {\em Mathematical Carnival}.
\newblock Penguin Books, London, 1990.

\bibitem[GS53]{Gale-Stewart-1953}
David Gale and F.~M. Stewart.
\newblock Infinite games with perfect information.
\newblock In {\em Contributions to the Theory of Games, vol.\ 2}, Annals of
  Mathematics Studies, no.\ 28, pages 245--266. Princeton University Press,
  1953.

\bibitem[Guy61]{FirstRecord}
Richard~K. Guy.
\newblock The jewel thief.
\newblock {\em NABLA}, 8:149--150, September 1961.

\bibitem[GV88]{Grigorev-Vorobjov-1988}
D.~Yu. Grigor'ev and N.~N. Vorobjov.
\newblock Solving systems of polynomial inequalities in subexponential time.
\newblock {\em Journal of Symbolic Computation}, 5(1):37--64, 1988.

\bibitem[HD09]{PSPACE-book}
Robert~A. Hearn and Erik~D. Demaine.
\newblock {\em Games, Puzzles, and Computation}.
\newblock A K Peters/CRC Press, 2009.

\bibitem[HS17]{Halperin-Sharir-2017}
Dan Halperin and Micha Sharir.
\newblock Arrangements.
\newblock In Jacob~E. Goodman, Joseph O'Rourke, and Csaba~D. T\'oth, editors,
  {\em Handbook of Discrete and Computational Geometry}, chapter~28, pages
  723--762. CRC Press, 3rd edition, 2017.

\bibitem[Isa65]{Isaacs-1965}
Rufus Isaacs.
\newblock {\em Differential Games}.
\newblock Wiley Press, New York, 1965.

\bibitem[Jan78]{Jankovic-1978}
Vladimir Jankovi{\'{c}}.
\newblock About a man and lions.
\newblock {\em Matemati{\v{c}}ki Vesnik}, 2:359--361, 1978.

\bibitem[Kar72]{Karp72}
Richard Karp.
\newblock Reducibility among combinatorial problems.
\newblock In R.~E. Miller, J.~W. Thatcher, J.~D., and Bohlinger, editors, {\em
  Complexity of Computer Computations}, pages 85--103. Springer, Boston, 1972.

\bibitem[Kel55]{Kelley-1955}
John~L. Kelley.
\newblock {\em General Topology}.
\newblock D. Van Nostrand Company, Inc., Princeton, 1955.

\bibitem[Kin15]{Kinnersley-2015}
William~B. Kinnersley.
\newblock {C}ops and {R}obbers is {EXPTIME}-complete.
\newblock {\em Journal of Combinatorial Theory, Series B}, 111:201--220, March
  2015.

\bibitem[Klo07]{Kloster-2007}
Oddvar Kloster.
\newblock A solution to the {A}ngel {P}roblem.
\newblock {\em Theoretical Computer Science}, 389(1--2):152--161, December
  2007.

\bibitem[KS15]{pursuit-games-polyhedral}
Kyle Klein and Subhash Suri.
\newblock Pursuit evasion on polyhedral surfaces.
\newblock {\em Algorithmica}, 73(4):740--747, December 2015.

\bibitem[Lew86]{Lewin-1986}
J.~Lewin.
\newblock The lion and man problem revisited.
\newblock {\em Journal of Optimization Theory and Applications},
  49(3):411--430, 1986.

\bibitem[Lic82]{PlanarVertexCoverHard}
David Lichtenstein.
\newblock Planar formulae and their uses.
\newblock {\em SIAM Journal on Computing}, 11(2):329--343, 1982.

\bibitem[Lit86]{Littlewood-1986}
John~Edensor Littlewood.
\newblock {\em Littlewood’s miscellany: edited by {B}{\'{e}}la
  {B}ollob{\'{a}}s}.
\newblock Cambridge University Press, 1986.

\bibitem[M{\'{a}}t07]{Mathe-2007}
Andr{\'{a}}s M{\'{a}}th{\'{e}}.
\newblock The {A}ngel of power 2 wins.
\newblock {\em Combinatorics, Probability and Computing}, 16(3):363--374, 2007.

\bibitem[MHIS09]{Smith2009}
Philip Munz, Ioan Hudea, Joe Imad, and Robert~J. Smith?
\newblock When zombies attack!: Mathematical modelling of an outbreak of zombie
  infection.
\newblock In J.~M. Tchuenche and C.~Chiyaka, editors, {\em Infectious disease
  modelling research progress}. Nova Science Publishers, 2009.

\bibitem[Mit17]{Mitchell-2017}
Joseph S.~B. Mitchell.
\newblock Shortest paths and networks.
\newblock In Jacob~E. Goodman, Joseph O'Rourke, and Csaba~D. T\'oth, editors,
  {\em Handbook of Discrete and Computational Geometry}, chapter~31, pages
  811--848. CRC Press, 3rd edition, 2017.

\bibitem[Nah07]{Nahin-2007}
Paul~J. Nahin.
\newblock {\em Chases and Escapes: The Mathematics of Pursuit and Evasion}.
\newblock Princeton Press, 2007.

\bibitem[O'B61]{FirstCircleSolution}
Thomas~H. O'Beirne.
\newblock Christmas puzzles and paradoxes.
\newblock {\em The New Scientist}, 266:753, December 1961.

\bibitem[RR75]{Rado-Rado-1974}
Peter~A. Rado and Richard Rado.
\newblock More about lions and other animals.
\newblock {\em Mathematical Sprectrum}, 7(3):89--93, 1974/75.

\bibitem[RT93]{Reif-Tate-1993}
John~H. Reif and Stephen~R. Tate.
\newblock Continuous alternation: The complexity of pursuit in continuous
  domains.
\newblock {\em Algorithmica}, 10:156--181, October 1993.

\bibitem[Smi14]{SmithBook}
Robert Smith?
\newblock {\em Mathematical Modelling of Zombies}.
\newblock University of Ottawa Press, 2014.

\bibitem[Spa19]{Numberphile}
Ben Sparks.
\newblock Game of cat and mouse.
\newblock Numberphile video, May 2019.
\newblock \url{https://www.numberphile.com/videos/cat-and-mouse}.

\bibitem[ST93]{Seymour-Thomas-1993}
P.~D. Seymour and R.~Thomas.
\newblock Graph searching and a min-max theorem for tree-width.
\newblock {\em Journal of Combinatorial Theory, Series B}, 58(1):22--33, May
  1993.

\bibitem[Vig13]{Viglietta-2013}
Giovanni Viglietta.
\newblock Partial searchlight scheduling is strongly {PSPACE}-complete.
\newblock August 2013.

\bibitem[Wik25]{AA-wikipedia}
Wikipedia.
\newblock {Arzelà–Ascoli} theorem, 2025.
\newblock
  \url{https://en.wikipedia.org/wiki/Arzel%C3%A0%E2%80%93Ascoli_theorem}.

\end{thebibliography}

\appendix
\magicappendix

\section{Intrinsic Metrics of Compact Regions with Finitely Rectifiable Boundaries are Compact}

\begin{lemma} \label{lem:intrinsic metric compact}
  If $R$ is a compact subset of $\mathbb R^k$
  and $R$ is finitely rectifiable,
  then the intrinsic (shortest-path) metric space $M$ induced by $R$ is compact.
\end{lemma}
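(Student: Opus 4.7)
The plan is to prove that $M$ is sequentially compact, which is equivalent to compactness for metric spaces. Given any sequence $(x_n)$ in $R$, I will construct a subsequence that converges in the intrinsic metric $d_R$ to some point of $R$.

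Since $R = \bigcup_{i=1}^m S_i([0,1]^k)$ is a \emph{finite} union of Lipschitz patches, the pigeonhole principle gives an index $i$ and an infinite subsequence (still denoted $(x_n)$) with $x_n \in S_i([0,1]^k)$ for all $n$. Choose preimages $u_n \in [0,1]^k$ with $S_i(u_n) = x_n$. Because $[0,1]^k$ is Euclidean-compact, I pass to a further subsequence so that $u_n \to u^\ast$ for some $u^\ast \in [0,1]^k$, and set $x^\ast = S_i(u^\ast) \in R$.

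To conclude that $d_R(x_n, x^\ast) \to 0$, I would exhibit an explicit path in $R$ of length at most $d(u_n, u^\ast)$: take the straight-line segment in $[0,1]^k$ from $u_n$ to $u^\ast$, which has Euclidean length $d(u_n, u^\ast)$, and push it forward by $S_i$. The image lies entirely in $R$ (being in $S_i([0,1]^k) \subseteq R$), and by the Lipschitz bound $d(S_i(u), S_i(v)) \leq d(u, v)$ its length is at most $d(u_n, u^\ast)$. Hence $d_R(x_n, x^\ast) \leq d(u_n, u^\ast) \to 0$, proving sequential compactness.

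The main subtlety to address is that $R$ need not be path-connected, so $d_R$ may take the value $\infty$ between points in different components and the resulting ``metric space'' requires interpretation. The key observation is that each patch $S_i([0,1]^k)$ is path-connected (as the continuous image of a path-connected set), so $R$ has at most $m$ path components. The argument above shows that each path component is compact under the restriction of $d_R$, which is an honest finite-valued metric; since there are only finitely many components, $M$ decomposes as a finite disjoint union of compact pieces, and is therefore compact in the natural sense (either with the extended metric, or as a finite disjoint union of standard compact metric spaces).
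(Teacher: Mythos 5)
Your proof is correct and follows essentially the same route as the paper's: pigeonhole the sequence onto a single Lipschitz patch, use compactness of the parameter cube $[0,1]^k$ to extract a convergent subsequence of preimages $u_n \to u^\ast$, and push the parameter-space segment forward through the patch to bound $d_R(x_n, x^\ast) \leq d(u_n, u^\ast) \to 0$. Your explicit handling of the possibly non-path-connected case (finitely many path components, each an honest compact metric space) is a small point of care the paper glosses over, and you omit the paper's initial, logically redundant step of first extracting a Euclidean-convergent subsequence; neither difference is substantive.
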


\begin{proof}
  A metric space is compact if and only if it is \emph{sequentially compact},
  i.e., every infinite sequence $p_1, p_2, \ldots$
  has a \emph{limit point} $p^*$, i.e., a point $p^*$ such that,
  for every $\epsilon > 0$,
  there is a $p_i$ within distance $\epsilon$ of~$p^*$.
  We will prove that $M$ is sequentially compact.
  Consider an infinite sequence $p_1, p_2, \ldots \in R$.
  Because $R$ is compact, we can restrict to an infinite subsequence of $p_i$'s
  that converges (in the Euclidean metric) to a limit point $p^* \in R$.
  We will prove that $p^*$ is a limit point with respect to the
  intrinsic metric as well.

  Each $p_i$ lies on an associated Lipschitz patch of~$R$.
  Because there are finitely many Lipschitz patches
  associated with $R$, we can restrict to an infinite subsequence
  $q_1, q_2, \dots$ of $p_1, p_2, \dots$
  for which all $q_i$'s lie on the same Lipschitz patch~$S$.
  Let $r_i$ be a parameter vector for point $p_i$ on~$S$.
  Because $S$'s domain is compact, the points $r_i$ have a limit point $r^*$
  in $S$'s domain, corresponding to a point $q^*$ on~$S$.
  Because $p_1, p_2, \dots$ converges to its limit~$p^*$,
  the subsequence $q_1, q_2, \dots$ converges to the same limit $p^* = q^*$.

  Because $p_i$ and $q_i$ both converge to $p^* = q^*$ in Euclidean metric,
  $d(p_i,q_i) \to 0$; likewise, because $r_i \to r^*$, $|r_i - r^*| \to 0$.
  Therefore, $d_R(p_i,p^*) \leq d_S(p_i,p^*) \to 0$,
  so $p^*$ is a limit point of the $p_i$'s in the intrinsic metric.
\end{proof}

\end{document}